\documentclass[reprint,
aps,superscriptaddress,twocolumn,showpacs,longbibliography]{revtex4-2}

\usepackage{graphicx,color}
\usepackage{amsmath,amsfonts,enumerate,amsthm,amssymb,bbm}
\usepackage[colorlinks=true,citecolor=blue,linkcolor=magenta]{hyperref}
\usepackage[all]{hypcap}
\usepackage{multirow}
\usepackage{braket}
\usepackage{soul}
\usepackage{float}
\usepackage[caption = false]{subfig}
\usepackage[normalem]{ulem}
\usepackage{scrextend}
\usepackage[usenames,dvipsnames]{xcolor}
\usepackage{dsfont}
\usepackage{nicefrac}
\usepackage[linesnumbered,ruled,vlined]{algorithm2e}
\usepackage{quantikz}
\usepackage{apptools}

\makeatother
\newcommand{\suppsection}[1]{\section{#1}\addcontentsline{stoc}{section}{\protect\numberline{\thesection}#1}}

\newcommand{\suppsubsection}[1]{\subsection{#1}\addcontentsline{stoc}{subsection}{\protect\numberline{\thesubsection}#1}}

\usepackage{paper-style-cleaned}

\mathchardef\ordinarycolon\mathcode`\:
\mathcode`\:=\string"8000
\begingroup \catcode`\:=\active
  \gdef:{\mathrel{\mathop\ordinarycolon}}
\endgroup

\theoremstyle{plain}
\newtheorem{thm}{Theorem}

\newtheorem{corol}{Corollary}
\newtheorem{propos}{Proposition}

\AtAppendix{\counterwithin{thm}{section}}
\AtAppendix{\counterwithin{corol}{section}}
\AtAppendix{\counterwithin{lem}{section}}
\AtAppendix{\counterwithin{propos}{section}}
\AtAppendix{\counterwithin{defn}{section}}
\AtAppendix{\counterwithin{rmk}{section}}
\AtAppendix{\counterwithin{ex}{section}}

\def\<{\langle}

\def\>{\rangle}
\def\<{\langle}
 
\DeclareMathOperator{\poly}{poly}

\renewcommand{\ket}[1]{|#1\rangle}

\renewcommand{\bra}[1]{\langle #1|}

\renewcommand{\vec}[1]{\boldsymbol{#1}}

\begin{document}

\title{Universal Bounds for Out-of-Distribution Unitary Learning}

\author{Joachim Favre}
\affiliation{Institute of Physics, Ecole Polytechnique F\'{e}d\'{e}rale de Lausanne (EPFL), CH-1015 Lausanne, Switzerland.}
\affiliation{Centre for Quantum Science and Engineering, Ecole Polytechnique F\'{e}d\'{e}rale de Lausanne (EPFL),  Lausanne, Switzerland}
\affiliation{Department of Computer Science, Princeton University, Princeton, NJ, USA}

\author{Armando Angrisani}
\affiliation{Institute of Physics, Ecole Polytechnique F\'{e}d\'{e}rale de Lausanne (EPFL), CH-1015 Lausanne, Switzerland.}
\affiliation{Centre for Quantum Science and Engineering, Ecole Polytechnique F\'{e}d\'{e}rale de Lausanne (EPFL),  Lausanne, Switzerland}
\affiliation{Department of Mathematical Sciences, University of Copenhagen, Universitetsparken 5, 2100 Denmark}

\author{Zo\"{e} Holmes}
\affiliation{Institute of Physics, Ecole Polytechnique F\'{e}d\'{e}rale de Lausanne (EPFL), CH-1015 Lausanne, Switzerland.}
\affiliation{Centre for Quantum Science and Engineering, Ecole Polytechnique F\'{e}d\'{e}rale de Lausanne (EPFL),  Lausanne, Switzerland}

\date{\today}

\begin{abstract}
How much can the action of an unknown unitary dynamics on one family of states reveal about its action elsewhere?
We develop a general framework for out-of-distribution quantum dynamics learning based on the first and second moments of the input distribution. These define a notion of the bias and expressivity of an ensemble, which together control generalization between arbitrary mixed-state training and testing distributions.
Our framework shows that only studying the action of a unitary on random real states can learn its action on random complex states, that tensor products of unbiased pure-state ensembles preserve expressivity, and that single-qubit risks are independent of bias. It also recovers earlier locally scrambling results as loose special cases and yields tight product-to-Haar bounds. We conclude with a series of no-go results and limited generalization bounds for learning certain channels, which further serve to underscore the distinctive role of unitarity in enabling out-of-distribution generalization. 
\end{abstract}

\maketitle

\section{Introduction}

Quantum processes are defined by their action on the full operator space, yet in general we only access, or seek to predict, their action on restricted families of states. While complete process tomography reconstructs this action everywhere~\cite{nielsen2000quantum,mohseni2008quantum, gutoski2014process, Zhou2014Process, roncallo2023Pauli, torlai2023quantum, wadhwa2024agnostic}, learning from input-output examples asks how many states, and what kinds of states, are sufficient to predict the process on new inputs~\cite{bisio2010optimal, arrazola2019machine, sharma2020reformulation, chung2021sample, caro2022learning, huang2023learning, gibbs2022dynamical, caro2021generalization,caro2022outofdistribution,angrisani2023learning, kunjummen2023shadow, wadhwa2023learning, jerbi2023power, chen2024predicting, fanizza2024learning}. The first question concerns in-distribution generalization, where the training and testing states are drawn from the same ensemble~\cite{bousquet2003introduction, mohri2018foundations, shalev2014understanding}. The second concerns out-of-distribution generalization, where they are drawn from different ensembles~\cite{quinonero2008dataset,shen2021towards}. 

While in-distribution generalization has been studied extensively in quantum learning~\cite{caro2020pseudo, abbas2020power, sharma2020reformulation, huang2021power, bu2021onthestatistical, banchi2021generalization, du2021efficient, gyurik2021structural, caro2021encodingdependent, caro2021generalization, chen2021expressibility, popescu2021learning, cai2022sample, Volkoff2021Universal}, much less is known about out-of-distribution generalization between different ensembles of quantum states.
Previous work has established that out-of-distribution generalization is possible in certain cases  for learning unknown unitary dynamics from pure-state inputs~\cite{caro2022outofdistribution, gibbs2022dynamical, pereira2025out}. In particular, learning a unitary on random  product states can be sufficient to control performance on globally Haar-random  states, and more broadly generalization is possible between any so-called locally scrambling ensembles~\cite{kuo2020markovian,hu2021classical}. Analogous results have been obtained also for channels for weaker learning models, where the goal is to predict local observables rather than approximate the output state~\cite{huang2023learning}. These results open up the possibility of learning quantum dynamics and compiling quantum circuits using only easy to prepare or simulate unentangled states~\cite{danna2025circuit}.
However, these bounds are limited to highly random and unbiased ensembles. 

In this work, we develop a general framework for out-of-distribution unitary learning. Our starting point is that the dependence of the risk on the input ensemble can be encoded in a moment matrix. In the Pauli representation, this matrix packages together the first and second moments of the distribution and naturally identifies two quantities that control generalization. The first is the \emph{bias}, which measures how far the average state of a pure state ensemble lies from the maximally mixed state. The second is a measure of \emph{expressivity} that captures how evenly the ensemble probes different directions in operator space. We use these quantities to derive out-of-distribution bounds between arbitrary pure-state distributions. Broadly, the expressivity measure of the training ensemble controls a multiplicative learning error and the biases of the training and testing ensembles determine additive errors. 

For unitary learning from pure-state inputs, this geometric viewpoint leads to several potentially surprising conclusions. For one, learning only on random real states is sufficient to control performance on arbitrary complex Haar-random states, with an overhead that approaches one as the number of qubits increases. Secondly, for pure-state product distributions, tensoring an unbiased ensemble preserves its expressivity exactly, but the behavior of tensor products of biased ensembles is more subtle. 
In the single-qubit setting, we then find that a biased ensemble of three suitably chosen states has exactly the same risk as Haar-random inputs, while training on just two real states controls the Haar risk up to a constant factor.
These (non-exhaustive) examples show that the apparent complexity of the training states is not, by itself, what determines their usefulness for learning.
Our framework also recovers earlier guarantees for locally scrambling pure-state ensembles as a special case~\cite{caro2022outofdistribution}, while showing that the corresponding previous bounds are loose and extending the comparison to products of Haar-random blocks.

We also derive general out-of-distribution bounds for unitary learning from mixed-state inputs. Here, a large mismatch between the average purities of the training and testing ensembles can force a large multiplicative overhead when the additive error is small. This is consistent with the intuition that training on noisier inputs to predict outputs for purer ones is more demanding: mixing an input with the maximally mixed state suppresses differences between unitaries, making those outputs easier to predict but less informative for learning.
Our bounds further show that out-of-distribution generalization is robust to noise as long as the rate is sufficiently weak such that the output state purities decay at worst inversely polynomially in system size.

For unitary learning, we derive nontrivial generic generalization bounds that hold for arbitrary unitaries. However, such guarantees do not extend to general channels. We present several no-go results showing that out-of-distribution generalization can be impossible for channels even in simple settings. For example, unlike in the unitary case, transferring a guarantee from globally Haar-random pure states to product-Haar states can require an exponential overhead, even for an efficiently implementable measure-and-prepare channel. This contrast underscores the distinctive role of unitarity in enabling out-of-distribution generalization.

However, these no-go results do not rule out useful guarantees for restricted channel classes. For sparse Pauli channels, training on globally Haar-random pure states controls performance on arbitrary testing distributions, including distributions over mixed states, with an overhead linear in the sparsity. Training on product-Haar states gives a similar guarantee with a quadratic overhead. Thus, Pauli sparsity provides a setting in which polynomial out-of-distribution generalization remains possible beyond unitary learning. These bounds may be useful for noise learning when the noise process is well approximated by a sparse Pauli channel~\cite{flammia2020efficient,harper2020efficient,harper2021fast,flammia2021pauli,chen2022quantum}.

\section{Background}

In this work, we study how to learn an unknown $n$-qubit quantum channel $\mathcal{E}$ from input-output examples drawn from one distribution of states, with the aim of predicting its action on states drawn from a potentially different distribution. We will focus primarily on unitary channels but frame our setting more generally here for completeness.

More precisely, we assume access to a training data set
\begin{equation}
    \mathcal{D}_{\rm train}(N)
    =
    \left\{
    \left(
        \rho^{(j)}, \mathcal{E}\left(\rho^{(j)}\right)
    \right)
    \right\}_{j=1}^{N},
\end{equation}
where the input states $\rho^{(j)}$ are sampled independently from a training distribution $\mathcal{P}_{\rm train}$.
The task is to use this data to train the parameters $\boldsymbol{\alpha}$ of a model channel $\mathcal{E}_{\boldsymbol{\alpha}}$. The learned channel should then approximate $\mathcal{E}$ not only on states typical of the training distribution $\mathcal{P}_{\rm train}$, but also on states drawn from a testing distribution $\mathcal{P}_{\rm test}$, which need not coincide with $\mathcal{P}_{\rm train}$.

We quantify performance using a loss function $\ell$ that compares the true output state $\mathcal{E}(\rho)$ with the predicted output state $\mathcal{E}_{\boldsymbol{\alpha}}(\rho)$. The quantity that we ultimately care about is the average loss on the testing distribution $\mathcal{P}_{\rm test}$,
\begin{equation}\label{eq:test-cost-channel}
    R_{\mathcal{P}_{\rm test}}(\boldsymbol{\alpha})
    =
    \mathbb{E}_{\rho \sim \mathcal{P}_{\rm test}}
    \left[
        \ell\left(
            \mathcal{E}(\rho),
            \mathcal{E}_{\boldsymbol{\alpha}}(\rho)
        \right)
    \right].
\end{equation}
We refer to this quantity as the expected risk. It measures how well the candidate channel $\mathcal{E}_{\boldsymbol{\alpha}}$ predicts the action of the target channel on the states on which we want the learned model to generalize.

Although learning is possible for a variety of losses, in what follows we will primarily use the squared Schatten $2$-norm loss
\begin{equation}\label{eq:two-norm-loss-channel}
\ell\left(
    \mathcal{E}(\rho),
    \mathcal{E}_{\boldsymbol{\alpha}}(\rho)
\right)
=
\frac{1}{2}
\left\|
    \mathcal{E}(\rho)
    -
    \mathcal{E}_{\boldsymbol{\alpha}}(\rho)
\right\|_2^2 .
\end{equation}
Here $\|A\|_2 = \sqrt{\operatorname{Tr}(A^\dagger A)}$ denotes the Hilbert-Schmidt norm. The factor of $1/2$ ensures that the loss takes values in $[0,1]$.
In the case where both the target and model channels are unitary,
\begin{equation}
    \mathcal{E}(\rho)
    =
    U \rho U^\dagger,
    \qquad
    \mathcal{E}_{\boldsymbol{\alpha}}(\rho)
    =
    V(\boldsymbol{\alpha}) \rho V(\boldsymbol{\alpha})^\dagger,
\end{equation}
and where the input state is pure, $\rho = \ket{\Psi}\bra{\Psi}$, this loss reduces to the usual pure-state infidelity, i.e.,
\begin{align}\label{eq:two-norm-loss-unitary}
&\ell\left(
    U\ket{\Psi}\bra{\Psi}U^\dagger,
    V(\boldsymbol{\alpha})\ket{\Psi}\bra{\Psi}V(\boldsymbol{\alpha})^\dagger
\right)
\\&=
1
-
\left|
    \bra{\Psi}
    V(\boldsymbol{\alpha})^\dagger U
    \ket{\Psi}
\right|^2 .
\end{align}

In practice, the learner does not have access to the testing distribution $\mathcal{P}_{\rm test}$, and therefore cannot evaluate the expected risk in Eq.~\eqref{eq:test-cost-channel}. Instead, training is performed using the finite sample $\mathcal{D}_{\rm train}(N)$. The corresponding empirical loss is
\begin{equation}\label{eq:empirical-loss-channel}
    L_{\rm train}(\boldsymbol{\alpha}, N)
    =
    \frac{1}{N}
    \sum_{j=1}^{N}
    \ell\left(
        \mathcal{E}\left(\rho^{(j)}\right),
        \mathcal{E}_{\boldsymbol{\alpha}}\left(\rho^{(j)}\right)
    \right).
\end{equation}
{\color{ black}Training amounts to finding parameters $\boldsymbol{\alpha}_{\rm opt}$ that make the empirical loss $L_{\rm train}(\boldsymbol{\alpha}_{\rm opt},N)$ small.
However, this is only a proxy for the quantity of interest which is the expected risk 
$R_{\mathcal{P}_{\rm test}}(\boldsymbol{\alpha}_{\rm opt})$ on the testing distribution. 

The generalization error, defined as
\begin{equation}\label{eq:gen-error-channel}
    \operatorname{gen}
    :=
    R_{\mathcal{P}_{\rm test}}
    -
    L_{\rm train}
    \left( N
    \right) 
\end{equation}
(where we drop the $\boldsymbol{\alpha}$ dependence for compactness), 
can be decomposed into in-distribution and out-of-distribution contributions:
\begin{equation}\label{eq:gen-error-decomposition-channel}
\begin{aligned}
    \operatorname{gen}
    =
    &
    \underbrace{
    R_{\mathcal{P}_{\rm train}}
    -
    L_{\rm train}
    \left(N\right)
    }_{\text{in-distribution generalization error}}
    +
    \underbrace{
    R_{\mathcal{P}_{\rm test}}
    -
    R_{\mathcal{P}_{\rm train}}
    }_{\text{out-of-distribution error}} .
\end{aligned}
\end{equation}
The first term is the usual finite-sample contribution, i.e., it measures how well the empirical training loss estimates the expected risk on the same distribution. In the in-distribution setting, where $\mathcal{P}_{\rm test}=\mathcal{P}_{\rm train}$, the second term vanishes and the problem reduces to controlling this standard generalization error. This contribution has already been studied in the context of learning quantum processes. In particular, for models with independently trainable local gates, the empirical loss is known to be a good proxy for $R_{\mathcal{P}_{\rm train}}(\boldsymbol{\alpha}_{\rm opt})$ once the number of training samples is comparable to the number of trainable parameters~\cite{caro2021generalization}.
In this paper, we focus on the out-of-distribution term, which measures how the expected risk of the learned model changes between the training and testing input-state distributions.

Concretely, in this work we ask when good performance on the training distribution $\mathcal{P}_{\rm train}$ guarantees good performance on the testing distribution $\mathcal{P}_{\rm test}$.
At the level of expected risks, this amounts to finding constants $A,B\geq 0$ such that
\begin{equation}\label{eq:ood-risk-comparison-channel}
    R_{\mathcal{P}_{\rm test}}
    \leq
    A R_{\mathcal{P}_{\rm train}}
    +
    B
\end{equation}
for all model and target processes in the class under consideration. The useful regime is one in which $A$ is not too large and $B$ is small, or ideally zero. In that case, any model with small risk on $\mathcal{P}_{\rm train}$ is guaranteed to also have small risk on $\mathcal{P}_{\rm test}$. Combining this risk comparison with standard in-distribution generalization bounds~\cite{caro2020pseudo, abbas2020power, sharma2020reformulation, huang2021power, bu2021onthestatistical, banchi2021generalization, du2021efficient, gyurik2021structural, caro2021encodingdependent, caro2021generalization, chen2021expressibility, popescu2021learning, cai2022sample, Volkoff2021Universal} then yields a corresponding guarantee for models trained on finitely many samples.

\section{Preliminaries}

Before stating our main results, we rewrite the risk to identify two properties of the input distribution that control our \textit{unitary} learning analysis: its bias and expressivity. (This analysis does not extend directly to general channels, so we will use different arguments for our channel guarantees and no-go results.)  \\

\paragraph*{A convenient expression for the risk.}
The first step of our analysis is to rewrite the risk in a form that makes its dependence on the distribution of states easier to study.
Expanding the squared Hilbert--Schmidt loss, the unitary-learning risk can be written as
\begin{equation}\label{eq:unitary-loss-trace-main-text}
\begin{aligned}
R_{\mathcal{P}}\left(U,V\right)
=
\bar{\gamma}_{\mathcal{P}}
-
\mathbb{E}_{\rho\sim\mathcal{P}}
\operatorname{Tr}
\left(
U^\dagger V\rho V^\dagger U\rho
\right),
\end{aligned}
\end{equation}
where $\bar{\gamma}_{\mathcal{P}}$ denotes the average purity of the states in $\mathcal{P}$,
\begin{equation}\label{eq:average-purity-main-text}
\bar{\gamma}_{\mathcal{P}}
:=
\mathbb{E}_{\rho\sim\mathcal{P}}
\left[
\operatorname{Tr}\left(\rho^2\right)
\right],
\end{equation}
and, for pure inputs, $\bar{\gamma}_{\mathcal{P}}=1$.

The second term in Eq.~\eqref{eq:unitary-loss-trace-main-text} can be conveniently rewritten using vectorization notation. For any operator $A=\sum_{i,j}A_{ij}\ket{i}\bra{j}$, we define
\begin{equation}\label{eq:vectorization-main-text}
\kket{A}
:=
\operatorname{vec}(A)
=
\sum_{i,j}
A_{ij}\ket{i}\otimes\ket{j}.
\end{equation}
With this convention, the vectorized inner product is the Hilbert--Schmidt inner product,
$\bbrakket{A}{B}=\operatorname{Tr}(A^\dagger B)$, and the standard ricochet identity reads
$\kket{ABC}=(A\otimes C^{\mathsf T})\kket{B}$, where $^{\mathsf T}$ denotes the transpose in the computational basis. 
We can therefore write
\begin{equation}
\begin{aligned}
\operatorname{Tr}
\left(
U^\dagger V\rho V^\dagger U\rho
\right)
&=
\bbrakket{V^\dagger U}{\rho V^\dagger U\rho}
\\
&=
\bbra{V^\dagger U}
\left(
\rho\otimes\rho^{\mathsf T}
\right)
\kket{V^\dagger U}.
\end{aligned}
\end{equation}

The risk for unitary learning can therefore be written as
\begin{align}
R_{\mathcal{P}}
&=
\bar{\gamma}_{\mathcal{P}}
-
\bbra{w}
M(\mathcal{P})
\kket{w}
\label{eq:risk-rayleigh-main-text}
\end{align}
where we have defined the ensemble-dependent matrix
\begin{align}
M(\mathcal{P})
&:=
2^n
\mathbb{E}_{\rho\sim\mathcal{P}}
\left[
\rho\otimes\rho^{\mathsf T}
\right]
\label{eq:fsm-matrix-main-text}
\end{align}
and the normalized vector
\begin{equation}\label{eq:w-vector-main-text}
\kket{w}
:=
\frac{1}{\sqrt{2^n}}
\kket{V^\dagger U}.
\end{equation}
The same ensemble-dependent
moment matrix, $M(\mathcal{P})$, also appears in Ref.~\cite{chen2024predicting} for single-qubit pure inputs,
which studies in-distribution generalization for observable
prediction with arbitrary quantum channels and product input
distributions.

This expression for the risk, Eq.~\eqref{eq:risk-rayleigh-main-text}, is the core mathematical structure we use to derive our results (Lemma~\ref{lma:risk-rayleigh} in the Appendices).
The vector $\kket{w}$ depends only on the mismatch between the target unitary $U$ and the model unitary $V$, while the matrix $M(\mathcal{P})$ depends only on the input distribution $\mathcal{P}$. We further note that  $\bar{\gamma}_{\mathcal{P}}=2^{-n}\bbra{I}M(\mathcal{P})\kket{I}$ and so even in the case of mixed inputs, all dependence on the input distribution can be studied by looking at $M(\mathcal{P})$. In this manner, comparing risks for different input distributions reduces to comparing their moment matrices and it is possible to derive bounds for arbitrary target and model unitaries. \\

In the case of channel learning, the risk admits several analogous decompositions that separate its dependence on the input ensemble from its dependence on the channels. However, none appears to give useful general bounds. We therefore work directly with the risk for specific channels and restricted channel families, obtaining both no-go theorems and specialized positive results.

\medskip

\paragraph*{The moment matrix.}

 The matrix $M(\mathcal{P})$ contains information about the first and second moments of the distribution $\mathcal{P}$ and so we will call it the \textit{moment matrix} for brevity.

Let $\widetilde{I}:=I/\sqrt{2^n}$ denote the normalized identity operator and let
\begin{equation}\label{eq:nonidentity-projector-main-text}
{\color{ black}
\Pi_{\rm ni}
=
I \otimes I
-
\kket{\widetilde{I}}\bbra{\widetilde{I}}
}
\end{equation}
be the projector onto the subspace spanned by the \textit{non-identity} Pauli strings.
For a state $\rho$, we define its non-identity Pauli vector as
\begin{equation}\label{eq:nonidentity-pauli-vector-main-text}
\kket{r(\rho)}
=
\sqrt{2^n}
\Pi_{\rm ni}
\kket{\rho}.
\end{equation}
For any state $\rho$, the vectorized density matrix decomposes as
\begin{equation}\label{eq:rho-vector-nonidentity-decomposition-main-text}
{\color{ black}
\kket{\rho}
=
\frac{1}{\sqrt{2^n}}
\left(
\kket{\widetilde{I}}
+
\kket{r(\rho)}
\right).
}
\end{equation}
Thus, $\kket{r(\rho)}$ is the part of the vectorized state supported on non-identity Pauli strings, rescaled so that its entries are the usual non-identity Pauli expectation values.

{\color{ black}
The moment matrix of any ensemble can now be written as
}
\begin{equation}\label{eq:fsm-decomposition-main-text}
{\color{ black}
M(\mathcal{P})
=
\begin{pmatrix}
\bar{\gamma}_{\mathcal{P}} & \bbra{\mu} \\
\kket{\mu} & \mathcal{S}
\end{pmatrix}.
}
\end{equation}

{\color{ black}
Here the off-diagonal block is
}
\begin{equation}\label{eq:mu-definition-main-text}
{\color{ black}
\kket{\mu}
:=
\Pi_{\rm ni}
M(\mathcal{P})
\kket{\widetilde{I}}
=
\sqrt{2^n}\,
\Pi_{\rm ni}
\mathbb{E}_{\rho\sim\mathcal{P}}
\left[
\kket{\rho^2}
\right],
}
\end{equation}
{\color{ black}
while the non-identity block is
}
\begin{equation}\label{eq:S-definition-main-text}
{\color{ black}
\mathcal{S}
:=
\Pi_{\rm ni}
M(\mathcal{P})
\Pi_{\rm ni}.
}
\end{equation}
Thus, for mixed inputs, $\kket{\mu}$ contains the non-identity Pauli components of $\mathbb{E}_{\rho\sim\mathcal{P}}[\rho^2]$, while $\mathcal{S}$ contains the remaining quadratic moments $\mathbb{E}_{\rho\sim\mathcal{P}}[\operatorname{Tr}(P_j\rho P_k\rho)]$.
For pure inputs, $\rho^2=\rho$ and $\rho P_k\rho=\operatorname{Tr}(P_k\rho)\rho$. Consequently, we have 
\begin{equation}\label{eq:pure-state-moment-blocks-main-text}
\begin{aligned}
\kket{\mu}
&=
\mathbb{E}_{\rho\sim\mathcal{P}}
\left[
\kket{r(\rho)}
\right],
\\
\mathcal{S}
&=
\mathbb{E}_{\rho\sim\mathcal{P}}
\left[
\kket{r(\rho)}
\bbra{r(\rho)}
\right].
\end{aligned}
\end{equation}
{\color{ black}
In both cases, $M(\mathcal{P})$ packages together the usual first and second Pauli moments for pure-state ensembles. We therefore also call $\kket{\mu}$ the first moment and $\mathcal{S}$ the second moment of $\mathcal{P}$.
}

\medskip

\paragraph*{Bias.}

The first quantity that enters our bounds is the bias $\nu := \sqrt{\bbrakket{\mu}{\mu}}$, the norm of the vector $\kket{\mu}$.
By definition, $\kket{\mu}$ contains the non-identity Pauli components of $\mathbb{E}_{\rho\sim\mathcal{P}}[\rho^2]$. Its norm therefore admits the closed-form expression
\begin{equation}\label{eq:bias-average-state-main-text}
\nu
=
\sqrt{
2^n
\operatorname{Tr}
\left[
\left(
\mathbb{E}_{\rho\sim\mathcal{P}}
\left[\rho^2\right]
\right)^2
\right]
-
\bar{\gamma}_{\mathcal{P}}^2
}.
\end{equation}
For pure-state ensembles, $\rho^2=\rho$ and $\bar{\gamma}_{\mathcal{P}}=1$, so this reduces to the original interpretation: the bias measures how far the average state lies from the maximally mixed state. More generally, $\nu=0$ precisely when $\mathbb{E}_{\rho\sim\mathcal{P}}[\rho^2]=\bar{\gamma}_{\mathcal{P}}I/2^n$.
At the other extreme, a distribution supported on a single pure state has maximal bias, with $\nu=\sqrt{2^n-1}$ (see Appendix~\ref{sec:general-framework} for more details)}.

For pure-state ensembles, small bias is a non-trivial condition.
In particular, one can show that requiring $\nu<1$ forces the distribution to have exponentially many states in its support (see Theorem~\ref{thm:bound-support-size-mu-eta}).
However, small bias alone does not imply that a distribution probes a broad range of states or directions.
For example, the uniform distribution on $\left\{\ket{0},\ket{1}\right\}^{\otimes n}$ is unbiased, since it averages to the maximally mixed state, but it remains highly structured and therefore lowly expressive in a sense that we will now introduce.\\

\paragraph*{Expressivity.}
The second quantity that enters our bounds is the expressivity measure,
\begin{equation}\label{eq:expressivity-main-text}
{\color{ black}
\eta
=
\bar{\gamma}_{\mathcal{P}}
-
\lambda_{\max}
\left(
\mathcal{S}
\right).
}
\end{equation}
This gives the purity-dependent bound
\begin{equation}\label{eq:eta-range-main-text}
2\bar{\gamma}_{\mathcal{P}}-2^n
\leq
\eta
\leq
\frac{4^n}{4^n-1}
\left(
\bar{\gamma}_{\mathcal{P}}-\frac{1}{2^n}
\right) , 
\end{equation}
which for pure-state ensembles reduces to  $ 2-2^n\leq\eta\leq1-1/(2^n+1)$. The lower extreme is attained by a distribution concentrated on a single pure state, for which the second moment is supported entirely along one direction. The upper extreme is attained by the Haar distribution, whose second moment is isotropic on the non-identity Pauli subspace.
Thus, positive $\eta$ rules out distributions that are too concentrated along any single Pauli direction. At fixed average purity, larger values of $\eta$ indicate that the distribution spreads more evenly over the non-identity Pauli degrees of freedom.

\medskip
\paragraph*{Single qubit illustrative example.}
These definitions of the bias and expressivity are particularly intuitive for single qubit pure state ensembles. Every pure state can be written in standard Bloch form as
\begin{equation}\label{eq:single-qubit-bloch-state}
\rho
=
\frac{1}{2}
\left(
I
+
r_x X
+
r_y Y
+
r_z Z
\right),
\end{equation}
where $\boldsymbol{r}=(r_x,r_y,r_z)^{\mathsf{T}}$ is the corresponding Bloch vector. In our notation, the non-identity Pauli vector $\kket{r(\rho)}$ is precisely this Bloch vector. Thus, for a single-qubit distribution $\mathcal{P}$,
\begin{equation}\label{eq:single-qubit-mu-S}
\kket{\mu}
=
\begin{pmatrix}
\mathbb{E}_{\rho\sim \mathcal{P}}[r_x]\\
\mathbb{E}_{\rho\sim \mathcal{P}}[r_y]\\
\mathbb{E}_{\rho\sim \mathcal{P}}[r_z]
\end{pmatrix},
\qquad
\mathcal{S}
=
\mathbb{E}_{\rho\sim \mathcal{P}}
\left[
\boldsymbol{r}\boldsymbol{r}^{\mathsf{T}}
\right].
\end{equation}
Equivalently, writing $\mu_j=\mathbb{E}_{\rho\sim \mathcal{P}}[r_j]$ and $\mathcal{S}_{jk}=\mathbb{E}_{\rho\sim \mathcal{P}}[r_jr_k]$ for $j,k\in\{x,y,z\}$, the moment matrix is
\begin{equation}\label{eq:single-qubit-moment-matrix-explicit}
M(\mathcal{P})
=
\begin{pmatrix}
1 & \mu_x & \mu_y & \mu_z\\
\mu_x & \mathcal{S}_{xx} & \mathcal{S}_{xy} & \mathcal{S}_{xz}\\
\mu_y & \mathcal{S}_{yx} & \mathcal{S}_{yy} & \mathcal{S}_{yz}\\
\mu_z & \mathcal{S}_{zx} & \mathcal{S}_{zy} & \mathcal{S}_{zz}
\end{pmatrix}.
\end{equation}
In this case the bias $\nu$ is simply the length of the average Bloch vector, while the expressivity
\begin{equation}\label{eq:single-qubit-expressivity}
\eta
=
1
-
\lambda_{\max}\left(\mathcal{S}\right)
\end{equation}
measures how much the distribution spreads over different Bloch-sphere directions. A formal visualization of $\eta$ on the Bloch sphere can be found in Theorem~\ref{thm:eta-visualization} and Figure~\ref{fig:eta-visualization} in the appendices, but one should really just think of $\eta$ as a measure of spread. Both intuitions in fact carry over to mixed inputs (see Lemma~\ref{lma:mean-for-single-qubit} in the appendices for a proof of this fact for the bias).

This is the same single-qubit bias and expressivity measure used in Ref.~\cite{chen2024predicting}. More precisely, their second moment matrix for a one-qubit input distribution coincides with our $\mathcal{S}$, and their expressivity parameter coincides with our $\eta=1-\lambda_{\max}(\mathcal{S})$. Our bias is likewise the norm of the corresponding mean Bloch vector. Thus, on a single qubit, our definitions reduce to the quantities used in Ref.~\cite{chen2024predicting} but here we also use the same moment-matrix language for arbitrary $n$-qubit input distributions over mixed states.

\section{Unitary Learning}

In this section we present our main results for the unitary learning setting. 
All results in this section hold for any two unitaries $U$ and $V$ and so we do not explicitly show their dependence on them, i.e, we write $R \equiv R(U, V)$.
We first give an aesthetic general comparison theorem for arbitrary mixed-state distributions before introducing a second theorem for producing tight comparison bounds.
We then specialize this theorem to Haar-random states, random real states, product distributions and locally scrambling distributions. For mixed-state inputs, we establish a necessary purity penalty and show that some guarantees survive local depolarizing noise, although with a rapidly growing overhead that is expected due to the purity penalty.

\subsection{General expressivity and bias dependent bound}

Here we present a general out-of-distribution generalization bound, which relates the risk on a testing ensemble $\mathcal{P}_{\rm test}$ to that on a training ensemble $\mathcal{P}_{\rm train}$. 
\begin{thm}[General unitary learning guarantee]\label{thm-main-text:general-unitary-comparison}
Let $\mathcal{P}_{\rm train}$ and $\mathcal{P}_{\rm test}$ be arbitrary mixed-state distributions, with biases $\nu_{\rm train}$ and $\nu_{\rm test}$, respectively.
Let $\eta_{\rm train}$ be the expressivity measure of $\mathcal{P}_{\rm train}$.
If $\eta_{\rm train} > 0$, then, for all unitaries $U,V$,
\begin{equation}\label{eq:general-unitary-comparison-main-text}
R_{\mathcal{P}_{\rm test}}
\leq
\frac{
R_{\mathcal{P}_{\rm train}}
+
\nu_{\rm train}
}{
\eta_{\rm train}
}
+
\nu_{\rm test}.
\end{equation}
\end{thm}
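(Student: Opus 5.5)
The plan is to work directly with the Rayleigh-quotient form of the risk, Eq.~\eqref{eq:risk-rayleigh-main-text}, together with the block decomposition of the moment matrix in Eq.~\eqref{eq:fsm-decomposition-main-text}. The key observation is that the vector $\kket{w}=\kket{V^\dagger U}/\sqrt{2^n}$ does not depend on the ensemble. So I will extract a single ensemble-independent quantity from $\kket{w}$, namely its weight on the non-identity subspace. Then I will bound it from above using the training risk and bound the testing risk from above by it.

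\textbf{Decomposition of $\kket{w}$.} Write $\kket{w}=a\kket{\widetilde{I}}+\kket{w_\perp}$, where $a=\bbrakket{\widetilde{I}}{w}=\operatorname{Tr}(V^\dagger U)/2^n$ and $\kket{w_\perp}=\Pi_{\rm ni}\kket{w}$. Since $V^\dagger U$ is unitary, $\kket{w}$ is normalized, so $|a|^2+t=1$ with $t:=\bbrakket{w_\perp}{w_\perp}$. Substituting into Eq.~\eqref{eq:risk-rayleigh-main-text} with the block form of $M(\mathcal{P})$ gives, for any mixed-state ensemble $\mathcal{P}$,
\begin{equation*}
R_{\mathcal{P}}=\bar{\gamma}_{\mathcal{P}}\,t-2\,\mathrm{Re}\!\left(a^*\bbrakket{\mu}{w_\perp}\right)-\bbra{w_\perp}\mathcal{S}\kket{w_\perp}.
\end{equation*}
Here I use $\bar{\gamma}_{\mathcal{P}}(1-|a|^2)=\bar{\gamma}_{\mathcal{P}}t$ and the Hermiticity of $M(\mathcal{P})$. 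This is the one step where some care is needed: the $\bbra{\widetilde{I}}M\kket{\widetilde{I}}$ entry must be $\bar{\gamma}_{\mathcal{P}}$, and the off-diagonal block must be exactly $\kket{\mu}$ as in Eq.~\eqref{eq:mu-definition-main-text}.

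\textbf{Lower bound via the training ensemble.} Apply the identity to $\mathcal{P}_{\rm train}$. Use $\bbra{w_\perp}\mathcal{S}\kket{w_\perp}\le\lambda_{\max}(\mathcal{S})\,t$, and bound the cross term by Cauchy--Schwarz as $2|a|\,\nu_{\rm train}\sqrt{t}$. Since $|a|^2+t=1$, AM--GM gives $2|a|\sqrt{t}\le 1$. Together these yield
\begin{equation*}
R_{\mathcal{P}_{\rm train}}\ge \eta_{\rm train}\,t-\nu_{\rm train}.
\end{equation*}
Because $\eta_{\rm train}>0$, this rearranges to $t\le (R_{\mathcal{P}_{\rm train}}+\nu_{\rm train})/\eta_{\rm train}$.

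\textbf{Upper bound via the testing ensemble.} Apply the identity to $\mathcal{P}_{\rm test}$. Here $\mathcal{S}=\Pi_{\rm ni}M\Pi_{\rm ni}$ is positive semidefinite, because $M(\mathcal{P})=2^n\mathbb{E}[\rho\otimes\rho^{\mathsf T}]$ is a positive combination of tensor products of PSD operators; so the quadratic term can be dropped. The cross term is again at most $2|a|\sqrt{t}\,\nu_{\rm test}\le\nu_{\rm test}$. Finally $\bar{\gamma}_{\rm test}\le 1$, so
\begin{equation*}
R_{\mathcal{P}_{\rm test}}\le t+\nu_{\rm test}.
\end{equation*}
Substituting the bound on $t$ from the training step gives Eq.~\eqref{eq:general-unitary-comparison-main-text}.

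I do not expect a serious obstacle. The delicate points are bookkeeping: correctly identifying the blocks of $M(\mathcal{P})$ for mixed inputs, where $\kket{\mu}$ comes from $\rho^2$ rather than $\rho$, and recognising that AM--GM on $|a|\sqrt{t}$ is what turns each cross term into a bare additive $\nu$.
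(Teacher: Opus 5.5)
Your argument is correct and is essentially the paper's proof. The identity/non-identity split of $\kket{w}$ is Lemma~\ref{lma:risk-without-identity}, your Cauchy--Schwarz-plus-$2|a|\sqrt{t}\le 1$ step is Lemma~\ref{lma:m-mean-bound}, and the Rayleigh-quotient bounds on $\mathcal{S}$ are those behind part~4 of Theorem~\ref{thm:inequalities} and Corollary~\ref{crl:zero-mean-generalisation}. The only difference is bookkeeping: you route through $t=\|\Pi_{\rm ni}\kket{w}\|_2^2$ explicitly, which also covers $t=0$ with no separate case, whereas the paper first combines the two mean terms into $\|\mu_{\rm test}-\alpha\mu_{\rm train}\|_2$ and then applies the triangle inequality, a route that additionally gives its sharper collinear-mean bound.
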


This is a general out-of-distribution statement in the sense that no structural relation between $\mathcal{P}_{\rm train}$ and $\mathcal{P}_{\rm test}$ is assumed. The theorem tells us that small risk on $\mathcal{P}_{\rm train}$ transfers to small risk on $\mathcal{P}_{\rm test}$ whenever both distributions have small bias and the expressivity of the training ensemble is not too small. In particular, when both distributions are unbiased, it reduces to a purely multiplicative comparison with overhead $1/\eta_{\rm train}$. If the bias is too large the bound becomes trivial and if the expressivity too small learning only transfers in the limit of perfect learning. This result is what justifies the definition of bias and expressivity, and should be kept in mind whenever thinking about one of these two objects.

The theorem as stated is asymmetric: only the expressivity of the distribution used to control the risk, here $\mathcal{P}_{\rm train}$, appears in the denominator. This is natural, since an overly concentrated training ensemble cannot certify the behavior of $U$ and $V$ along directions that it does not probe. However, the testing ensemble also affects the cost of transfer, since some ensembles are more sensitive to differences between the two unitaries. 
Part~4 of Theorem~\ref{thm:inequalities} in the appendix captures this dependence through one more multiplicative factor, $1-\lambda_{\min}(\mathcal{S}_{\rm test})$ for pure-state distributions, which is bounded by one in the simpler result above.

In the following subsections we specialize this general comparison to more structured ensembles.
These examples show how the abstract quantities $\nu$ and $\eta$ behave for Haar-random states, random real states, product distributions, locally scrambling distributions, and Haar-random states under depolarizing noise.

\subsection{Tight bounds}

Before turning to applications, we discuss how we can derive tight bounds. Readers primarily interested in the applications may skip this subsection.

Theorem~\ref{thm-main-text:general-unitary-comparison} only depends on values that can be independently computed for the training and test distributions. This is both its strength and its weakness: it is simple to use, but we cannot expect it to yield tight inequalities in general. Using a similar proof technique as the one used for Theorem~\ref{thm-main-text:general-unitary-comparison}, we come up with the following theorem, which has stronger hypotheses but is able to produce tight inequalities.
\begin{thm}[Tight unitary learning guarantee]\label{thm-main-text:tight-unitary-comparison}
    Let $\mathcal{P}_{\rm train}$ and $\mathcal{P}_{\rm test}$ be arbitrary pure-state distributions, with expressivity $\eta_{\rm train}, \eta_{\rm test}$, respectively. Let $\mathcal{S}_{\rm train}$ and $\mathcal{S}_{\rm test}$ be their second moments, as defined from their moment matrices in Eq.~\ref{eq:fsm-decomposition-main-text}. If both distributions are unbiased, $\eta_{\rm train} > 0$ and $\mathcal{S}_{\rm train}, \mathcal{S}_{\rm test}$ are both diagonalizable in the Pauli basis, then, for all unitaries $U, V$,
    \begin{equation}\label{eq:tight-inequality-equation-theorem}
        R_{\mathcal{P}_{\rm test}} \leq \left(\max_{\substack{P\ \mathrm{Pauli} \\ P \neq I}} \frac{1 - \lambda_P(\mathcal{S}_{\rm test})}{1 - \lambda_P(\mathcal{S}_{\rm train})}\right)R_{\mathcal{P}_{\rm train}},
    \end{equation}
    where $\lambda_P(\mathcal{S})$ is the eigenvalue associated to the eigenvector $\kket{P}$ for $\mathcal{S}$.
    
    Moreover, Eq.~\ref{eq:tight-inequality-equation-theorem} is reached with equality for any unitaries $U, V$ such that $V^\dagger U = P^*$, where $P^*$ is a non-identity Pauli that maximizes the ratio above.
\end{thm}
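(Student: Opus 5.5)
The plan is to work entirely in the Rayleigh-quotient form of the risk, Eq.~\eqref{eq:risk-rayleigh-main-text}, expanded in the normalized Pauli basis. For pure-state ensembles $\bar{\gamma}_{\mathcal{P}}=1$. Unbiasedness means $\kket{\mu}=0$, so the moment matrix is block diagonal:
\begin{equation*}
M(\mathcal{P}) = \kket{\widetilde{I}}\bbra{\widetilde{I}} + \mathcal{S}.
\end{equation*}
By hypothesis $\mathcal{S}$ is diagonal in the normalized Pauli basis $\{\kket{P}/\sqrt{2^n}\}_{P\neq I}$, with eigenvalues $\lambda_P(\mathcal{S})$. I will write the unit vector $\kket{w}=2^{-n/2}\kket{V^\dagger U}$ as $\kket{w}=\sum_{P} c_P \kket{P}/\sqrt{2^n}$, where $P$ runs over all Paulis including $I$ and $\sum_P |c_P|^2=1$. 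Unitarity of $V^\dagger U$ is what gives this normalization.

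First I compute
\begin{equation*}
R_{\mathcal{P}} = 1 - |c_I|^2 - \sum_{P\neq I}\lambda_P(\mathcal{S})|c_P|^2 = \sum_{P\neq I}\bigl(1-\lambda_P(\mathcal{S})\bigr)|c_P|^2,
\end{equation*}
using $\sum_P|c_P|^2=1$. This expresses both $R_{\mathcal{P}_{\rm train}}$ and $R_{\mathcal{P}_{\rm test}}$ as nonnegative combinations of the same weights $|c_P|^2$, with coefficients $a_P=1-\lambda_P(\mathcal{S}_{\rm test})$ and $b_P=1-\lambda_P(\mathcal{S}_{\rm train})$.

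Next I need $b_P>0$ for every $P\neq I$. This follows from $\eta_{\rm train}=1-\lambda_{\max}(\mathcal{S}_{\rm train})>0$, since the $\lambda_P$ are exactly the eigenvalues of $\mathcal{S}_{\rm train}$ on the non-identity subspace. I also note $a_P\ge 0$, because $\mathcal{S}_{\rm test}\preceq$ its trace bound. More simply, $\lambda_{\max}(\mathcal{S}_{\rm test})\le 1$ whenever $\eta_{\rm test}\ge 0$, and in any case nonnegativity of $a_P$ is not even needed for the upper bound. The standard mediant inequality then gives
\begin{equation*}
\sum_P a_P|c_P|^2\le \Bigl(\max_P a_P/b_P\Bigr)\sum_P b_P|c_P|^2,
\end{equation*}
term by term, because $a_P\le (\max a/b)\,b_P$ and $|c_P|^2\ge0$. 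This is Eq.~\eqref{eq:tight-inequality-equation-theorem}.

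For tightness, take $V^\dagger U=P^*$. This is unitary, and then $\kket{w}=\kket{P^*}/\sqrt{2^n}$, so $c_{P^*}=1$ and all other $c_P=0$. Both risks then reduce to $a_{P^*}$ and $b_{P^*}$, and equality holds.

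The main subtlety is bookkeeping rather than a hard step. I must check that the phase of $c_P$ and the transpose in $\rho\otimes\rho^{\mathsf T}$ do not spoil diagonality. Here the hypothesis that $\mathcal{S}$ is diagonal in the Pauli basis with respect to the vectorization convention is exactly what is used. I must also confirm that $\mathcal{S}$ annihilates $\kket{\widetilde{I}}$ so the identity block decouples, which follows from $\mathcal{S}=\Pi_{\rm ni}M\Pi_{\rm ni}$.
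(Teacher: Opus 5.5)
Your proof is correct and is essentially the paper's argument. The paper (Theorem~\ref{thm:inequalities}, part 2, via Lemma~\ref{lma:x-inequalities}) uses the same simultaneous Pauli-basis diagonalization of $M = 1\oplus\mathcal{S}$ for both unbiased ensembles. It phrases your termwise comparison $a_P \le \alpha\, b_P$ as the spectral condition $\lambda_{\min}(M_{\rm test}-\alpha M_{\rm train}) \ge 1-\alpha$ followed by a Rayleigh-quotient bound, and obtains tightness exactly as you do, by taking $V^\dagger U = P^*$. Your aside on the sign of $a_P$ is garbled, but as you note it is unnecessary, since only $b_P>0$ is used.
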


There are two hypotheses in this theorem that are not in Theorem~\ref{thm-main-text:general-unitary-comparison}. The first one is that both distributions are unbiased (although Theorem~\ref{thm-main-text:general-unitary-comparison} already required that both biases are small to produce non-vacuous inequalities). The second one, much more restrictive, is that $\mathcal{S}_{\rm train}$ and $\mathcal{S}_{\rm test}$ are both diagonalizable in the Pauli basis. Instead of only looking at the maximum spread over the different Bloch-sphere distributions through expressivity measures $\eta$, this hypothesis makes it possible to jointly analyze the spread of both distributions. Interestingly, all the distributions considered in this paper have a second moment $\mathcal{S}$ that is diagonalizable in the Pauli basis (i.e., it is the case for $\Haar_n$, $\Haar_n(O)$, $\Haar_n^{\otimes k}$, $\Haar_n^{\otimes k}(O)$ and $D_{\delta}^{\otimes nk}(\Haar_n^{\otimes k})$ that we will consider below). This means that even if this hypothesis is restrictive, this theorem can be used to make tight generalization inequalities between many distributions of interest, such as any pair of distributions mentioned above. We will apply this theorem to some of these cases below.

The last hypothesis can be weakened to only requiring that $\mathcal{S}_{\rm train}, \mathcal{S}_{\rm test}$ are diagonalizable in the same basis, and that there exists an eigenvector $\kket{A}$ that has eigenvalues that maximize the ratio and such that $A$ is a unitary matrix.  Moreover, this theorem can easily be generalized to mixed-state inputs, by replacing the $1$ in the numerator and the $1$ in the denominator by the average purities $\bar{\gamma}_{\mathcal{P}_{\rm test}}$ and $\bar{\gamma}_{\mathcal{P}_{\rm train}}$, respectively. See Theorem~\ref{thm:inequalities} in the appendices for a proof of this more general version of Theorem~\ref{thm-main-text:tight-unitary-comparison}.

\subsection{Haar random states}

Theorem~\ref{thm-main-text:general-unitary-comparison} can be specialized to $\Haar_n$, the distribution of $n$-qubit pure states obtained by applying a Haar-random unitary from $U(2^n)$ to $\ket{0}^{\otimes n}$.
This gives a particularly clean comparison between an arbitrary distribution and the maximally uniform ensemble. As the risk only depends on the first and second moments, here and later our results equally apply to 2-designs, e.g., pseudorandom ensembles which agree with Haar up to the second moment~\cite{ambainis2007quantum}. See Corollary~\ref{crl:haar-zero-mean-generalisation} in the Appendices for a proof.

\begin{corol}[Haar to general guarantee]\label{corol-main-text:haar-general-comparison}
Let $\mathcal{P}$ be an arbitrary distribution with bias $\nu$ and expressivity measure $\eta$. Then, for all unitaries $U,V$, the risk on $\mathcal{P}$ is controlled by the risk on Haar-random states as
\begin{equation}\label{eq:haar-to-D-comparison-main-text}
R_{\mathcal{P}} 
\leq
\left(1 + \frac{1}{2^n}\right)
R_{\Haar_n} 
+
\nu .
\end{equation}
If moreover $\eta > 0$, then the converse comparison also holds:
\begin{equation}\label{eq:D-to-haar-comparison-main-text}
R_{\Haar_n} 
\leq
\frac{
R_{\mathcal{P}} 
+
\nu
}{
\eta
}.
\end{equation}
\end{corol}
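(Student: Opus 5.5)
Both inequalities should follow from Theorem~\ref{thm-main-text:general-unitary-comparison}, applied once with $\Haar_n$ as the training ensemble and once with it as the testing ensemble. So the only real work is to compute the bias and expressivity of $\Haar_n$. I will show that $\nu_{\Haar_n}=0$ and $\eta_{\Haar_n}=2^n/(2^n+1)$. Note that $\Haar_n$ attains the upper extreme of Eq.~\eqref{eq:eta-range-main-text}.

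\emph{Bias.} For pure states $\rho^2=\rho$, so $\kket{\mu}=\sqrt{2^n}\,\Pi_{\rm ni}\kket{\mathbb{E}[\rho]}$. Unitary invariance of the Haar measure gives $\mathbb{E}_{\rho\sim\Haar_n}[\rho]=I/2^n$, which has no non-identity Pauli component. Hence $\kket{\mu}=0$ and $\nu_{\Haar_n}=0$.

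\emph{Second moment.} Unitary invariance shows that $\mathcal{S}=\mathbb{E}\bigl[\kket{r(\rho)}\bbra{r(\rho)}\bigr]$ commutes with $W\otimes W^*$ for every $W\in U(2^n)$. This operator is the adjoint action $A\mapsto WAW^\dagger$ on vectorized operators, and it preserves the traceless subspace. On that subspace the adjoint representation of $U(2^n)$ is irreducible. By Schur's lemma, $\mathcal{S}=c\,\Pi_{\rm ni}$ for some constant $c$. To fix $c$, take traces. For a pure state $\operatorname{Tr}\rho^2=1$, so
\begin{equation}
\bbrakket{r(\rho)}{r(\rho)}=2^n\operatorname{Tr}\rho^2-1=2^n-1,
\end{equation}
while $\operatorname{Tr}\Pi_{\rm ni}=4^n-1$. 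Therefore
\begin{equation}
c=\frac{2^n-1}{4^n-1}=\frac{1}{2^n+1},
\end{equation}
and hence
\begin{equation}
\eta_{\Haar_n}=1-\frac{1}{2^n+1}=\frac{2^n}{2^n+1}>0 .
\end{equation}
This is the step I expect to need the most care. The key point is invoking irreducibility of the traceless adjoint representation. If one prefers to avoid it, one can compute $\mathbb{E}[\rho\otimes\rho]$ directly: it equals the projector onto the symmetric subspace divided by $2^n(2^n+1)/2$. Applying a partial transpose to this and then projecting onto the non-identity block should give the same constant.

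\emph{First inequality.} Apply Theorem~\ref{thm-main-text:general-unitary-comparison} with $\mathcal{P}_{\rm train}=\Haar_n$ and $\mathcal{P}_{\rm test}=\mathcal{P}$. The theorem allows arbitrary mixed-state ensembles, so $\mathcal{P}$ needs no further restriction. Using $\nu_{\rm train}=0$ and $1/\eta_{\rm train}=1+2^{-n}$ gives Eq.~\eqref{eq:haar-to-D-comparison-main-text}.

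\emph{Converse.} When $\eta>0$, apply the theorem instead with $\mathcal{P}_{\rm train}=\mathcal{P}$ and $\mathcal{P}_{\rm test}=\Haar_n$. Since $\nu_{\rm test}=0$, this yields Eq.~\eqref{eq:D-to-haar-comparison-main-text}. As noted before the corollary, only the first and second moments enter, so the same argument applies verbatim to any 2-design.
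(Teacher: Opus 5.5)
Your proposal is correct and follows the paper's route. Via Corollary~\ref{crl:haar-zero-mean-generalisation} with $k=1$, the paper likewise shows $\nu_{\Haar_n}=0$ and $\eta_{\Haar_n}=2^n/(2^n+1)$, then applies Corollary~\ref{crl:zero-mean-generalisation} (i.e.\ Theorem~\ref{thm-main-text:general-unitary-comparison}) in each direction; the only difference is that you get $\mathcal{S}=\frac{1}{2^n+1}\Pi_{\rm ni}$ from Schur's lemma plus a trace count, whereas Lemma~\ref{lma:m-haar-distribution} reads off the Pauli matrix elements directly from $\mathbb{E}[\rho^{\otimes 2}]=(I\otimes I+\mathrm{SWAP})/(2^n(2^n+1))$, which is exactly the fallback you mention.
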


The first inequality shows that good performance on Haar-random states implies good performance on any distribution $\mathcal{P}$ with sufficiently small bias.
Importantly, it does not depend on the expressivity measure $\eta$ of $\mathcal{P}$ but is only non-trivial if the bias is sufficiently small.
The second inequality gives the converse type of statement.
It quantifies the price paid when one trains using a distribution $\mathcal{P}$ rather than the Haar measure itself.
Here the expressivity measure $\eta$ appears in the denominator, reflecting the fact that a poorly spread distribution may fail to probe states that are relevant for Haar-random inputs.

\subsection{Random real states}\label{sec:randreal}

We next consider ensembles of random real states. We write $\Haar_n(O)$ for the distribution obtained by applying a Haar-random orthogonal matrix $O\in O\left(2^n\right)$ to $\ket{0}^{\otimes n}$. Thus, $\Haar_n(O)$ is supported only on pure states with real amplitudes in the computational basis.

Since real states form only a restricted subset of all pure states, one might expect that learning a unitary on random real inputs provides rather limited information about its action on general complex inputs. Perhaps surprisingly, this is not the case. We find that the risk on random real states controls the risk on any input distribution with sufficiently small bias.
As proven in Corollary~\ref{crl:orthogonal-haar-zero-mean-generalisation} in the Appendices, we obtain the following result.

\begin{corol}[Random real to general guarantee]\label{corol-main-text:random-real}
Let $\mathcal{P}$ be any pure-state distribution on $n$ qubits with bias $\nu$. Then, for all unitaries $U,V$,
\begin{equation}\label{eq:random-real-generalisation-main-text}
R_{\mathcal{P}} 
\leq
\left(1+\frac{1}{2^{n-1}}\right)
R_{\Haar_n(O)} 
+
\nu.
\end{equation}
\end{corol}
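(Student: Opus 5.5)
The plan is to work directly with the Rayleigh form of the risk in Eq.~\eqref{eq:risk-rayleigh-main-text}, $R_{\mathcal{P}}=1-\bbra{w}M(\mathcal{P})\kket{w}$ for pure inputs. I will split $\kket{w}$ into its identity and non-identity parts, $\kket{w}=a\kket{\widetilde{I}}+\kket{w_{\rm ni}}$, where $a=\operatorname{Tr}(V^\dagger U)/2^n$ and $|a|^2+\bbrakket{w_{\rm ni}}{w_{\rm ni}}=1$. The goal is to show two inequalities:
\[
R_{\mathcal{P}}\leq \bbrakket{w_{\rm ni}}{w_{\rm ni}}+\nu,
\qquad
\bbrakket{w_{\rm ni}}{w_{\rm ni}}\leq \left(1+\tfrac{2}{2^n}\right)R_{\Haar_n(O)}.
\]
Combining them gives the claim, since $2/2^n=1/2^{n-1}$.

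\emph{Upper bound on $R_{\mathcal{P}}$.} Expand the quadratic form with the block decomposition in Eq.~\eqref{eq:fsm-decomposition-main-text}, using $\bar\gamma_{\mathcal{P}}=1$:
\[
R_{\mathcal{P}}=1-|a|^2-2\operatorname{Re}\!\left(\bar a\,\bbrakket{\mu}{w_{\rm ni}}\right)-\bbra{w_{\rm ni}}\mathcal{S}\kket{w_{\rm ni}}.
\]
Three facts finish this step. First, $\mathcal{S}\succeq 0$ by Eq.~\eqref{eq:pure-state-moment-blocks-main-text}, so the last term is at most zero. Second, Cauchy--Schwarz bounds the cross term by $2|a|\,\|w_{\rm ni}\|\,\nu$. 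Third, $2|a|\,\|w_{\rm ni}\|\leq |a|^2+\|w_{\rm ni}\|^2=1$. Together these give $R_{\mathcal{P}}\leq \|w_{\rm ni}\|^2+\nu$.

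\emph{Lower bound on $R_{\Haar_n(O)}$.} Set $d=2^n$ and $W=V^\dagger U$. For real pure states $\rho^{\mathsf T}=\rho$, so
\[
1-R_{\Haar_n(O)}=\mathbb{E}_{\psi}\,|\bra{\psi}W\ket{\psi}|^2=\operatorname{Tr}\!\left[(W\otimes W^\dagger)\,\mathbb{E}\left[(\ket{\psi}\bra{\psi})^{\otimes 2}\right]\right].
\]
I will use the standard second moment of the orthogonal-invariant ensemble,
\[
\mathbb{E}\left[(\ket{\psi}\bra{\psi})^{\otimes2}\right]=\frac{I+F+\kket{I}\bbra{I}}{d(d+2)},
\]
where $F$ is the swap and $\kket{I}$ is the unnormalized maximally entangled vector. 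This follows from invariance under $O\otimes O$ together with two normalization conditions. Substituting gives
\[
1-R_{\Haar_n(O)}=\frac{|\operatorname{Tr}W|^2+\operatorname{Tr}(WW^\dagger)+\operatorname{Tr}(W\overline{W})}{d(d+2)}.
\]
Here $\operatorname{Tr}(WW^\dagger)=d$ and $|\operatorname{Tr}(W\overline{W})|\leq d$, since $W\overline W$ is unitary. Hence
\[
R_{\Haar_n(O)}\geq 1-\frac{d^2|a|^2+2d}{d(d+2)}=\frac{d}{d+2}\left(1-|a|^2\right)=\frac{d}{d+2}\,\|w_{\rm ni}\|^2,
\]
which is the second inequality.

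The main obstacle is this second step. It requires the correct real (orthogonal) second-moment formula, including the extra maximally-entangled term that is absent in the complex Haar case. It also requires a bound on the resulting transpose-type term $\operatorname{Tr}(W\overline W)$, which is where the constant $2/2^n$ rather than $1/2^n$ comes from. Everything else is the same bias/Cauchy--Schwarz manipulation that underlies Corollary~\ref{corol-main-text:haar-general-comparison}.
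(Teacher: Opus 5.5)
Your proof is correct and is essentially the paper's argument. Your first inequality is the identity/non-identity split of Lemma~\ref{lma:risk-without-identity}, combined with the mean bound of Lemma~\ref{lma:m-mean-bound} and $\mathcal{S}\succeq 0$. Your second inequality is the lower bound $R_{\Haar_n(O)}\geq \eta\,\|w_{\rm ni}\|^2$ with $\eta=2^{n-1}/(2^{n-1}+1)$. The paper obtains it from the same real-Haar moment $\left(I+F+\kket{I}\bbra{I}\right)/\left(2^n(2^n+2)\right)$ by diagonalising it in the Pauli basis (Lemma~\ref{lma:m-orthogonal-haar-distribution}) and invoking Corollary~\ref{crl:zero-mean-generalisation}. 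Your estimate $\operatorname{Tr}(W\overline{W})\leq 2^n$ is the trace-form version of $\lambda_{\max}(\mathcal{S})=1/(2^{n-1}+1)$: $2^{-n}\operatorname{Tr}(W\overline{W})$ equals the Pauli weight of $W$ on real Paulis minus its weight on imaginary ones.
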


This result should be compared with the corresponding guarantee for complex Haar-random training states in Corollary~\ref*{corol-main-text:haar-general-comparison}. The two bounds have essentially the same form, differing only slightly in their multiplicative prefactors. Random real states therefore provide nearly the same out-of-distribution guarantees as fully complex Haar-random states, despite probing a substantially more restricted family of inputs.

In particular, a direct consequence of this result is that
\begin{equation}\label{eq:real-to-complex-haar-main-text}
R_{\Haar_n} 
\leq
\left(1+\frac{1}{2^{n-1}}\right)
R_{\Haar_n(O)} .
\end{equation}
Thus, at the level of risks, learning on random real states is sufficient to guarantee good performance on arbitrary complex Haar-random states. Moreover, the multiplicative overhead approaches one exponentially quickly with the number of qubits. Note that this multiplicative factor can be improved to the tight $1 + 1/(2^n + 1)$ using Theorem~\ref{thm-main-text:tight-unitary-comparison}.

The single-qubit case provides a simple geometric illustration. A real single-qubit state can be written as $\ket{\psi(\theta)}
=
\cos(\theta)\ket{0}
+
\sin(\theta)\ket{1}$.
These states lie on a single orbit of the Bloch sphere, whereas complex pure states span the full Bloch sphere. Nevertheless, Eq.~\eqref{eq:real-to-complex-haar-main-text} implies $R_{\Haar_1} 
\leq
2 R_{\Haar_1(O)}$  (which can be improved to $R_{\Haar_1} \leq \frac{4}{3} R_{\Haar_1(O)}$ using theorem~\ref{thm-main-text:tight-unitary-comparison}). 
Hence, controlling the average error on only this orbit is already sufficient to control the average error over the entire Bloch sphere. This is an average-risk statement rather than a pointwise one; nevertheless, it shows that restricting the learning ensemble to real states loses surprisingly little information about the action of the unitary on complex states. Indeed, if we learn close-to-perfectly on random real states ($R_{\Haar_1(O)} \rightarrow 0 $) we have also learnt close-to-perfectly on random complex states ($R_{\Haar_1} \rightarrow 0 $). Intriguingly the ability to transfer learning from real states to complex states does not carry over to channel learning as discussed in Section~\ref{sec:learningchannels}.

\subsection{Single-qubit distributions}

In general, the risk may depend on both the first and second moments of the input distribution. For a single qubit, however, the first-moment contribution vanishes exactly: the risk depends only on the second moment $\mathcal{S}$ and is completely independent of the bias.

\medskip
To see this, let $\mathcal{P}$ be a pure-state distribution on one qubit, and identify the non-identity Pauli vector $\kket{r(\rho)}$ with the corresponding Bloch vector $\boldsymbol{r}(\rho)\in\mathbb{R}^3$. Up to a global phase, the relative unitary can be written as
\begin{equation}
V^\dagger U
=
\cos(\theta)I
-
i\sin(\theta)\,
\hat{\boldsymbol{n}}\cdot\boldsymbol{\sigma},
\end{equation}
where $\hat{\boldsymbol{n}}\in\mathbb{R}^3$ is a unit vector and $\boldsymbol{\sigma}=(X,Y,Z)$. For an input state with Bloch vector $\boldsymbol{r}(\rho)$,
\begin{equation}
\left|
\operatorname{Tr}
\left(
\rho V^\dagger U
\right)
\right|^2
=
\cos^2(\theta)
+
\sin^2(\theta)
\left(
\hat{\boldsymbol{n}}\cdot\boldsymbol{r}(\rho)
\right)^2.
\end{equation}
Averaging over $\mathcal{P}$ therefore gives
\begin{equation}
R_{\mathcal{P}}
=
\sin^2(\theta)
\left(
1
-
\hat{\boldsymbol{n}}^{\mathsf{T}}
\mathcal{S}
\hat{\boldsymbol{n}}
\right).
\end{equation}
Note that the mean vector $\kket{\mu}$ simply does not appear in this equation. Consequently, two single-qubit distributions with the same second moment give exactly the same risk for every pair of unitaries, even if their biases are different (see Theorem~\ref{thm:same-moment-matrix-same-risk} in the appendices for a more formal proof of this fact).

In particular, any single-qubit distribution can be replaced by an unbiased distribution without changing its risk. To see this, one may sample a state from $\mathcal{P}$ and then, with equal probability, replace it by its antipodal state on the Bloch sphere (e.g. if $|0\rangle$ was drawn, with probability 1/2 it is replaced by $|1\rangle$). This symmetrization sets the mean to zero while leaving the second moment, and therefore the risk, unchanged.

\medskip
The exact form of the risk also gives a particularly clean out-of-distribution comparison. Since
\begin{equation}
\eta_{\mathcal{P}}\sin^2(\theta)
\leq
R_{\mathcal{P}}
\leq
\sin^2(\theta),
\end{equation}
the risk on any sufficiently expressive training distribution controls the risk on every testing distribution. 

\begin{corol}[Single-qubit guarantee]\label{corol-main-text:single-qubit-guarantee}
Let $\mathcal{P}_{\rm train}$ and $\mathcal{P}_{\rm test}$ be arbitrary pure-state distributions on one qubit, with expressivity measures $\eta_{\rm train}$ and $\eta_{\rm test}$, respectively. If $\eta_{\rm train}>0$, then, for all unitaries $U,V$,
\begin{equation}
\eta_{\rm test}
R_{\mathcal{P}_{\rm train}}
\leq
R_{\mathcal{P}_{\rm test}}
\leq
\frac{
R_{\mathcal{P}_{\rm train}}
}{
\eta_{\rm train}
}.
\end{equation}
\end{corol}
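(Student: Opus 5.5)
The plan is to build on the exact single-qubit risk formula derived just above. Fix unitaries $U,V$ and write $V^\dagger U$, up to a global phase, as $\cos(\theta)I - i\sin(\theta)\,\hat{\boldsymbol{n}}\cdot\boldsymbol{\sigma}$. A global phase does not affect $|\operatorname{Tr}(\rho V^\dagger U)|^2$, so it does not affect the risk. For any pure-state single-qubit distribution $\mathcal{P}$ we then have
\begin{equation*}
R_{\mathcal{P}} = \sin^2(\theta)\left(1-\hat{\boldsymbol{n}}^{\mathsf T}\mathcal{S}_{\mathcal{P}}\hat{\boldsymbol{n}}\right).
\end{equation*}
The key observation is that $\theta$ and $\hat{\boldsymbol{n}}$ depend only on $U,V$, and are therefore shared between $\mathcal{P}_{\rm train}$ and $\mathcal{P}_{\rm test}$. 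Only the quadratic form $q_{\mathcal{P}}:=1-\hat{\boldsymbol{n}}^{\mathsf T}\mathcal{S}_{\mathcal{P}}\hat{\boldsymbol{n}}$ changes with the distribution.

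Next I bound $q_{\mathcal{P}}$ on both sides. Because $\mathcal{S}_{\mathcal{P}}=\mathbb{E}[\boldsymbol{r}\boldsymbol{r}^{\mathsf T}]$ is positive semidefinite, the Rayleigh quotient gives $\lambda_{\min}(\mathcal{S})\le \hat{\boldsymbol{n}}^{\mathsf T}\mathcal{S}\hat{\boldsymbol{n}}\le\lambda_{\max}(\mathcal{S})$. Using $\lambda_{\min}\ge 0$ and $\eta=1-\lambda_{\max}$, this yields $\eta_{\mathcal{P}}\le q_{\mathcal{P}}\le 1$. Multiplying by $\sin^2\theta\ge0$ gives the sandwich already stated in the text,
\begin{equation*}
\eta_{\mathcal{P}}\sin^2(\theta)\le R_{\mathcal{P}}\le \sin^2(\theta).
\end{equation*}

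Now I chain the two instances. For the upper bound, $R_{\mathcal{P}_{\rm test}}\le\sin^2\theta$. Since $\eta_{\rm train}>0$, the lower half of the sandwich for $\mathcal{P}_{\rm train}$ gives $\sin^2\theta\le R_{\mathcal{P}_{\rm train}}/\eta_{\rm train}$. Together these give $R_{\mathcal{P}_{\rm test}}\le R_{\mathcal{P}_{\rm train}}/\eta_{\rm train}$.

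For the lower bound, $R_{\mathcal{P}_{\rm train}}\le\sin^2\theta$, and multiplying by $\eta_{\rm test}$ gives $\eta_{\rm test}R_{\mathcal{P}_{\rm train}}\le\eta_{\rm test}\sin^2\theta\le R_{\mathcal{P}_{\rm test}}$. This step needs $\eta_{\rm test}\ge0$ for the direction of the inequality to be preserved. If $\eta_{\rm test}<0$, the left-hand side is nonpositive while $R_{\mathcal{P}_{\rm test}}\ge0$, so the inequality holds trivially.

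The main point needing care is justifying the exact risk formula itself, namely the identity $|\operatorname{Tr}(\rho V^\dagger U)|^2=\cos^2\theta+\sin^2\theta(\hat{\boldsymbol{n}}\cdot\boldsymbol{r})^2$ for pure $\rho$. This is immediate: $\operatorname{Tr}(\rho V^\dagger U)=\cos\theta - i\sin\theta\,\hat{\boldsymbol{n}}\cdot\boldsymbol{r}$, and its real and imaginary parts are real numbers whose squares add. I would also remark that the bias does not enter anywhere, which is consistent with Theorem~\ref{thm:same-moment-matrix-same-risk}.
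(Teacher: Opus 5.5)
Your proof is correct and is essentially the paper's own main-text argument: the exact formula $R_{\mathcal{P}}=\sin^2(\theta)\bigl(1-\hat{\boldsymbol{n}}^{\mathsf T}\mathcal{S}_{\mathcal{P}}\hat{\boldsymbol{n}}\bigr)$ yields the sandwich $\eta_{\mathcal{P}}\sin^2(\theta)\le R_{\mathcal{P}}\le\sin^2(\theta)$, and the paper chains it between the two distributions exactly as you do. The appendix also sketches a second route, which symmetrizes each distribution to an unbiased one with the same $\mathcal{S}$ and then invokes Corollary~\ref{crl:zero-mean-generalisation}; separately, your case $\eta_{\rm test}<0$ never occurs, because $\operatorname{Tr}\mathcal{S}=1$ forces $\eta\ge 0$ for single-qubit pure ensembles.
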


Thus, unlike the general multi-qubit comparison, the single-qubit bound is purely multiplicative and contains no bias-dependent correction. Moreover, only the training distribution must have positive expressivity in order to obtain an upper bound on the testing risk.

\medskip
For the Haar distribution on one-qubit pure states,
\begin{equation}
\mathcal{S}_{\rm Haar}
=
\frac{1}{3}I_3.
\end{equation}
Hence
\begin{equation}
R_{\Haar_1}
=
\frac{2}{3}
\sin^2(\theta).
\end{equation}
Combining this with the previous bounds gives the following sharper comparison.

\begin{corol}[Single-qubit Haar guarantee]\label{corol-main-text:single-qubit-haar-guarantee}
Let $\mathcal{P}$ be any pure-state distribution on one qubit with expressivity measure $\eta_{\mathcal{P}}$. Then, for all unitaries $U,V$,
\begin{equation}
\frac{3}{2}
\eta_{\mathcal{P}}
R_{\Haar_1}
\leq
R_{\mathcal{P}}
\leq
\frac{3}{2}
R_{\Haar_1}.
\end{equation}
If $\eta_{\mathcal{P}}>0$, this equivalently implies
\begin{equation}
R_{\Haar_1}
\leq
\frac{2}{3\eta_{\mathcal{P}}}
R_{\mathcal{P}}.
\end{equation}
\end{corol}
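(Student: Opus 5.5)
The plan is to plug the exact single-qubit risk formula derived just above into the two-sided bound, with Haar as one of the two distributions. For any pure-state single-qubit distribution $\mathcal{P}$, writing $V^\dagger U=\cos(\theta)I-i\sin(\theta)\,\hat{\boldsymbol{n}}\cdot\boldsymbol{\sigma}$ up to a global phase, we have
\begin{equation*}
R_{\mathcal{P}}=\sin^2(\theta)\left(1-\hat{\boldsymbol{n}}^{\mathsf T}\mathcal{S}_{\mathcal{P}}\hat{\boldsymbol{n}}\right).
\end{equation*}
Since $\mathcal{S}_{\rm Haar}=\tfrac13 I_3$, this gives $R_{\Haar_1}=\tfrac23\sin^2(\theta)$, so $\sin^2(\theta)=\tfrac32 R_{\Haar_1}$. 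The whole corollary then reduces to bounding the scalar $1-\hat{\boldsymbol{n}}^{\mathsf T}\mathcal{S}_{\mathcal{P}}\hat{\boldsymbol{n}}$ from both sides, uniformly over unit vectors $\hat{\boldsymbol{n}}$.

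For the upper bound, I use that $\mathcal{S}_{\mathcal{P}}=\mathbb{E}[\boldsymbol{r}\boldsymbol{r}^{\mathsf T}]$ is positive semidefinite. Hence $\hat{\boldsymbol{n}}^{\mathsf T}\mathcal{S}_{\mathcal{P}}\hat{\boldsymbol{n}}\geq 0$, which gives $R_{\mathcal{P}}\leq\sin^2(\theta)=\tfrac32 R_{\Haar_1}$.

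For the lower bound, the Rayleigh quotient gives $\hat{\boldsymbol{n}}^{\mathsf T}\mathcal{S}_{\mathcal{P}}\hat{\boldsymbol{n}}\leq\lambda_{\max}(\mathcal{S}_{\mathcal{P}})$. Using Eq.~\eqref{eq:single-qubit-expressivity}, this yields $R_{\mathcal{P}}\geq(1-\lambda_{\max})\sin^2(\theta)=\eta_{\mathcal{P}}\sin^2(\theta)=\tfrac32\eta_{\mathcal{P}}R_{\Haar_1}$.

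Equivalently, I could apply Corollary~\ref{corol-main-text:single-qubit-guarantee} with $\mathcal{P}_{\rm train}=\Haar_1$ and $\eta_{\rm train}=2/3$, and with $\mathcal{P}_{\rm test}=\mathcal{P}$. This gives the same two inequalities directly. I will note this as a cross-check.

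For the final statement, assume $\eta_{\mathcal{P}}>0$ and divide the lower bound by $\tfrac32\eta_{\mathcal{P}}$ to get $R_{\Haar_1}\leq\frac{2}{3\eta_{\mathcal{P}}}R_{\mathcal{P}}$.

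There is essentially no obstacle in this argument. The only points to check are that the global-phase reduction of $V^\dagger U$ is legitimate, which holds because the risk is phase-invariant, and that $\mathcal{S}_{\rm Haar}=\tfrac13 I_3$. The latter follows from rotational invariance together with $\operatorname{Tr}\mathcal{S}=\mathbb{E}\|\boldsymbol{r}\|^2=1$ for pure states.
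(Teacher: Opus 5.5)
Your main argument is correct and follows the paper's own route. You take the exact formula $R_{\mathcal P}=\sin^2(\theta)\bigl(1-\hat{\boldsymbol n}^{\mathsf T}\mathcal S_{\mathcal P}\hat{\boldsymbol n}\bigr)$, sandwich it as $\eta_{\mathcal P}\sin^2(\theta)\le R_{\mathcal P}\le\sin^2(\theta)$ using positive semidefiniteness and the Rayleigh quotient, and then substitute $\sin^2(\theta)=\tfrac32 R_{\mathrm{Haar}_1}$.

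Your proposed cross-check is wrong for the lower bound, though. With either assignment of $\mathrm{Haar}_1$ and $\mathcal P$ to the training and testing roles, Corollary~\ref{corol-main-text:single-qubit-guarantee} only gives $\eta_{\mathcal P}R_{\mathrm{Haar}_1}\le R_{\mathcal P}$. Equivalently, it gives $R_{\mathrm{Haar}_1}\le R_{\mathcal P}/\eta_{\mathcal P}$. Both miss the factor $3/2$, because that corollary loses information through $R_{\rm train}\le\sin^2(\theta)$. This is exactly why the paper calls the Haar-specific statement sharper, so keep the direct computation as the proof.
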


Hence, training on Haar-random single-qubit states controls the risk on any single-qubit testing distribution with a universal factor of $3/2$. Conversely, training on an arbitrary distribution $\mathcal{P}$ controls the Haar risk with a factor determined by its expressivity. This factor $3/2$ is also tight (it is saturated whenever $\mathcal{S}_{\mathcal{P}}$ has a zero eigenvalue and the relative unitary rotates about the corresponding Bloch-sphere direction).

\medskip
Consider the distribution
\begin{equation}
\mathcal{P}_+
=
\operatorname{Uniform}
\left\{
\ket{0},
\ket{+},
\ket{y+}
\right\},
\end{equation}
where
\begin{equation}
\ket{y+}
=
\frac{\ket{0}+i\ket{1}}{\sqrt{2}}.
\end{equation}
The corresponding Bloch vectors point along the positive $z$-, $x$-, and $y$-axes. Consequently,
\begin{equation}
\kket{\mu_+}
=
\frac{1}{3}
\begin{pmatrix}
1\\
1\\
1
\end{pmatrix},
\qquad
\mathcal{S}_+
=
\frac{1}{3}I_3.
\end{equation}
The distribution is therefore biased, but its second moment is identical to that of the Haar distribution. It follows that
\begin{equation}
R_{\mathcal{P}_+}
=
R_{\Haar_1}
\end{equation}
for every pair of single-qubit unitaries $U,V$.

This example is stronger than simply replacing the Haar distribution by a state $2$-design. The distribution $\mathcal{P}_+$ is not itself a $2$-design, since its mean is nonzero, yet it gives exactly the same risk as the Haar distribution for the unitary-learning task. Thus, for a single qubit, learning on only three suitably chosen states can reproduce the average learning performance obtained from the full Haar ensemble.

For the uniform ensemble on the two real states
$\{\ket{0},\ket{+}\}$, we instead have
$\mathcal{S}=\operatorname{diag}(1/2,0,1/2)$ and $\eta=1/2$.
The single-qubit Haar guarantee therefore gives
\[
R_{\Haar_1}
\leq \frac{4}{3}
R_{\operatorname{Uniform}\{\ket{0},\ket{+}\}}.
\]
Thus, a small risk on these two real states guarantees a small
Haar risk, and perfect learning on this pair implies perfect
learning on the full Haar ensemble.

\subsection{Random product states}

Product distributions are important because they are the natural way to build many-qubit input ensembles from smaller ones. They include the practically relevant setting where training data consists of independently sampled, unentangled input states and further give us a means of interpolating between the single qubit case and $n$-qubit Haar case by exploring the tensor product of $n/k$ random $k$-qubit states.

A key structural observation is that the moment matrix behaves simply under tensor products. Namely, if $\mathcal{Q}=\mathcal{P}^{\left(1\right)}\otimes\mathcal{P}^{\left(2\right)}$ where $\mathcal{P}^{\left(1\right)}$ and $\mathcal{P}^{\left(2\right)}$ are independent pure-state distributions, then $M(\mathcal{Q})$ and $M(\mathcal{P}^{\left(1\right)})\otimes M(\mathcal{P}^{\left(2\right)})$ are related via a similarity transform (Theorem~\ref{thm:m-product-distribution} in the Appendices).
Thus, once we understand a base distribution $\mathcal{P}$, we can also analyze the product distribution $\mathcal{Q}=\mathcal{P}^{\otimes k}$, where $\mathcal{Q}$ samples $k$ independent pure states from $\mathcal{P}$ and outputs their tensor product.

This gives a direct way to determine how the quantities entering our bounds behave under tensor products. The following corollary collects the resulting expressions for the bias and expressivity of a product distribution (Lemma~\ref{lma:mu-k-tensor-product-scaling} and Corollary~\ref{crl:s-for-n-product-pure} in the Appendices).

\begin{corol}[Tensor-product bias and expressivity scaling]\label{corol-main-text:product-bias-expressivity}
Let $\mathcal{P}$ be a pure-state distribution with bias $\nu$ and non-negative expressivity measure $\eta \geq 0$. Let $\mathcal{Q}=\mathcal{P}^{\otimes k}$, and denote its bias and expressivity measure by $\nu_k$ and $\eta_k$, respectively. Then
\begin{equation}\label{eq:product-bias-scaling-main-text}
\nu_k
=
\sqrt{\left(1+\nu^2\right)^k-1}.
\end{equation}
If $\mathcal{P}$ is unbiased, then $\mathcal{Q}$ is also unbiased and
\begin{equation}\label{eq:product-expressivity-scaling-main-text}
\eta_k=\eta.
\end{equation}
If $\mathcal{P}$ is biased, there is not in general a compact expression for relating  $\eta_k$ and $\eta$.
\end{corol}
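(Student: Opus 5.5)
Both claims follow from one structural fact: for pure states, the vectorized state factorizes over tensor products. If $\rho=\rho_1\otimes\cdots\otimes\rho_k$, then up to a fixed permutation of tensor factors (the similarity transform of Theorem~\ref{thm:m-product-distribution}),
\[
\kket{\widetilde I}+\kket{r(\rho)}=\bigotimes_{i=1}^k\left(\kket{\widetilde I_1}+\kket{r(\rho_i)}\right).
\]
Here $\widetilde I_1$ is the normalized identity on one block. This holds because $\sqrt{2^n}\kket{\rho}$ is the vector of Pauli expectation values, and Pauli strings on the product system are tensor products of block Pauli strings. Pure states have $\bar\gamma=1$, so $M(\mathcal{P})=\mathbb{E}[(\kket{\widetilde I}+\kket{r})(\bbra{\widetilde I}+\bbra{r})]$. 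Independence of the blocks then gives $M(\mathcal{Q})\cong M(\mathcal{P})^{\otimes k}$.

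\textbf{Bias.} I would take the first moment of the factorization. Since
\[
\kket{\widetilde I}+\kket{\mu_{\mathcal{Q}}}=\bigotimes_i\left(\kket{\widetilde I_1}+\kket{\mu}\right)
\]
and $\kket{\mu}\perp\kket{\widetilde I_1}$, the squared norm of the right-hand side is $(1+\nu^2)^k$. Subtracting the unit identity component leaves $\nu_k^2=(1+\nu^2)^k-1$. In particular, $\nu=0$ gives $\nu_k=0$.

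\textbf{Expressivity.} Expand the second moment in the unbiased case. Write $\mathcal{S}_1=\mathcal{S}$ for the second moment of $\mathcal{P}$. Let $\kket{\widetilde I_1}\bbra{\widetilde I_1}\oplus\mathcal{S}_1$ denote $M(\mathcal{P})$, which is block diagonal because $\mu=0$. Then
\[
M(\mathcal{P})^{\otimes k}=\bigoplus_{T\subseteq[k]}\left(\bigotimes_{i\notin T}\kket{\widetilde I_1}\bbra{\widetilde I_1}\otimes\bigotimes_{i\in T}\mathcal{S}_1\right),
\]
a direct sum over subsets $T$ of the blocks carrying a non-identity Pauli. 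The non-identity block $\mathcal{S}_{\mathcal{Q}}$ collects all terms with $T\neq\emptyset$. Its spectrum is therefore the union over nonempty $T$ of products of $|T|$ eigenvalues of $\mathcal{S}_1$, so $\lambda_{\max}(\mathcal{S}_{\mathcal{Q}})=\max_{m\ge1}\lambda_{\max}(\mathcal{S}_1)^m$.

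Here the hypothesis $\eta\ge0$ enters: it gives $0\le\lambda_{\max}(\mathcal{S}_1)\le1$, with nonnegativity from $\mathcal{S}_1\succeq0$. Hence the maximum is attained at $m=1$, and $\eta_k=1-\lambda_{\max}(\mathcal{S}_1)=\eta$.

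\textbf{Main obstacle.} The main point of care is the block bookkeeping: the identity direction of the product is exactly the all-identity tensor, so $\Pi_{\rm ni}$ of the product removes only the $T=\emptyset$ term. Without $\eta\ge0$, $\lambda_{\max}(\mathcal{S}_1)>1$ and powers would dominate.

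\textbf{Biased case.} For the final remark, the cross terms between $\kket{\mu}$ and $\mathcal{S}_1$ no longer decouple into a direct sum, so $\lambda_{\max}(\mathcal{S}_{\mathcal{Q}})$ mixes $\mu$ and $\mathcal{S}_1$. I would justify the absence of a compact formula with a small single-qubit example, computed explicitly for $k=2$, showing that $\eta_k$ is not a function of $(\eta,\nu)$ alone.
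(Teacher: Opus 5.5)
Your proof is correct and follows essentially the paper's route. The bias identity $1+\nu_k^2=(1+\nu^2)^k$ is the vectorized form of the paper's expansion of $\sqrt{2^n}\,\bar\rho=\widetilde I+\mu$ (Lemmas~\ref{lma:mu-2-tensor-product-scaling} and~\ref{lma:mu-k-tensor-product-scaling}). Your subset decomposition of $M(\mathcal P)^{\otimes k}$ is the $k$-fold version of the three eigenvalue families of Theorem~\ref{thm:s-product-distribution}, which the paper instead iterates by induction in Corollary~\ref{crl:s-for-n-product-mixed}; your maximum should run over $1\le m\le k$, which changes nothing.

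The paper gives no argument for the biased-case remark, so your planned counterexample goes beyond it, and it does work. For instance, take single-qubit ensembles with $\mathcal S=\mathrm{diag}(1/2,1/4,1/4)$ and mean of norm $1/4$, placed either along $x$ or along $z$. Both have $\eta=1/2$ and $\nu=1/4$, yet give $\eta_2\approx 0.36$ and $\eta_2\approx 0.46$ respectively.
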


The corollary highlights a key difference between biased and unbiased product distributions. If $\nu>0$, then the bias is amplified as the number of tensor factors increases. In particular, in the small-bias regime $0<\nu \ll 1$, there exists a $k\in\Theta\left(1/\nu^2\right)$ for which $\nu_k\geq 1$. Since $\nu_k<1$ is necessary for $\eta_k>0$ (Theorem~\ref{thm:mu-eta-bound} in the Appendices), even a small one-copy bias can eventually prevent the resulting product distribution from satisfying the conditions of our general comparison theorem.
By contrast, tensor products of an unbiased distribution with non-negative expressivity remain unbiased and preserve their expressivity exactly, independently of the number of tensor factors. Consequently, if $\mathcal{P}$ is unbiased and has positive expressivity, then $\mathcal{P}^{\otimes k}$ remains unbiased and positively expressive for arbitrary $k$.
For a biased base distribution, there is generally no corresponding expression for $\eta_k$ in terms of only $\eta$ and $\nu$: its behaviour depends on the fuller moment structure of $\mathcal{P}$.

In addition to this, one can prove that if the second moment $\mathcal{S}$ of an unbiased distribution $\mathcal{P}$ is diagonalizable in the Pauli basis, then so is the second moment of $\mathcal{P}^{\otimes k}$ (see Corollary~\ref{crl:s-for-n-product-mixed} in the appendices). In other words, if a pure-state distribution $\mathcal{P}$ fulfills the hypotheses of Theorem~\ref{thm-main-text:tight-unitary-comparison} as the training or testing distribution, then so does $\mathcal{P}^{\otimes k}$ for the same role. The same observation can be made for Theorem~\ref{thm-main-text:general-unitary-comparison} if $\mathcal{P}$ is unbiased.

\subsection{Locally scrambling distributions}

We now combine the preceding product-distribution results with our general comparison bound (Theorem~\ref{thm-main-text:general-unitary-comparison} with the extra $1 - \lambda_{\min}(\mathcal{S}_{\rm test})$ factor) for Haar-random states. This allows us to compare a broad family of locally scrambling ensembles that interpolate between random product states and globally Haar-random states.
The following corollary (Theorem~\ref{thm:haar-general-comparison} in the Appendices) compares two product Haar ensembles supported on the same total number of qubits.

\begin{corol}[Locally scrambling guarantees]\label{corol-main-text:locally-scrambling-comparison}
Let
$\mathcal{P}_i=\Haar_{n_i}^{\otimes k_i}$
for $i\in\{1,2\}$, where $n_1k_1=n_2k_2$, and define
$R_{n_i,k_i}:=R_{\mathcal{P}_i}$
and
$q_{n,k}:=1-(2^n+1)^{-k}$.
Then, for all unitaries $U,V$,
\begin{equation}\label{eq:locally-scrambling-comparison}
\frac{q_{n_1,1}}{q_{n_2,k_2}}R_{n_2,k_2}
\leq
R_{n_1,k_1}
\leq
\frac{q_{n_1,k_1}}{q_{n_2,1}}R_{n_2,k_2}.
\end{equation}
\end{corol}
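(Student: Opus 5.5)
The plan is to use the Rayleigh-quotient form of the risk, Eq.~\eqref{eq:risk-rayleigh-main-text}, and to diagonalize the moment matrices of both product-Haar ensembles in the Pauli basis. Since each $\Haar_{n_i}^{\otimes k_i}$ is a pure-state distribution with Pauli-diagonal second moment, the risk splits into a weighted sum of Pauli weights of $V^\dagger U$. The two inequalities then follow by sandwiching both risks between the extreme eigenvalues. This is essentially the argument behind Theorem~\ref{thm-main-text:tight-unitary-comparison}, but here I keep both the minimum and maximum of $1-\lambda_P$ separately rather than their ratio.

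\emph{Step 1: compute the moments of $\Haar_n^{\otimes k}$ on $N=nk$ qubits.}
\begin{itemize}
\item For $\Haar_n$, invariance under $U(2^n)$ gives $\mathbb{E}[\rho]=I/2^n$, so $\kket{\mu}=0$.
\item The standard second-moment formula $\mathbb{E}[\rho^{\otimes 2}]=(I+\mathrm{SWAP})/(2^n(2^n+1))$ gives, for non-identity Paulis $P,Q$,
\begin{equation*}
\mathbb{E}[\operatorname{Tr}(P\rho)\operatorname{Tr}(Q\rho)]=\frac{\delta_{PQ}}{2^n+1}.
\end{equation*}
Hence $\mathcal{S}_{\Haar_n}=\frac{1}{2^n+1}\Pi_{\rm ni}$.
\item For the product ensemble, write an $N$-qubit Pauli as $P=P_1\otimes\cdots\otimes P_k$ with $P_j$ acting on block $j$. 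Since $\rho=\rho_1\otimes\cdots\otimes\rho_k$ has independent factors, $\operatorname{Tr}(P\rho)=\prod_j\operatorname{Tr}(P_j\rho_j)$, and the expectations factorize.
\item This gives $\kket{\mu}=0$ (consistent with Corollary~\ref{corol-main-text:product-bias-expressivity}). It also shows $\mathcal{S}$ is diagonal in the Pauli basis, with eigenvalue $\lambda_P=(2^n+1)^{-m(P)}$, where $m(P)\ge 1$ is the number of blocks on which $P$ acts non-trivially. One can equivalently invoke the tensor-product similarity of moment matrices from the appendices.
\end{itemize}

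\emph{Step 2: write the risk in the Pauli basis.}
\begin{itemize}
\item Expand the normalized vector $\kket{w}=2^{-N/2}\kket{V^\dagger U}$ as $\sum_P c_P\,\kket{\widetilde P}$, with $\sum_P|c_P|^2=1$, where $\kket{\widetilde P}=\kket{P}/\sqrt{2^N}$ are the normalized Pauli vectors.
\item Since $M$ is block diagonal, with entry $1$ on $\kket{\widetilde I}$ and $\mathcal{S}$ on the rest, Eq.~\eqref{eq:risk-rayleigh-main-text} gives
\begin{equation*}
R_{n,k}=\sum_{P\neq I}|c_P|^2\bigl(1-\lambda_P\bigr).
\end{equation*}
\item Let $s:=\sum_{P\neq I}|c_P|^2$. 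This quantity depends only on $U,V$ and not on the ensemble.
\item Since $1\le m(P)\le k$, we have $q_{n,1}\le 1-\lambda_P\le q_{n,k}$ for every $P\neq I$. Therefore
\begin{equation*}
q_{n,1}\,s\le R_{n,k}\le q_{n,k}\,s .
\end{equation*}
\end{itemize}

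\emph{Step 3: combine the two ensembles.}
\begin{itemize}
\item Because $n_1k_1=n_2k_2$, both ensembles act on the same $N$ qubits, so the same $s$ appears in both sandwiches.
\item Upper bound: $R_{n_1,k_1}\le q_{n_1,k_1}s\le \frac{q_{n_1,k_1}}{q_{n_2,1}}R_{n_2,k_2}$.
\item Lower bound: $R_{n_1,k_1}\ge q_{n_1,1}s\ge \frac{q_{n_1,1}}{q_{n_2,k_2}}R_{n_2,k_2}$.
\item If $s=0$, both risks vanish and the inequalities hold trivially.
\end{itemize}

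The only step needing real care is Step 1: the Haar second-moment computation and the factorization across blocks, including that cross terms between distinct Paulis vanish, so that $\mathcal{S}$ is exactly diagonal. Everything after that is bookkeeping.
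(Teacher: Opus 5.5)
Your proof is correct and matches the paper's: the paper applies the fourth inequality of Theorem~\ref{thm:inequalities} once in each direction. That inequality is exactly your sandwich $\eta\, s \le R \le (1-\lambda_{\min}(\mathcal{S}))\, s$, where $s$ is the ensemble-independent weight of the non-identity part of the normalized vectorization of $V^\dagger U$, evaluated on the product-Haar spectrum $\lambda_{\max}=(2^n+1)^{-1}$, $\lambda_{\min}=(2^n+1)^{-k}$ from Lemma~\ref{lma:s-haar-distribution}. The only difference is cosmetic: you expand explicitly in the Pauli eigenbasis, while the paper bounds the Rayleigh quotient by the extremal eigenvalues, which amounts to the same thing here.
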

This corollary shows that changing how a fixed number of qubits is divided into independently Haar-random blocks changes the unitary-learning risk by at most a factor of $3/2$. The $i$th ensemble consists of $k_i$ independently sampled states on $n_i$ qubits each, so the two ensembles can have very different entanglement structure. If both block sizes grow, the comparison factors approach one and the risks become nearly equal. However, adding more blocks of a fixed small size need not remove the constant overhead. Even so, training on a product of small blocks can control performance on inputs entangled across larger blocks.

A natural special case is the comparison between globally Haar-random states and tensor products of independently Haar-random single-qubit states. In this case, this even coincidentally coincides with the inequality produced by Theorem~\ref{thm-main-text:tight-unitary-comparison}, giving a tight guarantee. 
\begin{corol}[Global versus product Haar guarantee]\label{corol-main-text:haar-1-n-generalisation}
For all unitaries $U,V$,
\begin{equation}\label{eq:haar-1-n-generalisation}
\begin{split}
a_n
R_{\Haar_1^{\otimes n}}
\leq
R_{\Haar_n}
\leq
b_n
R_{\Haar_1^{\otimes n}}
\end{split}
\end{equation}
where
\begin{equation}
a_n:= \frac{3^n}{3^n-1}
\frac{2^n}{2^n+1} \, , \quad b_n:=\frac{3}{2} 
\frac{2^n}{2^n+1} \, .
\end{equation}

This is moreover tight: there exist unitaries $U \neq V$ such that the first inequality holds with equality, and similarly for the second inequality.
\end{corol}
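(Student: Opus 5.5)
The plan is to derive both inequalities directly from Theorem~\ref{thm-main-text:tight-unitary-comparison}, applied twice with the roles of training and testing distribution swapped. Tightness then comes from the equality clause of that theorem. This requires checking its hypotheses for $\Haar_n$ and $\Haar_1^{\otimes n}$: both must be unbiased, have positive expressivity, and have second moments $\mathcal{S}$ diagonal in the Pauli basis. We also need the Pauli eigenvalues $\lambda_P$ explicitly.

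\emph{Step 1: the moments of $\Haar_n$.} Use the standard second-moment formula $\mathbb{E}[\rho\otimes\rho]=(I+\mathrm{SWAP})/(2^n(2^n+1))$. For non-identity Paulis $P,Q$ this gives $\mathbb{E}[\operatorname{Tr}(P\rho)]=0$ and $\mathbb{E}[\operatorname{Tr}(P\rho)\operatorname{Tr}(Q\rho)]=\delta_{PQ}/(2^n+1)$. Hence $\Haar_n$ is unbiased and $\mathcal{S}_{\Haar_n}=\Pi_{\rm ni}/(2^n+1)$. So $1-\lambda_P=2^n/(2^n+1)$ for every $P\neq I$, and $\eta>0$.

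\emph{Step 2: the moments of $\Haar_1^{\otimes n}$.} For a single qubit, $\mathbb{E}[r_a]=0$ and $\mathbb{E}[r_ar_b]=\delta_{ab}/3$. Write $P=P_1\otimes\cdots\otimes P_n$ and $Q=Q_1\otimes\cdots\otimes Q_n$. Independence factorises the moment: $\mathbb{E}[\operatorname{Tr}(P\rho)\operatorname{Tr}(Q\rho)]=\prod_i\mathbb{E}[\operatorname{Tr}(P_i\rho_i)\operatorname{Tr}(Q_i\rho_i)]$. This product vanishes unless $P=Q$, in which case it equals $3^{-|P|}$, where $|P|$ is the Pauli weight. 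The same factorisation gives zero bias. Alternatively, this step follows from Corollary~\ref{corol-main-text:product-bias-expressivity} and the accompanying diagonalisability remark. Thus $\mathcal{S}$ is Pauli-diagonal with $1-\lambda_P=1-3^{-|P|}$, and $\eta=2/3>0$.

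\emph{Step 3: apply the theorem in both directions.} Take $\Haar_1^{\otimes n}$ as training distribution and $\Haar_n$ as testing distribution. The ratio is $\frac{2^n/(2^n+1)}{1-3^{-|P|}}$, which is maximised at weight $|P|=1$, giving $\frac{3}{2}\frac{2^n}{2^n+1}=b_n$. Now swap the roles, so that $\Haar_1^{\otimes n}$ is the testing distribution. The ratio becomes $(1-3^{-|P|})\frac{2^n+1}{2^n}$, maximised at $|P|=n$. This gives $R_{\Haar_1^{\otimes n}}\le \frac{3^n-1}{3^n}\frac{2^n+1}{2^n}R_{\Haar_n}$, which is exactly $a_nR_{\Haar_1^{\otimes n}}\le R_{\Haar_n}$.

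\emph{Step 4: tightness.} The equality clause of Theorem~\ref{thm-main-text:tight-unitary-comparison} applies to any $U,V$ with $V^\dagger U$ equal to a maximising Pauli. For the second inequality take $V=I$ and $U=X\otimes I^{\otimes(n-1)}$, a weight-one Pauli. For the first inequality take $V=I$ and $U=X^{\otimes n}$, a weight-$n$ Pauli. In both cases $U\neq V$.

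The only real work is the moment computations in Steps 1–2. Of these, the main point to be careful about is the exact off-diagonal vanishing that makes $\mathcal{S}$ diagonal in the Pauli basis, since that is what licenses the tight theorem.
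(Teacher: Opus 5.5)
Your proposal is correct and follows essentially the same route as the paper. The paper also applies the second (common-eigenbasis) inequality of Theorem~\ref{thm:inequalities} twice, using the Pauli-diagonal spectra $\frac{1}{2^n+1}$ and $3^{-w_1(P)}$ from Lemma~\ref{lma:s-haar-distribution}, with the ratio maximised at weight $1$ and weight $n$ respectively, and gets tightness by choosing $V^\dagger U$ to be a Pauli of weight $1$ or weight $n$. The only cosmetic difference is that you compute the product-Haar moments by factorising the expectation over independent qubits, whereas the paper obtains them from the tensor-product similarity result (Theorem~\ref{thm:m-product-distribution}).
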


Corollary~\ref{corol-main-text:haar-1-n-generalisation} (see Theorem~\ref{thm:haar-tight-comparison} in the Appendices for a proof) establishes a constant-factor equivalence between product-Haar and global-Haar risk. That is, although the product-state ensemble contains no entangled states, good average performance on this ensemble is both necessary and sufficient to guarantee good average performance on an ensemble dominated by highly entangled states.
The upper comparison factor approaches $3/2$, improving the asymptotic factor of $2$ obtained in Ref.~\cite{caro2022outofdistribution}. Figure~\ref*{fig:risk-graph-non-zero-mean} illustrates that the improved bounds are tight.

\begin{figure}[]
    \centering
    \includegraphics[width=\columnwidth]{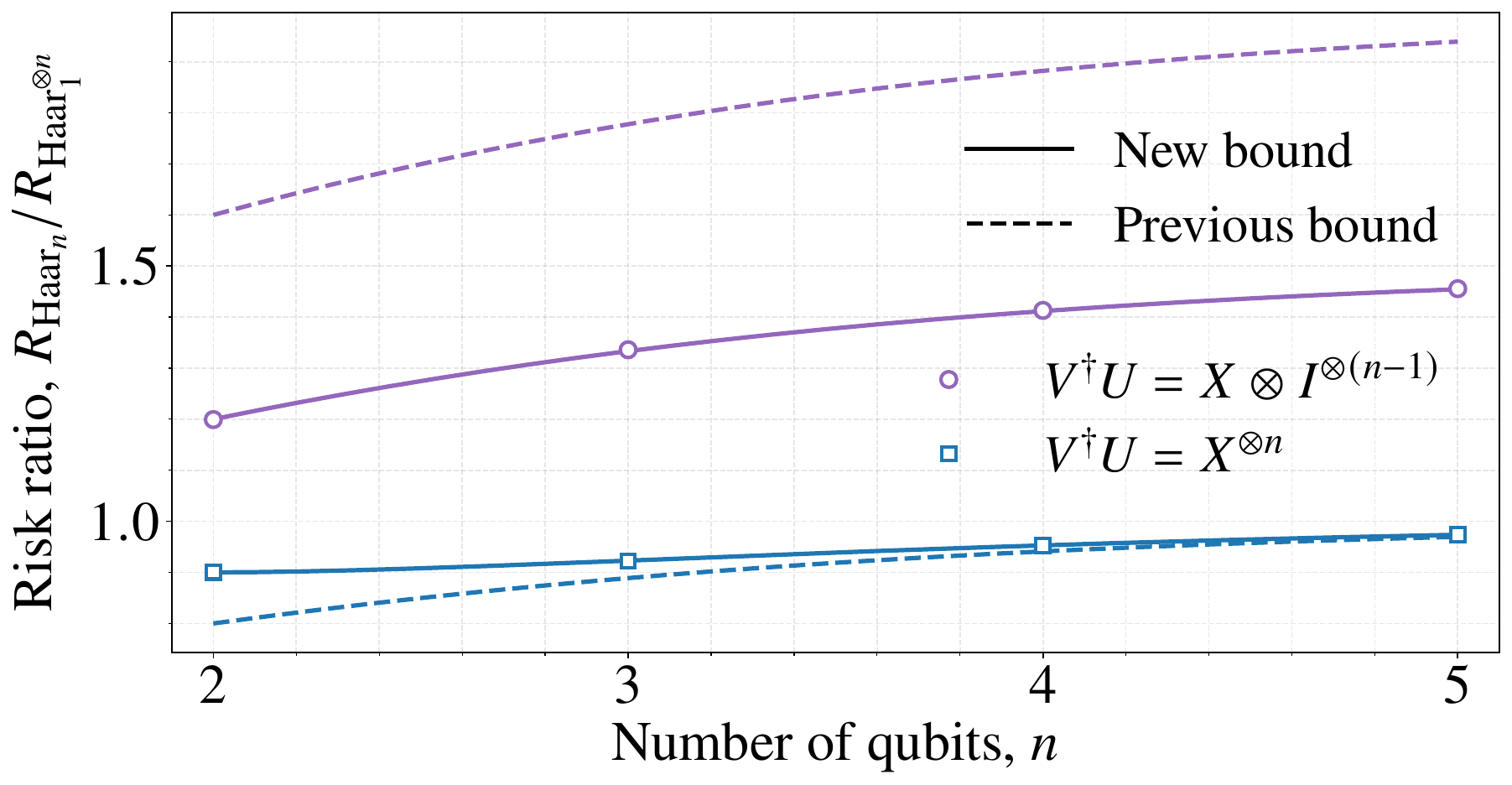}
    \caption{\textbf{Tight product-to-entangled generalization bounds.} Here we study the ratio between $R_{\Haar_n}$ and $R_{\Haar_1^{\otimes n}}$. The solid lines show our theoretical lower bound and upper bound (Eq.~\ref*{eq:haar-1-n-generalisation}) and the dashed lines the old looser bounds of Ref.~\cite{caro2022outofdistribution}. The blue and purple markers are unitaries $U, V$ that saturate the bound (concretely, $V^{\dagger} U = X^{\otimes n}$ and $V^{\dagger} U = X \otimes I^{\otimes n-1}$ respectively).}
    \label{fig:risk-graph-non-zero-mean}
\end{figure}

\subsection{Mixed-state inputs}

Mixed inputs can make two unitaries appear artificially similar, since mixing an input with the maximally mixed state suppresses the distinguishability of their outputs. We first show that this introduces an unavoidable purity penalty in any non-trivial out-of-distribution bound (see Theorem~\ref{thm:purity-penalty} in the Appendices for the proof).

\begin{propos}[Mixing penalty]\label{prop-main-text:purity-penalty}
Let $\mathcal{P}_{\rm train}$ and $\mathcal{P}_{\rm test}$ be distributions over mixed states with average purities $\bar{\gamma}_{\rm train}$ and $\bar{\gamma}_{\rm test}$, respectively. Suppose that $\alpha,\beta\geq 0$ satisfy
\begin{equation}
    R_{\mathcal{P}_{\rm test}}
    \leq
    \alpha
    R_{\mathcal{P}_{\rm train}}
    +
    \beta
\end{equation}
for all unitaries $U,V$. If
\begin{equation}
    \beta
    \leq
    \frac{
    \bar{\gamma}_{\rm test}-2^{-n}
    }{2},
\end{equation}
then
\begin{equation}\label{eq:purity-penalty-main-text}
    \alpha
    \geq
    \frac{
    \bar{\gamma}_{\rm test}-2^{-n}
    }{
    2\bar{\gamma}_{\rm train}
    }.
\end{equation}
\end{propos}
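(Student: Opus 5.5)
The approach is an averaging argument: integrate the assumed inequality over a Haar-random relative unitary. The hypothesis holds for \emph{all} unitaries $U,V$, so it also holds on average over any distribution of $(U,V)$. I would fix $V=I$ and draw $U$ from the Haar measure on $U(2^n)$, so that $W:=V^\dagger U$ is Haar-random. Averaging both sides of $R_{\mathcal{P}_{\rm test}}\le\alpha R_{\mathcal{P}_{\rm train}}+\beta$ over $U$ then reduces the claim to computing the Haar-averaged risk for each ensemble.

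The key step is this computation, using the trace form of the risk in Eq.~\eqref{eq:unitary-loss-trace-main-text}:
\begin{equation}
R_{\mathcal{P}}(U,V)=\bar{\gamma}_{\mathcal{P}}-\mathbb{E}_{\rho\sim\mathcal{P}}\operatorname{Tr}\left(W^\dagger\rho W\rho\right).
\end{equation}
For fixed $\rho$, the first-moment Haar twirl gives $\mathbb{E}_W[W^\dagger\rho W]=\operatorname{Tr}(\rho)\,I/2^n=I/2^n$. Hence $\mathbb{E}_W\operatorname{Tr}(W^\dagger\rho W\rho)=\operatorname{Tr}(\rho)/2^n=2^{-n}$. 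Exchanging the expectations over $W$ and $\rho$ (all quantities are bounded, so Fubini applies) gives, for every ensemble $\mathcal{P}$,
\begin{equation}
\mathbb{E}_{W}\left[R_{\mathcal{P}}\right]=\bar{\gamma}_{\mathcal{P}}-2^{-n}.
\end{equation}

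Inserting this into the averaged inequality yields
\begin{equation}
\bar{\gamma}_{\rm test}-2^{-n}\le\alpha\left(\bar{\gamma}_{\rm train}-2^{-n}\right)+\beta\le\alpha\,\bar{\gamma}_{\rm train}+\beta,
\end{equation}
where the second inequality uses $\alpha\ge0$. Applying the assumption $\beta\le(\bar{\gamma}_{\rm test}-2^{-n})/2$ gives $\alpha\,\bar{\gamma}_{\rm train}\ge(\bar{\gamma}_{\rm test}-2^{-n})/2$. Since $\bar{\gamma}_{\rm train}\ge 2^{-n}>0$, dividing by it gives Eq.~\eqref{eq:purity-penalty-main-text}.

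There is no real obstacle here. The only point needing care is the twirl identity together with the exchange of expectations. As a remark, the argument actually gives the stronger denominator $2(\bar{\gamma}_{\rm train}-2^{-n})$ whenever $\bar{\gamma}_{\rm train}>2^{-n}$; the stated bound is the weaker form, which avoids treating the maximally mixed training ensemble as a separate case.
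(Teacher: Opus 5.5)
Your proof is correct and is essentially the paper's argument. The paper averages the assumed inequality over $V^\dagger U = P$ uniformly over the $4^n$ Pauli strings, getting $\frac{1}{4^n}\sum_P R(P,I) = \bar{\gamma} - \operatorname{Tr}(M)/4^n = \bar{\gamma} - 2^{-n}$ from the moment matrix. You instead average over a Haar-random $W$ via the first-moment twirl, and since the Pauli group is a unitary $1$-design the two averages coincide; the remaining algebra (including dropping the $-\alpha 2^{-n}$ term, which is where your stronger denominator remark comes from) is identical.
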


The restriction on $\beta$ ensures that the comparison cannot be satisfied by a trivial additive bound. Indeed, the risk on any distribution is bounded above by its average purity, $ R_{\mathcal{P}_{\rm test}}
    \leq
    \bar{\gamma}_{\rm test}$
so taking $\alpha=0$ and $\beta=\bar{\gamma}_{\rm test}$ always gives a valid but uninformative comparison (Corollary~\ref{crl:trivial-odg-bound-purity} in the Appendices). 
Under this condition, Proposition~\ref{prop-main-text:purity-penalty} gives a necessary condition for out-of-distribution generalization rather than a positive guarantee. If the testing ensemble is bounded away from the maximally mixed state, the multiplicative overhead must be at least proportional to the ratio between the testing and training purities. This asymmetry is natural. Highly mixed states are easier to predict, but consequently provide less information about the difference between two unitaries.

We next show that this purity penalty remains manageable under mild local depolarizing noise. Let
\begin{equation}
    \mathcal{D}_{\delta}(\rho)
    =
    (1-\delta)\rho+\delta\frac{I}{2}
\end{equation}
denote the single-qubit depolarizing channel~\cite{nielsen2000quantum}, where $\delta\in[0,1)$ is the local noise rate. Applying it independently to every qubit, we define
\begin{equation}
\begin{aligned}
\mathcal{P}_{\rm global}^{(\delta)}
&:=\mathcal{D}_{\delta}^{\otimes n}(\Haar_n),\\
\mathcal{P}_{\rm product}^{(\delta)}
&:=\mathcal{D}_{\delta}^{\otimes n}(\Haar_1^{\otimes n}).
\end{aligned}
\end{equation}
That is, we first sample a pure state from the corresponding Haar ensemble and then apply local depolarizing noise. In Appendix~\ref{sec:noiseproofs} (Theorems~\ref{thm:depolarised-haar-tight-comparison} and~\ref{thm:depolarised-product-haar-tight-comparison}) we prove the following tight bounds using the mixed-state generalization of Theorem~\ref{thm-main-text:tight-unitary-comparison}.

\begin{propos}[Noise tolerance]\label{thm-main-text:locally-depolarized-inputs}
Let $q_{\delta}:=(1-\delta)^2$. Then, for all unitaries $U,V$,
\begin{equation}\label{eq:locally-depolarized-inputs-main-text}
\begin{aligned}
R_{\mathcal{P}_{\rm global}^{(0)}}
&\leq
\frac{1}{q_{\delta}}
\left(\frac{4}{1+3q_{\delta}}\right)^{n-1}
R_{\mathcal{P}_{\rm global}^{(\delta)}},\\
R_{\mathcal{P}_{\rm product}^{(0)}}
&\leq
\frac{1}{q_{\delta}}
\left(\frac{2}{1+q_{\delta}}\right)^{n-1}
R_{\mathcal{P}_{\rm product}^{(\delta)}}.
\end{aligned}
\end{equation}

Both bounds are tight: there exist unitaries $U \neq V$ such that they are reached with equality.
\end{propos}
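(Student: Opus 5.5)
The plan is to apply the mixed-state generalization of Theorem~\ref{thm-main-text:tight-unitary-comparison} (Theorem~\ref{thm:inequalities}, Part~4 or its tight variant). There the testing ensemble is the noiseless one, $\mathcal{P}^{(0)}$, and the training ensemble is the depolarized one, $\mathcal{P}^{(\delta)}$. The bound then reads
\[
R_{\mathcal{P}^{(0)}}\le \max_{P\neq I}\frac{\bar\gamma^{(0)}-\lambda_P(\mathcal{S}^{(0)})}{\bar\gamma^{(\delta)}-\lambda_P(\mathcal{S}^{(\delta)})}\,R_{\mathcal{P}^{(\delta)}},
\]
with equality whenever $V^\dagger U=P^*$ for a maximizing Pauli $P^*$. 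This immediately gives tightness once the maximum is identified. So the proof reduces to three checks: both ensembles are unbiased, both second moments are Pauli-diagonal, and the maximum evaluates to the stated constants.

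\textbf{Step 1 (moments under local depolarizing).} The channel $\mathcal{D}_\delta^{\otimes n}$ maps $\operatorname{Tr}(P\rho)\mapsto(1-\delta)^{|P|}\operatorname{Tr}(P\rho)$, where $|P|$ is the Pauli weight. Hence the non-identity Pauli vector is rescaled diagonally in the Pauli basis.
\begin{itemize}
\item The first moment stays zero, since the noiseless ensembles are unbiased and $\mathbb{E}[\rho^2]$ stays proportional to $I$ by unitary/local-unitary invariance.
\item If $\mathcal{S}^{(0)}$ is Pauli-diagonal with eigenvalues $\lambda_P$, then $\mathcal{S}^{(\delta)}$ is Pauli-diagonal with eigenvalues $q_\delta^{|P|}\lambda_P$.
\end{itemize}
I will double-check the normalization of $\mathcal{S}$ for mixed inputs against Eq.~\eqref{eq:S-definition-main-text}. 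The eigenvalues are as follows.
\begin{itemize}
\item Global Haar: $\lambda_P=1/(2^n+1)$ for every $P$.
\item Product Haar: $\lambda_P=3^{-|P|}$, by Corollary~\ref{crl:s-for-n-product-pure} and the single-qubit value $\mathcal{S}=\tfrac13 I_3$.
\end{itemize}
The average purity follows from $\bar\gamma=2^{-n}\bigl(1+\sum_{P\neq I}\lambda_P(\mathcal{S})\bigr)$. Summing by weight gives closed forms:
\begin{itemize}
\item Global: $\bar\gamma^{(\delta)}=2^{-n}\bigl(1+((1+3q_\delta)^n-1)/(2^n+1)\bigr)$.
\item Product: $\bar\gamma^{(\delta)}=\bigl((1+q_\delta)/2\bigr)^n$.
\end{itemize}
The product form follows from the product structure $\prod_i\tfrac12(1+3\cdot q_\delta/3)$.

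\textbf{Step 2 (the optimization over Paulis).} The ratio depends on $P$ only through its weight $w=|P|\in\{1,\dots,n\}$, so the maximum becomes a one-variable problem.
\begin{itemize}
\item Product case: the ratio is $(1-3^{-w})/\bigl(((1+q)/2)^n-(q/3)^w\bigr)$. I expect it to be maximized at $w=1$, giving $(2/3)/\bigl(((1+q)/2)^n-q/3\bigr)$. Possibly an adjustment is needed here, because the stated bound $\tfrac1q(2/(1+q))^{n-1}$ looks like a weight-$n$ or mixed-weight expression. I will also evaluate $w=n$, where the ratio is $(1-3^{-n})/\bigl(((1+q)/2)^n-(q/3)^n\bigr)$.
\item Global case: a similar computation gives $4^n/\bigl(2^n(1-q)+(1+3q)^n\bigr)$ at $w=1$.
\end{itemize}

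\textbf{Step 3 (main obstacle).} The main obstacle is reconciling these weight-dependent expressions with the clean stated constants. My direct $w=1$ evaluation appears to give something slightly smaller than $\tfrac1q\bigl(\tfrac{4}{1+3q}\bigr)^{n-1}$. This suggests that either the normalization of $\mathcal{S}$ for mixed states differs from my guess (e.g.\ eigenvalues measured relative to $\bar\gamma$, or with $\kket{\mu}$ absorbing a purity term), or the maximizing weight is different. I would first recompute the mixed-state moment matrix of a single depolarized Haar qubit explicitly and check the $n=1$ case, where both formulas must give $1/q$. I would then prove monotonicity in $w$ by showing that the difference of consecutive ratios has a fixed sign, using $q\le1$. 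With the correct maximizer $P^*$ in hand, the tightness claim follows directly from the equality clause of the theorem, taking $V=I$ and $U=P^*$.
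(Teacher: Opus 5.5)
\textbf{Gap in Step~1: the mixed-state second moment is wrong.} Your strategy matches the paper's. You apply the common-eigenbasis case of Theorem~\ref{thm:inequalities} (Part~2; Part~4 would give the non-tight $(1-3^{-n})/\eta^{(\delta)}$ in the product case), identify the maximizing Pauli, and take $V=I$, $U=P^*$ for tightness. The gap is in Step~1, and it is exactly what causes the mismatch you report in Step~3.

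You set $\mathcal{S}^{(\delta)}=\mathbb{E}[\kket{r(\rho)}\bbra{r(\rho)}]$, which gives eigenvalues $q_\delta^{|P|}\lambda_P$. That identification, Eq.~\eqref{eq:pure-state-moment-blocks-main-text}, holds only for pure inputs, because it uses $\rho P\rho=\operatorname{Tr}(P\rho)\rho$. For depolarized inputs, the block entering $R=\bar\gamma-\bbra{w}M\kket{w}$ has entries $\mathbb{E}\operatorname{Tr}(P\rho Q\rho)$. Writing $c_Q=\operatorname{Tr}(Q\rho)$, one gets $\operatorname{Tr}(P\rho P\rho)=2^{-n}\sum_{Q}\epsilon_{PQ}c_Q^2$, where $\epsilon_{PQ}=\pm1$ according to whether $P$ and $Q$ commute. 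This includes the identity term $c_I=1$ and is not $c_P^2$.

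Two checks confirm the error:
\begin{itemize}
\item Your own $n=1$ test fails: your formulas give $4/(3+q_\delta)$ rather than $1/q_\delta$.
\item As $\delta\to1$ the input becomes $I/2^n$, so every risk must vanish. Your expression instead gives $R=\bar\gamma=2^{-n}$ for $V^\dagger U=P$.
\end{itemize}
For one qubit, the correct block is $\mathcal{S}=(1-\operatorname{Tr}\rho^2)I_3+\mathbb{E}[\boldsymbol{r}\boldsymbol{r}^{\mathsf T}]=\frac{3-q_\delta}{6}I_3$, not $\frac{q_\delta}{3}I_3$.

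\textbf{How the paper gets the correct spectrum.} For the global case, it applies $\mathcal{D}_\delta^{\otimes n}\otimes\mathcal{D}_\delta^{\otimes n}$ to $\mathbb{E}[\psi^{\otimes 2}]=(I+\mathrm{SWAP})/(2^n(2^n+1))$ and takes the partial transpose. It then uses $\sum_{Q\in\mathbb{P}_1}q_\delta^{|Q|}QP_\ell Q=(1+3q_\delta)P_\ell$ if $P_\ell=I$ and $(1-q_\delta)P_\ell$ otherwise. The product case follows from the one-qubit block and Theorem~\ref{thm:m-product-distribution}. For a Pauli of weight $w$,
\[
\lambda_P^{\rm global}=\frac{1+\left(\frac{1+3q_\delta}{2}\right)^{n-w}\left(\frac{1-q_\delta}{2}\right)^{w}}{2^n+1},
\qquad
\lambda_P^{\rm product}=\left(\frac{1+q_\delta}{2}\right)^{n-w}\left(\frac{3-q_\delta}{6}\right)^{w},
\]
and your purities are already correct.

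With these eigenvalues, $w=1$ maximizes the ratio in both cases:
\begin{itemize}
\item Global: the numerator is constant and $\lambda_P$ decreases in $w$. At $w=1$, $\bar\gamma-\lambda_P=\frac{2q_\delta}{2^n+1}\left(\frac{1+3q_\delta}{2}\right)^{n-1}$.
\item Product: the ratio equals $\left(\frac{2}{1+q_\delta}\right)^{n}\frac{1-3^{-w}}{1-x^{w}}$ with $x=\frac{3-q_\delta}{3(1+q_\delta)}\in[\frac{1}{3},1)$. Comparing the geometric sums term by term shows it is largest at $w=1$.
\end{itemize}
Both cases evaluate exactly to the stated constants. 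So your suspicion that the product constant is a weight-$n$ or mixed-weight expression is a misdiagnosis. You also need $\delta<1$ so that $\eta^{(\delta)}>0$.
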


Combining this result with Corollary~\ref{corol-main-text:haar-1-n-generalisation} shows that training on either noisy ensemble controls the risk on either clean Haar ensemble. In particular, if $\delta=O(1/n)$, the resulting multiplicative overhead is independent of $n$. More generally, if $\delta=O(\log(n)/n)$, the overhead is polynomial in $n$. However, for any fixed local noise rate $\delta>0$, the tight comparison factors grow exponentially with the number of qubits, as expected per Proposition~\ref{prop-main-text:purity-penalty}. 
Thus, out-of-distribution generalization remains possible with mixed-state inputs, but its quality deteriorates rapidly as local noise accumulates across the system. 

\section{Channel Learning}\label{sec:learningchannels}

The preceding results show that out-of-distribution generalization for arbitrary unitary processes persists for mixed-state inputs, subject to restrictions on how much mixing can be tolerated. It is therefore natural to ask whether similarly general guarantees, i.e., ones that hold for all unitaries, extend to general quantum channels. We show that they do not. \\

\paragraph*{An exponential separation for general channels.}
We first ask whether performance on globally Haar-random states can control performance on random product states for arbitrary quantum channels. For unitary learning, the corresponding risks are equivalent up to constant factors. The following result shows that this equivalence can fail exponentially for general channels. 
\begin{propos}[Global-to-product separation]\label{thm-main-text:general-channel-no-go}
Any bound of the form
\begin{equation}\label{eq:general-channel-no-go-main-text}
\begin{aligned}
    &R_{\Haar_1^{\otimes n}}
    \left(\mathcal{E}_A,\mathcal{E}_B\right)\leq
    \alpha R_{\Haar_n}
    \left(\mathcal{E}_A,\mathcal{E}_B\right)
    +
    \frac{1}{4}
\end{aligned}
\end{equation}
that holds for all pairs of $n$-qubit efficiently implementable quantum channels must satisfy
\begin{equation}\label{eq:general-channel-alpha-lower-bound-main-text}
    \alpha
    \geq
    \frac{2^n+1}{12}.
\end{equation}
\end{propos}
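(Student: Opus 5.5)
The plan is to exhibit one explicit pair of efficiently implementable channels $\mathcal{E}_A,\mathcal{E}_B$ whose loss has two properties: it is of order $2^{-n}$ on average over $\Haar_n$, and it is a constant strictly above $1/4$ on average over $\Haar_1^{\otimes n}$. Plugging this pair into the assumed bound gives $\alpha\geq (R_{\Haar_1^{\otimes n}}-\tfrac14)/R_{\Haar_n}$, which is of order $2^n$.

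The guiding observation is that for channels the risk is still quadratic in $\rho$. It therefore depends only on the second moments $\mathbb{E}[\rho\otimes\rho]$, namely $(I+\mathrm{SWAP})/(2^n(2^n+1))$ for $\Haar_n$ and $\bigotimes_i (I+\mathrm{SWAP}_i)/6$ for $\Haar_1^{\otimes n}$. The quantity that separates the two ensembles most strikingly is the purity of reduced states. A product state has pure marginals, whereas a global Haar state has marginals on $k\ll n$ qubits that are exponentially close to maximally mixed. Channels that forget the complement of a small subsystem can see this difference, while unitaries cannot, which is why the unitary equivalence of Corollary~\ref{corol-main-text:haar-1-n-generalisation} breaks down.

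Concretely, fix a set $K$ of $k$ qubits, for instance $k=2$. Define
\[
\mathcal{E}_A(\rho)=\operatorname{Tr}_{\bar K}(\rho)\otimes\ket{0}\bra{0}^{\otimes(n-k)},
\qquad
\mathcal{E}_B(\rho)=\frac{I_K}{2^k}\otimes\ket{0}\bra{0}^{\otimes(n-k)},
\]
so that $\mathcal{E}_A$ keeps $K$ and resets the rest, and $\mathcal{E}_B$ resets everything and then fully depolarizes $K$. Both channels are clearly efficiently implementable, and the $\mathcal{E}_B$ variant is a measure-and-prepare channel. Their loss is
\[
\ell=\tfrac12\|\rho_K-I/2^k\|_2^2=\tfrac12\left(\operatorname{Tr}\rho_K^2-2^{-k}\right).
\]
The key steps are as follows.
\begin{enumerate}[(i)]
\item For product Haar inputs $\rho_K$ is pure, so $R_{\Haar_1^{\otimes n}}=\tfrac12(1-2^{-k})$, which equals $3/8>1/4$ when $k=2$.
\item For $\Haar_n$ inputs, writing $\operatorname{Tr}\rho_K^2=\operatorname{Tr}[(\mathrm{SWAP}_K\otimes I_{\bar K})\,\rho\otimes\rho]$ and using the Haar second moment gives $\mathbb{E}\operatorname{Tr}\rho_K^2=(2^k+2^{n-k})/(2^n+1)$. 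Hence
\[
R_{\Haar_n}=\frac{2^k-2^{-k}}{2(2^n+1)}.
\]
\item Substituting into the assumed bound gives
\[
\alpha\geq\frac{(2^n+1)\left(\tfrac12-2^{-k}\right)}{2^k-2^{-k}},
\]
which is $(2^n+1)/15$ for $k=2$. Since $k=2$ maximizes this ratio within this family, the family alone only reaches $(2^n+1)/15$.
\end{enumerate}

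The main obstacle is the constant. This family already yields $\alpha=\Omega(2^n)$ but not the exact $(2^n+1)/12$ in Eq.~\eqref{eq:general-channel-alpha-lower-bound-main-text}. To reach $1/12$, I would look for a genuine measure-and-prepare pair whose product-Haar risk is $1/2$ and whose global-Haar risk is $3/(2^n+1)$, since $(1/2-1/4)\cdot(2^n+1)/3=(2^n+1)/12$. Natural candidates are the following.
\begin{itemize}
\item Measure $K$ in a fixed basis and prepare orthogonal outputs conditioned on the outcome.
\item Replace $I/2^k$ in $\mathcal{E}_B$ by a conditioned state that doubles the quadratic form on product inputs.
\end{itemize}
For each candidate I would repeat step (ii) with the same Weingarten/SWAP computation. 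For product inputs I would factorize the calculation qubit by qubit, using the single-qubit moments $\mathbb{E}|\langle x|\psi\rangle|^4=1/3$ and $\mathbb{E}|\langle x|\psi\rangle|^2=1/2$. All these evaluations are routine second-moment calculations; the only real work is choosing the output states so that the constant comes out exactly.
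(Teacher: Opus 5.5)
Your steps (i)--(iii) are correct, including $\mathbb{E}\operatorname{Tr}\rho_K^2=(2^k+2^{n-k})/(2^n+1)$. They establish an exponential separation, but only $\alpha\geq(2^n+1)/15$, which is strictly weaker than the claimed $(2^n+1)/12$. The part of your plan that would reach the stated constant is a search over unspecified candidates, so as written the proposition is not proved.

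The Pauli expansion of your loss shows where the constant is lost:
\[
\tfrac12\|\rho_K-I/2^k\|_2^2=2^{-(k+1)}\sum_{P}\operatorname{Tr}(P\rho)^2,
\]
where the sum runs over non-identity Paulis supported on $K$. Each term has expectation $3^{-w_1(P)}$ on $\Haar_1^{\otimes n}$ and $1/(2^n+1)$ on $\Haar_n$, so higher-weight directions dilute the product-Haar signal. Moreover, the normalisation caps the product risk at $\tfrac12(1-2^{-k})$, so you need $k\geq 2$ just to clear the additive $1/4$ (at $k=1$ it equals $1/4$ exactly).

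The missing idea is to make the loss equal to a single weight-one Pauli expectation with coefficient one, since weight one has the best product-to-global ratio. The paper does this with your first candidate, taking $K$ to be a single qubit. $\mathcal{E}_A$ measures qubit $1$ in the computational basis and prepares $\ket{0^n}$ or $\ket{1^n}$ according to the outcome, and $\mathcal{E}_B$ does the same with the outcome flipped. Then $\mathcal{E}_A(\rho)-\mathcal{E}_B(\rho)=\operatorname{Tr}(Z_1\rho)\,\bigl(\ket{0^n}\bra{0^n}-\ket{1^n}\bra{1^n}\bigr)$, whose squared Hilbert--Schmidt norm is $2\operatorname{Tr}(Z_1\rho)^2$. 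Hence $R_D=\mathbb{E}_{\rho\sim D}[\operatorname{Tr}(Z_1\rho)^2]$. By Lemma~\ref{lma:s-haar-distribution} this equals $1/3$ for $\Haar_1^{\otimes n}$ and $1/(2^n+1)$ for $\Haar_n$, giving $\alpha\geq(1/3-1/4)(2^n+1)=(2^n+1)/12$. The targets you set yourself (product risk $1/2$, global risk $3/(2^n+1)$) are neither what this construction gives nor what is needed.

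Your reset construction can also be repaired by comparing the kept marginal with a rotated copy instead of with $I/2$. Taking $\mathcal{E}_A(\rho)=\rho_1\otimes\ket{0}\bra{0}^{\otimes(n-1)}$ and $\mathcal{E}_B(\rho)=Z\rho_1Z\otimes\ket{0}\bra{0}^{\otimes(n-1)}$ gives loss $\operatorname{Tr}(X_1\rho)^2+\operatorname{Tr}(Y_1\rho)^2$. The risks are then $2/3$ and $2/(2^n+1)$, which gives the even stronger bound $\alpha\geq 5(2^n+1)/24$.
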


See Theorem~\ref{thm:general-channel-no-go} in the Appendices for the proof. It uses a pair of simple measure-and-prepare channels that measure the first input qubit and encode the outcome in all $n$ output qubits. For these channels, the global-Haar risk decreases exponentially with $n$, while the product-Haar risk remains constant. Thus, even an additive term of $1/4$ leaves an exponentially large multiplicative overhead. This separation occurs for pure-state input ensembles, showing that the obstruction comes from the generality of the process class rather than from the mixedness of the inputs. \\

\paragraph*{An exponential separation for real to complex ensembles for Pauli channels.}
We next ask whether the obstruction disappears when the process class is restricted to Pauli channels, i.e.\ channels of the form
\[\mathcal{E}(\rho) = \sum_{j=0}^{4^n-1} a_j P_j\rho P_j\]
for some probability vector $\boldsymbol{a}$, and show that it does not. Concretely, we show that Pauli channels with constant support can be indistinguishable on random real product states while remaining separated on random complex product states.  This result directly contrasts with unitary learning, where performance on random real states controls performance on random complex states with a constant multiplicative overhead. See Theorem~\ref{thm:pauli-channel-real-complex-no-go} in the Appendices for a proof.

\begin{propos}[Real-to-complex Pauli-channel separation]\label{thm-main-text:pauli-channel-no-go}
Any bound of the form
\begin{equation}\label{eq:pauli-channel-no-go-main-text}
\begin{aligned}
    &R_{\Haar_1^{\otimes n}}
    \left(\mathcal{E}_A,\mathcal{E}_B\right) \leq
    \alpha
    R_{\Haar_1^{\otimes n}(O)}
    \left(\mathcal{E}_A,\mathcal{E}_B\right)
    +
    \beta
\end{aligned}
\end{equation}
that holds for all efficiently implementable Pauli channels must satisfy
\begin{equation}
    \beta
    \geq
    \frac{1}{3},
\end{equation}
independently of $\alpha$.
\end{propos}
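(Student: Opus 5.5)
The plan is to exhibit a single explicit pair of efficiently implementable Pauli channels $\mathcal{E}_A,\mathcal{E}_B$ that act identically on every real product state but whose outputs differ on complex product states. For this pair $R_{\Haar_1^{\otimes n}(O)}(\mathcal{E}_A,\mathcal{E}_B)=0$, so any valid bound of the form in Eq.~\eqref{eq:pauli-channel-no-go-main-text} must have $\beta\geq R_{\Haar_1^{\otimes n}}(\mathcal{E}_A,\mathcal{E}_B)$, whatever $\alpha$ is. We will show that this risk equals $1/3$.

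The construction uses the Bloch-vector action of a single-qubit Pauli channel with weights $(a_I,a_X,a_Y,a_Z)$. Such a channel rescales $(r_x,r_y,r_z)$ to $(\lambda_x r_x,\lambda_y r_y,\lambda_z r_z)$, where, for example, $\lambda_x=a_I+a_X-a_Y-a_Z$, and $\lambda_y,\lambda_z$ are obtained analogously. A real single-qubit state has $r_y=0$, so two channels agreeing in $\lambda_x$ and $\lambda_z$ coincide on all real states. We take
\begin{itemize}
\item $\mathcal{E}_A$: weights $a_I=a_Y=1/2$, so $(\lambda_x,\lambda_y,\lambda_z)=(0,1,0)$;
\item $\mathcal{E}_B$: weights $a_X=a_Z=1/2$, so $(\lambda_x,\lambda_y,\lambda_z)=(0,-1,0)$.
\end{itemize}
Each acts on the first qubit only, tensored with the identity on the remaining $n-1$ qubits. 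Both are Pauli channels with constant support and are implemented by applying a random single-qubit Pauli, so they are efficiently implementable.

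Next we compute the risks. For a product input $\rho_1\otimes\rho_{\rm rest}$, where $\rho_1$ has Bloch vector $\boldsymbol{r}$, the two outputs are $\tfrac12(I\pm r_yY)\otimes\rho_{\rm rest}$. Their difference is $r_y\,Y\otimes\rho_{\rm rest}$. Since the inputs are pure, $\operatorname{Tr}(\rho_{\rm rest}^2)=1$, and
\begin{equation*}
\tfrac12\left\|r_y\,Y\otimes\rho_{\rm rest}\right\|_2^2=\tfrac12\,r_y^2\cdot 2\cdot\operatorname{Tr}(\rho_{\rm rest}^2)=r_y^2 .
\end{equation*}
Thus the loss is exactly $r_y^2$ for the first qubit.
\begin{itemize}
\item Under $\Haar_1^{\otimes n}(O)$, $r_y=0$ identically, so the risk vanishes.
\item Under $\Haar_1^{\otimes n}$, $\mathbb{E}[r_y^2]=1/3$, since the single-qubit Haar second moment is $\tfrac13 I_3$, as noted before Corollary~\ref{corol-main-text:single-qubit-haar-guarantee}.
\end{itemize}
Substituting these values into the assumed bound gives $1/3\leq \alpha\cdot 0+\beta$, i.e.\ $\beta\geq 1/3$ for every $\alpha$.

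The only step requiring care is choosing the channels so that they agree exactly on real inputs while remaining distinguishable. The key is that $\lambda_y$ can be flipped in sign while $\lambda_x$ and $\lambda_z$ are held fixed, here both at zero. Once this pair is found, everything else is a direct computation.
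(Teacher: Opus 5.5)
Your proposal is correct and essentially identical to the paper's proof. You use the same pair of channels, $\mathcal{E}_A(\rho)=\tfrac12\rho+\tfrac12 Y_1\rho Y_1$ and $\mathcal{E}_B(\rho)=\tfrac12 X_1\rho X_1+\tfrac12 Z_1\rho Z_1$, the same output difference $r_y\,Y\otimes\rho_{\rm rest}$, and obtain the same risks $1/3$ and $0$. The only cosmetic difference is that you read off $\mathbb{E}[r_y^2]=1/3$ from the single-qubit Haar second moment, whereas the paper evaluates it as a moment-matrix eigenvalue along the normalized vectorized $Y_1$ for the product-Haar and product-real-Haar ensembles.
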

 
The counterexample in Proposition~\ref{thm-main-text:pauli-channel-no-go} works because a channel can erase the information carried by real inputs. A real single-qubit state has no $Y$ component. Complete dephasing in the $Y$ basis removes its $X$ and $Z$ components, leaving $I/2$ regardless of which real state was supplied. The channels in the construction therefore agree on real inputs. Complex states can have a $Y$ component that survives dephasing, allowing a constant average error on complex Haar-random inputs.

A unitary cannot erase information in this way. On a single qubit, unitary conjugation rotates the Bloch sphere. If two unitary channels agree on real states, they agree on the $X$ and $Z$ directions, which also fixes their action on $Y$. This explains why an exact separation is impossible, but it does not make the approximate result obvious. Real single-qubit inputs never directly probe the $Y$ direction, and exact agreement alone does not tell us what happens when the average error on real inputs is merely small. Corollary~\ref{corol-main-text:random-real} in Section~\ref{sec:randreal} showed that the complex-Haar risk is nevertheless controlled by the real-Haar risk, with an overhead approaching one as $n$ grows.

\medskip

\paragraph*{Out-of-distribution guarantees for sparse Pauli channels.}

The preceding results show that out-of-distribution generalization can fail both for general channels and within the class of Pauli channels. However, useful guarantees can be recovered when the target and model Pauli channels are sparse (or, more generally, when their coefficient difference is sparse) and training is performed on either Haar-random product states or globally Haar-random states. See Theorems~\ref{thm:sparse-pauli-global-haar} and~\ref{thm:pauli-channel-bound-any-distrib-from-product-haar} in the Appendices for their proofs.
\begin{propos}[Sparse Pauli-channel guarantee]
\label{thm-main-text:sparse-pauli-channel-generalization}
Let
\begin{equation}
\begin{aligned}
    \mathcal{E}_A(\rho)
    &=
    \sum_{j=0}^{4^n-1}
    a_j P_j\rho P_j,
    \\
    \mathcal{E}_B(\rho)
    &=
    \sum_{j=0}^{4^n-1}
    b_j P_j\rho P_j
\end{aligned}
\end{equation}
be two $n$-qubit Pauli channels, where $\boldsymbol{a}$ and $\boldsymbol{b}$ are probability vectors. Assume that their combined support contains at most $s=\operatorname{poly}(n)$ Pauli strings. Let $\mathcal{P}$ be an arbitrary distribution over $n$-qubit states, possibly mixed. Then
\begin{equation}\label{eq:sparse-pauli-channel-generalization-main-text}
\begin{aligned}
     R_{\mathcal{P}}
    \left(\mathcal{E}_A,\mathcal{E}_B\right) &\leq
    s\left(1+2^{-n}\right)
    R_{\Haar_n}
    \left(\mathcal{E}_A,\mathcal{E}_B\right),
    \\
    R_{\mathcal{P}}
    \left(\mathcal{E}_A,\mathcal{E}_B\right) &\leq
    s^2
    R_{\Haar_1^{\otimes n}}
    \left(\mathcal{E}_A,\mathcal{E}_B\right).
\end{aligned}
\end{equation}
Since these bounds hold for \emph{every} distribution $\mathcal{P}$, they also provide worst-case guarantees: in both inequalities, $R_{\mathcal{P}}(\mathcal{E}_A,\mathcal{E}_B)$ can be replaced by
\[
    \max_{\rho}\frac{1}{2}
    \left\|\mathcal{E}_A(\rho)-\mathcal{E}_B(\rho)\right\|_2^2,
\]
where the maximum is over all $n$-qubit states.
\end{propos}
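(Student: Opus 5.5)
Throughout, write $\boldsymbol{c}=\boldsymbol{a}-\boldsymbol{b}$. It is supported on a set $T$ with $|T|\le s$, and $\sum_j c_j=0$. The first step is to diagonalize everything in the Pauli basis. Write $\rho=2^{-n}\sum_Q r_Q Q$ and use $P_jQP_j=\chi(P_j,Q)\,Q$ with $\chi=\pm1$ the commutation character. Then
\[
\mathcal{E}_A(\rho)-\mathcal{E}_B(\rho)=2^{-n}\sum_Q r_Q f(Q)\,Q,\qquad f(Q):=\sum_{j\in T}c_j\chi(P_j,Q),
\]
so that
\[
\tfrac12\|\mathcal{E}_A(\rho)-\mathcal{E}_B(\rho)\|_2^2=\tfrac12\,2^{-n}\sum_{Q\neq I} r_Q^2 f(Q)^2 .
\]
Here $f(I)=\sum_jc_j=0$. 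Hence every risk is a weighted sum of the squared Walsh--Hadamard-type transform $f$ of $\boldsymbol{c}$ on the symplectic group $\mathbb{Z}_2^{2n}$, and the input ensemble enters only through the weights $\mathbb{E}[r_Q^2]$.

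\textbf{Worst-case upper bound.} By Cauchy--Schwarz on the support, $f(Q)^2\le(\sum_j|c_j|)^2\le s\|\boldsymbol{c}\|_2^2$. Also $\sum_Q r_Q^2=2^n\operatorname{Tr}\rho^2\le 2^n$. Together these give $\max_\rho \tfrac12\|\cdot\|_2^2\le \tfrac{s}{2}\|\boldsymbol{c}\|_2^2$, and the same bound holds for $R_{\mathcal{P}}$ for any $\mathcal{P}$, mixed or not. It therefore suffices to lower bound the two Haar risks by multiples of $\|\boldsymbol{c}\|_2^2$.

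\textbf{Global Haar.} For $Q\neq I$ we have $\mathbb{E}_{\Haar_n}[r_Q^2]=1/(2^n+1)$; this is the isotropic second moment already used for Corollary~\ref{corol-main-text:haar-general-comparison}. Parseval for the characters of $\mathbb{Z}_2^{2n}$ gives $\sum_Q f(Q)^2=4^n\|\boldsymbol{c}\|_2^2$. Combining,
\[
R_{\Haar_n}=\tfrac12\|\boldsymbol{c}\|_2^2\,\frac{2^n}{2^n+1},
\]
and hence $R_{\mathcal{P}}\le s(1+2^{-n})R_{\Haar_n}$ exactly as claimed.

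\textbf{Product Haar.} Here $\mathbb{E}[r_Q^2]=3^{-|Q|}$, which can be read off from the product structure of Theorem~\ref{thm:m-product-distribution}. Expanding $f^2$ and summing over $Q$ qubit by qubit gives $R_{\Haar_1^{\otimes n}}=\tfrac12\,\boldsymbol{c}^{\mathsf T}G\boldsymbol{c}$ with $G_{jk}=3^{-|P_jP_k|}$, where $|\cdot|$ is the Pauli weight. Equivalently, $G$ is the principal $T\times T$ block of $G_1^{\otimes n}$, with $G_1$ the $4\times4$ matrix having $1$ on the diagonal and $1/3$ off it. The second inequality follows once we show
\[
\boldsymbol{c}^{\mathsf T}G\boldsymbol{c}\ \ge\ \|\boldsymbol{c}\|_2^2/s\qquad\text{for all } \boldsymbol{c}\text{ supported on } T \text{ with } \textstyle\sum_j c_j=0,
\]
since then $R_{\mathcal{P}}\le\tfrac s2\|\boldsymbol{c}\|_2^2\le s^2R_{\Haar_1^{\otimes n}}$. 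As a sanity check, for $s=2$ we get $\boldsymbol{c}^{\mathsf T}G\boldsymbol{c}=2c^2(1-3^{-d})\ge\tfrac43c^2$, where $d\ge1$ is the weight of $P_1P_2$, and this is $\ge c^2=\|\boldsymbol c\|_2^2/2$.

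\textbf{Main obstacle.} The restricted eigenvalue bound is the hard step. The full matrix $G_1^{\otimes n}$ has smallest eigenvalue $(2/3)^n$, so the bound must exploit both the sparsity and the constraint $\sum_jc_j=0$. My first attempt would be an induction on $|T|$ over qubits.
\begin{itemize}
\item Qubits on which all $P_j$, $j\in T$, agree contribute a factor $1$ to every entry of $G$ and can be discarded.
\item On a qubit where they disagree, split $T$ into at most four classes by the local Pauli. Using $G_1=\tfrac23 I+\tfrac13 J$, with $J$ the all-ones matrix, decompose the quadratic form into an intra-class part, which is handled by the induction hypothesis on the smaller classes, and a part governed by the class sums.
\item The class sums are controlled by the $\tfrac23$ term, because $\sum_jc_j=0$ forces them not all to vanish unless each class is itself balanced.
\end{itemize}
Balancing these two contributions so that the constant degrades only like $1/s$, and not exponentially in the number of splitting qubits, is where I expect the real work to lie.
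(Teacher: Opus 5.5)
Your reduction is sound and matches the paper. The uniform bound $\tfrac12\|\mathcal E_A(\rho)-\mathcal E_B(\rho)\|_2^2\le\tfrac s2\|\boldsymbol c\|_2^2$, the Parseval computation $R_{\Haar_n}=\tfrac12\frac{2^n}{2^n+1}\|\boldsymbol c\|_2^2$, and the formula $2R_{\Haar_1^{\otimes n}}=\boldsymbol c^{\mathsf T}G\boldsymbol c$ with $G_{jk}=3^{-|P_jP_k|}$ are exactly what the paper derives, via the moment matrix rather than the symplectic Fourier transform. So the first inequality and the worst-case remark are proved.

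The second inequality is not. The restricted bound $\boldsymbol c^{\mathsf T}G\boldsymbol c\ge\|\boldsymbol c\|_2^2/s$ is its whole content, and you leave it open. Your diagnosis also points the wrong way: the constraint $\sum_jc_j=0$ is not what makes the bound hold. The eigenspace of $G_1^{\otimes n}$ for its smallest eigenvalue $(2/3)^n$ is $V_0^{\otimes n}$, with $V_0=\{v\in\mathbb R^4:\sum_iv_i=0\}$. Every vector in it already has zero total sum, so only sparsity excludes it. Nor can the zero-sum condition be carried through a qubit-by-qubit induction, because the individual subfamilies obtained by splitting on a qubit do not inherit it.

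The missing idea is to prove a stronger, constraint-free statement by induction on $n$, with $F_0(x)=x^2$. Define $F_n(x):=x^{\mathsf T}G_1^{\otimes n}x$. The claim is $\|x\|_2^2\le s\,F_n(x)$ for every real $x\in\mathbb R^{\mathbb P_n}$ with at most $s$ nonzero entries, simultaneously for all $s\ge1$. The steps are:
\begin{enumerate}
\item Split on the first qubit using $G_1=\tfrac23I+\tfrac13J$. This gives $F_n(x)=\tfrac23\sum_{A\in\mathbb P_1}F_{n-1}(x^{(A)})+\tfrac13F_{n-1}\bigl(\sum_Ax^{(A)}\bigr)$, where $x^{(A)}=(x_{A\otimes Q})_{Q\in\mathbb P_{n-1}}$. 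The second term is the quadratic form of the summed vector, not of scalar class sums.
\item Apply the hypothesis with sparsity $s_A=|\operatorname{supp}x^{(A)}|$ to each nonzero family. Combine these with weighted Cauchy--Schwarz, writing $r_A=\|x^{(A)}\|_2$: $\sum_A r_A^2/s_A\ge(\sum_Ar_A)^2/\sum_As_A\ge(\sum_Ar_A)^2/s$.
\item Apply the hypothesis with sparsity $s$ to $\sum_Ax^{(A)}$, whose support has size at most $s$.
\end{enumerate}
Together these give $sF_n(x)\ge\tfrac23(\sum_Ar_A)^2+\tfrac13\|\sum_Ax^{(A)}\|_2^2$. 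The upgrade from $\sum_Ar_A^2$ to $(\sum_Ar_A)^2$ is exactly what absorbs the possibly negative cross terms. Indeed, the right-hand side minus $\sum_Ar_A^2=\|x\|_2^2$ equals $\tfrac23\sum_{A<B}\bigl(2r_Ar_B+\langle x^{(A)},x^{(B)}\rangle\bigr)\ge\tfrac23\sum_{A<B}r_Ar_B\ge0$. The constant $s$ is therefore preserved exactly at every splitting qubit, which is the balancing you anticipated but did not supply.
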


The first inequality shows that training on globally Haar-random states controls performance on an arbitrary testing distribution with an overhead linear in the sparsity. The second gives the corresponding guarantee for product-Haar training, with a quadratic overhead. Notably, neither result requires the testing distribution to be unbiased, expressive, or supported on pure states. Thus, whenever the Pauli channels are polynomially sparse, both training ensembles give polynomial average and worst case out-of-distribution guarantees. These results show that useful generalization beyond unitary learning can be recovered by imposing structure on the processes being compared.

\section{Discussion}
Whether we can generalize from one input ensemble to another depends on the unitarity of the processes we aim to learn. We see this when comparing global- and product-Haar states. For unitaries, their risks control one another up to constant factors, whereas for general channels we construct a pair whose risk is exponentially small on global-Haar inputs but remains constant on product inputs. The real-to-complex comparison sharpens this contrast. Globally Haar-random real states control the complex-Haar risk for unitary learning with an overhead tending to one, yet two-term Pauli channels can agree on real product states and have risk $1/3$ on complex product states. Thus, bias and expressivity explain how an ensemble probes differences between unitaries, but cannot alone tell us whether it reveals differences between more general processes. Moreover, mixedness is not what separates the positive and negative results. Both channel counterexamples use pure inputs, while we still obtain general unitary comparisons when Haar inputs are locally depolarized. 

An instructive comparison is provided by Ref.~\cite{huang2023learning}, which proves a positive out-of-distribution result for arbitrary quantum channels. Training on random Pauli product states suffices to predict bounded-degree observable expectation values on any locally flat testing distribution, including distributions of entangled states. This does not conflict with our channel no-go results. Their learner predicts selected properties of the output using a particular informative training ensemble, whereas our risk compares the full output states across arbitrary training and testing distributions. Thus, restricted properties of an arbitrary channel may remain predictable out of distribution even when broad out-of-distribution control of the channel itself is impossible.

It is perhaps worth highlighting that, with the exception of Proposition~\ref{thm-main-text:sparse-pauli-channel-generalization}, the guarantees in this work concern average risk and should not be interpreted pointwise. Small $R_{\mathcal{P}}$ implies only that inputs with large loss carry little total probability, not that the learned process is accurate everywhere. This matters because any unbiased pure-state distribution with finite support contains at least $2^n$ states (Theorem~\ref{thm:bound-support-size-mu-eta}). A model may therefore fail on exponentially many individually unlikely states while retaining small average risk. Obtaining pointwise guarantees beyond sparse Pauli channels and identifying broader channel classes that permit polynomial transfer remain important open problems.

While we have focussed on out-of-distribution generalization as a problem of fundamental conceptual interest, these bounds may also have practical applications in terms of reducing the cost of learning and compiling quantum dynamics. Product-to-Haar guarantees allow a global unitary objective to be replaced by a loss evaluated on product states~\cite{caro2022outofdistribution,gibbs2022dynamical}. For short-time local dynamics, these states can be handled efficiently using tensor networks, enabling classical compilation in one-dimensional and quasi-one-dimensional systems~\cite{zhang2024scalable, kukliansky2024leveraging, gibbs2024deep}. Combined with Pauli propagation, the same principle has been applied to large two-dimensional systems~\cite{danna2025circuit}. Our framework may similarly enable the development of efficient methods for circuit compilation and noise learning by identifying training ensembles that are easy to simulate or prepare, yet sufficiently informative to control performance on the target ensemble.

\begin{acknowledgments} 
\vspace{-10pt}
JF acknowledges Daniel Stilck França, Sacha Lerch and Laura Lewis for helpful discussions.
AA and ZH acknowledge support from the Sandoz Family Foundation-Monique de Meuron program for Academic Promotion.
AA acknowledges support from the European Union’s Horizon Europe research and innovation programme under the Marie Skłodowska-Curie grant agreement No. 101282902 (PROQSY).
\end{acknowledgments}

\clearpage

\section*{AI Statement}
Most results (all on learning unitaries from pure states) were derived and written in the AI-dark ages of 2024, when a genuinely useful research contribution from AI was an achievement. We then sat on it for two years, partly through routine time-management incompetence and partly because we struggled to extend to channels. In the past month, ChatGPT helped us find no-go results to rule out the hoped-for general channel extension. It also helped with the final writing sprint.

\bibliography{quantum}
\makeatletter
\newcommand{\supplementtableofcontents}{\section*{Contents}\@starttoc{stoc}}
\makeatother

\appendix
\onecolumngrid

\begin{center}
    {\Large\bfseries Appendix\par}
\end{center}

\vspace{1em}
\setcounter{tocdepth}{0}
\supplementtableofcontents
\clearpage

\newpage

\numberwithin{equation}{\eqAndThmNumberedWithin}

\begin{table}[t]
    \centering
    \renewcommand{\arraystretch}{2}
    \setlength{\tabcolsep}{20pt}

    \begin{tabular}{|c|c|c|}
        \hline
        \textbf{Symbol} & \textbf{Name} & \textbf{Definition} \\
        \hline

        $\displaystyle \kket{A}$, $\displaystyle \bbra{A}$ & Vectorised notation & \ref{def:vectorised-notation} \\

        $\displaystyle R_D\left(U, V\right) = \frac{1}{2}\exval_{\rho \followsdistr D}\left[\left\|U \rho U^{\dagger} - V \rho V^{\dagger}\right\|_2^2\right]$ & Risk & \ref{def:risk} \\

        $\displaystyle \mathbb{P}_n = \left\{I, X, Y, Z\right\}^{\otimes n}$ & Pauli set & \ref{def:pauli-set} \\

        $\displaystyle \widetilde{P} = \frac{1}{\sqrt{2^n}} P$ & Normalised Pauli & \ref{def:normalisation-factors} \\

        $\displaystyle \Pi \ket{y} = \ket{\pi\left(y\right)} = \ket{y} - \bbraket{\widetilde{I}^{\otimes n}}{y} \kket{\widetilde{I}^{\otimes n}}$ & $I$-truncation for vectors & \ref{def:truncation-for-vectors} \\

        $\displaystyle \pi\left(A\right) = A - \frac{I^{\otimes n}}{2^n} \Tr\left(A\right)$ & $I$-truncation for operators & \ref{def:truncation-for-operators} \\

        $\displaystyle \bar{\gamma} = \exval_{\rho \followsdistr D}\left[\Tr\left(\rho^2\right)\right]$ & Average purity & \ref{def:average-purity} \\

        $\displaystyle \mu = \sqrt{2^n} \pi\left(\exval_{\rho \followsdistr D}\left[\rho^2\right]\right)$ & Mean & \ref{def:mean} \\

        $\displaystyle \nu = \|\mu\|_2$ & Bias & \ref{def:bias} \\

        $\displaystyle M = 2^n \exval_{\rho \followsdistr D}\left[\rho \otimes \rho^T\right]$ & Moment matrix & \ref{def:moment-matrix} \\

        $\displaystyle M_{\mathrm{core}} = \bar{\gamma} \oplus \mathcal{S}$, $M_{\mathrm{mean}}$, $\displaystyle \mathcal{S} = \Pi M \Pi$ & Moment decomposition & \ref{def:moment-decomposition} \\

        $\displaystyle M_{mean}\left(\ket{m}\right)$ & Generalised mean matrix & \ref{def:generalised-mean-matrix} \\

        $\displaystyle \eta = \bar{\gamma} - \lambda_{max}\left(\mathcal{S}\right)$ & Expressivity measure & \ref{def:eta}  \\

        $\displaystyle \ket{w\left(U, V\right)} = \frac{1}{\sqrt{2^n}} \kket{V^{\dagger} U}$ & $w$ vector & \ref{def:w-vector}  \\

        $\displaystyle r_D\left(\ket{x}\right) = \bar{\gamma} - \bra{x} M \ket{x}$ & Pseudo-risk & \ref{def:pseudo-risk}  \\
        
        \hline
    \end{tabular}
    \caption{Notations used throughout the paper.}
    \label{tab:notation}
\end{table}

\suppsection{Preliminaries}\label{sec:preliminaries}

We introduce two standard tools from linear algebra used to prove our risk bounds. We use vectorization to write the risk as a quadratic form, and we employ the Rayleigh quotient to bound this quadratic form through its eigenvalues. We collect the required identities here.

\begin{definition}[Vectorised notation]\label{def:vectorised-notation}
    Similarly to Ref.~\cite{mele2023introduction}, we consider the linear operator $\text{vec}: \mathbb{C}^{2^n \times 2^n} \mapsto \mathbb{C}^{2^n} \otimes \mathbb{C}^{2^n} \simeq \mathbb{C}^{4^n}$ defined by: 
    \[\text{vec}\left(\ket{i}\bra{j}\right) = \ket{i} \otimes \ket{j}.\]
    
    We then note: 
    \[\kket{A} = \text{vec}\left(A\right) = \text{vec}\left(\sum_{i, j} \bra{i} A \ket{j} \ket{i}\bra{j}\right) = \sum_{i, j} \bra{i} A \ket{j} \ket{i} \otimes \ket{j}.\]
    
    We also note $\bbra{A} = \kket{A}^{\dagger}$. This allows us to make a bijection between operators and vectors, so that we can exploit operators as if they were column vectors.
    
    As known in the literature, this has the following properties:
    \begin{enumerate}
    \item $\kket{\alpha A + \beta B} = \alpha \kket{A} + \beta \kket{B}$ for any $\alpha, \beta \in \mathbb{C}$.
    \item $\bbrakket{A}{B} = \Tr\left(A^{\dagger} B\right)$.
    \item If $\left\{T_i\right\}_{i \in \mathcal{I}}$ is a Hilbert-Schmidt orthonormal basis, i.e.\ $\Tr\left(T_i^{\dagger} T_j\right) = \delta_{ij}$, then $\left\{\kket{T_i}\right\}_{i \in \mathcal{I}}$ is an orthonormal basis for $\mathbb{C}^{2^n} \otimes \mathbb{C}^{2^n}$.
    \item $\kket{A} = A \otimes I \kket{I} = I \otimes A^T \kket{I}.$
    \item $\kket{ABC} = A \otimes C^T \kket{B}$.
    \item $\bbra{ABC} = \bbra{B} A^{\dagger} \otimes C^*.$
    \end{enumerate}
    
    For completeness, we will show those equalities.
\end{definition}
\begin{proof}[Proof (1)]
    vec$(\cdot)$ is a linear operator by definition.
\end{proof}
\begin{proof}[Proof (2)]
    By definition: 
    \begin{equation}\begin{aligned}\bbrakket{A}{B} &= \sum_{i,j,k,\ell} \left(\bra{i}A\ket{j}\right)^* \left(\bra{i} \otimes \bra{j}\right) \bra{k}B \ket{\ell} \left(\ket{k} \otimes \ket{\ell}\right) \\ &= \sum_{i,j,k,\ell} \bra{j} A^{\dagger} \ket{i} \bra{k}B \ket{\ell}\ \delta_{ik} \delta_{j\ell} \\ &= \sum_{i, j} \bra{j}A^{\dagger} \ket{i} \bra{i} B \ket{j} \\ &= \sum_{j} \bra{j} A^{\dagger} B \ket{j} \\ &= \sum_{j} \left(A^{\dagger}B\right)_{jj} \\ &= \Tr\left(A^{\dagger} B\right).\end{aligned}\end{equation}
\end{proof}
\begin{proof}[Proof (3)]
    From (2), we see that $\left\{T_i\right\}$ being a Hilbert-Schmidt orthonormal basis means that $\kket{T_i}$ are orthonormal. However, since we have $4^n$ orthonormal vectors, they necessarily form a basis.
\end{proof}
\begin{proof}[Proof (4)]
    We have: 
    \[\kket{A} = \sum_{i, j} \bra{i} A \ket{j} \ket{i} \otimes \ket{j} = \sum_{i, j} \ket{i} \bra{i} A \ket{j} \otimes \ket{j} = \sum_{j} A \ket{j} \otimes \ket{j},\]
    showing indeed that $\kket{A} = A \otimes I \kket{I}$. Similarly: 
    \[\kket{A} = \sum_{i, j} \bra{i} A \ket{j} \ket{i} \otimes \ket{j} = \sum_{i, j} \ket{i} \otimes \ket{j} \bra{j} A^T \ket{i} = \sum_{i} \ket{i} \otimes A^T \ket{i},\]
    implying that $\kket{A} = I \otimes A^T \kket{I}$, as required.
\end{proof}
\begin{proof}[Proof (5)]
    We notice that, by (4):
    \[C \otimes I \kket{I} = \kket{C} = I \otimes C^T \kket{I}.\]
    
    This means that, still by (4): 
    \[\kket{ABC} = \left(A \otimes I\right)\left(B \otimes I\right) \left(C \otimes I\right) \kket{I} = \left(A \otimes C^T\right) \left(B \otimes I\right) \kket{I},\]
    indeed implying that $\kket{ABC} = A \otimes C^T \kket{B}$.
\end{proof}
\begin{proof}[Proof (6)]
    By definition and by (5): 
    \[\bbra{ABC} = \kket{ABC}^{\dagger} = \left(A \otimes C^T \kket{B}\right)^{\dagger} = \bbra{B} A^{\dagger} \otimes C^*.\]
\end{proof}

The Rayleigh quotient allows us to bound a quadratic form without choosing a particular input vector. Its smallest and largest possible values are determined by the extremal eigenvalues of the matrix.

\begin{definition}[Rayleigh quotient]\label{def:rayleigh-quotient}
    Let $M$ be some Hermitian matrix, and $\ket{x}$ be some nonzero vector. 
    
    We recall the definition of a Rayleigh quotient: 
    \[q\left(M, \ket{x}\right) = \frac{\bra{x} M \ket{x}}{\braket{x}{x}}.\]
    
    When the context is clear, we will note $q\left(M, \ket{x}\right) = q$.
\end{definition}
\begin{property*}
    The main property that we will exploit is that, for all $\ket{x} \neq 0$: 
    \[\lambda_{min}\left(M\right) \leq \frac{\bra{x} M \ket{x}}{\braket{x}{x}} \leq \lambda_{max}\left(M\right),\]
    with equality when $\ket{x}$ is an eigenket associated to $\lambda_{min}\left(M\right)$ or $\lambda_{max}\left(M\right)$.
\end{property*}

\suppsection{General framework}\label{sec:general-framework}

We now introduce our framework for unitary learning, which associates a moment matrix with each distribution. Separating out the identity component gives the out-of-distribution generalization bounds in Theorem~\ref{thm:inequalities}. These bounds have two parts: a multiplicative factor controlled by the spectrum, and an additive term that depends on the means of the distributions.

\suppsubsection{Definitions and properties}

\subsubsection{General definitions}

\begin{definition}[Average purity]\label{def:average-purity}
    Let $D$ be an arbitrary mixed-state distribution. Its average purity, written $\bar{\gamma}$, is defined as: 
    \[\bar{\gamma} = \exval_{\rho \followsdistr D}\left[\Tr\left(\rho^2\right)\right].\]
\end{definition}

\begin{definition}[Risk]\label{def:risk}
    Let $D$ be an arbitrary quantum state distribution, and $U, V$ be unitary matrices.
    
    We define the risk (or error) to be: 
    \[R_D\left(U, V\right) = \frac{1}{2}\exval_{\rho \followsdistr D}\left(\left\|U \rho U^{\dagger} - V \rho V^{\dagger}\right\|_2^2\right).\]
\end{definition}

\begin{theorem}[Risk equivalent formulation]\label{thm:risk-equivalent-formulation}
    Let $D$ be a distribution over mixed states of average purity $\bar{\gamma}$. We have that, for any unitary matrices $U, V$:
    \[R_D\left(U, V\right) = \bar{\gamma} - \exval\left(\Tr\left(U \rho U^{\dagger} V \rho V^{\dagger}\right)\right).\]
    
    Our definition of risk is thus equivalent to reference~\cite{caro2022outofdistribution}'s, for pure states.
\end{theorem}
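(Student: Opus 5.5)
The plan is to expand the squared Hilbert--Schmidt norm directly and use unitary invariance of the trace; the computation is pointwise in $\rho$, and linearity of expectation finishes it. Fix a state $\rho$ and write $A = U\rho U^{\dagger}$, $B = V\rho V^{\dagger}$. Both are Hermitian, so
\[
\left\|A-B\right\|_2^2 = \Tr\left((A-B)^2\right) = \Tr\left(A^2\right) + \Tr\left(B^2\right) - \Tr(AB) - \Tr(BA).
\]

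Next I will simplify each term. By cyclicity of the trace and $U^{\dagger}U = I$,
\[
\Tr\left(A^2\right) = \Tr\left(U\rho^2U^{\dagger}\right) = \Tr\left(\rho^2\right),
\]
and likewise $\Tr(B^2) = \Tr(\rho^2)$. Cyclicity also gives $\Tr(BA) = \Tr(AB)$. It follows that
\[
\frac{1}{2}\left\|U\rho U^{\dagger}-V\rho V^{\dagger}\right\|_2^2 = \Tr\left(\rho^2\right) - \Tr\left(U\rho U^{\dagger}V\rho V^{\dagger}\right).
\]
The second term is real, since it is the trace of a product of two Hermitian positive semidefinite operators.

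Taking the expectation over $\rho\sim D$ and using Definition~\ref{def:average-purity} then gives exactly $R_D(U,V) = \bar{\gamma} - \exval\left(\Tr\left(U\rho U^{\dagger}V\rho V^{\dagger}\right)\right)$. All quantities are bounded, so every expectation is well defined.

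For the comparison with Ref.~\cite{caro2022outofdistribution}, I will specialize to $\rho = \ket{\Psi}\bra{\Psi}$. Then $\bar{\gamma}=1$, and
\[
\Tr\left(U\rho U^{\dagger}V\rho V^{\dagger}\right) = \left|\bra{\Psi}U^{\dagger}V\ket{\Psi}\right|^2 = \left|\bra{\Psi}V^{\dagger}U\ket{\Psi}\right|^2,
\]
which recovers the expected infidelity $1-\left|\bra{\Psi}V^{\dagger}U\ket{\Psi}\right|^2$ used there, matching Eq.~\eqref{eq:two-norm-loss-unitary}. There is no real obstacle here; the only thing to check carefully is that the cross terms combine into twice a single real trace.
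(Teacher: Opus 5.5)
Your proof is correct and follows essentially the same route as the paper. You expand the squared Hilbert--Schmidt norm, use cyclicity and unitarity to reduce both diagonal terms to $\mathrm{Tr}(\rho^2)$, and merge the cross terms into twice a single trace before taking expectations. The pure-state specialization matches the paper's remark.
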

\begin{remark*}
    In particular, this proves that for pure-state distributions, inserting $\rho = \ket{\psi}\bra{\psi}$:
    \[R_D\left(U, V\right) = 1 - \exval_{\ket{\psi} \followsdistr D}\left(\left|\bra{\psi} U^{\dagger} V \ket{\psi}\right|^2\right).\]
\end{remark*}
\begin{proof}[Proof]
    This is a direct equality: 
    \begin{equation}\begin{aligned}R_D\left(U, V\right) &= \frac{1}{2} \exval\left(\left\|U \rho U^{\dagger} - V \rho V^{\dagger}\right\|_2^2\right) \\ &= \frac{1}{2} \exval\left(\Tr\left(\left(U \rho U^{\dagger} - V \rho V^{\dagger}\right)^{\dagger} \left(U \rho U^{\dagger} - V \rho V^{\dagger}\right)\right)\right) \\ &= \frac{1}{2} \exval\left(\Tr\left(\left(U \rho U^{\dagger} - V \rho V^{\dagger}\right)^2\right)\right) \\ &= \frac{1}{2} \exval\left(\Tr\left(U \rho^2 U^{\dagger}\right) + \Tr\left(V \rho^2 V^{\dagger}\right) - 2\Tr\left(U \rho U^{\dagger} V \rho V^{\dagger}\right)\right)\end{aligned}\end{equation}
    
    We now exploit that $U$ and $V$ are unitary, that $\exval\left[\Tr\left(\rho^2\right)\right] = \bar{\gamma}$ by definition of average purity:
    \[R_D\left(U, V\right) = \frac{1}{2} \cdot 2 \exval\left(\Tr\left(\rho^2\right)\right) - \frac{1}{2} \cdot  2 \exval\left(\Tr\left(U \rho U^{\dagger} V \rho V^{\dagger}\right)\right) = \bar{\gamma} - \exval\left(\Tr\left(U \rho U^{\dagger} V \rho V^{\dagger}\right)\right) .\]
\end{proof}

\begin{definition}[Pauli set]\label{def:pauli-set}
    We note $\mathbb{P}_n$ to be the $n$-qubit Pauli set:
    \[\mathbb{P}_n = \left\{I, X, Y, Z\right\}^{\otimes n}.\]
\end{definition}

\begin{definition}[Normalisation factors]\label{def:normalisation-factors}
    Let $P \in \mathbb{P}_n$ be arbitrary. We note $\widetilde{P}$ to be a normalised Pauli, i.e.
    \[\widetilde{P} = \frac{P}{\sqrt{2^n}}.\]
\end{definition}
\begin{example*}
    For instance, $\widetilde{I}^{\otimes n} = I^{\otimes n} / \sqrt{2^n}$.
\end{example*}
\begin{remark*}
    We say those Paulis are normalised since $\Tr\left(\widetilde{P}^2\right) = 1$. This allows us to define the following orthonormal basis for $\mathbb{C}^{4^n}$, which we will take as our canonical basis: 
    \[\kket{\widetilde{P}}, \mathspace P \in \mathbb{P}_n.\]
\end{remark*}

\subsubsection{$I$-truncation}

To separate the identity contribution from the rest of the risk, we project onto the traceless subspace. We give two equivalent ways to do this, acting either on operators or on their vectorized form.

\begin{definition}[$I$-truncation for vectors]\label{def:truncation-for-vectors}
    We define the operator $\Pi \in \mathbb{C}^{4^n \times 4^n}$ as: 
    \[\Pi = I^{\otimes n} \otimes I^{\otimes n} - \kket{\widetilde{I}^{\otimes n}}\bbra{\widetilde{I}^{\otimes n}}.\]

    This way, given any $\ket{y} \in \mathbb{C}^{4^n}$, we can define its $I$-truncation as: 
    \[\Pi \ket{y} = \sum_{P \neq I^{\otimes n}} \bbraket{\widetilde{P}}{y} \kket{\widetilde{P}} = \ket{y} - \bbraket{\widetilde{I}^{\otimes n}}{y} \kket{\widetilde{I}^{\otimes n}} \in \mathbb{C}^{4^n}.\]

    We may moreover write $\ket{\pi\left(y\right)} := \Pi \ket{y}$.
\end{definition}
\begin{remark*}
    Note that the way $\Pi \ket{y}$ is defined, it lives in the linear space defined by: 
    \[\mathcal{H} = \left\{\ket{z} \in \mathbb{C}^{4^n} \suchthat \bbraket{\widetilde{I}^{\otimes n}}{z} = 0\right\}.\]

    Now, this space is isomorphic to $\mathbb{C}^{4^n - 1}$, so we can equivalently consider $\Pi \ket{y} \in \mathbb{C}^{4^n - 1}$. Thanks to the isomorphism, we will alternatively consider the case that's simplest to us.
    
    Moreover, note that $\Pi \ket{y}$ is the projection of $\ket{y}$ onto $\mathcal{H}$, hence the use of the character $\Pi$.
\end{remark*}
\begin{remark}
    We wrote $\Pi_{ni} := \Pi$ and $\ket{r(y)} := \sqrt{2^n} \ket{\pi(y)}$ in the main text.
\end{remark}

\begin{definition}[$I$-truncation for operators]\label{def:truncation-for-operators}
    Let $A \in \mathbb{C}^{2^n \times 2^n}$ be an arbitrary operator. We define its $I$-truncation as: 
    \[\pi\left(A\right) = A - \frac{I^{\otimes n}}{2^n} \Tr\left(A\right) .\]
\end{definition}
\begin{remark*}
    $\pi$ is a linear super-operator.
\end{remark*}
\begin{remark*}
    Considering some arbitrary operator $A \in \mathbb{C}^{2^n \times 2^n}$, the vectorisation of $\pi\left(A\right)$ is: 
    \[\begin{split}
        \kket{\pi\left(A\right)} & = \kket{A - \frac{I^{\otimes n}}{2^n} \Tr\left(A\right)} = \kket{A} - \frac{1}{2^n} \Tr\left(A\right) \kket{I^{\otimes n}} = \kket{A} - \Tr\left(\widetilde{I}^{\otimes n} A\right) \kket{\widetilde{I}^{\otimes n}} \\
         & = \kket{A} - \bbrakket{\widetilde{I}^{\otimes n}}{A} \kket{\widetilde{I}^{\otimes n}} = \Pi \kket{A}.
    \end{split}\]
    
    In other words, the $I$-truncation of an operator and of a vector are exactly the same thing, up to vectorisation. More precisely, operator $\Pi$ is the vectorisation of the super-operator $\pi$. This also means that the notation $\kket{\pi\left(A\right)} = \Pi \kket{A}$ and $\ket{\pi\left(y\right)} := \Pi \ket{y}$ are not ambiguous (in case the reader is confused by how these definitions are not circular we stress that the former expression uses $\kket{\dots}$ whereas the latter $\ket{\dots}$).
\end{remark*}

\subsubsection{Mean and bias}

For mixed-state distributions, our bounds involve the average of $\rho^2$, rather than the average state alone. This quantity determines how the identity and traceless parts of the moment matrix are coupled, and the bias measures the strength of that coupling.

\begin{definition}[Mean]\label{def:mean}
    Let $D$ be an $n$-qubit distribution over mixed states. We define its mean (or first moment) $\mu \in \mathbb{C}^{2^n \times 2^n}$ as: 
    \[\mu = \sqrt{2^n} \pi\left(\exval_{\rho \followsdistr D}\left[\rho^2\right]\right).\]

    We will mostly work with the vectorised version, $\kket{\mu}$.
\end{definition}
\begin{remark*}
    Note that $\left(\mu, \bar{\gamma}\right)$ is in bijection with $\exval_{\rho \followsdistr D}\left[\rho^2\right]$, i.e.~they both contain the same amount of information about the distribution. Indeed, since the expected value is linear and hence commutes with $\pi$:
    \begin{equation}\label{eq:equivalent-expression-mu}
        \mu = \sqrt{2^n} \exval_{\rho \followsdistr D}\left[\pi\left(\rho^2\right)\right] = \sqrt{2^n} \exval_{\rho \followsdistr D}\left[\rho^2 - \frac{I^{\otimes n}}{2^n} \cdot \Tr\left(\rho^2\right)\right] = \sqrt{2^n}\left(\exval_{\rho \followsdistr D}\left[\rho^2\right] - \bar{\gamma} \frac{I^{\otimes n}}{2^n}\right).
    \end{equation}
\end{remark*}
\begin{remark*}
    For pure-state distributions, we have $\rho^2 = \rho$. Therefore, we can define $\bar{\rho} = \exval_{\rho \followsdistr D}\left[\rho\right]$ to get: 
    \[\mu = \sqrt{2^n} \pi\left(\bar{\rho}\right).\]

    This is a justification of the name ``mean'' for this object. Now, this is not the only case this identity holds, as shown by the following lemma.
\end{remark*}

\begin{lemma}[Mean for $n = 1$]\label{lma:mean-for-single-qubit}
    Let $D$ be a mixed-state distribution of mean $\mu$. If it is over $n = 1$ qubit states then, leaving $\bar{\rho} = \exval_{\rho \followsdistr D}\left[\rho\right]$, we have 
    \[\mu = \sqrt{2^n} \pi\left(\bar{\rho}\right).\]
\end{lemma}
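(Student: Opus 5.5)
The plan is to prove the pointwise identity $\pi(\rho^2)=\pi(\rho)$ for every single-qubit density matrix $\rho$, and then average. Since $\pi$ is linear and commutes with the expectation (as in Eq.~\eqref{eq:equivalent-expression-mu}), we have
\[
\mu=\sqrt{2}\,\exval_{\rho\followsdistr D}\left[\pi\left(\rho^2\right)\right].
\]
So once the pointwise identity holds, the right-hand side equals $\sqrt{2}\,\exval[\pi(\rho)]=\sqrt{2}\,\pi(\bar\rho)$, which is the claim with $2^n=2$.

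For the pointwise identity I will use the Cayley--Hamilton theorem for $2\times 2$ matrices. It gives $\rho^2=\Tr(\rho)\,\rho-\det(\rho)\,I$, and since $\Tr(\rho)=1$ this becomes $\rho^2=\rho-\det(\rho)\,I$. The truncation $\pi$ annihilates every multiple of the identity, because $\pi(cI)=cI-\tfrac{I}{2}\Tr(cI)=0$. Applying the linear map $\pi$ therefore gives $\pi(\rho^2)=\pi(\rho)$.

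As a sanity check, the same can be done in Bloch form. Write $\rho=\tfrac12(I+\boldsymbol r\cdot\boldsymbol\sigma)$ and use $(\boldsymbol r\cdot\boldsymbol\sigma)^2=|\boldsymbol r|^2 I$. This gives
\[
\rho^2=\frac{1+|\boldsymbol r|^2}{4}\,I+\frac12\,\boldsymbol r\cdot\boldsymbol\sigma,
\]
whose traceless part is $\tfrac12\boldsymbol r\cdot\boldsymbol\sigma=\pi(\rho)$. The argument has no real obstacle. The only point to stress is that it relies on the dimension being $2$, through Cayley--Hamilton or equivalently the anticommutation of the Paulis. For $n\geq 2$, $\rho^2$ is generally not an affine function of $\rho$, so the identity fails.
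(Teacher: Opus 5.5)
Your proof is correct and has the same structure as the paper's: you prove $\pi(\rho^2)=\pi(\rho)$ pointwise and then average using linearity of $\pi$. The paper does the pointwise step exactly as in your Bloch-form sanity check, via anticommutation of the Paulis, while your Cayley--Hamilton route is a basis-free shortcut to the same identity $\rho^2=\rho-\det(\rho)\,I$. It also recovers the explicit form $\rho^2=\rho+\frac{\Tr(\rho^2)-1}{2}I$, which the paper reuses in the proof of Theorem~\ref{thm:eta-visualization}, because $\det\rho=(1-\Tr(\rho^2))/2$ when $\Tr\rho=1$.
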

\begin{remark*}
    This shows that, when $n = 1$, $\kket{\mu} \in \mathbb{R}^3$ is the mean position in the Bloch ball. Indeed, for $P \in \left\{X, Y, Z\right\}$: 
    \[\bbrakket{\widetilde{P}}{\mu} = \sqrt{2^n} \bbrakket{\widetilde{P}}{\bar{\rho}} = \bbrakket{P}{\bar{\rho}} = \exval_{\rho \followsdistr D}\left[\bbrakket{P}{\rho}\right] = \exval_{\rho \followsdistr D}\left[\Tr\left(P \rho\right)\right].\]

    This is another justification of the name ``mean'' for this object.
\end{remark*}
\begin{proof}
    Since $n = 1$, we can write any state $\rho$ in its Bloch form: 
    \[\rho = \frac{1}{2}\left(I + x X + y Y + z Z\right).\]
    
    Now, note that by anti-commutation properties, $xy XY + yx YX = 0$. But then, its square is: 
    \begin{equation}\label{eq:single-qubit-rho-squared}\begin{aligned}\rho^2 &= \frac{1}{4}\left(I + x X + y Y + z Z \right)\left(I + x X + y Y + z Z \right) \\ &= \frac{1}{4} \left(I^2 + 2 I \left(x X + y Y + z Z\right) + \left(x X\right)^2 + \left(y Y\right)^2 + \left(z Z\right)^2\right) \\ &= \frac{1}{4}\left(I \left(1 + x^2 + y^2 + z^2\right) + 2 x X + 2 y Y + 2 z Z\right).\end{aligned}\end{equation}
    
    Therefore, applying an $I$-truncation: 
    \[\pi\left(\rho^2\right) = \frac{1}{2}\left(x X + y Y + z Z\right) = \pi\left(\rho\right).\]

    This immediately gives our result.
\end{proof}

\begin{definition}[Bias]\label{def:bias}
    Let $D$ be an arbitrary mixed-state distribution of mean $\mu$. Its bias is defined as $\left\|\mu\right\|_2$, where $\left\|.\right\|_2$ is the Schatten 2-norm. We may also write $\nu = \left\|\mu\right\|_2$.
\end{definition}
\begin{remark*}
     Recall that $\left\|\mu\right\|_2 = \left\|\kket{\mu}\right\|_2$ by properties of the vectorised notation (definition~\ref{def:vectorised-notation}), i.e.~analysing the operator $\mu$ or the vector $\kket{\mu}$ is equivalent for the bias.
\end{remark*}

\begin{lemma}\label{lma:mu-properties}
    Let $D$ be an $n$-qubit distribution over mixed states, of average purity $\bar{\gamma}$ and mean $\kket{\mu}$. It has the following properties:
    \begin{enumerate}
    \item In the Pauli basis, $\kket{\mu} \in \mathbb{R}^{4^n}$ is a real vector. In other words, for any Pauli $P \in \mathbb{P}_n$: 
    \[\bbrakket{\widetilde{P}}{\mu} \in \mathbb{R}.\]
    \item $\left\|\mu\right\|_2 \leq \bar{\gamma}\sqrt{2^n - 1} \leq \sqrt{2^n - 1}$.
    \end{enumerate}
\end{lemma}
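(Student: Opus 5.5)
For part 1, we use that every Pauli string is Hermitian. Let $\sigma := \exval_{\rho \followsdistr D}[\rho^2]$, which is Hermitian because each $\rho^2$ is. By Eq.~\eqref{eq:equivalent-expression-mu}, $\mu = \sqrt{2^n}(\sigma - \bar{\gamma} I^{\otimes n}/2^n)$, and since $\bar{\gamma}$ is real, $\mu$ is Hermitian. By the vectorisation property (2), the Pauli coefficients are
\[
\bbrakket{\widetilde{P}}{\mu} = \Tr(\widetilde{P}^{\dagger}\mu) = \Tr(\widetilde{P}\mu).
\]
For Hermitian $A$ and $B$, $\Tr(AB)^* = \Tr(B^{\dagger}A^{\dagger}) = \Tr(BA) = \Tr(AB)$. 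Hence each coefficient is real. Since $\{\kket{\widetilde{P}}\}$ is an orthonormal basis by property (3), $\kket{\mu}$ is a real vector in this basis.

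For part 2, we compute $\|\mu\|_2^2$ directly, in the spirit of Eq.~\eqref{eq:bias-average-state-main-text}. Using $\Tr(\sigma) = \bar{\gamma}$, expanding gives
\[
\|\mu\|_2^2 = 2^n\Tr\left[\left(\sigma - \tfrac{\bar{\gamma}}{2^n}I\right)^2\right] = 2^n\Tr(\sigma^2) - \bar{\gamma}^2.
\]
So it suffices to show $\Tr(\sigma^2) \le \bar{\gamma}^2$. Then $\|\mu\|_2^2 \le (2^n-1)\bar{\gamma}^2$, and $\bar{\gamma} \le 1$ gives the second inequality.

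The key step is the bound $\Tr(\sigma^2) \le \bar{\gamma}^2$. The matrix $\sigma$ is positive semidefinite with trace $\bar{\gamma}$, so its eigenvalues $\lambda_i \ge 0$ satisfy $\sum_i \lambda_i = \bar{\gamma}$. Therefore
\[
\Tr(\sigma^2) = \sum_i \lambda_i^2 \le \Big(\sum_i \lambda_i\Big)^2 = \bar{\gamma}^2.
\]
This step is elementary; the only care needed is the positivity of $\sigma$, which holds because it is an average of the positive operators $\rho^2$.

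Finally, $\bar{\gamma} \le 1$ because $\Tr(\rho^2) \le 1$ for every state $\rho$. Taking square roots completes the proof.
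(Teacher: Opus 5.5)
Your proof is correct and follows essentially the same route as the paper. For part~1 the paper argues via Hermiticity of $\rho^2$ sample by sample, whereas you argue via Hermiticity of $\mu$ directly. For part~2 both use the expansion $\|\mu\|_2^2 = 2^n\operatorname{Tr}(\sigma^2) - \bar{\gamma}^2$ together with $\|\sigma\|_2 \le \|\sigma\|_1 = \bar{\gamma}$ for positive semidefinite $\sigma$, which is exactly your eigenvalue inequality $\sum_i \lambda_i^2 \le (\sum_i \lambda_i)^2$.
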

\begin{proof}[Proof 1]
    Note that this claim is trivial for $P = I^{\otimes n}$ since $\bbrakket{\widetilde{I}^{\otimes n}}{\mu} = 0$ by the definition of $I$-truncation. For any other Pauli $P$:
    \[\bbrakket{\widetilde{P}}{\mu} = \sqrt{2^n} \exval_{\rho \followsdistr D}\left[\bbrakket{\widetilde{P}}{\rho^2}\right].\]

    This is real since $\rho^2$ is a Hermitian matrix. More formally, we prove that $\bbrakket{\widetilde{P}}{\rho^2}$ is equal to its complex conjugate:
    \[\bbrakket{\widetilde{P}}{\rho^2}^* = \bbrakket{\rho^2}{\widetilde{P}} = \Tr\left(\left(\rho^{\dagger}\right)^2 \widetilde{P}\right) = \Tr\left(\widetilde{P} \rho^2\right) = \bbrakket{\widetilde{P}}{\rho^2}\]
\end{proof}
\begin{proof}[Proof 2]

To bound the bias, we separate the identity contribution to the squared norm of the average of $\rho^2$. We then control the remaining norm using positivity via its trace, which is the average purity.

    By definition (more precisely, by equation~\ref{eq:equivalent-expression-mu}), we have: 
    \[\mu = \sqrt{2^n} \left(\exval_{\rho \followsdistr D}\left[\rho^2\right] - \bar{\gamma} \frac{I^{\otimes n}}{2^n}\right).\]

    Therefore: 
    \begin{equation}\begin{aligned}\left\|\mu\right\|_2^2 &= 2^n \Tr\left[\left(\exval_{\rho \followsdistr D}\left[\rho^2\right] - \bar{\gamma} \frac{I^{\otimes n}}{2^n}\right)^2\right] \\ &= 2^n \left(\Tr\left(\exval_{\rho \followsdistr D}\left[\rho^2\right]^2\right) - \frac{2}{2^n} \bar{\gamma} \Tr\left(\exval_{\rho \followsdistr D}\left[\rho^2\right]\right) + \frac{1}{4^n}\bar{\gamma}^2 \Tr\left(I^{\otimes n}\right)\right) \\ &= 2^n \left(\left\|\exval_{\rho \followsdistr D}\left[\rho^2\right]\right\|_2^2 - \frac{1}{2^n}\bar{\gamma}^2\right).\end{aligned}\end{equation}
    
    Now, by usual $\ell_1$-$\ell_2$ bounds for operators in $\mathbb{C}^{2^n \times 2^n}$, we know that:
    \begin{equation}\label{eq:untight-inequality-in-bias-upper-bound}
        \frac{1}{2^n} \left\|\exval_{\rho \followsdistr D}\left[\rho^2\right]\right\|_1^2 \leq \left\|\exval_{\rho \followsdistr D}\left[\rho^2\right]\right\|_2^2 \leq \left\|\exval_{\rho \followsdistr D}\left[\rho^2\right]\right\|_1^2
    \end{equation}
 
    Note moreover that, since $\exval_{\rho \followsdistr D}\left[\rho^2\right]$ is positive semi-definite, its Schatten 1-norm and its trace are equal: 
    \[\left\|\exval_{\rho \followsdistr D}\left[\rho^2\right]\right\|_1 = \Tr\left(\exval_{\rho \followsdistr D}\left[\rho^2\right]\right) = \bar{\gamma}.\]
    
    Therefore, we find precisely our two bounds. Focusing only on the upper-bound:
    \[0 \leq \left\|\mu\right\|_2^2 \leq 2^n \left(\bar{\gamma}^2 - \frac{1}{2^n} \bar{\gamma}^2\right) = \bar{\gamma}^2 \left(2^n - 1\right).\]

    Note that the upper-bound is only tight for $\bar{\gamma} = 1$ because of equation~\ref{eq:untight-inequality-in-bias-upper-bound}: $\left\|A\right\|_1 = \left\|A\right\|_2$ if and only if $A$ has rank $1$, which, for $A = \exval_{\rho \followsdistr D}\left[\rho^2\right]$, requires that $D$ is a pure-state distribution.
\end{proof}

\subsubsection{Moment matrix and its decomposition}

The moment matrix contains all the distributional information needed to evaluate the risk. We give its operator and Pauli-basis representations before separating the contributions from the average purity, the mean, and the second moment.

\begin{definition}[Moment matrix]\label{def:moment-matrix}
    Let $D$ be an $n$-qubit distribution over mixed states. We define its moment matrix $M \in \mathbb{C}^{4^n \times 4^n}$ by: 
    \[M = 2^n \exval_{\rho \followsdistr D}\left[\rho \otimes \rho^T\right].\]
\end{definition}
\begin{intuition*}
    This matrix appears naturally in the construction of lemma~\ref{lma:risk-rayleigh}. Its only purpose is this lemma.
\end{intuition*}
\begin{remark*}
    We may only know $M' = \exval\left(\rho^{\otimes 2}\right)$ for some distribution. It is possible to compute $M$ from this $M'$ by applying the partial transpose. That is, consider the linear operator $T_2: \mathbb{C}^{4^n \times 4^n} \mapsto \mathbb{C}^{4^n \times 4^n}$ defined by: 
    \[T_2\left(\ket{i}\bra{j} \otimes \ket{k}\bra{\ell}\right) = \ket{i}\bra{j} \otimes \ket{\ell}\bra{k}.\]
    Since this is a linear operator, we thus have: 
    \[M = 2^n T_2\left(\exval\left(\rho^{\otimes 2}\right)\right) = 2^n T_2\left(M'\right).\]
\end{remark*}

\begin{lemma}[Other representation of the moment matrix]\label{lma:moment-matrix-other-representation}
    Let $D$ be an $n$-qubit distribution over mixed states of moment matrix $M$. Then: 
    \[M = \sum_{P, Q} \exval_{\rho \followsdistr D}\left[\Tr\left(P \rho Q \rho\right)\right] \kket{\widetilde{P}} \bbra{\widetilde{Q}}\]
\end{lemma}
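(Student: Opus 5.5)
The plan is to compute the matrix elements of $M$ in the orthonormal basis $\{\kket{\widetilde{P}}\}_{P\in\mathbb{P}_n}$ and then resolve the identity on both sides. By property (3) of Definition~\ref{def:vectorised-notation}, the normalised Paulis give an orthonormal basis of $\mathbb{C}^{4^n}$. Therefore
\[
M=\sum_{P,Q}\bbra{\widetilde{P}}M\kket{\widetilde{Q}}\,\kket{\widetilde{P}}\bbra{\widetilde{Q}},
\]
and it suffices to show $\bbra{\widetilde{P}}M\kket{\widetilde{Q}}=\exval_{\rho\followsdistr D}\left[\Tr\left(P\rho Q\rho\right)\right]$.

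To evaluate the matrix element, I will use linearity of the expectation to write it as
\[
2^n\,\exval\left[\bbra{\widetilde{P}}\left(\rho\otimes\rho^T\right)\kket{\widetilde{Q}}\right].
\]
The ricochet identity, property (5) of Definition~\ref{def:vectorised-notation}, gives $\left(\rho\otimes\rho^T\right)\kket{\widetilde{Q}}=\kket{\rho\widetilde{Q}\rho}$. Property (2) then gives
\[
\bbrakket{\widetilde{P}}{\rho\widetilde{Q}\rho}=\Tr\left(\widetilde{P}^\dagger\rho\widetilde{Q}\rho\right)=\Tr\left(\widetilde{P}\rho\widetilde{Q}\rho\right),
\]
since Pauli strings are Hermitian. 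Finally, the normalisation $\widetilde{P}=P/\sqrt{2^n}$ contributes a factor $1/2^n$, which cancels the prefactor $2^n$ and leaves $\exval\left[\Tr\left(P\rho Q\rho\right)\right]$, as required.

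No step is a real obstacle. The only point needing care is the transpose convention: $\rho^T$ must sit in the second tensor factor so that the ricochet identity yields $\rho\widetilde{Q}\rho$ rather than $\rho\widetilde{Q}\rho^*$. I will also make sure that Hermiticity of $P$ is used to drop the dagger.
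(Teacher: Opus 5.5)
Your proposal is correct and follows the paper's proof: both compute $\bbra{\widetilde{P}}M\kket{\widetilde{Q}}$ using the ricochet identity and the Hilbert--Schmidt inner product, and the factor $2^n$ cancels the Pauli normalisation. Your version also states explicitly the resolution of the identity in the orthonormal Pauli basis, which the paper leaves implicit.
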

\begin{proof}
    This is direct using properties of the vectorisation: 
    \[\bbra{\widetilde{P}} M \kket{\widetilde{Q}} = \frac{\bbra{P}}{\sqrt{2^n}} 2^n \exval_{\rho \followsdistr D}\left[\rho \otimes \rho^T\right] \frac{\kket{Q}}{\sqrt{2^n}} = \exval_{\rho \followsdistr D}\left[\bbrakket{P}{\rho Q \rho}\right] = \exval_{\rho \followsdistr D}\left[\Tr\left(P \rho Q \rho\right)\right].\]
\end{proof}

\begin{lemma}[Moment matrix for pure states]\label{lma:moment-matrix-equivalence-pure-states}
    Let $D$ be an $n$-qubit distribution, of moment matrix $M$. If $D$ is a pure-state distribution, then:
    \[M = 2^n \exval_{\rho \followsdistr D}\left(\kket{\rho} \bbra{\rho}\right) = \sum_{P, Q \in \mathbb{P}_n} \exval_{\rho \followsdistr D}\left(\Tr\left(P \rho\right) \Tr\left(Q \rho\right)\right) \kket{\widetilde{P}} \bbra{\widetilde{Q}}.\]
\end{lemma}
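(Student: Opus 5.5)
For a pure state $\rho=\ket{\psi}\bra{\psi}$ we have $\rho^T=\rho^*$ (transpose in the computational basis), and $\rho\otimes\rho^T = \kket{\rho}\bbra{\rho}$. This identity is the first equality of the lemma, and I would prove it directly from the definition of vectorisation. Writing $\rho=\sum_{i,j}\psi_i\psi_j^*\ket{i}\bra{j}$ gives $\kket{\rho}=\sum_{i,j}\psi_i\psi_j^*\ket{i}\otimes\ket{j}=\ket{\psi}\otimes\ket{\psi^*}$, where $\ket{\psi^*}=\sum_j\psi_j^*\ket{j}$. Hence
\[
\kket{\rho}\bbra{\rho}=\ket{\psi}\bra{\psi}\otimes\ket{\psi^*}\bra{\psi^*}.
\]
It remains to note that $\ket{\psi^*}\bra{\psi^*}=\sum_{i,j}\psi_i^*\psi_j\ket{i}\bra{j}=\rho^T$. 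Multiplying by $2^n$ and taking the expectation then yields $M=2^n\exval\left(\kket{\rho}\bbra{\rho}\right)$ by Definition~\ref{def:moment-matrix}.

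For the Pauli-basis expression I would use Lemma~\ref{lma:moment-matrix-other-representation}, which already gives $M=\sum_{P,Q}\exval\left[\Tr\left(P\rho Q\rho\right)\right]\kket{\widetilde P}\bbra{\widetilde Q}$. For a pure state, $\rho Q\rho=\ket{\psi}\bra{\psi}Q\ket{\psi}\bra{\psi}=\Tr(Q\rho)\,\rho$, so $\Tr(P\rho Q\rho)=\Tr(P\rho)\Tr(Q\rho)$, and substituting this gives the result.

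As a consistency check, one could instead expand $\kket{\rho}=\sum_P\bbrakket{\widetilde P}{\rho}\kket{\widetilde P}$ in the orthonormal basis of Definition~\ref{def:normalisation-factors}. The coefficients are $\Tr(P\rho)/\sqrt{2^n}$, which are real because $P$ and $\rho$ are Hermitian, so no conjugation issue arises in $\bbra{\rho}$, and the factor $2^n$ cancels exactly.

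No step here is a real obstacle. The only point needing care is the transpose versus complex conjugate convention in showing $\kket{\rho}\bbra{\rho}=\rho\otimes\rho^T$; the index computation above settles it.
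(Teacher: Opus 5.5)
Your proof is correct and matches the paper's for the first equality: the paper likewise writes $\kket{\rho}=\ket{\psi}\otimes\ket{\psi}^*$ and uses $\rho^T=\rho^*$. For the Pauli form, the paper's argument is exactly what you call your consistency check, namely expanding $2^n\kket{\rho}\bbra{\rho}$ with the completeness relation $\sum_P\kket{\widetilde P}\bbra{\widetilde P}=I$ and then invoking the first equality. Your main route through Lemma~\ref{lma:moment-matrix-other-representation} and $\rho Q\rho=\Tr(Q\rho)\,\rho$ is an equally short alternative that does not depend on the first equality.
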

\begin{remark*}
    Note that this gives us another way to compute $M$ from $M' = \exval\left(\rho^{\otimes 2}\right)$, for pure state distributions. Using the fact $\Tr\left(A\right)\Tr\left(B\right) = \Tr\left(A \otimes B\right)$:
    \autoeq{M = \sum_{P, Q \in \mathbb{P}_n} \exval\left(\Tr\left(P \rho\right) \Tr\left(Q \rho\right)\right) \kket{\widetilde{P}} \bbra{\widetilde{Q}} = \sum_{P, Q \in \mathbb{P}_n} \exval\left(\Tr\left(\left(P \otimes Q\right) \left(\rho \otimes \rho\right)\right)\right) \kket{\widetilde{P}} \bbra{\widetilde{Q}} = \sum_{P, Q \in \mathbb{P}_n} \Tr\left(\left(P \otimes Q\right) M'\right) \kket{\widetilde{P}} \bbra{\widetilde{Q}}.}
\end{remark*}
\begin{proof}[Proof 1]
    Let $\ket{\psi} = \sum_{i} \psi_i \ket{i}$ be an arbitrary pure state, of density matrix $\rho = \ket{\psi}\bra{\psi}$. Using the definition of vectorisation, we notice:
    \autoeq{\kket{\rho} = \text{vec}\left(\ket{\psi}\bra{\psi}\right) = \sum_{i,j} \psi_i \psi_j^* \text{vec}\left(\ket{i} \bra{j}\right) = \sum_{i, j} \psi_i \psi_j^* \ket{i} \otimes \ket{j} = \ket{\psi} \otimes \ket{\psi}^*.}
    
    This implies from the definition of moment matrix: 
    \[M  = 2^n \exval_{\rho \followsdistr D}\left(\rho \otimes \rho^T\right) = 2^n \exval_{\rho \followsdistr D}\left(\rho \otimes \rho^*\right) = 2^n \exval_{\rho \followsdistr D}[\kket{\rho}\bbra{\rho}],\]
    where we used the fact that $\rho^{\dagger} = \rho$ and hence $\rho^T = \rho^*$.
\end{proof}
\begin{proof}[Proof 2]
    Using completeness relations $I = \sum_{P} \kket{\widetilde{P}}\bbra{\widetilde{P}}$ and the normalisation $\widetilde{P} = \frac{1}{\sqrt{2^n}} P$, we find: 
    \[2^n \kket{\rho}\bbra{\rho} = 2^n \sum_{P, Q} \kket{\widetilde{P}}\bbrakket{\widetilde{P}}{\rho} \bbrakket{\rho}{\widetilde{Q}} \bbra{\widetilde{Q}} = \sum_{P, Q \in\mathbb{P}_n} \bbrakket{P}{\rho} \bbrakket{\rho}{Q} \kket{\widetilde{P}} \bbra{\widetilde{Q}} = \sum_{P, Q \in \mathbb{P}_n} \Tr\left(P \rho\right) \Tr\left(Q \rho\right) \kket{\widetilde{P}} \bbra{\widetilde{Q}}.\]
    
    So, by point (1):
    \[M = 2^n \exval\left(\kket{\rho} \bbra{\rho}\right) = \sum_{P, Q \in\mathbb{P}_n} \exval\left(\Tr\left(P \rho\right) \Tr\left(Q \rho\right)\right) \kket{\widetilde{P}} \bbra{\widetilde{Q}}. \]
\end{proof}

The traceless block and its coupling to the identity play different roles in the bounds. We control the first through its eigenvalues and the second through the mean. The following decomposition makes this distinction explicit.

\begin{definition}[Moment decomposition]\label{def:moment-decomposition}
    Let $D$ be an $n$-qubit distribution over mixed states, of moment matrix $M$. We decompose $M$ into a core part $M_{core}$, and a mean part $M_{mean}$. In other words, we let $M = M_{core} + M_{mean}$, where:
    \[M_{core} = \bbra{\widetilde{I}^{\otimes n}} M \kket{\widetilde{I}^{\otimes n}} \kket{\widetilde{I}^{\otimes n}}\bbra{\widetilde{I}^{\otimes n}} + \mathcal{S}, \mathspace \text{where } \mathcal{S} = \sum_{P, Q \neq I^{\otimes n}} \bbra{\widetilde{P}} M \kket{\widetilde{Q}} \kket{\widetilde{P}} \bbra{\widetilde{Q}} \in \mathbb{C}^{4^n \times 4^n},\]
    \[M_{mean} = \sum_{P \neq I^{\otimes n}} \bbra{\widetilde{I}^{\otimes n}} M \kket{\widetilde{P}} \kket{\widetilde{I}^{\otimes n}} \bbra{\widetilde{P}} + \sum_{P \neq I^{\otimes n}} \bbra{\widetilde{P}} M \kket{\widetilde{I}^{\otimes n}} \kket{\widetilde{P}} \bbra{\widetilde{I}^{\otimes n}}.\]
    
    We call $\{ \mu , M_{mean}\} $ the first moment of $D$, and $\{ M_{core},\mathcal{S} \}$ the second moment of $D$.
\end{definition}
\begin{remark*}
    On the one hand, $M_{core}$ is constructed by taking all the coordinates that are either both identities, or where neither is identity. On the other hand, $M_{mean}$ takes all the coordinates which have exactly one identity.
\end{remark*}
\begin{intuition*}
    Let us consider those matrices for pure states when $n = 1$. In particular, they simplify nicely thanks to lemma~\ref{lma:moment-matrix-equivalence-pure-states}. To simplify notation, let for $P, Q \in \left\{X, Y, Z\right\}$:
    \[\mu_P = \bbra{\widetilde{P}} M \kket{\widetilde{I}} = \exval_{\rho \followsdistr D}\left[\Tr\left(P \rho\right)\right], \mathspace \mathcal{S}_{PQ} = \bbra{\widetilde{P}} \mathcal{S} \kket{\widetilde{Q}} = \exval_{\rho \followsdistr D}\left[\Tr\left(P \rho\right) \Tr\left(Q \rho\right)\right].\]
    
    Recall that $\mu_P$ is the $P$th component of the average position on the Bloch sphere when $n = 1$. Similarly, for pure states, we observe that $\mathcal{S}$ can be interpreted as the uncentred second moment matrix of the position on the Bloch sphere. This justifies why we refer to $\ket{\mu}$ as the first moment and $\mathcal{S}$ as the second moment of $D$, and proves in some way that $M$ contains both the first and the second moment of the distribution.
    
    Then, leaving $\bar{\gamma}$ to be the average purity of the distribution, our decomposition becomes:
    \autoeq{M = \begin{pmatrix} \bar{\gamma} & \mu_X & \mu_Y & \mu_Z \\ \mu_X & \mathcal{S}_{XX} & \mathcal{S}_{XY} & \mathcal{S}_{XZ} \\ \mu_Y & \mathcal{S}_{YX} & \mathcal{S}_{YY} & \mathcal{S}_{YZ} \\ \mu_Z & \mathcal{S}_{ZX} & \mathcal{S}_{ZY} & \mathcal{S}_{ZZ} \end{pmatrix} = \begin{pmatrix} \bar{\gamma} & 0 & 0 & 0 \\ 0 & \mathcal{S}_{XX} & \mathcal{S}_{XY} & \mathcal{S}_{XZ} \\ 0 & \mathcal{S}_{YX} & \mathcal{S}_{YY} & \mathcal{S}_{YZ} \\ 0 & \mathcal{S}_{ZX} & \mathcal{S}_{ZY} & \mathcal{S}_{ZZ} \end{pmatrix} + \begin{pmatrix} 0 & \mu_X & \mu_Y & \mu_Z \\ \mu_X & 0 & 0 & 0 \\ \mu_Y & 0 & 0 & 0 \\ \mu_Z & 0 & 0 & 0 \end{pmatrix} = \underbrace{\left(\begin{array}{c|ccc} \bar{\gamma} & & 0 & \\ \hline & & & \\ 0 & & \mathcal{S} & \\ & & & \end{array}\right)}_{M_{core}} + \underbrace{\left(\begin{array}{c|ccc} 0 & & \bra{\mu} & \\ \hline & & & \\ \ket{\mu} & & 0 & \\ & & & \end{array}\right)}_{M_{mean}}.}
    
    The decomposition is analogous for any $n$.
\end{intuition*}
\begin{remark*}
    When $n = 1$ and states are pure, $\mathcal{S}$ matches reference~\cite{chen2024predicting}'s definition 2.1.
\end{remark*}
\begin{remark*}
    This decomposition depends on the choice of the basis, in the only sense that it must contain an identity and be orthonormal. We chose here the Pauli basis since it is easy to work with, but another orthonormal basis containing identity would have worked just the same. The reason we want the basis to contain the identity will become clearer in lemma~\ref{lma:risk-without-identity}.
\end{remark*}

\begin{lemma}[Other formulation for the decomposition]\label{lma:other-formulation-decomposition}
    Let $D$ be a mixed-state distribution of mean $\mu$, average purity $\bar{\gamma}$ and moment matrix $M = M_{core} + M_{mean}$ as decomposed in definition~\ref{def:moment-decomposition}. Then, we can write: 
    \[M_{core} = \bar{\gamma} \oplus \mathcal{S}, \mathspace \text{where } \mathcal{S} = \sum_{P, Q \neq I^{\otimes n}} \exval_{\rho \followsdistr D}\left[\Tr\left(P \rho Q \rho\right)\right] \kket{\widetilde{P}}\bbra{\widetilde{Q}},\]
    \[M_{mean} = \sum_{P \neq I^{\otimes n}} \bbrakket{\widetilde{P}}{\mu} \left(\kket{\widetilde{I}^{\otimes n}}\bbra{\widetilde{P}} + \kket{\widetilde{P}} \bbra{\widetilde{I}^{\otimes n}}\right).\]

    We will therefore call $M = \bar{\gamma} \oplus \mathcal{S} + M_{mean}$ the decomposition of $M$.
\end{lemma}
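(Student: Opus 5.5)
The plan is to read off the two blocks of $M$ in the Pauli basis from the definition of $M_{core}$ and $M_{mean}$ (Definition~\ref{def:moment-decomposition}), and identify each matrix element using Lemma~\ref{lma:moment-matrix-other-representation}, which gives $\bbra{\widetilde{P}} M \kket{\widetilde{Q}} = \exval_{\rho \followsdistr D}\left[\Tr\left(P\rho Q\rho\right)\right]$.

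For the core part, the identity-identity entry is
\[
\bbra{\widetilde{I}^{\otimes n}} M \kket{\widetilde{I}^{\otimes n}} = \exval\left[\Tr\left(\rho^2\right)\right] = \bar{\gamma}
\]
by Definition~\ref{def:average-purity}. The block with $P,Q \neq I^{\otimes n}$ is, by the same lemma, exactly the stated expression for $\mathcal{S}$. Since $\kket{\widetilde{I}^{\otimes n}}$ is orthogonal to every $\kket{\widetilde{P}}$ with $P \neq I^{\otimes n}$, the sum of these two pieces is block diagonal. We may therefore write it as $\bar{\gamma} \oplus \mathcal{S}$, with respect to the splitting of $\mathbb{C}^{4^n}$ into $\mathrm{span}\{\kket{\widetilde{I}^{\otimes n}}\}$ and the subspace $\mathcal{H}$ of Definition~\ref{def:truncation-for-vectors}.

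For the mean part, we compute the mixed entries for $P \neq I^{\otimes n}$. By Lemma~\ref{lma:moment-matrix-other-representation},
\[
\bbra{\widetilde{P}} M \kket{\widetilde{I}^{\otimes n}} = \exval\left[\Tr\left(P\rho^2\right)\right] = \sqrt{2^n}\,\bbrakket{\widetilde{P}}{\exval[\rho^2]}.
\]
Because $P \neq I^{\otimes n}$, we have $\bbra{\widetilde{P}}\Pi = \bbra{\widetilde{P}}$, and $\Pi$ is the vectorisation of $\pi$. Together with Definition~\ref{def:mean}, this gives
\[
\bbra{\widetilde{P}} M \kket{\widetilde{I}^{\otimes n}} = \sqrt{2^n}\,\bbrakket{\widetilde{P}}{\pi(\exval[\rho^2])} = \bbrakket{\widetilde{P}}{\mu}.
\]
The other entry is $\bbra{\widetilde{I}^{\otimes n}} M \kket{\widetilde{P}} = \exval[\Tr(\rho P \rho)]$. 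By cyclicity of the trace this equals $\exval[\Tr(P\rho^2)]$, so it is the same quantity. It is real, either because $\rho^2$ and $P$ are Hermitian or directly by Lemma~\ref{lma:mu-properties}(1), so no complex conjugate appears. Substituting both entries into the definition of $M_{mean}$ yields the claimed formula with the common coefficient $\bbrakket{\widetilde{P}}{\mu}$.

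The only mildly delicate point is the mixed entry. For mixed states $\Tr(P\rho I\rho)$ is not $\Tr(P\rho)$, so one must check that the definition of $\mu$ through $\rho^2$ rather than $\rho$ matches it. The entry also involves no conjugation, which the cyclicity and reality arguments above settle. Everything else is direct substitution.
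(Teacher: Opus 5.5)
Your proposal is correct and follows the paper's proof essentially step for step. Both arguments read off the Pauli-basis entries via Lemma~\ref{lma:moment-matrix-other-representation}, identify the identity entry as $\bar{\gamma}$, and match the mixed entries to $\bbrakket{\widetilde{P}}{\mu}$ using that $\Pi$ is the vectorisation of $\pi$. Your explicit cyclicity argument that $\bbra{\widetilde{I}^{\otimes n}} M \kket{\widetilde{P}} = \bbra{\widetilde{P}} M \kket{\widetilde{I}^{\otimes n}}$ just spells out a step the paper asserts without proof.
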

\begin{proof}
    We notice that:
    \[\bbra{\widetilde{I}^{\otimes n}} M \kket{\widetilde{I}^{\otimes n}} = \bbra{I^{\otimes n}} \exval_{\rho \followsdistr D}\left[\rho \otimes \rho^T\right] \kket{I^{\otimes n}} = \exval_{\rho \followsdistr D}\left[\bbrakket{I^{\otimes n}}{\rho^2}\right] = \exval_{\rho \followsdistr D}\left[\Tr\left(\rho^2\right)\right] = \bar{\gamma}.\]

    Therefore, we directly find from lemma~\ref{lma:moment-matrix-other-representation} that:
    \[M_{core} = \bar{\gamma} \oplus \mathcal{S}.\]

    Note that the direct sum here is a slight abuse of notation since $\mathcal{S} \in \mathbb{C}^{4^n \times 4^n}$. This is an example of an instance where we use the fact that we can consider $\mathcal{S} \in \mathbb{C}^{\left(4^n -1\right) \times \left(4^n - 1\right)}$ as well since its first row and column are zero.

    Similarly for $M_{mean}$, we find using lemma~\ref{lma:moment-matrix-other-representation} that for any $P \neq I$: 
    \[\bbra{\widetilde{I}^{\otimes n}} M \kket{\widetilde{P}} = \bbra{\widetilde{P}} M \kket{\widetilde{I}^{\otimes n}} = \exval_{\rho \followsdistr D}\left[\Tr\left(P \rho^2\right)\right] = \sqrt{2^n} \bbra{\widetilde{P}} \exval_{\rho \followsdistr D}\left[\kket{\rho^2}\right] = \sqrt{2^n} \bbra{\widetilde{P}} \exval_{\rho \followsdistr D}\left[\kket{\pi\left(\rho^2\right)}\right] = \bbrakket{\widetilde{P}}{\mu},\]
    where we used the fact that $\Pi\left(\kket{A}\right) = \kket{\pi\left(A\right)}$, i.e.\ that vector $I$-truncation and operator $I$-truncation are the same up to vectorisation (see definition~\ref{def:truncation-for-operators}).
\end{proof}

\begin{lemma}[$\mathcal{S}$ as $I$-truncation]\label{lma:s-as-truncation}
    Let $D$ be an $n$-qubit distribution over mixed states of moment matrix $M = \bar{\gamma} \oplus \mathcal{S} + M_{mean}$. Then, one can write: 
    \[\mathcal{S} = \Pi M \Pi.\]
\end{lemma}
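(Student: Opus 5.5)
The plan is to expand $\Pi M \Pi$ in the orthonormal Pauli basis $\{\kket{\widetilde{P}}\}_{P\in\mathbb{P}_n}$ and compare it term by term with the definition of $\mathcal{S}$ from Definition~\ref{def:moment-decomposition}.

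First, by Definition~\ref{def:truncation-for-vectors} and the completeness relation $I^{\otimes n}\otimes I^{\otimes n}=\sum_{P\in\mathbb{P}_n}\kket{\widetilde{P}}\bbra{\widetilde{P}}$, we can write
\begin{equation*}
\Pi=\sum_{P\neq I^{\otimes n}}\kket{\widetilde{P}}\bbra{\widetilde{P}},
\end{equation*}
that is, $\Pi$ is the orthogonal projector onto the span of the non-identity normalised Paulis. Inserting this on both sides of $M$ gives
\begin{equation*}
\Pi M\Pi=\sum_{P\neq I^{\otimes n}}\sum_{Q\neq I^{\otimes n}}\kket{\widetilde{P}}\bbra{\widetilde{P}}M\kket{\widetilde{Q}}\bbra{\widetilde{Q}}.
\end{equation*}
This is exactly the expression defining $\mathcal{S}$ in Definition~\ref{def:moment-decomposition}, so the two coincide.

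As a consistency check, the same result follows from the decomposition $M=\bar{\gamma}\oplus\mathcal{S}+M_{mean}$ of Lemma~\ref{lma:other-formulation-decomposition}. Each of the three pieces behaves as follows under $\Pi\,\cdot\,\Pi$:
\begin{itemize}
\item The term $\bar{\gamma}\kket{\widetilde{I}^{\otimes n}}\bbra{\widetilde{I}^{\otimes n}}$ is annihilated, since $\Pi\kket{\widetilde{I}^{\otimes n}}=0$.
\item Every term of $M_{mean}$ contains exactly one factor $\kket{\widetilde{I}^{\otimes n}}$ or $\bbra{\widetilde{I}^{\otimes n}}$, so it is also annihilated.
\item $\mathcal{S}$ is supported on the non-identity subspace, so $\Pi\mathcal{S}\Pi=\mathcal{S}$.
\end{itemize}

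There is no real obstacle here. The only care needed is the abuse of notation that regards $\mathcal{S}$ both as a $4^n\times4^n$ matrix and as a $(4^n-1)\times(4^n-1)$ matrix. We work throughout with the $4^n\times4^n$ version, which is the one appearing in Definition~\ref{def:moment-decomposition}.
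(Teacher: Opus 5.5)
Your proof is correct and takes essentially the same route as the paper. The paper computes the matrix elements $\bbra{\widetilde{P}}\Pi M\Pi\kket{\widetilde{Q}}$ in the normalised Pauli basis and observes that they match those of $\mathcal{S}$, which is exactly what you obtain by inserting $\Pi=\sum_{P\neq I^{\otimes n}}\kket{\widetilde{P}}\bbra{\widetilde{P}}$ on both sides of $M$; your consistency check via the decomposition is also valid, though not needed.
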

\begin{proof}
    We find precisely that, for any $P, Q \in \mathbb{P}_n$: 
    \[\bbra{\widetilde{P}} \Pi M \Pi \kket{\widetilde{Q}} = \begin{cases} \bbra{\widetilde{P}} M \kket{\widetilde{Q}}, & \text{if $P \neq I^{\otimes n}$ and $Q \neq I^{\otimes n}$} \\ 0, & \text{otherwise} \end{cases} = \bbra{\widetilde{P}} \mathcal{S} \kket{\widetilde{Q}}.\]
\end{proof}

\begin{lemma}\label{lma:s-properties}
    Let $D$ be an $n$-qubit distribution over mixed states of moment matrix $M = \bar{\gamma} \oplus \mathcal{S} + M_{mean}$. $\mathcal{S}$ has the following properties:
    \begin{enumerate}
    \item $\mathcal{S}$ is symmetric with real coefficients in the Pauli basis.
    \item $\mathcal{S}$ is positive semi-definite.
    \item $\Tr\left(\mathcal{S}\right) = 2^{n} - \bar{\gamma}$.
    \item $\displaystyle  \frac{1}{2^n + 1} \leq \frac{2^n - \bar{\gamma}}{4^n - 1} \leq \left\|\mathcal{S}\right\|_{op} = \lambda_{max}\left(\mathcal{S}\right) \leq 2^n - \bar{\gamma}$.
    \item $\displaystyle 0 \leq \lambda_{min}\left(\mathcal{S}\right) \leq \frac{2^n - \bar{\gamma}}{4^n - 1} \leq \bar{\gamma}$.
    \end{enumerate}
\end{lemma}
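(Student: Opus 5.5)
The plan is to read every item off two facts about the full moment matrix $M$: it is positive semi-definite, and its trace is exactly $2^n$. Throughout, $\mathcal{S}$ is regarded as an operator on the $(4^n-1)$-dimensional non-identity subspace $\mathcal{H}$, using the isomorphism noted after Definition~\ref{def:truncation-for-vectors}. This convention matters: the trivial zero eigenvalue along $\kket{\widetilde{I}^{\otimes n}}$ is excluded, which is what makes items~4 and~5 meaningful.

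\emph{Items 1 and 2.} By Lemma~\ref{lma:other-formulation-decomposition}, the Pauli-basis entries of $\mathcal{S}$ are $\exval[\Tr(P\rho Q\rho)]$ for $P,Q\neq I^{\otimes n}$.
\begin{itemize}
\item Symmetry: cyclicity of the trace gives $\Tr(P\rho Q\rho)=\Tr(Q\rho P\rho)$.
\item Reality: $\overline{\Tr(P\rho Q\rho)}=\Tr\bigl((P\rho Q\rho)^\dagger\bigr)=\Tr(\rho Q\rho P)$, which equals $\Tr(P\rho Q\rho)$ by cyclicity, since $P,Q,\rho$ are Hermitian.
\item Positivity: $\rho^{T}=\rho^{*}$ is positive semi-definite, so $\rho\otimes\rho^{T}\succeq 0$. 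Averaging gives $M\succeq 0$, and by Lemma~\ref{lma:s-as-truncation} $\mathcal{S}=\Pi M\Pi\succeq 0$.
\end{itemize}

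\emph{Item 3.} Since $\Pi$ removes exactly the identity diagonal entry,
\[
\Tr(\mathcal{S})=\Tr(M)-\bbra{\widetilde{I}^{\otimes n}}M\kket{\widetilde{I}^{\otimes n}}=\Tr(M)-\bar{\gamma}.
\]
Moreover $\Tr(M)=2^n\,\exval[\Tr(\rho)\Tr(\rho^{T})]=2^n$. Hence $\Tr(\mathcal{S})=2^n-\bar{\gamma}$.

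\emph{Items 4 and 5.} $\mathcal{S}$ is Hermitian and positive semi-definite on a $(4^n-1)$-dimensional space, so:
\begin{itemize}
\item $\|\mathcal{S}\|_{op}=\lambda_{max}(\mathcal{S})$;
\item $0\le\lambda_{min}(\mathcal{S})\le \Tr(\mathcal{S})/(4^n-1)\le\lambda_{max}(\mathcal{S})\le\Tr(\mathcal{S})$, because the average eigenvalue lies between the extremes and all eigenvalues are non-negative.
\end{itemize}
Substituting item~3 gives the central chains in items~4 and~5.

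The two outer inequalities only need the purity range $2^{-n}\le\bar{\gamma}\le 1$, which holds since $\Tr(\rho^2)\in[2^{-n},1]$ for every state.
\begin{itemize}
\item Item~4: $\bar{\gamma}\le 1$ gives $(2^n-\bar{\gamma})/(4^n-1)\ge(2^n-1)/(4^n-1)=1/(2^n+1)$.
\item Item~5: $(2^n-\bar{\gamma})/(4^n-1)\le\bar{\gamma}$ is equivalent to $2^n\le 4^n\bar{\gamma}$, which is exactly $\bar{\gamma}\ge 2^{-n}$.
\end{itemize}

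I expect no step to be a real obstacle. The one point needing care is the dimension bookkeeping: the averaging argument must use the dimension $4^n-1$, i.e.\ $\mathcal{S}$ restricted to $\mathcal{H}$, not the full space of dimension $4^n$.
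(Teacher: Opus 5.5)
Your proof is correct and follows essentially the same route as the paper. You read the Pauli-basis entries $\mathbb{E}[\Tr(P\rho Q\rho)]$ for symmetry and reality, subtract the identity entry $\bar{\gamma}$ from $\Tr(M)=2^n$ for the trace, and apply the average-eigenvalue argument on the $(4^n-1)$-dimensional traceless subspace together with $2^{-n}\le\bar{\gamma}\le 1$; the only cosmetic difference is positivity, which you get from $\rho\otimes\rho^{T}\succeq 0$ and compression by $\Pi$, whereas the paper writes $\bbra{A}\mathcal{S}\kket{A}=2^n\,\mathbb{E}\bigl[\|\rho^{1/2}\pi(A)\rho^{1/2}\|_2^2\bigr]$.
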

\begin{proof}[Proof]
    Recall that lemma~\ref{lma:s-as-truncation} tells us we can write:
    \[\mathcal{S} = \Pi M \Pi = 2^n \exval_{\rho \followsdistr D}\left[\Pi \left(\rho \otimes \rho^T\right) \Pi\right].\]

    We will use this heavily in the proof of our lemma.
    
    \begin{enumerate}[left=0pt]
    \item First, we know that $\mathcal{S}$ has real coefficients since $\Pi\left(\rho \otimes \rho^T\right) \Pi$ has real coefficients in the Pauli basis. Indeed, $\Pi$ trivially only has real coefficients in the Pauli basis, and so does $\rho \otimes \rho^T$: 
    \[\left(\bbra{\widetilde{P}} \rho \otimes \rho^T \kket{\widetilde{Q}}\right)^* = \bbra{\widetilde{Q}} \rho \otimes \rho^T \kket{\widetilde{P}} = \bbrakket{\widetilde{Q}}{\rho \widetilde{P} \rho} = \Tr\left(\widetilde{Q} \rho \widetilde{P} \rho\right) = \Tr\left(\widetilde{P} \rho \widetilde{Q} \rho\right) = \bbrakket{\widetilde{P}}{\rho \widetilde{Q} \rho} = \bbra{\widetilde{P}} \rho \otimes \rho^T \kket{\widetilde{Q}}.\]

        This in fact also proves that $\bbra{\widetilde{Q}} \rho \otimes \rho^T \kket{\widetilde{P}} =\bbra{\widetilde{P}} \rho \otimes \rho^T \kket{\widetilde{Q}}$ and hence that $\mathcal{S}$ is symmetric.
    
    \item Second, notice that, since a density matrix $\rho$ is positive semi-definite, we can consider its positive semi-definite square root $\rho^{1/2}$ such that $\rho^{1/2} \rho^{1/2} = \rho$. This gives for any $\kket{A} \in \mathbb{C}^{4^n}$, using the definition of $I$-truncation:
        \begin{equation}\begin{aligned}\bbra{A} \mathcal{S} \kket{A} &= 2^n \exval\left[\bbra{A}\Pi \left(\rho \otimes \rho^T\right) \Pi \kket{A}\right] \\ &= 2^n \exval\left[\bbra{\pi\left(A\right)} \left(\rho^{1/2} \otimes \left(\rho^{1/2}\right)^T\right)\left(\rho^{1/2} \otimes \left(\rho^{1/2}\right)^T\right) \kket{\pi\left(A\right)}\right] \\ &= 2^n \exval\left[\bbrakket{\rho^{1/2} \pi\left(A\right) \rho^{1/2}}{\rho^{1/2} \pi\left(A\right) \rho^{1/2}}\right] \\ &= 2^n \exval\left[\left\|\rho^{1/2} \pi\left(A\right) \rho^{1/2}\right\|_2^2\right] \\ &\geq 0,\end{aligned}\end{equation}
        indeed showing positive semi-definiteness.
    
    \item Third, for the trace, using the easy-to-prove fact $\Pi^2 = \Pi = I^{\otimes n} \otimes I^{\otimes n} - \kket{\widetilde{I}^{\otimes n}}\bbra{\widetilde{I}^{\otimes n}}$:
        \[\Tr\left(\mathcal{S}\right) = 2^n \exval\left[\Tr\left(\Pi \left(\rho \otimes \rho^T\right)\right)\right] = 2^n \exval\left[\Tr\left(\rho \otimes \rho^T\right) - \bbra{\widetilde{I}^{\otimes n}} \left(\rho \otimes \rho^T\right) \kket{\widetilde{I}^{\otimes n}}\right].\]

        We can consider both terms independently: 
        \[\exval\left[\Tr\left(\rho \otimes \rho^T\right)\right] = \exval\left[\Tr\left(\rho\right)\Tr\left(\rho^T\right)\right] = \exval\left[1\cdot 1\right] = 1,\]
        \[\exval\left[\bbra{\widetilde{I}^{\otimes n}} \left(\rho \otimes \rho^T\right) \kket{\widetilde{I}^{\otimes n}}\right] = \frac{1}{2^n} \exval\left[\bbrakket{I^{\otimes n}}{\rho^2} \right] = \frac{1}{2^n} \exval\left[\Tr\left(\rho^2\right)\right] = \frac{1}{2^n} \bar{\gamma},\]
        by definition of $\bar{\gamma}$. This immediately yields: 
        \[\Tr\left(\mathcal{S}\right) = 2^n \left(1 - \frac{\bar{\gamma}}{2^n}\right) = 2^n - \bar{\gamma}.\]

        This result makes complete intuitive sense by looking at the decomposition $M = \bar{\gamma} \oplus \mathcal{S} + M_{mean}$ in array form (since $\Tr\left(M\right) = 2^n$ using a completely similar reasoning).
    
    \item  Fourth, let us consider operator norm. Since $\mathcal{S}$ is symmetric positive semi-definite, we have $\left\|\mathcal{S}\right\|_{op} = \lambda_{max}\left(\mathcal{S}\right)$. This also means $\lambda_i\left(\mathcal{S}\right) \geq 0$ for all $i$, and thus, since $\mathcal{S} \in \mathbb{C}^{\left(4^n - 1\right) \times \left(4^n - 1\right)}$: 
    \[2^n - \bar{\gamma} = \Tr\left(\mathcal{S}\right) = \sum_{i=1}^{4^n - 1} \lambda_i\left(\mathcal{S}\right) \geq \lambda_{max}\left(\mathcal{S}\right).\]
    
    For the other direction, we exploit that $\lambda_i\left(\mathcal{S}\right) \leq \lambda_{max}\left(\mathcal{S}\right)$ for all $i$:
    \[2^n - \bar{\gamma} = \Tr\left(\mathcal{S}\right) = \sum_{i=1}^{4^n - 1} \lambda_i\left(\mathcal{S}\right) \leq \left(4^n - 1\right)\lambda_{max}\left(\mathcal{S}\right).\]
    
    Therefore: 
    \[\lambda_{max}\left(\mathcal{S}\right) \geq \frac{2^n - \bar{\gamma}}{4^n - 1} \geq \frac{2^n - 1}{4^n - 1} = \frac{2^n - 1}{\left(2^n - 1\right)\left(2^n + 1\right)} = \frac{1}{2^n +  1}.\]
    
    \item For $\lambda_{min}\left(\mathcal{S}\right)$, we use again the same method: 
    \[2^n - \bar{\gamma} = \Tr\left(\mathcal{S}\right) = \sum_{i=1}^{4^n - 1} \lambda_i\left(\mathcal{S}\right) \geq \left(4^n - 1\right) \lambda_{min}\left(\mathcal{S}\right) \iff \lambda_{min}\left(\mathcal{S}\right) \leq \frac{2^n - \bar{\gamma}}{4^n - 1}.\]

    We can further upper-bound this by using the fact that $\bar{\gamma} \geq \frac{1}{2^n}$ since it is a purity and hence: 
    \[\frac{2^n - \bar{\gamma}}{4^n - 1} - \bar{\gamma} = \frac{2^n - \bar{\gamma} - 4^n \bar{\gamma} + \bar{\gamma}}{4^n - 1} = \frac{2^n - 4^n\bar{\gamma}}{4^n - 1} \leq \frac{2^n - 4^n \cdot \frac{1}{2^n}}{4^n -1} = 0 \implies \frac{2^n - \bar{\gamma}}{4^n - 1} \leq \bar{\gamma}.\]
    
    The fact that $\lambda_{min}\left(\mathcal{S}\right) \geq 0$ directly comes from the fact $\mathcal{S}$ is positive semi-definite.
    \end{enumerate}
\end{proof}

\subsubsection{Expressivity measure}

Our spectral bounds (detailed in theorems~\ref{thm-main-text:general-unitary-comparison} and~\ref{thm-main-text:tight-unitary-comparison} in the main text and derived in theorem~\ref{thm:inequalities} in the appendices) require the largest eigenvalue of $\mathcal{S}$ to lie below the average purity. We quantify this difference by the expressivity measure. However, we note that distributions whose states share an eigenbasis cannot satisfy this condition, which already identifies a limitation of the bounds.

\begin{definition}[Expressivity measure]\label{def:eta}
    Let $D$ be an $n$-qubit distribution over mixed states, of moment matrix $M = \bar{\gamma} \oplus \mathcal{S} + M_{mean}$. We define its expressivity measure $\eta \in \mathbb{R}$ as: 
    \[\eta = \bar{\gamma} - \left\|\mathcal{S}\right\|_{op} = \bar{\gamma} - \lambda_{max}\left(\mathcal{S}\right).\]
\end{definition}
\begin{remark*}
    Just like $\mathcal{S}$, when $n = 1$ and the distribution is pure, $\eta$ matches reference~\cite{chen2024predicting}'s definition from section 4.
\end{remark*}

\begin{lemma}\label{lma:eta-properties}
    Let $D$ be an $n$-qubit distribution over mixed states of moment matrix $M = \bar{\gamma} \oplus \mathcal{S} + M_{mean}$. Its expressivity measure $\eta$ has the following properties:
    \begin{enumerate}
    \item $\displaystyle -2^n \leq 2 \bar{\gamma} - 2^n \leq \eta \leq \frac{4^n}{4^n - 1} \left(\bar{\gamma} - 2^{-n}\right) \leq \frac{2^n}{2^n + 1}$.
    \item Let $U \in U\left(2^n\right)$ be an arbitrary unitary matrix, and let $D'$ be the distribution defined so that, when $D$ outputs $\rho$, $D'$ outputs $U\rho U^{\dagger}$. Moreover, let $\eta'$ be the expressivity measure of $D'$. Then, $\eta = \eta'$.
    \item If the support of $D$ contains states that are all diagonalisable in the same eigenbasis (i.e.\ $D$ is a classical distribution of probability distributions over bitstrings), then $\eta \leq 0$.
    \end{enumerate}
\end{lemma}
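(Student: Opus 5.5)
Each of the three properties follows from Lemma~\ref{lma:s-properties} together with the representation $\mathcal{S}=\Pi M\Pi = 2^n\exval[\Pi(\rho\otimes\rho^T)\Pi]$ from Lemma~\ref{lma:s-as-truncation}.

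\textbf{Property 1.} Subtract the bounds of Lemma~\ref{lma:s-properties}(4) from $\bar{\gamma}$.
\begin{itemize}
\item The upper bound $\lambda_{max}(\mathcal{S})\leq 2^n-\bar{\gamma}$ gives $\eta\geq 2\bar{\gamma}-2^n$. Since $\bar{\gamma}\geq 0$, this is at least $-2^n$.
\item The lower bound $\lambda_{max}(\mathcal{S})\geq (2^n-\bar{\gamma})/(4^n-1)$ gives
\[\eta\leq \bar{\gamma}-\frac{2^n-\bar{\gamma}}{4^n-1}=\frac{4^n\bar{\gamma}-2^n}{4^n-1}=\frac{4^n}{4^n-1}\left(\bar{\gamma}-2^{-n}\right).\]
\item Using $\bar{\gamma}\leq 1$, this last quantity is at most $\frac{4^n-2^n}{4^n-1}=\frac{2^n}{2^n+1}$.
\end{itemize}

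\textbf{Property 2 (unitary invariance).}
\begin{itemize}
\item Purity is preserved: $\Tr((U\rho U^\dagger)^2)=\Tr(\rho^2)$, so $\bar{\gamma}'=\bar{\gamma}$.
\item The moment matrix transforms by conjugation. Write $W=U\otimes U^*$, which is unitary. The ricochet identity gives $U\rho U^\dagger\otimes (U\rho U^\dagger)^T = W(\rho\otimes\rho^T)W^\dagger$, hence $M'=WMW^\dagger$.
\item $W$ fixes the identity direction: $W\kket{I}=\kket{UIU^\dagger}=\kket{I}$. So $W$ commutes with $\kket{\widetilde I}\bbra{\widetilde I}$, and therefore with $\Pi$.
\item Consequently $\mathcal{S}'=\Pi W M W^\dagger\Pi=W\mathcal{S}W^\dagger$. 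This is unitarily similar to $\mathcal{S}$, so $\lambda_{max}$ is unchanged and $\eta'=\eta$.
\end{itemize}

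\textbf{Property 3 (common eigenbasis).} By Property 2 we may rotate the common eigenbasis to the computational basis, so every $\rho$ in the support is diagonal. The plan is to exhibit one unit vector in the non-identity subspace whose Rayleigh quotient against $\mathcal{S}$ is at least $\bar{\gamma}$.
\begin{itemize}
\item Diagonal $\rho$ expand only on $Z$-type Paulis $Z^{a}$, $a\in\{0,1\}^n$. Any such Pauli $P=Z^a$ commutes with $\rho$.
\item For $P,Q$ both $Z$-type this gives $\Tr(P\rho Q\rho)=\Tr(PQ\rho^2)$. Hence on the span of the $2^n-1$ non-identity $Z$-strings, $\mathcal{S}$ restricts to the block $\mathcal{S}_{PQ}=\exval\Tr(PQ\rho^2)$.
\item Take the candidate vector $\kket{A}$ with $A=\pi(\ket{0}\bra{0})$, normalised. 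It lies in the span of the non-identity $Z$-strings, so its quotient is computed by the block above.
\item The quantity to evaluate is $\exval[\Tr(A\rho A\rho)]/\Tr(A^2)$. Here $A=\ket0\bra0-I/2^n$ is diagonal, so $\Tr(A\rho A\rho)=\Tr(A^2\rho^2)$.
\end{itemize}

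The main obstacle is that a single fixed $\kket{A}$ may not work for every distribution; the Rayleigh quotient has to be at least $\bar{\gamma}$ on average. The fix I would use is to average over all basis projectors $\ket{x}\bra{x}$, $x\in\{0,1\}^n$. With $A_x=\pi(\ket{x}\bra{x})$:
\begin{itemize}
\item Normalisation: $\Tr(A_x^2)=1-2^{-n}$.
\item Numerator: $\Tr(A_x^2\rho^2)=(1-2\cdot 2^{-n})\rho_{xx}^2+4^{-n}\Tr\rho^2$.
\item Summing over $x$, using $\sum_x\rho_{xx}^2=\Tr\rho^2$: $\sum_x\Tr(A_x^2\rho^2)=(1-2^{-n})\Tr\rho^2$.
\item Multiplying by the $2^n$ prefactor in $\mathcal{S}$, the total quotient over all $x$ is $2^n\bar{\gamma}$. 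So the average over $2^n$ values of $x$ is exactly $\bar{\gamma}$.
\end{itemize}
Some $x$ therefore achieves $\bbra{A_x}\mathcal{S}\kket{A_x}/\bbrakket{A_x}{A_x}\geq\bar{\gamma}$. By the Rayleigh property, $\lambda_{max}(\mathcal{S})\geq\bar{\gamma}$, i.e.\ $\eta\leq 0$.
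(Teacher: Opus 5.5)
Your proposal is correct. Properties~1 and~2 follow the paper's proof essentially step for step: the same eigenvalue bounds from Lemma~\ref{lma:s-properties}(4), and the same observation that $U\otimes U^*$ fixes $\kket{I}$, commutes with $\Pi$, and conjugates $\mathcal{S}$ unitarily.

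For Property~3 your route differs mildly from the paper's.
\begin{itemize}
\item \emph{The paper} splits $\mathcal{S}$ into a block on the non-identity $Z$-strings and a complementary block, using that the mixed entries $\Tr(P\rho Q\rho)$ vanish. It then computes $\Tr(\mathcal{T}_{IZ})=(2^n-1)\bar{\gamma}$ and lifts the top eigenvector of that block to an eigenvector of $\mathcal{S}$.
\item \emph{You} instead average the Rayleigh quotient over the overcomplete family $\{\pi(\ket{x}\bra{x})\}_x$. One checks that $\sum_x \kket{A_x}\bbra{A_x}$ is exactly the projector onto the span of the non-identity $Z$-strings. Your sum $(2^n-1)\bar{\gamma}$ is therefore the paper's block trace in disguise. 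What your version buys is that it needs only the Rayleigh principle, not the block-decoupling step.
\end{itemize}

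However, the averaging is itself unnecessary. The ``obstacle'' you identify is an artifact of choosing $A=\pi(\ket{0}\bra{0})$. Your own bullet $\Tr(P\rho Q\rho)=\Tr(PQ\rho^2)$, taken with $P=Q$ a single non-identity $Z$-string, gives
\[
\bbra{\widetilde{P}}\mathcal{S}\kket{\widetilde{P}}=\exval\left[\Tr\left(P^2\rho^2\right)\right]=\bar{\gamma}
\]
for every such distribution. This yields $\lambda_{max}(\mathcal{S})\geq\bar{\gamma}$ in one line. It also shows the paper's trace computation is more than needed, since each diagonal entry of its block already equals $\bar{\gamma}$.
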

\begin{intuition*}
    The third property implies some form of link between this object and a classicality measure. Indeed, a classical distribution is a distribution of support $\left\{\ket{0}, \ket{1}\right\}^{\otimes n}$. This means that a classical distribution necessarily has $\eta \leq 0$. We will later see theorems that require $\eta > 0$ (or, equivalently, $\lambda_{max}\left(\mathcal{S}\right) < \bar{\gamma}$), which can be interpreted as a signature of quantumness.
    
    Moreover, note that, when $n = 1$ and the distribution is pure, then $\eta \leq 0 \iff \eta = 0$ by the first property. Reference~\cite{chen2024predicting} observed in fact that when $n = 1$ and states are pure, then $\eta = 0$ if and only if $D$ has an orthogonal support (i.e.\ it is a classical distribution of bitstrings).

\end{intuition*}
\begin{remark*}
    For our theorems to apply in a reasonable way, we will often require $\eta \geq \frac{1}{\poly\left(n\right)}$. Property (1) tells us that, for this to have a chance to apply, we in particular need average purity to be such that $\bar{\gamma} \geq \frac{1}{\poly\left(n\right)}$.
\end{remark*}
\begin{proof}[Proof 1]
    By lemma~\ref{lma:s-properties}, we know that: 
    \[\left\|\mathcal{S}\right\|_{op} \in \left[\frac{2^n - \bar{\gamma}}{4^n - 1}, 2^n - \bar{\gamma}\right].\]
    
Using $2^{-n}\leq \bar{\gamma}\leq 1$, we obtain
\[
\eta
= \bar{\gamma}-\left\|\mathcal{S}\right\|_{op}
\geq \bar{\gamma}-\left(2^n-\bar{\gamma}\right)
= 2\bar{\gamma}-2^n
\geq 2^{1-n}-2^n
\geq -2^n,
\]
and
\[
\eta
\leq \bar{\gamma}-\frac{2^n-\bar{\gamma}}{4^n-1}
= \frac{4^n}{4^n-1}\left(\bar{\gamma}-2^{-n}\right)
\leq \frac{4^n}{4^n-1}\left(1-2^{-n}\right)
= \frac{2^n}{2^n+1}.
\]
\end{proof}
\begin{proof}[Proof 2]

Conjugating every state by the same unitary leaves the identity unchanged and maps traceless operators to traceless operators. We use this to show that $\mathcal{S}$ and $\mathcal{S}'$ have the same eigenvalues. Since conjugation also preserves purity, the expressivity is unchanged.

    Let $M = \bar{\gamma} \oplus \mathcal{S} + M_{mean}$ be the moment matrix of $D$, and $M' = \bar{\gamma}' \oplus \mathcal{S}' + M_{mean}'$ the one of $D'$. We first immediately notice that: 
    \[\bar{\gamma}' = \exval_{\rho' \followsdistr D'}\left[\Tr\left({\rho'}^2\right)\right] = \exval_{\rho \followsdistr D}\left[\Tr\left(U \rho U^{\dagger} U \rho U^{\dagger}\right)\right] = \exval_{\rho \followsdistr D}\left[\Tr\left(\rho^2\right)\right] = \bar{\gamma},\]
    since $U^{\dagger} U = I^{\otimes n}$ by definition of a unitary matrix. Proving that their purities are equal is the first step to proving that their expressivity measures are the same, we are only left with proving that they have the same largest eigenvalue.

    By lemma~\ref{lma:s-as-truncation}, we know we can write:
    \begin{equation}\begin{aligned}\mathcal{S}' &= \Pi M' \Pi \\ &= 2^n \Pi \exval_{\rho' \followsdistr D'}\left[\left(\rho' \otimes {\rho'}^T\right)\right] \Pi \\ &= 2^n \Pi \exval_{\rho \followsdistr D}\left[\left(U \rho U^{\dagger}\right) \otimes \left(U \rho U^{\dagger}\right)^T\right] \Pi \\ &= 2^n \Pi \exval_{\rho \followsdistr D}\left[\left(U \rho U^{\dagger}\right) \otimes \left(U^* \rho^T U^T\right)\right] \Pi \\ &= 2^n \Pi \left(U \otimes U^*\right) \exval_{\rho \followsdistr D}\left[\rho \otimes \rho^T\right] \left(U^{\dagger} \otimes U^T\right) \Pi \\ &= 2^n \Pi V \exval_{\rho \followsdistr D}\left[\rho \otimes \rho^T\right] V^{\dagger} \Pi,\end{aligned}\end{equation}
    where we left $V = U \otimes U^*$. We aim to show that $\Pi$ and $V$ commute. Therefore, we evaluate:
    \[\Pi V = \Pi\left(U \otimes U^*\right) = \left(I^{\otimes n} \otimes I^{\otimes n} - \kket{\widetilde{I}^{\otimes n}}\bbra{\widetilde{I}^{\otimes n}}\right) \left(U \otimes U^*\right) = U \otimes U^* - \frac{1}{2^n} \kket{I^{\otimes n}}\bbra{U^{\dagger} U}.\]

    Now, $U^{\dagger} U = U U^{\dagger} = I^{\otimes n}$ since $U$ is a unitary, so we can indeed verify $\Pi$ and $V$ commute.
    \autoeq{\Pi V = U \otimes U^* - \frac{1}{2^n} \kket{U U^{\dagger}}\bbra{I^{\otimes n}} = U \otimes U^* - \frac{1}{2^n} \left(U \otimes U^*\right) \kket{I^{\otimes n}}\bbra{I^{\otimes n}} = \left(U \otimes U^*\right)\left(I^{\otimes n} \otimes I^{\otimes n} - \kket{\widetilde{I}^{\otimes n}}\bbra{\widetilde{I}^{\otimes n}}\right) = V\Pi.}

    In particular, this also means that $\Pi^{\dagger} = \Pi$ and $V^{\dagger} = U^{\dagger} \otimes U^T$ also commute. Therefore, we have found so far that:
    \[\mathcal{S}' = 2^n \Pi V \exval_{\rho \followsdistr D}\left[\rho \otimes \rho^T\right] V^{\dagger} \Pi = 2^n V \Pi \exval_{\rho \followsdistr D}\left[\rho \otimes \rho^T\right] \Pi V^{\dagger} = V \mathcal{S} V^{\dagger}.\]
    
    It is trivial to check that $V$ is a unitary matrix, and hence that $V^{-1} = V^{\dagger}$. Therefore, we have really been able to write $\mathcal{S}' = V \mathcal{S} V^{-1}$. This shows that $\mathcal{S}'$ and $\mathcal{S}$ are similar matrices by definition, and hence that they have the exact same eigenvalues. This does imply that they have the same largest eigenvalue in particular, and hence: 
    \[\eta = \bar{\gamma} - \lambda_{max}\left(\mathcal{S}\right) = \bar{\gamma}' - \lambda_{max}\left(\mathcal{S'}\right) = \eta'.\]
\end{proof}
\begin{proof}[Proof 3]

We first consider states that are diagonal in the computational basis. The block of $\mathcal{S}$ corresponding to nonidentity strings of $I$ and $Z$ has average eigenvalue $\bar{\gamma}$, so the largest eigenvalue of $\mathcal{S}$ is at least $\bar{\gamma}$ and hence $\eta \leq 0$. By unitary invariance, the same conclusion holds whenever the states share a common eigenbasis.

    Let $\Lambda = \left\{\ket{\lambda_1}, \ldots, \ket{\lambda_{2^n}}\right\}$ be the common eigenbasis of all the mixed states of $D$ which exists by hypothesis.
    \begin{itemize}[left=0pt]
    \item We first consider that $\Lambda = \left\{\ket{0}, \ket{1}\right\}^{\otimes n}$.
  
    We define the following two sets: 
    \[\mathbb{P}_{IZ} = \left\{I, Z\right\}^{\otimes n} \setminus \left\{I^{\otimes n}\right\}, \mathspace \mathbb{P}_{other} = \mathbb{P}_n \setminus \mathbb{P}_{IZ} \setminus \left\{I^{\otimes n}\right\}.\]

    Note that we ignore $I^{\otimes n}$ since $\mathcal{S} \in \mathbb{R}^{\left(4^n - 1\right) \times \left(4^n - 1\right)}$ does not use it in its representation. Observe that any $\rho \followsdistr \supp D$ commutes with any $P_{IZ} \in \mathbb{P}_{IZ}$, since they are all diagonal in the $\Lambda$ eigenbasis. Moreover, for any $Q_{other} \in \mathbb{P}_{other}$, by lemma~\ref{lma:other-formulation-decomposition}: 
    \[\bbra{\widetilde{P}_{IZ}} \mathcal{S} \kket{\widetilde{Q}_{other}} = \exval_{\rho \followsdistr D}\left[\Tr\left(P_{IZ} \rho Q_{other} \rho\right)\right] = \exval_{\rho \followsdistr D}\left[\Tr\left(\rho P_{IZ} Q_{other} \rho\right)\right] = \exval_{\rho \followsdistr D}\left[\Tr\left(\left(P_{IZ} Q_{other}\right) \rho^2\right)\right] = \exval_{\rho \followsdistr D}\left[0\right] = 0,\]
    since $\left(P_{IZ} Q_{other}\right)$ contains at least a $X$ or $Y$ Pauli in its tensor product expansion, and for all $i \in \left\{0, 1\right\}$:
    \[\bra{i} X \ket{i} = \bra{i} Y \ket{i} = 0.\]
    
    Consider the ordering of the basis $\mathbb{P}_n \setminus \left\{I^{\otimes n}\right\}$ where we put first $\mathbb{P}_{IZ}$ and then $\mathbb{P}_{other}$. In this ordering, we have just proven that $\mathcal{S}$ can be written as: 
    \[\mathcal{S} = \begin{pmatrix} \mathcal{T}_{IZ} & & 0 & \\ & & & \\ 0 & & \mathcal{T}_{other} & \\ & & & \end{pmatrix},\]
    where $\mathcal{T}_{IZ} \in \mathbb{R}^{\left(2^n - 1\right) \times \left(2^n - 1\right)}$ since $\left|\mathbb{P}_{IZ}\right| = 2^n -1$. We observe that $\mathcal{T}_{IZ}$ and $\mathcal{T}_{other}$ are both positive semi-definite since they are both Hermitian (by Hermitianness of $\mathcal{S}$) and have non-negative eigenvalues (since any eigenvector for either is also an eigenvector for $\mathcal{S}$, with the same eigenvalue). Let us consider the trace of $\mathcal{T}_{IZ}$, using again the fact that any $P_{IZ} \in \mathbb{P}_{IZ}$ commutes with any $\rho \followsdistr D$: 
    \begin{equation}\begin{aligned}\Tr\left(\mathcal{T}_{IZ}\right) &= \sum_{P_{IZ} \in \mathbb{P}_{IZ}} \bbra{\widetilde{P}_{IZ}} \mathcal{T}_{IZ} \kket{\widetilde{P}_{IZ}} \\ &{\ =\ } \sum_{P_{IZ} \in \mathbb{P}_{IZ}} \bbra{\widetilde{P}_{IZ}} \mathcal{S} \kket{\widetilde{P}_{IZ}} \\ &{\ =\ } \sum_{P_{IZ} \in \mathbb{P}_{IZ}} \exval_{\rho \followsdistr D}\left[\Tr\left(P_{IZ} \rho P_{IZ} \rho\right)\right] \\ &= \sum_{P_{IZ} \in \mathbb{P}_{IZ}} \exval_{\rho \followsdistr D}\left[\Tr\left(P_{IZ}^2 \rho^2\right)\right] \\ &{\ =\ } \sum_{P_{IZ} \in \mathbb{P}_{IZ}} \exval_{\rho \followsdistr D}\left[\Tr\left(\rho^2\right)\right] \\ &{\ =\ } \left|\mathbb{P}_{IZ}\right| \bar{\gamma} \\ &{\ =\ } \left(2^n - 1\right)\bar{\gamma}.\end{aligned}\end{equation}
    
    Therefore, by positive semi-definiteness of $\mathcal{T}_{IZ}$, this immediately gives us that: 
    \[\bar{\gamma} \left(2^n - 1\right) = \Tr\left(\mathcal{T}_{IZ}\right) = \sum_{i=1}^{2^n-1}  \lambda_i\left(\mathcal{T}_{IZ}\right) \leq \left(2^n - 1\right) \lambda_{max}\left(\mathcal{T}_{IZ}\right) \iff \lambda_{max}\left(\mathcal{T}_{IZ}\right) \geq \bar{\gamma}.\]
    
    Observe again that any eigenvector of $\mathcal{T}_{IZ}$ is also an eigenvector of $\mathcal{S}$, with the same eigenvalue. Therefore: 
    \[\lambda_{max}\left(\mathcal{S}\right) \geq \lambda_{max}\left(\mathcal{T}_{IZ}\right) \geq \bar{\gamma}.\]

    This does give that $\eta = \bar{\gamma} - \lambda_{max}\left(\mathcal{S}\right) \leq 0$.
    
    \item Now, let us assume that the common eigenbasis of all the mixed states $\Gamma = \left\{\ket{v_1}, \ldots, \ket{v_m}\right\}$ is an arbitrary set of orthogonal eigenvectors. We can assume without loss of generality that $m = 2^n$ since, otherwise, we can complete $\left\{\ket{v_1}, \ldots, \ket{v_m}\right\}$ with some more orthogonal vectors using the Gram-Schmidt algorithm. Consider the matrix $U$ so that, for all $i \in \left\{1, \ldots, 2^n\right\}$: 
    \[U \ket{v_i} = \ket{i}.\]
    
    Notice that this $U$ is unitary since, for all $i \in \left\{1, \ldots, 2^n\right\}$, using the orthogonality of $\left\{\ket{v_1}, \ldots, \ket{v_m}\right\}$: 
    \[\bra{v_i}I\ket{v_j} = \delta_{ij} = \braket{i}{j} = \bra{v_i}U^{\dagger} U \ket{v_j}.\]
    
    This does imply that $U^{\dagger} U = I$. But then, considering the distribution $D'$ that outputs $U \rho U^{\dagger}$ when $D$ outputs $\rho$, we can use the second property, to know that: 
    \[\eta = \eta'.\]

    However, when $D$ outputs a state $\rho = \sum_{i} p_i \ket{v_i}\bra{v_i}$, then $D'$ outputs a state: 
    \[\rho' = U \rho U^{\dagger} = \sum_{i} p_i U \ket{v_i} \left(U \ket{v_i}\right)^{\dagger} = \sum_{i} p_i \ket{i}\bra{i}.\]
    
    We are therefore back to the first case, telling us $\eta' \leq 0$. Hence, $\eta = \eta' \leq 0$ as well.
    \end{itemize}
\end{proof}

Before moving on, we prove one more property of the expressivity measure, which gives a natural way to visualize it on the Bloch sphere.

\begin{theorem}\label{thm:eta-visualization}
    Let $D$ be a mixed-state distribution over a single qubit. Then, leaving $\ket{r(\rho)} = \sqrt{2} \kket{\pi(\rho)} \in \mathbb{R}^3$ to be the Bloch vector of a state $\rho$: 
    \begin{equation}\label{eq:eta-visualization}
        \eta = \min_{\substack{\ket{u} \in \mathbb{R}^3 \\ \braket{u}{u} = 1}} \exval_{\rho \followsdistr D}\left[\left\|\ket{r(\rho)} - \projection_{\ket{u}} \ket{r(\rho)}\right\|_2^2\right],
    \end{equation}
    where $\projection_{\ket{u}} \ket{r(\rho)} = \braket{u}{r(\rho)} \cdot \ket{u}$ is the projection of the vector $\ket{r(\rho)}$ on the line spanned by $\ket{u}$.
\end{theorem}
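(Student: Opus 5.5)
The plan is to compute $\mathcal{S}$ explicitly for a single qubit in terms of the Bloch vector. That reduces the identity to a statement about the spectrum of an uncentred second-moment matrix of $\ket{r(\rho)}\in\mathbb{R}^3$. One has to be careful here: for mixed states $\mathcal{S}$ is \emph{not} simply $\exval[\ket{r(\rho)}\bra{r(\rho)}]$. It carries an extra isotropic term coming from the mixedness, and this term is exactly what turns $\bar{\gamma}$ into $\exval\|\ket{r(\rho)}\|_2^2$.

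\textbf{Step 1: entries of $\mathcal{S}$.} Write $\rho=\frac12(I+s)$ with $s=\sum_{R\in\{X,Y,Z\}} r_R R$. By Lemma~\ref{lma:other-formulation-decomposition}, for $P,Q\in\{X,Y,Z\}$ we have $\bbra{\widetilde{P}}\mathcal{S}\kket{\widetilde{Q}}=\exval[\Tr(P\rho Q\rho)]$, so we expand
\[
\Tr(P\rho Q\rho)=\tfrac14\left[\Tr(PQ)+\Tr(PQs)+\Tr(PsQ)+\Tr(PsQs)\right].
\]
\begin{itemize}
\item For $P=Q$, we have $\Tr(P^2)=2$, and the two linear terms give $4r_P$ in total. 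Since $PsP=r_PP-\sum_{R\neq P}r_RR$, the last term is $2(r_P^2-\sum_{R\ne P}r_R^2)$.
\item For $P\neq Q$, $\Tr(PQ)=0$. The linear terms cancel, because $\Tr(PQs)=-\Tr(QPs)$ by anticommutation. The quadratic term equals $4r_Pr_Q$ by the standard Pauli product identities.
\end{itemize}
Altogether
\[
\Tr(P\rho Q\rho)=\delta_{PQ}\,\frac{1-\|r\|^2}{2}+r_Pr_Q .
\]
This is the one computation I expect to require care, mainly the sign bookkeeping in $\Tr(PsQs)$. The diagonal case can be cross-checked against $\Tr(\mathcal{S})=2-\bar{\gamma}$ from Lemma~\ref{lma:s-properties}.

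\textbf{Step 2: isolate the mixedness term.} Use $\Tr(\rho^2)=\frac{1+\|r\|^2}{2}$, so that $\frac{1-\|r\|^2}{2}=1-\Tr(\rho^2)$. Taking expectations gives
\[
\mathcal{S}=(1-\bar{\gamma})I_3+\exval\!\left[\ket{r(\rho)}\bra{r(\rho)}\right],
\qquad
\exval\|\ket{r(\rho)}\|_2^2=2\bar{\gamma}-1 .
\]

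\textbf{Step 3: conclude.} By the Rayleigh-quotient property, and since $\mathcal{S}$ is real symmetric (Lemma~\ref{lma:s-properties}) so that the maximum may be taken over real unit vectors,
\[
\lambda_{\max}(\mathcal{S})=1-\bar{\gamma}+\max_{\|u\|=1}\exval[\braket{u}{r(\rho)}^2].
\]
Substituting into $\eta=\bar{\gamma}-\lambda_{\max}(\mathcal{S})$ and using Step 2 gives
\[
\eta=\exval\|\ket{r(\rho)}\|^2-\max_u\exval[\braket{u}{r(\rho)}^2].
\]
Finally, Pythagoras gives $\|r-\projection_{\ket u}r\|^2=\|r\|^2-\braket{u}{r}^2$. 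Hence the maximum over $u$ becomes the minimum in Eq.~\eqref{eq:eta-visualization}, which proves the statement.
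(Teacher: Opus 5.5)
Your argument is correct and follows the paper's proof essentially step for step. The paper also establishes $\Tr(P\rho Q\rho)=(1-\Tr(\rho^2))\delta_{PQ}+r_Pr_Q$, though via the operator identity $\rho Q\rho=\frac{1-\Tr(\rho^2)}{2}Q+r_Q\rho$ rather than your componentwise expansion. It then deduces $\mathcal{S}=(1-\bar{\gamma})I+\exval[\ket{r(\rho)}\bra{r(\rho)}]$ and finishes with the Rayleigh quotient and the projector identity, just as you do.

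There is one slip to fix in Step 1. For $P=Q$ the two linear terms are $\Tr(P^2s)=\Tr(s)=0$ and $\Tr(PsP)=\Tr(s)=0$, so they vanish rather than summing to $4r_P$; with $4r_P$ the diagonal entry would carry a spurious $+r_P$. With this correction your four pieces give $\frac14[2+2(r_P^2-\sum_{R\neq P}r_R^2)]=\frac{1-\|r\|^2}{2}+r_P^2$, which is exactly your stated formula.
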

\begin{figure}[tb]
    \centering
    \subfloat{
        \includegraphics[width=0.48\linewidth]{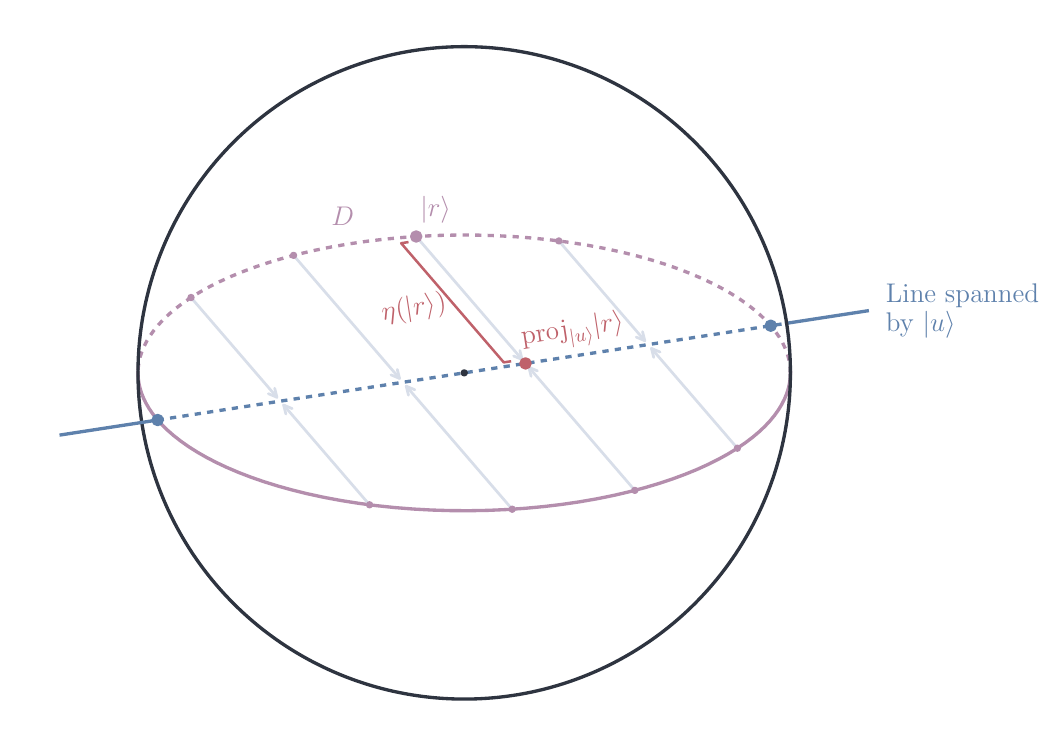}
    }
    \hfill
    \subfloat{
        \includegraphics[width=0.48\linewidth]{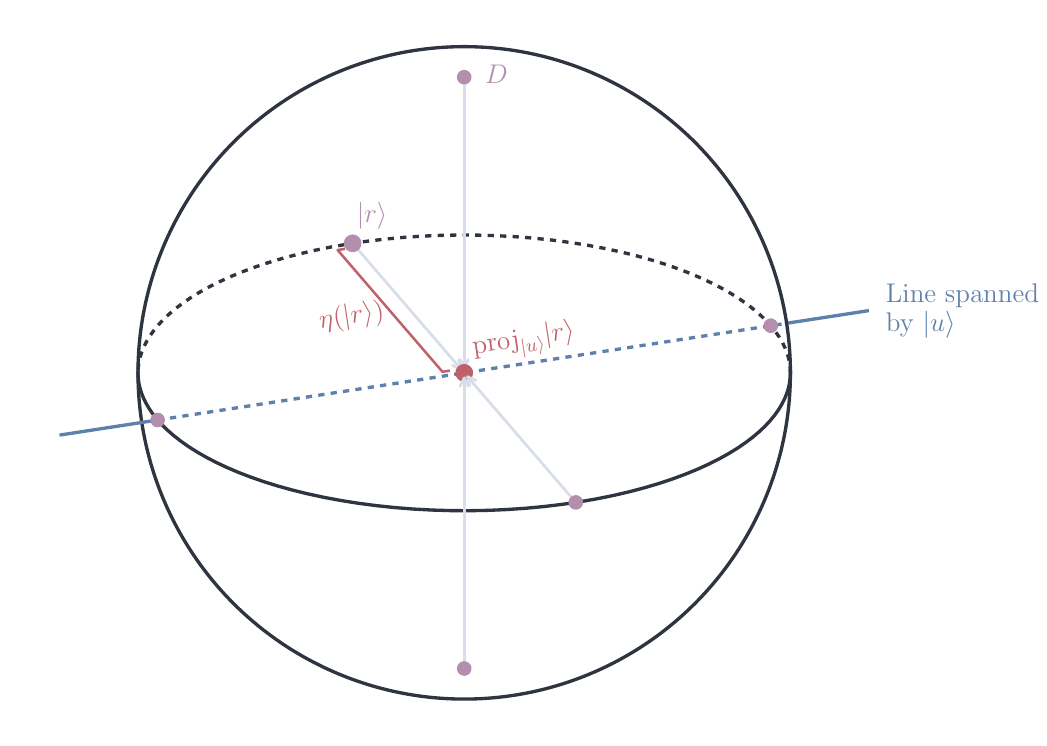}
    }
    \caption{Visualization of theorem~\ref{thm:eta-visualization}, for two different distributions: (left) $D = \Haar_1(O)$ (see definition~\ref{def:orthgonal-haar}) and (right) $D = {\rm Unif}\{\ket{0}, \ket{1}, \ket{\pm}, \ket{y\pm}\}$. We draw a line spanned by a $\ket{u}$ that minimizes equation~\ref{eq:eta-visualization}, and write $\ket{r} := \ket{r(\rho)}$ to be the Bloch vectors of our random samples and $\eta(\ket{r}) := \|\ket{r(\rho)} - \projection_{\ket{u}} \ket{r(\rho)}\|_2^2$. Under this notation, theorem~\ref{thm:eta-visualization} tells us $\eta = \exval_{\rho \followsdistr D}\left[\eta(\ket{r})\right]$. In other words, this theorem states that the expressivity $\eta$ is the average squared distance between points $\ket{r}$ and their projection $\projection_{\ket{u}} \ket{r}$, i.e.\ the average squared length of the gray arrows. It can be proven the minimum in equation~\ref{eq:eta-visualization} is reached for (left) any $\ket{u}$ on the equator and (right) any $\ket{u}$. This moreover yields expressivity measures (left) $\eta = 1/2$ and (right) $\eta = 2/3$.}
    \label{fig:eta-visualization}
\end{figure}
\begin{intuition*}
    Observe that $\|\ket{r(\rho)} - \projection_{\ket{u}} \ket{r(\rho)}\|_2^2$ is the squared distance between $\ket{r(\rho)}$ and the line spanned by $\ket{u}$. This result therefore states that, for single-qubit distributions, $\eta$ is the minimum, over all lines going through the origin, of the average squared distance between the Bloch vectors and this line. This is visualized for two natural distributions on figure~\ref{fig:eta-visualization}, where one should really think of the line as rotating about the origin until the minimum is reached.
    
    This also explains why a classical distribution, i.e.\ a distribution entirely supported on a single line, has zero expressivity: $\ket{u}$ can be taken along this line, since all the Bloch vectors are at a zero distance from it. Similarly, this result intuitively yields that any distribution with small average purity must have small expressivity: $\ket{u}$ can be taken to be any direction, since most Bloch vectors will be close to the origin anyway and hence to any line. 
\end{intuition*}
\begin{proof}
Let us first analyze the eigenvalue term of $\eta = \bar{\gamma} - \lambda_{max}(\mathcal{S})$. We observe the well-known fact that
\begin{equation}\label{eq:largest-eigenvalue-s-as-maximisation-problem}
    \lambda_{max}(\mathcal{S}) = \max_{\substack{\ket{u} \in \mathbb{R}^3 \\ \braket{u}{u} = 1}} \bra{u} \mathcal{S} \ket{u}.
\end{equation}

Indeed, $\bra{u} \mathcal{S} \ket{u} \leq \lambda_{max}(\mathcal{S})$ by properties of Rayleigh quotient (definition~\ref{def:rayleigh-quotient}), and equality is reached when $\ket{u}$ is an eigenvector associated to the largest eigenvalue, which can indeed be taken to be real by the spectral theorem.

To simplify this, let us evaluate $\mathcal{S}$. By lemma~\ref{lma:other-formulation-decomposition}, its components are given, for $P, Q \in \left\{X, Y, Z\right\}$ by:
\[\bbra{\widetilde{P}} \mathcal{S} \kket{\widetilde{Q}} = \exval_{\rho \followsdistr D}\left[\Tr\left(P \rho Q \rho\right)\right].\]

We try to simplify this trace. Let $\rho$ be arbitrary, and for any $P \in \left\{X, Y, Z\right\}$, let $r_{P} = \braket{\widetilde{P}}{r(\rho)} = \bbrakket{P}{\pi\left(\rho\right)} = \Tr(P \rho)$ be the $P$ component of its Bloch vector. Moreover, let $\|r\|_2^2 = \braket{r(\rho)}{r(\rho)}$. Observe that: 
\[\begin{split}
    \rho Q =\ & \frac{1}{2}\left(I + \sum_{P} r_P P\right) Q = Q \frac{1}{2}\left(I - \sum_{P \neq Q} r_P P + r_Q Q\right) = Q \frac{1}{2}\left(-I - \sum_{P} r_P P + 2 I + 2 r_Q Q\right) \\
    =\ & -Q \rho + Q\cdot \frac{1}{2}\cdot \left(2I + 2 r_Q Q\right) = -Q \rho + Q + r_Q I.
\end{split}\]

In addition to this, we can simplify equation~\ref{eq:single-qubit-rho-squared} to get: 
\[\rho^2 = \frac{1}{2}\left(\frac{1 + r_X^2 + r_Y^2 + r_Z^2}{2}I + \sum_{P} r_P P\right) = \rho + \frac{1}{2} I\left(\frac{1 + r_X^2 + r_Y^2 + r_Z^2}{2} - 1\right) = \rho + \frac{\left\|r\right\|_2^2 - 1}{4} I = \rho + \frac{\Tr\left(\rho^2\right) - 1}{2} I,\]
where we used that $\Tr\left(\rho^2\right) = (1 + \|r\|_2^2)/2 \iff \|r\|_2^2 = 2 \Tr(\rho^2) - 1$ also by equation~\ref{eq:single-qubit-rho-squared}. Combining the two, we find that: 
\[\rho Q \rho = \left(-Q \rho + Q + r_Q I\right) \rho = -Q \rho^2 + Q \rho + r_Q \rho = -Q \left(\rho + \frac{\Tr(\rho^2) - 1}{2} I\right) + Q \rho + r_Q \rho = \frac{1 - \Tr(\rho^2)}{2} Q + r_Q \rho.\]

We are therefore able to simplify our trace: 
\[\begin{split}
    \Tr\left(P \rho Q \rho\right) =\ & \frac{1 - \Tr(\rho^2)}{2} \Tr\left(P Q\right) + r_Q \Tr\left(P \rho\right) = \left(1 - \Tr(\rho^2) \right)\delta_{PQ} + r_Q r_P \\
    =\ & \left(1 - \Tr(\rho^2)\right) \bbrakket{\widetilde{P}}{\widetilde{Q}} + \bbraket{\widetilde{P}}{r\left(\rho\right)} \brakket{r\left(\rho\right)}{\widetilde{Q}} = \bbra{\widetilde{P}}\left(\left(1 - \Tr(\rho^2)\right) I + \ket{r\left(\rho\right)}\bra{r\left(\rho\right)}\right) \kket{\widetilde{Q}}.
\end{split}\]

Plugging it back in the expression of $\mathcal{S}$, this reads: 
\[\bbra{\widetilde{P}} \mathcal{S} \kket{\widetilde{Q}} = \exval_{\rho \followsdistr D}\left[\Tr\left(P \rho Q \rho\right)\right] =  \bbra{\widetilde{P}} \exval_{\rho\followsdistr D}\left[\left(1 - \Tr(\rho^2)\right)I + \ket{r\left(\rho\right)}\bra{r\left(\rho\right)}\right] \kket{\widetilde{Q}} \implies \mathcal{S} = \left(1 - \bar{\gamma}\right) I + \exval_{\rho\followsdistr D}\left[\ket{r(\rho)}\bra{r(\rho)}\right].\]

    We are now ready to analyze $\eta$. By equation~\ref{eq:largest-eigenvalue-s-as-maximisation-problem}, and using again that $\|r\|_2^2 = 2 \Tr(\rho^2) - 1$ as observed above: 
    \[\begin{split}
        \eta =\ & \bar{\gamma} - \max_{\substack{\ket{u} \in \mathbb{R}^3 \\ \braket{u}{u} = 1}} \bra{u} \mathcal{S} \ket{u} = 2\bar{\gamma} - 1 - \max_{\substack{\ket{u} \in \mathbb{R}^3 \\ \braket{u}{u} = 1}} \exval_{\rho \followsdistr D}\left[\braket{u}{r(\rho)}^2\right] \\
        =\ & \exval_{\rho \followsdistr D}\left[\braket{r(\rho)}{r(\rho)}\right] - \max_{\substack{\ket{u} \in \mathbb{R}^3 \\ \braket{u}{u} = 1}} \exval_{\rho \followsdistr D}\left[\braket{r(\rho)}{u} \braket{u}{r(\rho)}\right] 
        = \min_{\substack{\ket{u} \in \mathbb{R}^3 \\ \braket{u}{u} = 1}} \exval_{\rho \followsdistr D}\left[\bra{r(\rho)}\left(I - \ket{u}\bra{u}\right)\ket{r(\rho)}\right].
    \end{split}\]

    Observe finally that $I - \ket{u}\bra{u} = \left(I - \ket{u}\bra{u}\right)^2$ since it is just a projector on the space orthogonal to $\ket{u}$, giving precisely our result:
    \[\begin{split}
        \eta =\ & \min_{\substack{\ket{u} \in \mathbb{R}^3 \\ \braket{u}{u} = 1}} \exval_{\rho \followsdistr D}\left[\bra{r(\rho)}\left(I - \ket{u}\bra{u}\right)^2\ket{r(\rho)}\right] = \min_{\substack{\ket{u} \in \mathbb{R}^3 \\ \braket{u}{u} = 1}} \exval_{\rho \followsdistr D}\left[\left\|(I - \ket{u}\bra{u}) \ket{r(\rho)}\right\|_2^2\right] \\
        =\ & \min_{\substack{\ket{u} \in \mathbb{R}^3 \\ \braket{u}{u} = 1}} \exval_{\rho \followsdistr D}\left[\left\|\ket{r(\rho)} - \braket{u}{r(\rho)}\ket{u}\right\|_2^2\right].
    \end{split}\]
\end{proof}

\subsubsection{Generalised mean matrix}

\begin{definition}[Generalised mean matrix]\label{def:generalised-mean-matrix}
    Let $\ket{m} \in \mathbb{R}^{4^n - 1}$ be an arbitrary real vector in the Pauli basis (i.e.\ $\bbraket{\widetilde{P}}{m} \in \mathbb{R}$ for all $P \in \mathbb{P}_n$). We define the generalised mean matrix of $\ket{m}$, $M_{mean}\left(\ket{m}\right)$, as: 
    \[M_{mean}\left(\ket{m}\right) = \sum_{P \neq I^{\otimes n}} \bbraket{\widetilde{P}}{m} \left(\kket{\widetilde{I}^{\otimes n}}\bbra{\widetilde{P}} + \kket{\widetilde{P}} \bbra{\widetilde{I}^{\otimes n}}\right)\]
\end{definition}
\begin{intuition*}
    Note that, given a distribution $D$ of moment matrix $M = M_{core} + M_{mean}$ and first moment $\kket{\mu}$, then $M_{mean}\left(\kket{\mu}\right) = M_{mean}$ (and yes our notation is a little overloaded but hopefully the reader agrees that it is harmless given this equivalence).
    
    The idea is just that this matrix has the shape of $M_{mean}$, while allowing for an arbitrary real vector:
    \[M_{mean}\left(\ket{m}\right) = \left(\begin{array}{c|ccc} 0 & & \bra{m} & \\ \hline & & & \\ \ket{m} & & 0 & \\ & & & \end{array}\right).\]
    
    Having this more general matrix will simplify some proofs for $M_{mean}$
\end{intuition*}

\begin{lemma}\label{lma:generalised-mean-linear}
    Let $a, b \in \mathbb{R}$ be scalars, and $\ket{m}, \ket{p} \in \mathbb{R}^{4^n - 1}$ be real vectors (in the Pauli basis). Then: 
    \[M_{mean}\left(a \ket{m} + b \ket{p}\right) = a M_{mean}\left(\ket{m}\right) + b M_{mean}\left(\ket{p}\right)\]
    
    In other words, the generalised mean matrix is linear.
\end{lemma}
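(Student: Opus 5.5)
The claim follows directly from Definition~\ref{def:generalised-mean-matrix}: $M_{mean}(\ket{m})$ is a linear function of the coefficients $\bbraket{\widetilde{P}}{m}$, and those coefficients are linear in $\ket{m}$. So I will write out $M_{mean}(a\ket{m}+b\ket{p})$ and distribute.

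First I check that the left-hand side is well defined. Since $a,b\in\mathbb{R}$ and $\ket{m},\ket{p}$ have real Pauli coefficients, the vector $a\ket{m}+b\ket{p}$ also has real Pauli coefficients. Hence it is an admissible argument of the generalised mean matrix.

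Next, linearity of the inner product gives, for every $P\neq I^{\otimes n}$,
\[
\bbra{\widetilde{P}}\left(a\ket{m}+b\ket{p}\right) = a\bbraket{\widetilde{P}}{m}+b\bbraket{\widetilde{P}}{p}.
\]
I substitute this into the defining sum. Because the operator $\kket{\widetilde{I}^{\otimes n}}\bbra{\widetilde{P}}+\kket{\widetilde{P}}\bbra{\widetilde{I}^{\otimes n}}$ does not depend on $\ket{m}$, I can split the sum into two pieces. These are exactly $a\,M_{mean}(\ket{m})$ and $b\,M_{mean}(\ket{p})$.

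There is no genuine obstacle here. The only point needing care is the realness check, so that all three matrices in the identity are defined.
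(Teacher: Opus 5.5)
Your proposal is correct and matches the paper's proof: both substitute $a\ket{m}+b\ket{p}$ into the definition, use linearity of $\bbra{\widetilde{P}}\cdot$, and split the sum into $a\,M_{mean}(\ket{m}) + b\,M_{mean}(\ket{p})$. Your check that $a\ket{m}+b\ket{p}$ still has real Pauli coefficients, so that the left-hand side is defined, is a small addition that the paper leaves implicit.
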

\begin{proof}[Proof]
    Note that this proposition is trivial in array form. Let us however do the proof formally: 
    \begin{equation}\begin{aligned}M_{mean}\left(a \ket{m} + b \ket{p}\right) &= \sum_{P \neq I^{\otimes n}} \bbra{\widetilde{P}}\left(a\ket{m} + b \ket{p}\right) \left(\kket{\widetilde{I}^{\otimes n}}\bbra{\widetilde{P}} + \kket{\widetilde{P}} \bbra{\widetilde{I}^{\otimes n}}\right) \\ &= a \sum_{P \neq I^{\otimes n}} \bbraket{\widetilde{P}}{m} \left(\kket{\widetilde{I}^{\otimes n}}\bbra{\widetilde{P}} + \kket{\widetilde{P}} \bbra{\widetilde{I}^{\otimes n}}\right) + b \sum_{P \neq I^{\otimes n}} \bbraket{\widetilde{P}}{p} \left(\kket{\widetilde{I}^{\otimes n}}\bbra{\widetilde{P}} + \kket{\widetilde{P}} \bbra{\widetilde{I}^{\otimes n}}\right) \\ &= a M_{mean}\left(\ket{m}\right) + b M_{mean}\left(\ket{p}\right).\end{aligned}\end{equation}
\end{proof}

\subsubsection{$w$ vector}

The risk depends on $U$ and $V$ only through the relative unitary $V^{\dagger}U$. Its normalized vectorization lets us express the risk using the moment matrix. We remark that not every normalized vector represents a unitary, and this restriction matters when checking whether the bounds are tight.

\begin{definition}[$w$ vector]\label{def:w-vector}
    Let $U, V$ be unitary matrices. Then, we let $\ket{w\left(U, V\right)} \in \mathbb{C}^{4^n}$ be:
    \[\ket{w\left(U, V\right)} = \frac{1}{\sqrt{2^n}} \kket{V^{\dagger} U}.\]
\end{definition}
\begin{intuition*}
    Just like for the moment matrix, this definition naturally appears in the construction of lemma~\ref{lma:risk-rayleigh}.
\end{intuition*}

\begin{lemma}\label{lma:w-normalised}
    Let $U, V$ be unitary matrices. Then, $\ket{w\left(U, V\right)}$ is normalised:
    \[\braket{w\left(U, V\right)}{w\left(U, V\right)} = 1.\]
\end{lemma}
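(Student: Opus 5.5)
The plan is to use property (2) of the vectorised notation (definition~\ref{def:vectorised-notation}), which states that $\bbrakket{A}{B} = \Tr\left(A^{\dagger} B\right)$. Applying this to $A = B = V^{\dagger}U$, together with the $1/\sqrt{2^n}$ normalisation in definition~\ref{def:w-vector}, reduces the claim to a trace computation:
\[
\braket{w\left(U, V\right)}{w\left(U, V\right)} = \frac{1}{2^n} \bbrakket{V^{\dagger} U}{V^{\dagger} U} = \frac{1}{2^n} \Tr\left(\left(V^{\dagger} U\right)^{\dagger} V^{\dagger} U\right).
\]

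Next, I expand the adjoint as $\left(V^{\dagger} U\right)^{\dagger} = U^{\dagger} V$. Unitarity of $V$ gives $V V^{\dagger} = I^{\otimes n}$, and unitarity of $U$ then gives $U^{\dagger} U = I^{\otimes n}$. The operator inside the trace is therefore the identity on $n$ qubits, whose trace is $2^n$. This cancels the prefactor $1/2^n$ and yields exactly $1$.

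An equivalent route is to note that $V^{\dagger}U$ is itself unitary, so its Hilbert--Schmidt norm squared equals its dimension $2^n$. There is no genuine obstacle in this argument. The only point requiring care is the order of the adjoint, so that the product collapses as $U^{\dagger} V V^{\dagger} U$ rather than leaving an uncancelled pair.
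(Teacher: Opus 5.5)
Your proposal is correct and matches the paper's proof. Both use the vectorisation identity (property 2 of Definition~\ref{def:vectorised-notation}) to reduce the inner product to $\frac{1}{2^n}\operatorname{Tr}\left(U^{\dagger} V V^{\dagger} U\right) = \frac{1}{2^n}\operatorname{Tr}(I) = 1$, using unitarity of $U$ and $V$.
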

\begin{proof}[Proof]
    This is direct, since $U$ and $V$ are unitary:
    \autoeq{\braket{w\left(U, V\right)}{w\left(U, V\right)} = \frac{1}{2^n} \bbrakket{V^{\dagger} U}{V^{\dagger} U} = \frac{1}{2^n} \Tr\left(U^{\dagger} V V^{\dagger} U\right) = \frac{1}{2^n}\Tr\left(I\right) = 1.}
\end{proof}

\suppsubsection{Main results}

\subsubsection{General bounds}

We separate out the identity component and bound the contributions from the second moment and the mean. We first do this for any normalized vector, then specialize to vectors representing unitaries. The resulting out-of-distribution generalisation bounds have a multiplicative factor controlled by the second-moment spectrum and an additive term determined by the means.

\begin{remark}
    The core of the reasoning takes its inspiration from reference~\cite{chen2024predicting}'s proof of their lemma~4.1: we will similarly represent our matrices in the Pauli basis and express risks as Rayleigh quotients, which will allow us to use eigenvalues to compare them. 
\end{remark}

\begin{lemma}\label{lma:risk-rayleigh}
    Let $D$ be an $n$-qubit distribution over mixed states of moment matrix $M$ and purity $\bar{\gamma}$, and let $U, V$ be unitary matrices. Then:
    \[R_D\left(U, V\right) = \bar{\gamma} - \frac{\bbra{V^{\dagger} U}}{\sqrt{2^n}}\cdot  2^n\exval_{\rho \followsdistr D}\left[\rho \otimes \rho^T\right] \cdot \frac{\kket{V^{\dagger} U}}{\sqrt{2^n}} = \bar{\gamma} - \bra{w\left(U, V\right)} M \ket{w\left(U, V\right)}.\]
\end{lemma}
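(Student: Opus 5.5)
The plan is to start from Theorem~\ref{thm:risk-equivalent-formulation}, which already gives
\[R_D(U,V) = \bar{\gamma} - \exval_{\rho \followsdistr D}\left[\Tr\left(U\rho U^{\dagger} V\rho V^{\dagger}\right)\right],\]
so only the trace term has to be rewritten. First, cyclicity of the trace turns it into $\Tr\left((V^{\dagger}U)^{\dagger}\rho (V^{\dagger}U)\rho\right)$. By property (2) of the vectorised notation (definition~\ref{def:vectorised-notation}) this equals $\bbrakket{V^{\dagger}U}{\rho V^{\dagger}U\rho}$.

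Next, I apply the ricochet identity, property (5) with $A=\rho$, $B=V^{\dagger}U$, $C=\rho$. This gives $\kket{\rho V^{\dagger}U\rho} = (\rho\otimes\rho^T)\kket{V^{\dagger}U}$, so that
\[\Tr\left(U\rho U^{\dagger}V\rho V^{\dagger}\right) = \bbra{V^{\dagger}U}\left(\rho\otimes\rho^T\right)\kket{V^{\dagger}U}.\]
Taking the expectation over $\rho\followsdistr D$ passes inside the quadratic form by linearity, because $\kket{V^{\dagger}U}$ does not depend on $\rho$.

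Finally, I insert the factor $2^n\cdot\frac{1}{\sqrt{2^n}}\cdot\frac{1}{\sqrt{2^n}}=1$. This produces the middle expression in the statement, and identifying $2^n\exval[\rho\otimes\rho^T]=M$ (definition~\ref{def:moment-matrix}) and $\kket{V^{\dagger}U}/\sqrt{2^n}=\ket{w(U,V)}$ (definition~\ref{def:w-vector}) yields $\bar{\gamma}-\bra{w}M\ket{w}$.

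There is no real obstacle here. The only point needing care is the first rewriting step: one must check that the trace is arranged as $\Tr(A^{\dagger}B)$ with $A=V^{\dagger}U$, so that the bra is $\bbra{V^{\dagger}U}$ and not its transpose or conjugate. This follows from $\Tr(U\rho U^{\dagger}V\rho V^{\dagger}) = \Tr(U^{\dagger}V\rho V^{\dagger}U\rho)$ and $(V^{\dagger}U)^{\dagger}=U^{\dagger}V$.
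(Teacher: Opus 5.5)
Your proposal is correct and follows essentially the same route as the paper: start from Theorem~\ref{thm:risk-equivalent-formulation}, use cyclicity of the trace, rewrite the trace as a vectorised inner product, apply the ricochet identity, and insert the $2^n\cdot(1/\sqrt{2^n})^2$ normalisation to identify $M$ and $\ket{w(U,V)}$. Your explicit check that the bra is $\bbra{V^{\dagger}U}$, via $(V^{\dagger}U)^{\dagger}=U^{\dagger}V$, is exactly the step the paper's one-line chain uses without comment.
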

\begin{proof}[Proof]
    This is direct using theorem~\ref{thm:risk-equivalent-formulation}, the cyclicity of the trace and the vectorised notation: 
    \begin{equation}\begin{aligned}R_D\left(U, V\right) &= \bar{\gamma} - \exval\left(\Tr\left(U \rho U^{\dagger} V \rho V^{\dagger}\right)\right) \\ &= \bar{\gamma} - \exval_{\rho \followsdistr D}\left[\Tr\left(U^{\dagger} V \rho V^{\dagger} U \rho\right)\right] \\ &= \bar{\gamma} - \exval_{\rho \followsdistr D}\left[\bbrakket{V^{\dagger}U}{\rho V^{\dagger} U \rho}\right] \\ &= \bar{\gamma} - \frac{\bbra{V^{\dagger} U}}{\sqrt{2^n}}\cdot  2^n\exval_{\rho \followsdistr D}\left[\rho \otimes \rho^T\right]\cdot \frac{\kket{V^{\dagger} U}}{\sqrt{2^n}} \\ &= \bar{\gamma} - \bra{w\left(U, V\right)} M \ket{w\left(U, V\right)}.\end{aligned}\end{equation}
\end{proof}
\begin{intuition*}
    Since $\braket{w\left(U, V\right)}{w\left(U, V\right)} = 1$, we recognise a Rayleigh quotient (Definition~\ref{def:rayleigh-quotient}), getting closer to our goal. We next bound this Rayleigh quotient using eigenvalues. However, there is the issue that: 
    \[\lambda_{max}\left(M\right) \geq \bbra{\widetilde{I}^{\otimes n}} M \kket{\widetilde{I}^{\otimes n}} = \bar{\gamma}.\]
    
    In other words, eigenvalue bounds do not beat the trivial bound $R_D\left(U, V\right) \geq 0$, which will not be enough for an out-of-distribution generalisation bound.
    The current issue in our reasoning comes from the fact that $R_D\left(U, U\right) = 0$ since $\ket{w\left(U, U\right)} = \kket{\widetilde{I}^{\otimes n}}$. However, a useful property of unitaries is that $U = V \iff \ket{w\left(U, V\right)} = \kket{\widetilde{I}^{\otimes n}}$. This issue is therefore entirely packaged in the identity component of $\ket{w\left(U, V\right)}$. The idea will thus be to consider separately the identity component of $\ket{w\left(U, V\right)}$ from its other components.
\end{intuition*}

\begin{definition}[Pseudo-risk]\label{def:pseudo-risk}
    Let $D$ be a mixed-state distribution over $n$ qubits of average purity $\bar{\gamma}$ and moment matrix $M$. For any normalised vector $\ket{x} \in \mathbb{C}^{4^n}$, we define its pseudo-risk as: 
    \[r_D\left(\ket{x}\right) = \bar{\gamma} - \bra{x} M \ket{x}.\]
\end{definition}
\begin{remark*}
    Observe that, thanks to lemma~\ref{lma:risk-rayleigh}, for unitaries $U, V$: 
    \[r_D\left(\ket{w\left(U, V\right)}\right) = R_D\left(U, V\right).\]
    
    The idea will be to make bounds $r_{D_1}\left(\ket{x}\right) \leq \alpha r_{D_2}\left(\ket{x}\right) + \beta$ for all $\ket{x} \in \mathbb{C}^{4^n}$. This however loses some of the structure that $\ket{w\left(U, V\right)}$ is a vectorised unitary, and hence does not guarantee that the bounds are tight.
\end{remark*}

\begin{lemma}\label{lma:risk-without-identity}
    Let $D$ be an $n$-qubit distribution over mixed states of moment matrix $M = \bar{\gamma} \oplus \mathcal{S} + M_{mean}$, and let $\ket{x} \in \mathbb{C}^{4^n}$ be normalised. If $\ket{\pi\left(x\right)} := \Pi \ket{x} \neq 0$, then:
    \[r_D\left(\ket{x}\right) = \left\|\ket{\pi\left(x\right)}\right\|_2^2 \left(\bar{\gamma} - \frac{\bra{\pi\left(x\right)} \mathcal{S} \ket{\pi\left(x\right)}}{\braket{\pi\left(x\right)}{\pi\left(x\right)}}\right) - \bra{x} M_{mean} \ket{x}.\]
\end{lemma}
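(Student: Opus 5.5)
Decompose $\ket{x}$ along the identity direction: write $\ket{x} = a\kket{\widetilde{I}^{\otimes n}} + \ket{\pi(x)}$, where $a = \bbraket{\widetilde{I}^{\otimes n}}{x}$ and $\ket{\pi(x)} = \Pi\ket{x}$. Since $\ket{x}$ is normalised and the two pieces are orthogonal, $|a|^2 + \|\ket{\pi(x)}\|_2^2 = 1$. The plan is to expand $\bra{x}M\ket{x}$ using the decomposition $M = \bar{\gamma}\oplus\mathcal{S} + M_{mean}$ from Lemma~\ref{lma:other-formulation-decomposition}. The $M_{mean}$ term is left as the remainder $-\bra{x}M_{mean}\ket{x}$, so only the core part has to be computed.

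For the core part, $M_{core} = \bar{\gamma}\kket{\widetilde{I}^{\otimes n}}\bbra{\widetilde{I}^{\otimes n}} + \mathcal{S}$. By Lemma~\ref{lma:s-as-truncation}, $\mathcal{S} = \Pi M\Pi$, so $\mathcal{S}$ annihilates $\kket{\widetilde{I}^{\otimes n}}$ and $\bra{x}\mathcal{S}\ket{x} = \bra{\pi(x)}\mathcal{S}\ket{\pi(x)}$. Likewise $\bbraket{\widetilde{I}^{\otimes n}}{\pi(x)} = 0$, so the first term contributes exactly $\bar{\gamma}|a|^2$. This gives
\[
\bra{x}M\ket{x} = \bar{\gamma}|a|^2 + \bra{\pi(x)}\mathcal{S}\ket{\pi(x)} + \bra{x}M_{mean}\ket{x}.
\]

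Substitute this into $r_D(\ket{x}) = \bar{\gamma} - \bra{x}M\ket{x}$ and use $\bar{\gamma}(1-|a|^2) = \bar{\gamma}\|\ket{\pi(x)}\|_2^2$. This yields
\[
r_D(\ket{x}) = \bar{\gamma}\,\|\ket{\pi(x)}\|_2^2 - \bra{\pi(x)}\mathcal{S}\ket{\pi(x)} - \bra{x}M_{mean}\ket{x}.
\]
Because $\ket{\pi(x)}\neq 0$, we can factor out $\|\ket{\pi(x)}\|_2^2 = \braket{\pi(x)}{\pi(x)}$, which turns the middle term into a Rayleigh quotient and gives the claimed formula.

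There is no serious obstacle here. The only care needed is in the cross terms: one must confirm that the cross terms of $M_{core}$ between the identity and non-identity parts vanish, which holds by block structure. All such mixed contributions are then exactly those collected in $\bra{x}M_{mean}\ket{x}$, which is deliberately left unexpanded.
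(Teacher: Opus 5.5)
Your proposal is correct and follows essentially the same route as the paper. You split $\ket{x}$ into its identity component and $\Pi\ket{x}$ and use $|c|^2+|d|^2=1$, expanding only the block-diagonal part $\bar{\gamma}\oplus\mathcal{S}$ while leaving $\bra{x}M_{mean}\ket{x}$ untouched. The only cosmetic difference is that the paper normalises $\ket{\pi(x)}$ to $\ket{\bar{\pi}(x)}$ before expanding, whereas you factor out $\|\ket{\pi(x)}\|_2^2$ at the end.
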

\begin{remark*}
    In particular, for unitaries $U, V$ that do not differ only by a global phase, writing $\ket{\pi\left(w\right)} := \Pi \ket{w\left(U, V\right)}$ and $\ket{w} := \ket{w\left(U, V\right)}$, this yields: 
    \[R_D\left(U, V\right) = \left\|\ket{\pi\left(w\right)}\right\|_2^2 \left(\bar{\gamma} - \frac{\bra{\pi\left(w\right)} \mathcal{S} \ket{\pi\left(w\right)}}{\braket{\pi\left(w\right)}{\pi\left(w\right)}}\right) - \bra{w} M_{mean} \ket{w}.\]
\end{remark*}

\begin{proof}
    For the sake of notation, let us write:
    \[\ket{\bar{\pi}\left(x\right)} = \frac{\ket{\pi\left(x\right)}}{\left\|\ket{\pi\left(x\right)}\right\|_2}.\]

    Moreover, observe that $I^{\otimes 2n} = \Pi + \kket{\widetilde{I}^{\otimes n}}\bbra{\widetilde{I}^{\otimes n}}$ by definition of $\Pi$, so:
    \[\ket{x} = \kket{\widetilde{I}^{\otimes n}}\bbraket{\widetilde{I}^{\otimes n}}{x} + \Pi \ket{x} = c \kket{\widetilde{I}^{\otimes n}} + \left\|\ket{\pi\left(x\right)}\right\|_2 \frac{\ket{\pi\left(x\right)}}{\left\|\ket{\pi\left(x\right)}\right\|_2} = c \kket{\widetilde{I}^{\otimes n}} + d \ket{\bar{\pi}\left(x\right)},\]
    where we left $c = \bbraket{\widetilde{I}^{\otimes n}}{x}$ and $d = \left\|\ket{\pi\left(x\right)}\right\|_2$. Observe moreover that since $\ket{x}$ is normalised and since $\ket{\bar{\pi}\left(x\right)}$ and $\kket{\widetilde{I}^{\otimes n}}$ are orthonormal: 
    \[1 = \braket{x}{x} = \left|c\right|^2 + \left|d\right|^2.\]
    
    Using the decomposition $M = \bar{\gamma} \oplus \mathcal{S} + M_{mean}$, pseudo-risk becomes: 
    \begin{equation}\begin{aligned}r_D\left(\ket{x}\right) &= \bar{\gamma} - \bra{x} M \ket{x} \\ &= \bar{\gamma} \left(\left|c\right|^2 + \left|d\right|^2\right) - \left(\bbra{\widetilde{I}^{\otimes n}} c^* + \bra{\bar{\pi}\left(x\right)} d^*\right) \left(\bar{\gamma} \oplus \mathcal{S}\right) \left(c\kket{\widetilde{I}^{\otimes n}} + d \ket{\bar{\pi}\left(x\right)} \right) - \bra{x} M_{mean} \ket{x} \\ &= \bar{\gamma} \left(\left|c\right|^2 + \left|d\right|^2\right) - \left|c\right|^2 \bar{\gamma} - \left|d\right|^2 \bra{\bar{\pi}\left(x\right)} \mathcal{S} \ket{\bar{\pi}\left(x\right)} - \bra{x} M_{mean} \ket{x} \\ &= \left|d\right|^2 \left(\bar{\gamma} - \bra{\bar{\pi}\left(x\right)} \mathcal{S} \ket{\bar{\pi}\left(x\right)}\right) - \bra{x} M_{mean} \ket{x},\end{aligned}\end{equation}
    which is precisely what we wanted to prove.
\end{proof}

\begin{remark*}
    We use the expression for the risk in Lemma~\ref{lma:risk-without-identity} to derive
    our out-of-distribution generalisation bounds. Consider again the special case where $M_{mean} = 0$, i.e.\ that distributions are unbiased (definition~\ref{def:bias}):
    \[R_D\left(U, V\right) = \left\|\ket{\pi\left(w\right)}\right\|_2^2 \left(\bar{\gamma} - \frac{\bra{\pi\left(w\right)} \mathcal{S} \ket{\pi\left(w\right)}}{\braket{\pi\left(w\right)}{\pi\left(w\right)}}\right).\]

    We make two observations.
    \begin{enumerate}
        \item We claim that there exist unbiased distributions $D$ whose expressivity measure $\eta = \bar{\gamma} - \lambda_{max}\left(\mathcal{S}\right)$ is positive and sufficiently bounded away from $0$ (i.e.\ $\lambda_{max}\left(\mathcal{S}\right) \ll \bar{\gamma}$). This fact will be proven later, in appendix~\ref{sec:applications}. Recall that this fact is not true if we replace $\mathcal{S}$ by $M$, as observed above in lemma~\ref{lma:risk-rayleigh}. 

        Considering such a distribution $D$, then we have by properties of Rayleigh quotients (definition~\ref{def:rayleigh-quotient}), positive semi-definiteness of $S$ and definition of expressivity measure $\eta = \bar{\gamma} - \lambda_{max}\left(\mathcal{S}\right)$: 
            \[\left\|\ket{\pi\left(w\right)}\right\|_2^2 \eta \leq R_D\left(U, V\right) \leq \left\|\ket{\pi\left(w\right)}\right\|_2^2 \bar{\gamma}.\]

            In other words, the norm of $\ket{\pi\left(w\right)}$ is precisely what defines the scale of the risk for distributions with $\eta = \Theta\left(1\right)$. Recall again that $U = V \iff \ket{w\left(U, V\right)} = \kket{\widetilde{I}^{\otimes n}}$, so this makes a lot of intuitive sense: the risk is zero for the identity component, and then the eigenvalues of $\mathcal{S}$ define how much to penalise the other directions. Moreover, $\eta > 0$ makes sure that all the other directions are penalised, i.e.\ that none is ignored by the distribution.

        \item Consider some other arbitrary unbiased distribution $D'$. Then, since $\mathcal{S}$ is positive semi-definite: 
        \[R_{D'}\left(U, V\right) \leq \left\|\ket{\pi\left(w\right)}\right\|_2^2 \bar{\gamma}'.\]

        Taking a similar intuition as the one above, for a general (unbiased) distribution, $\mathcal{S}$ is not guaranteed to penalise all directions. We know it has to be small if $\ket{w}$ is strongly aligned with $\kket{\widetilde{I}^{\otimes n}}$, but it may also be small in other cases. 

        However, since we know that $R_D \sim \left\|\ket{\pi\left(w\right)}\right\|_2^2$, this immediately gives us an out-of-distribution generalization bound:
        \[R_{D'}\left(U, V\right) \leq \left\|\ket{\pi\left(w\right)}\right\|_2^2 \bar{\gamma}' = \left\|\ket{\pi\left(w\right)}\right\|_2^2 \eta \frac{\bar{\gamma}'}{\eta} \leq \frac{\bar{\gamma}'}{\eta} R_D\left(U, V\right).\]

        This is precisely the reasoning we will use below to make our out-of-distribution generalisation bounds.
    \end{enumerate}
\end{remark*}

Nonzero means contribute an additional term to the risk comparison. We expand this term and bound its absolute value for any normalized vector. Applying the result to the difference between the rescaled means gives the additive error in our bounds.

\begin{lemma}\label{lma:m-mean-expansion}
    Let $\ket{m} \in \mathbb{R}^{4^n - 1}$ be an arbitrary real vector in the Pauli basis, $M_{mean}\left(\ket{m}\right) \in \mathbb{R}^{4^n \times 4^n}$ be its corresponding generalised mean matrix (definition~\ref{def:generalised-mean-matrix}), and $\ket{x} \in \mathbb{C}^{4^n}$ be arbitrary. Then: 
    \[\bra{x} M_{mean}\left(\ket{m}\right) \ket{x} = 2 \sum_{P \neq I^{\otimes n}} \bbraket{\widetilde{P}}{m} \cre\left(\brakket{x}{\widetilde{I}^{\otimes n}} \bbraket{\widetilde{P}}{x}\right).\]
\end{lemma}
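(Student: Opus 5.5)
This is a direct expansion of the quadratic form, using only Definition~\ref{def:generalised-mean-matrix} and linearity. First, I substitute the definition
\[
M_{mean}\left(\ket{m}\right) = \sum_{P \neq I^{\otimes n}} \bbraket{\widetilde{P}}{m} \left(\kket{\widetilde{I}^{\otimes n}}\bbra{\widetilde{P}} + \kket{\widetilde{P}} \bbra{\widetilde{I}^{\otimes n}}\right)
\]
into $\bra{x} M_{mean}\left(\ket{m}\right) \ket{x}$. Distributing the bra and ket over the finite sum turns each term into two scalar products. The result is
\[
\bra{x} M_{mean}\left(\ket{m}\right) \ket{x} = \sum_{P \neq I^{\otimes n}} \bbraket{\widetilde{P}}{m}\left(\brakket{x}{\widetilde{I}^{\otimes n}}\bbraket{\widetilde{P}}{x} + \brakket{x}{\widetilde{P}}\bbraket{\widetilde{I}^{\otimes n}}{x}\right).
\]

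Next, I observe that the second product in each parenthesis is the complex conjugate of the first:
\[
\brakket{x}{\widetilde{P}}\bbraket{\widetilde{I}^{\otimes n}}{x} = \left(\bbraket{\widetilde{P}}{x}\right)^* \left(\brakket{x}{\widetilde{I}^{\otimes n}}\right)^*,
\]
using $\braket{a}{b}^* = \braket{b}{a}$. Writing $z_P := \brakket{x}{\widetilde{I}^{\otimes n}}\bbraket{\widetilde{P}}{x}$, each parenthesis therefore equals $z_P + z_P^* = 2\cre(z_P)$.

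Finally, I use the hypothesis that $\ket{m}$ is real in the Pauli basis, so each coefficient $\bbraket{\widetilde{P}}{m}$ is a real scalar. This lets me write $\bbraket{\widetilde{P}}{m}\cdot 2\cre(z_P)$ without any conjugation issues, which yields exactly the claimed formula.

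There is no real obstacle here. The only point needing care is the conjugation bookkeeping, together with remembering that realness of $\ket{m}$ is what makes the combination a pure real part.
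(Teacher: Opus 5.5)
Your proof is correct and matches the paper's argument: expand the definition, observe that the second term of each pair is the complex conjugate of the first, and apply $z + z^* = 2\cre(z)$. One small remark is that both terms in the definition carry the same unconjugated coefficient $\bbraket{\widetilde{P}}{m}$, so realness of $\ket{m}$ is not actually needed for this identity; it matters in the next step, Lemma~\ref{lma:m-mean-bound}, where it makes the quadratic form real.
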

\begin{remark*}
    Note formally that, to be able to write $\bbraket{\widetilde{P}}{m}$, we are again using the isomorphism between $\ket{m} \in \mathbb{R}^{4^n - 1}$ and $\ket{m} \in \mathbb{R}^{4^n}$ such that $\bbraket{\widetilde{I}^{\otimes n}}{m} = 0$.
\end{remark*}
\begin{proof}[Proof]
    By definition of $M_{mean}\left(\ket{m}\right)$:
    \[\bra{x} M_{mean}\left(\ket{m}\right) \ket{x} = \sum_{P \neq I^{\otimes n}} \bbraket{\widetilde{P}}{m} \left(\brakket{x}{\widetilde{I}^{\otimes n}}\bbraket{\widetilde{P}}{x} + \brakket{x}{\widetilde{P}} \bbraket{\widetilde{I}^{\otimes n}}{x}\right).\]
    
    However, note that: 
    \[\left(\brakket{x}{\widetilde{P}} \bbraket{\widetilde{I}^{\otimes n}}{x}\right)^* = \brakket{x}{\widetilde{P}}^* \bbraket{\widetilde{I}^{\otimes n}}{x}^* = \bbraket{\widetilde{P}}{x}\brakket{x}{\widetilde{I}^{\otimes n}}.\]
    
    We know that for any $z \in \mathbb{C}$, $z + z^* = 2\cre\left(z\right)$, giving our result.
\end{proof}

\begin{lemma}\label{lma:m-mean-bound}
    Let $\ket{m} \in \mathbb{R}^{4^n - 1}$ be an arbitrary real vector in the Pauli basis, and let $\ket{x} \in \mathbb{C}^{4^n}$ be an arbitrary normalised vector. Then: 
    \[-\left\|\ket{m}\right\|_2 \leq \bra{x} M_{mean}\left(\ket{m}\right) \ket{x} \leq \left\|\ket{m}\right\|_2.\]
    
    Moreover, when $\ket{m} = 0$, both inequalities are always reached with equality. Otherwise, the first inequality is reached for $\ket{x_-}$ and the second for $\ket{x_+}$, where: 
    \[\ket{x_{\pm}} = \pm\frac{1}{\sqrt{2}} \kket{\widetilde{I}^{\otimes n}} + \frac{1}{\sqrt{2} \left\|\ket{m}\right\|_2} \sum_{P \neq I^{\otimes n}} \bbraket{\widetilde{P}}{m} \kket{\widetilde{P}} = \frac{1}{\sqrt{2}} \begin{pmatrix} \pm1 \\ \ket{m} / \left\|\ket{m}\right\|_2 \end{pmatrix} \in \mathbb{R}^{4^n}.\]
\end{lemma}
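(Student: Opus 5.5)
The plan is to start from the expansion in Lemma~\ref{lma:m-mean-expansion}, which writes the quadratic form as
\[\bra{x} M_{mean}\left(\ket{m}\right) \ket{x} = 2 \sum_{P \neq I^{\otimes n}} \bbraket{\widetilde{P}}{m} \cre\left(\brakket{x}{\widetilde{I}^{\otimes n}} \bbraket{\widetilde{P}}{x}\right).\]
Write $c = \bbraket{\widetilde{I}^{\otimes n}}{x}$ and $\ket{y} = \Pi\ket{x}$, so that $|c|^2 + \|\ket{y}\|_2^2 = 1$ by normalisation. Since $\ket{m}$ is real in the Pauli basis, the sum equals $2\cre\left(c^* \braket{m}{y}\right)$, where $\ket{m}$ is viewed in $\mathbb{R}^{4^n}$ with zero identity component.

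The absolute value is then bounded in two steps. First, $|2\cre(c^*\braket{m}{y})| \leq 2|c|\,|\braket{m}{y}|$. Second, Cauchy--Schwarz gives $|\braket{m}{y}| \leq \|\ket{m}\|_2 \|\ket{y}\|_2$. It remains to bound $2|c|\,\|\ket{y}\|_2$ subject to $|c|^2+\|\ket{y}\|_2^2=1$, and AM--GM ($2ab \leq a^2+b^2$) gives $2|c|\,\|\ket{y}\|_2 \leq 1$. Together these yield $-\|\ket{m}\|_2 \leq \bra{x}M_{mean}(\ket{m})\ket{x} \leq \|\ket{m}\|_2$.

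For the equality cases, if $\ket{m}=0$ the matrix vanishes and both sides are $0$, so equality is trivial. Otherwise, plug in $\ket{x_\pm}$. It is normalised because its two orthogonal pieces each have squared norm $1/2$. It has $c = \pm 1/\sqrt{2}$ and $\ket{y} = \ket{m}/(\sqrt{2}\|\ket{m}\|_2)$, so
\[2\cre\left(c^*\braket{m}{y}\right) = 2 \cdot \left(\pm\tfrac{1}{\sqrt2}\right) \cdot \frac{\|\ket{m}\|_2^2}{\sqrt2\,\|\ket{m}\|_2} = \pm\|\ket{m}\|_2.\]
This checks both inequalities directly.

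There is no real obstacle. The only care needed is the bookkeeping identification of the sum over $P$ with an inner product, which uses that $\ket{m}$ has real Pauli coefficients and no identity component. The equality conditions of Cauchy--Schwarz and AM--GM (namely $\ket{y}\parallel\ket{m}$ and $|c|=\|\ket{y}\|_2$) also explain why $\ket{x_\pm}$ is the natural extremiser.
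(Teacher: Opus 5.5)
Your proof is correct and follows essentially the same route as the paper. Both arguments expand via Lemma~\ref{lma:m-mean-expansion}, apply Cauchy--Schwarz, bound $2|c|\sqrt{1-|c|^2}\leq 1$, and then verify $\ket{x_\pm}$ by direct substitution. The differences are only cosmetic: you apply Cauchy--Schwarz once to the complex inner product $\langle m|y\rangle$ where the paper uses a termwise triangle inequality on absolute values, and you use AM--GM where the paper maximises $t\sqrt{1-t^2}$ by calculus.
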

\begin{proof}[Proof 1]
    By lemma~\ref{lma:m-mean-expansion}: 
    \[\bra{x} M_{mean}\left(\ket{m}\right) \ket{x} = 2 \sum_{P \neq I^{\otimes n}} \bbraket{\widetilde{P}}{m} \cre\left(\brakket{x}{\widetilde{I}^{\otimes n}} \bbraket{\widetilde{P}}{x}\right).\]
    
    Note that, since $\bbraket{\widetilde{P}}{m} \in \mathbb{R}$ for all $P \in \mathbb{P}_n$ by hypothesis, we notably have $\bra{x} M_{mean}\left(\ket{m}\right) \ket{x} \in \mathbb{R}$.
    
    Now using the triangle inequality and the fact that $\left|\cre\left(z\right)\right| \leq \left|z\right|$ for all $z \in \mathbb{C}$: 
    \autoeq{\left|\bra{x} M_{mean}\left(\ket{m}\right) \ket{x}\right| \leq 2 \sum_{P \neq I^{\otimes n}} \left|\bbraket{\widetilde{P}}{m}\right| \left|\cre\left(\brakket{x}{\widetilde{I}^{\otimes n}} \bbraket{\widetilde{P}}{x}\right)\right| \leq 2 \left|\brakket{x}{\widetilde{I}^{\otimes n}}\right| \sum_{P \neq I^{\otimes n}} \left|\bbraket{\widetilde{P}}{m}\right| \left|\bbraket{\widetilde{P}}{x}\right|.}
    
    The Cauchy-Schwarz inequality tells us that, for $a, b \in \mathbb{R}^{m}$, then $\sum_{i} a_i b_i \leq \sqrt{\sum_{i} a_i^2 \sum_{j} b_j^2}$. So: 
    \begin{equation}\begin{aligned}\left|\bra{x} M_{mean}\left(\ket{m}\right) \ket{x}\right| &\leq 2 \left|\brakket{x}{\widetilde{I}^{\otimes n}}\right| \sqrt{\sum_{P \neq I^{\otimes n}} \left|\bbraket{\widetilde{P}}{m}\right|^2 \sum_{Q \neq I^{\otimes n}} \left|\bbraket{\widetilde{Q}}{x}\right|^2} \\ &= 2 \left|\brakket{x}{\widetilde{I}^{\otimes n}}\right| \left\|\ket{m}\right\|_2 \sqrt{\sum_{Q \neq I^{\otimes n}} \left|\bbraket{\widetilde{Q}}{x}\right|^2} \\ &= 2 \left|\brakket{x}{\widetilde{I}^{\otimes n}}\right| \left\|\ket{m}\right\|_2 \sqrt{\sum_{Q \in \mathbb{P}_n} \left|\brakket{x}{\widetilde{Q}}\right|^2 - \left|\brakket{x}{\widetilde{I}^{\otimes n}}\right|^2}.\end{aligned}\end{equation}
    
    Since finally $\braket{x}{x} = 1$, we get: 
    \[\left|\bra{x} M_{mean}\left(\ket{m}\right) \ket{x}\right| \leq 2 \left\|\ket{m}\right\|_2 \left|\brakket{x}{\widetilde{I}^{\otimes n}}\right| \sqrt{1 - \left|\brakket{x}{\widetilde{I}^{\otimes n}}\right|^2}.\]
    
    Using usual analysis tools, we notice that the function $f: \left[0, 1\right] \mapsto \mathbb{R}$ defined by $f\left(t\right) = t\sqrt{1 - t^2}$ has image $\left[0, \frac{1}{2}\right]$, i.e.\ $f\left(t\right) \leq \frac{1}{2}$ for all $t \in \left[0, 1\right]$. Since $\left|\bra{x} M_{mean}\left(\ket{m}\right) \ket{x}\right| \leq 2 \left\|\ket{m}\right\|_2 f\left(\left|\brakket{x}{\widetilde{I}^{\otimes n}}\right|\right)$, and $\bra{x} M_{mean}\left(\ket{m}\right) \ket{x} \in \mathbb{R}$ as mentioned earlier, this yields our result.
\end{proof}
\begin{proof}[Proof 2]
    Let us first verify that both $\ket{x_{\pm}}$ are normalised: 
    \[\braket{x_{\pm}}{x_{\pm}} = \left|\frac{\pm 1}{\sqrt{2}}\right|^2 + \left|\frac{1}{\sqrt{2}}\right|^2 \left\|\frac{\ket{m}}{\left\|\ket{m}\right\|_2}\right\|_2^2 = \frac{1}{2} + \frac{1}{2} = 1.\]
    
    We now verify they indeed saturate the inequalities. Recall that $\ket{m}$ has real coefficients in the Pauli basis by hypothesis. Hence, by lemma~\ref{lma:m-mean-expansion}:  
    \begin{equation}\begin{aligned}\bra{x_{\pm}} M_{mean}\left(\ket{m}\right) \ket{x_{\pm}} &= 2 \sum_{P \neq I^{\otimes n}} \bbraket{\widetilde{P}}{m} \cre\left(\brakket{x_{\pm}}{\widetilde{I}^{\otimes n}} \bbraket{\widetilde{P}}{x_{\pm}}\right) \\ &= 2 \sum_{P \neq I^{\otimes n}} \bbraket{\widetilde{P}}{m} \left(\pm \frac{1}{\sqrt{2}}\right) \left(\frac{1}{\sqrt{2} \left\|\ket{m}\right\|_2}\bbraket{\widetilde{P}}{m}\right) \\ &= \pm \frac{1}{\left\|\ket{m}\right\|_2} \sum_{P \neq I^{\otimes n}} \brakket{m}{\widetilde{P}}\bbraket{\widetilde{P}}{m} \\ &= \pm \frac{1}{\left\|\ket{m}\right\|_2} \sum_{P \in \mathbb{P}_n} \brakket{m}{\widetilde{P}}\bbraket{\widetilde{P}}{m} \\ &= \pm\frac{\braket{m}{m}}{\left\|m\right\|_2} \\ &= \pm\left\|m\right\|_2,\end{aligned}\end{equation}
    where we used that $\bbraket{\widetilde{I}^{\otimes n}}{m} = 0$.
\end{proof}
\begin{intuition*}
    The vectors for the second proof were found by trying to reach all inequalities of the first proof with equality. To do so, we used the fact that $\left|\cre\left(z\right)\right| = \left|z\right| \iff z \in \mathbb{R}$, that the triangle inequality $\left|\sum_{i} a_i\right| \leq \sum_{i} \left|a_i\right|$ is reached with equality when all the terms of the sum have the same sign, that the Cauchy-Schwarz inequality $a \dotprod b \leq \left\|a\right\|_2\cdot \left\|b\right\|_2$ is reached with equality when $a = \gamma b$ for some $\gamma \in \mathbb{R}_{\geq 0}$, and that $f\left(t\right) = t \sqrt{1 - t^2}$ is maximised for $t = \frac{1}{\sqrt{2}}$.
    
    In other words, we wished for all components of $\ket{x}$ to be real, that $\ket{\pi\left(x\right)} = \gamma \ket{m}$ for some $\gamma \in \mathbb{R}$ that preserves $\braket{x}{x} = 1$, and that $\left|\brakket{x}{\widetilde{I}^{\otimes n}}\right| = \frac{1}{\sqrt{2}}$.
\end{intuition*}

We can compare the risks at several levels of detail. Full moment matrices give the most direct condition, while a common eigenbasis allows a direction-by-direction comparison. When this information is unavailable, we can still obtain bounds from the extremal eigenvalues and the means.

\begin{lemma}[Out-of-distribution  bounds for pseudo-risks] \label{lma:x-inequalities}
    Let $D^{\left(1\right)}$ and $D^{\left(2\right)}$ be $n$ qubit distributions over mixed states of moment matrix $M^{\left(i\right)} = \bar{\gamma}^{\left(i\right)} \oplus \mathcal{S}^{\left(i\right)} + M_{mean}^{\left(i\right)}$, mean $\mu^{\left(i\right)}$, and expressivity measure $\eta^{\left(i\right)}$, for $i \in \left\{1, 2\right\}$. Moreover, let $\ket{x} \in \mathbb{C}^{4^n}$ be an arbitrary normalised vector. We write $r_i := r_{D^{\left(i\right)}}$.
    
    Let $\alpha, \alpha', \beta \in \mathbb{R}_{\geq 0}$. All the following propositions hold.
    \begin{enumerate}
    \item We always have that:
    \[\lambda_{min}\left(M^{\left(1\right)} - \alpha M^{\left(2\right)}\right) \geq \bar{\gamma}^{\left(1\right)} - \alpha\bar{\gamma}^{\left(2\right)} - \beta \implies r_1\left(\ket{x}\right) \leq \alpha r_2\left(\ket{x}\right) + \beta.\]

    If moreover the hypothesis is tight, $\lambda_{min}\left(M^{\left(1\right)} - \alpha M^{\left(2\right)}\right) = \bar{\gamma}^{\left(1\right)} - \alpha\bar{\gamma}^{\left(2\right)} - \beta$, then any eigenvector $\ket{x}$ of $M^{\left(1\right)} - \alpha M^{\left(2\right)}$ associated to its smallest eigenvalue is such that the Out-of-distribution  (out-of-distribution generalisation) bound is reached with equality, $r_1\left(\ket{x}\right) = \alpha r_2\left(\ket{x}\right) + \beta$.
    
    \item If $\mathcal{S}^{\left(1\right)}$ and $\mathcal{S}^{\left(2\right)}$ are diagonalisable in the same basis, $\mu^{\left(1\right)} = \mu^{\left(2\right)} = 0$ and $\eta^{\left(2\right)} > 0$:
    \[\alpha \geq \frac{\bar{\gamma}^{\left(1\right)} - \lambda_j\left(\mathcal{S}^{\left(1\right)}\right) - \beta}{\bar{\gamma}^{\left(2\right)} - \lambda_j\left(\mathcal{S}^{\left(2\right)}\right)}, \forall j \implies r_1\left(\ket{x}\right) \leq \alpha r_2\left(\ket{x}\right) + \beta.\]

    If moreover the hypothesis is tight, $\alpha = \max_j \left(\bar{\gamma}^{\left(1\right)} - \lambda_j\left(\mathcal{S}^{\left(1\right)}\right) - \beta\right)/\left(\bar{\gamma}^{\left(2\right)} - \lambda_j\left(\mathcal{S}^{\left(2\right)}\right)\right)$, then any common eigenvector $\ket{x}$ of $\mathcal{S}^{\left(1\right)}$ and $\mathcal{S}^{\left(2\right)}$ associated to the eigenvalues that maximise $\left(\bar{\gamma}^{\left(1\right)} - \lambda_j\left(\mathcal{S}^{\left(1\right)}\right) - \beta\right)/\left(\bar{\gamma}^{\left(2\right)} - \lambda_j\left(\mathcal{S}^{\left(2\right)}\right)\right)$ is such that the Out-of-distribution  bound is reached with equality, $r_1\left(\ket{x}\right) = \alpha r_2\left(\ket{x}\right) + \beta$.
    
    \item If $\eta^{\left(2\right)} > 0$ and $r_2\left(\ket{x}\right) \geq 0$: 
        \[\alpha \geq \alpha' \geq \frac{\bar{\gamma}^{\left(1\right)} - \lambda_{min}\left(\mathcal{S}^{\left(1\right)}\right)}{\eta^{\left(2\right)}} \implies r_1\left(\ket{x}\right) \leq \alpha r_2\left(\ket{x}\right) + \left\|\mu^{\left(1\right)} - \alpha' \mu^{\left(2\right)}\right\|_2.\]
    
    \item If $\eta^{\left(2\right)} > 0$: 
    \[\alpha \geq \frac{\bar{\gamma}^{\left(1\right)} - \lambda_{min}\left(\mathcal{S}^{\left(1\right)}\right)}{\eta^{\left(2\right)}} \implies r_1\left(\ket{x}\right) \leq \alpha\left(r_2\left(\ket{x}\right) + \left\|\mu^{\left(2\right)}\right\|_2\right) + \left\|\mu^{\left(1\right)}\right\|_2.\]
    \end{enumerate}
\end{lemma}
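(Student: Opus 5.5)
The plan is to reduce all four parts to Rayleigh-quotient estimates on the decomposition of Lemma~\ref{lma:risk-without-identity}. We write $\ket{x} = c\kket{\widetilde{I}^{\otimes n}} + d\ket{\bar{\pi}(x)}$ with $|c|^2+d^2=1$. Then for $i\in\{1,2\}$,
\[
r_i(\ket{x}) = d^2\left(\bar{\gamma}^{(i)} - q_i\right) - \bra{x}M^{(i)}_{mean}\ket{x},
\qquad
q_i := \bra{\bar{\pi}(x)}\mathcal{S}^{(i)}\ket{\bar{\pi}(x)}.
\]
The degenerate case $\Pi\ket{x}=0$ is treated separately. There $r_i(\ket{x}) = \bar{\gamma}^{(i)}-|c|^2\bar{\gamma}^{(i)} = 0$. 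Also $\bra{x}M_{mean}\ket{x}=0$ by Lemma~\ref{lma:m-mean-expansion}, since every term contains a factor $\bbraket{\widetilde{P}}{x}$ with $P\neq I$. So every claimed inequality reduces to $0\le \beta$ or $0\le$ a norm, which holds trivially.

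\textbf{Part 1} needs no decomposition. Using $\braket{x}{x}=1$,
\[
r_1 - \alpha r_2 - \beta = \left(\bar{\gamma}^{(1)}-\alpha\bar{\gamma}^{(2)}-\beta\right) - \bra{x}\left(M^{(1)}-\alpha M^{(2)}\right)\ket{x}.
\]
The Rayleigh property gives $\bra{x}(M^{(1)}-\alpha M^{(2)})\ket{x}\ge\lambda_{min}(M^{(1)}-\alpha M^{(2)})$, so the right-hand side is $\le 0$ under the hypothesis. In the tight case the bound is attained with equality by a minimal eigenvector of $M^{(1)}-\alpha M^{(2)}$.

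\textbf{Part 2.} The means vanish, so $M^{(i)}_{mean}=0$. We expand $\ket{\bar{\pi}(x)}=\sum_j a_j\ket{v_j}$ in the common eigenbasis of the $\mathcal{S}^{(i)}$ restricted to the non-identity subspace, with $p_j=|a_j|^2$ and $\sum_j p_j=1$. This gives
\[
r_i = d^2\sum_j p_j\left(\bar{\gamma}^{(i)}-\lambda_j(\mathcal{S}^{(i)})\right).
\]
Since $\eta^{(2)}>0$, each denominator $\bar{\gamma}^{(2)}-\lambda_j(\mathcal{S}^{(2)})$ is positive. The hypothesis therefore rearranges termwise to
\[
\bar{\gamma}^{(1)}-\lambda_j(\mathcal{S}^{(1)}) \le \alpha\left(\bar{\gamma}^{(2)}-\lambda_j(\mathcal{S}^{(2)})\right)+\beta .
\]
Averaging against $p_j$ yields $r_1\le \alpha r_2+\beta d^2\le\alpha r_2+\beta$. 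For tightness we take $\ket{x}$ equal to a maximizing common eigenvector lying in the non-identity subspace. Then $d=1$ and only one $p_j$ is nonzero, so every step is an equality. One point to state explicitly: the common eigenbasis can be chosen inside $\mathrm{range}(\Pi)$, since both $\mathcal{S}^{(i)}$ annihilate $\kket{\widetilde{I}^{\otimes n}}$.

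\textbf{Parts 3 and 4.} We use $q_1\ge\lambda_{min}(\mathcal{S}^{(1)})$ and $q_2\le\lambda_{max}(\mathcal{S}^{(2)})$. These give
\[
r_1\le d^2\left(\bar{\gamma}^{(1)}-\lambda_{min}(\mathcal{S}^{(1)})\right)-\bra{x}M^{(1)}_{mean}\ket{x},
\qquad
d^2\eta^{(2)}\le r_2+\bra{x}M^{(2)}_{mean}\ket{x}.
\]
Lemma~\ref{lma:s-properties} gives $\bar{\gamma}^{(1)}-\lambda_{min}(\mathcal{S}^{(1)})\ge0$, and $\eta^{(2)}>0$, so $d^2(\bar{\gamma}^{(1)}-\lambda_{min}(\mathcal{S}^{(1)}))\le \alpha' d^2\eta^{(2)}$. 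Combining,
\[
r_1\le\alpha' r_2 - \bra{x}M_{mean}\!\left(\ket{\mu^{(1)}}-\alpha'\ket{\mu^{(2)}}\right)\ket{x}.
\]
This uses linearity from Lemma~\ref{lma:generalised-mean-linear}. Lemma~\ref{lma:m-mean-bound} bounds the last term by $\|\mu^{(1)}-\alpha'\mu^{(2)}\|_2$. Finally $r_2\ge0$ lets us replace $\alpha'$ by $\alpha\ge\alpha'$ in front of $r_2$.

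For Part 4 we keep the two mean terms separate and bound each by its norm using Lemma~\ref{lma:m-mean-bound}. This gives $d^2\eta^{(2)}\le r_2+\|\mu^{(2)}\|_2$, where the right-hand side is now automatically nonnegative, so no sign assumption on $r_2$ is needed. It also gives $-\bra{x}M^{(1)}_{mean}\ket{x}\le\|\mu^{(1)}\|_2$. Multiplying the first bound by $\alpha$ gives the claim.

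The main obstacle is the bookkeeping of signs and of the factor $d^2$. The additive $\beta$ in Part 2 must be absorbed via $d^2\le1$. In Part 3, passing from $\alpha'$ to $\alpha$ is exactly where $r_2\ge0$ is needed, whereas Part 4 avoids that assumption by using the nonnegativity of $r_2+\|\mu^{(2)}\|_2$.
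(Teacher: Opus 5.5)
Your proof is correct and follows the paper's argument: Part 1 is the Rayleigh bound on $M^{(1)}-\alpha M^{(2)}$, and Parts 3--4 combine the decomposition of Lemma~\ref{lma:risk-without-identity} with Lemmas~\ref{lma:generalised-mean-linear} and~\ref{lma:m-mean-bound}, exactly as in the paper. The differences are cosmetic. In Part 2 you average the termwise inequality directly instead of reducing to Part 1 through the simultaneous diagonalisation of $M^{(1)}$ and $M^{(2)}$, and you rightly make explicit that the equality-attaining eigenvector must lie in $\mathrm{range}(\Pi)$. In Part 4 you bound the two mean terms separately rather than setting $\alpha'=\alpha$ in the Part 3 chain and applying the triangle inequality.
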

\begin{remark*}
    Note that, by properties of the second moment (lemma~\ref{lma:s-properties}), when both distributions are pure and $\eta^{\left(2\right)} > 0$: 
    \[\frac{1 - \lambda_{min}\left(\mathcal{S}^{\left(1\right)}\right)}{\eta^{\left(2\right)}} = \frac{1 - \lambda_{min}\left(\mathcal{S}^{\left(1\right)}\right)}{1 - \lambda_{max}\left(\mathcal{S}^{\left(2\right)}\right)} \geq \frac{1 - \frac{1}{2^n + 1}}{1 - \frac{1}{2^n + 1}} = 1.\]
    
    Therefore, if both distributions are pure-state distributions, then we can never take $\alpha < 1$ or $\alpha' < 1$ for the third and fourth inequalities.
\end{remark*}
\begin{proof}[Proof 1]
    We suppose by hypothesis that $\lambda_{min}\left(M^{\left(1\right)} - \alpha M^{\left(2\right)}\right) \geq \bar{\gamma}^{\left(1\right)} - \alpha\bar{\gamma}^{\left(2\right)} - \beta$. We know by properties of Rayleigh quotients (definition~\ref{def:rayleigh-quotient}) that: 
    \[\bra{x} \left(M^{\left(1\right)} - \alpha M^{\left(2\right)}\right)\ket{x} \geq \lambda_{min}\left(M^{\left(1\right)} - \alpha M^{\left(2\right)}\right).\]
    
    Hence, by hypothesis:
    \autoeq{\bra{x} \left(M^{\left(1\right)} - \alpha M^{\left(2\right)}\right)\ket{x} \geq \bar{\gamma}^{\left(1\right)} - \alpha\bar{\gamma}^{\left(2\right)} - \beta \iff \alpha\bar{\gamma}^{\left(2\right)} - \alpha \bra{x}M^{\left(2\right)}\ket{x} + \beta \geq \bar{\gamma}^{\left(1\right)} - \bra{x} M^{\left(1\right)} \ket{x} \iff \alpha r_2\left(\ket{x}\right) + \beta \geq r_1\left(\ket{x}\right).}

    All inequalities become equalities if $\lambda_{min}\left(M^{\left(1\right)} - \alpha M^{\left(2\right)}\right) = \bar{\gamma}^{\left(1\right)} - \alpha\bar{\gamma}^{\left(2\right)} - \beta$ and $\ket{x}$ is an eigenvector of $M^{\left(1\right)} - \alpha M^{\left(2\right)}$ associated to its smallest eigenvalue.
\end{proof}
\begin{proof}[Proof 2]
    Note that, since $\mu^{\left(1\right)} = \mu^{\left(2\right)} = 0$, we have $M^{\left(i\right)} = M_{core}^{\left(i\right)} = \bar{\gamma}^{\left(i\right)} \oplus \mathcal{S}^{\left(i\right)}$. Since $\mathcal{S}^{\left(1\right)}$ and $\mathcal{S}^{\left(2\right)}$ are diagonalisable in the same basis, this tells us $M^{\left(1\right)}$ and $M^{\left(2\right)}$ are also diagonalisable in the same basis, of eigenvalues $\lambda_0\left(M^{\left(i\right)}\right) = \bar{\gamma}^{\left(i\right)}$, and $\lambda_j\left(M^{\left(i\right)}\right) = \lambda_j\left(\mathcal{S}^{\left(i\right)}\right)$ for $j \geq 1$.
    
    By hypothesis, we know that $\eta^{\left(2\right)} > 0 \iff \lambda_j\left(M^{\left(2\right)}\right) \leq \lambda_{max}\left(\mathcal{S}^{\left(2\right)}\right) < \bar{\gamma}^{\left(2\right)}$, and hence $\bar{\gamma}^{\left(2\right)} - \lambda_j\left(M^{\left(2\right)}\right) > 0$ for all $j \geq 1$. Also by hypothesis, we get that for all $j \geq 1$: 
        \begin{equation}\begin{aligned}&\alpha \geq \frac{\bar{\gamma}^{\left(1\right)} - \lambda_j\left(M^{\left(1\right)}\right) - \beta}{\bar{\gamma}^{\left(2\right)} - \lambda_j\left(M^{\left(2\right)}\right)} \\ &\iff \alpha \bar{\gamma}^{\left(2\right)} - \alpha \lambda_j\left(M^{\left(2\right)}\right) \geq \bar{\gamma}^{\left(1\right)} - \lambda_j\left(M^{\left(1\right)}\right) - \beta \\ &\iff \lambda_j\left(M^{\left(1\right)}\right) - \alpha \lambda_j\left(M^{\left(2\right)}\right) \geq \bar{\gamma}^{\left(1\right)} - \alpha \bar{\gamma}^{\left(2\right)} - \beta \\ &\iff \lambda_j\left(M^{\left(1\right)} - \alpha M^{\left(2\right)}\right) \geq \bar{\gamma}^{\left(1\right)} - \alpha \bar{\gamma}^{\left(2\right)} - \beta,\end{aligned}\end{equation}
    where, for the last step, we used the fact that $M^{\left(1\right)}$ and $M^{\left(2\right)}$ are diagonalisable in the same basis by hypothesis; and hence, considering the basis where they are both diagonal, the $j$\Th diagonal component of $M^{\left(1\right)} - \alpha M^{\left(2\right)}$ (i.e.\ its $j$\Th eigenvalue) can be computed using the $j$\Th diagonal component of $M^{\left(1\right)}$ and $M^{\left(2\right)}$.
    
    Now, for $j = 0$, we also have: 
    \[\lambda_0\left(M^{\left(1\right)} - \alpha M^{\left(2\right)}\right) = \lambda_0\left(M^{\left(1\right)}\right) - \alpha \lambda_0\left(M^{\left(2\right)}\right) = \bar{\gamma}^{\left(1\right)} - \alpha\cdot \bar{\gamma}^{\left(2\right)} \geq \bar{\gamma}^{\left(1\right)} - \alpha \bar{\gamma}^{\left(2\right)} - \beta,\]
    since $\beta \geq 0$ by hypothesis. 

    Since it holds for all the eigenvalues, we have in particular shown $\lambda_{min}\left(M^{\left(1\right)} - \alpha M^{\left(2\right)}\right) \geq \bar{\gamma}^{\left(1\right)} - \alpha\bar{\gamma}^{\left(2\right)} - \beta$. This shows that we have the hypotheses to use the first inequality, giving the result.

    Moreover, when the hypothesis is tight, one can easily check that the index that maximises $\left(\bar{\gamma}^{\left(1\right)} - \lambda_j\left(M^{\left(1\right)}\right) - \beta\right)/\left(\bar{\gamma}^{\left(2\right)} - \lambda_j\left(M^{\left(2\right)}\right)\right)$ is the same as the one that minimises $\lambda_j\left(M^{\left(1\right)} - \alpha M^{\left(2\right)}\right)$. Picking the eigenvector associated to these eigenvalues thus turns all inequalities into equalities by point (1).
\end{proof}
\begin{proof}[Proof 3]
    If $\Pi\ket{x} = 0$, then $r_1(\ket{x}) = r_2(\ket{x}) = 0$ and the bound is immediate. Otherwise, by lemma~\ref{lma:risk-without-identity}, recall that we can write: 
    \[r_i\left(\ket{x}\right) = \left\|\ket{\pi\left(x\right)}\right\|_2^2 \left(\bar{\gamma}^{\left(i\right)} - \frac{\bra{\pi\left(x\right)} \mathcal{S}^{\left(i\right)} \ket{\pi\left(x\right)}}{\braket{\pi\left(x\right)}{\pi\left(x\right)}}\right) - \bra{x} M_{mean}^{\left(i\right)} \ket{x} = \left\|\ket{\pi\left(x\right)}\right\|_2^2 \left(\bar{\gamma}^{\left(i\right)} - q^{\left(i\right)}\left(\ket{x}\right)\right) - \bra{x} M_{mean}^{\left(i\right)} \ket{x},\]
    where we left $q^{\left(i\right)} = \left(\bra{\pi\left(x\right)} \mathcal{S}^{\left(i\right)} \ket{\pi\left(x\right)}\right)/\left(\braket{\pi\left(x\right)}{\pi\left(x\right)}\right)$, which we recognise to be a Rayleigh quotient. We bound each term independently. 
    \begin{enumerate}[left=0pt]
        \item We consider the first term. By properties of Rayleigh quotients (definition~\ref{def:rayleigh-quotient}), we know: 
        \[\lambda_{min}\left(\mathcal{S}^{\left(i\right)}\right) \leq q^{\left(i\right)}\left(\ket{x}\right) \leq \lambda_{max}\left(\mathcal{S}^{\left(i\right)}\right).\]
         
        Therefore, using the fact $\eta^{\left(2\right)} > 0 \iff \bar{\gamma}^{\left(2\right)} - \lambda_{max}\left(\mathcal{S}^{\left(2\right)}\right) > 0$ by hypothesis and the fact $\lambda_{min}\left(\mathcal{S}^{\left(1\right)}\right) \leq \bar{\gamma}^{\left(1\right)} \iff \bar{\gamma}^{\left(1\right)} - \lambda_{min}\left(\mathcal{S}^{\left(1\right)}\right) \geq 0$ by lemma~\ref{lma:s-properties}: 
        \begin{equation}\begin{aligned}\left\|\ket{\pi\left(x\right)}\right\|_2^2 \left(\bar{\gamma}^{\left(1\right)} - q^{\left(1\right)}\left(\ket{x}\right)\right) &\leq \left\|\ket{\pi\left(x\right)}\right\|_2^2 \left(\bar{\gamma}^{\left(1\right)} - \lambda_{min}\left(\mathcal{S}^{\left(1\right)}\right)\right) \\ &= \left\|\ket{\pi\left(x\right)}\right\|_2^2 \left(\bar{\gamma}^{\left(2\right)} - \lambda_{max}\left(\mathcal{S}^{\left(2\right)}\right)\right)\cdot \frac{\bar{\gamma}^{\left(1\right)} - \lambda_{min}\left(\mathcal{S}^{\left(1\right)}\right)}{\bar{\gamma}^{\left(2\right)} - \lambda_{max}\left(\mathcal{S}^{\left(2\right)}\right)} \\ &\leq \alpha' \left\|\ket{\pi\left(x\right)}\right\|_2^2 \left(\bar{\gamma}^{\left(2\right)} - \lambda_{max}\left(\mathcal{S}^{\left(2\right)}\right)\right).\end{aligned}\end{equation}
        \item We now consider the second term. By definition of generalised mean matrix, we know that: 
        \[M_{mean}^{\left(i\right)} = M_{mean}\left(\kket{\mu^{\left(i\right)}}\right).\]
        
        By linearity (lemma~\ref{lma:generalised-mean-linear}), this means that: 
        \[M_{mean}^{\left(1\right)} - \alpha' M_{mean}^{\left(2\right)} = M_{mean}\left(\kket{\mu^{\left(1\right)}} - \alpha' \kket{\mu^{\left(2\right)}}\right) = M_{mean}\left(\kket{\mu^{\left(1\right)} - \alpha'\mu^{\left(2\right)}}\right).\]
        
        By lemma~\ref{lma:m-mean-bound}, and since $\left\|\kket{A}\right\|_2 = \left\|A\right\|_2$ by properties of vectorisation, this thus implies: 
        \autoeq{-\left\|\mu^{\left(1\right)} - \alpha' \mu^{\left(2\right)}\right\|_2 \leq \bra{x} \left(M_{mean}^{\left(1\right)} - \alpha' M_{mean}^{\left(2\right)}\right) \ket{x} \leq \left\|\mu^{\left(1\right)} - \alpha' \mu^{\left(2\right)}\right\|_2.}
        
        This means that:
        \autoeq{- \bra{x} M_{mean}^{\left(1\right)} \ket{x} = -\alpha' \bra{x} M_{mean}^{\left(2\right)} \ket{x} + \bra{x}\left(\alpha' M_{mean}^{\left(2\right)} - M_{mean}^{\left(1\right)}\right) \ket{x} \leq -\alpha' \bra{x} M_{mean}^{\left(2\right)} \ket{x} + \left\|\alpha' \mu^{\left(2\right)} - \mu^{\left(1\right)}\right\|_2.}
    \end{enumerate}
    
    Putting both together, we get, using the fact $\alpha \geq \alpha'$ and that $r_2\left(\ket{x}\right) \geq 0$ by hypothesis:
    \begin{equation}\begin{aligned}\label{eq:odg-bound-for-pseudo-risk-with-alpha-alphaprime} r_1\left(\ket{x}\right) &= \left\|\ket{\pi\left(x\right)}\right\|_2^2 \left(\bar{\gamma}^{\left(1\right)} - q^{\left(1\right)}\left(\ket{x}\right)\right) - \bra{x} M_{mean}^{\left(1\right)} \ket{x} \\ &\leq \alpha' \left\|\ket{\pi\left(x\right)}\right\|_2^2 \left(\bar{\gamma}^{\left(2\right)} - q^{\left(2\right)}\left(\ket{x}\right)\right) - \alpha' \bra{x} M_{mean}^{\left(2\right)} \ket{x} + \left\|\alpha' \mu^{\left(2\right)} - \mu^{\left(1\right)}\right\|_2 \\ &= \alpha' r_2\left(\ket{x}\right) + \left\|\alpha' \mu^{\left(2\right)} - \mu^{\left(1\right)}\right\|_2 \\ &\leq \alpha r_2\left(\ket{x}\right) + \left\|\alpha' \mu^{\left(2\right)} - \mu^{\left(1\right)}\right\|_2.\end{aligned}\end{equation}
    
    This gives our result, since $\left\|\ket{y}\right\|_2 = \left\|-\ket{y}\right\|_2$ for any $\ket{y} \in \mathbb{C}^{m}$.
\end{proof}
\begin{proof}[Proof 4]
    Terminating equation~\ref{eq:odg-bound-for-pseudo-risk-with-alpha-alphaprime} early (getting rid of the hypothesis that $r_2\left(\ket{x}\right) \geq 0$), we know:
    \[r_1\left(\ket{x}\right) \leq \alpha r_2\left(\ket{x}\right) + \left\|\mu^{\left(1\right)} - \alpha \mu^{\left(2\right)}\right\|.\]
    
    But then, using the triangle inequality: 
    \autoeq{r_1\left(\ket{x}\right) \leq \alpha r_2\left(\ket{x}\right) + \left\|\mu^{\left(1\right)}\right\| + \alpha \left\|\mu^{\left(2\right)}\right\| = \alpha \left(r_2\left(\ket{x}\right) + \left\|\mu^{\left(2\right)}\right\|\right) + \left\|\mu^{\left(1\right)}\right\|.}
\end{proof}

The pseudo-risk bounds give bounds on the unitary-learning risk by restricting to normalized vectorized unitaries. However, this restriction can prevent the bounds from being tight. A vector that saturates a spectral bound must also represent a unitary for the corresponding equality to be attained by a pair of unitaries.

\begin{theorem}[Out-of-distribution generalization bounds] \label{thm:inequalities}
    Let $D^{\left(1\right)}$ and $D^{\left(2\right)}$ be $n$ qubit distributions over mixed states of moment matrix $M^{\left(i\right)} = \bar{\gamma}^{\left(i\right)} \oplus \mathcal{S}^{\left(i\right)} + M_{mean}^{\left(i\right)}$, mean $\mu^{\left(i\right)}$, and expressivity measure $\eta^{\left(i\right)}$, for $i \in \left\{1, 2\right\}$. Moreover, let $U, V$ be arbitrary unitaries. We write $R_i := R_{D^{\left(i\right)}}$.
    
    Let $\alpha, \alpha', \beta \in \mathbb{R}_{\geq 0}$. All the following propositions hold.
    \begin{enumerate}
    \item We always have that:
    \[\lambda_{min}\left(M^{\left(1\right)} - \alpha M^{\left(2\right)}\right) \geq \bar{\gamma}^{\left(1\right)} - \alpha\bar{\gamma}^{\left(2\right)} - \beta \implies R_1\left(U, V\right) \leq \alpha R_2\left(U, V\right) + \beta.\]

    If moreover the hypothesis is tight, $\lambda_{min}\left(M^{\left(1\right)} - \alpha M^{\left(2\right)}\right) = \bar{\gamma}^{\left(1\right)} - \alpha\bar{\gamma}^{\left(2\right)} - \beta$, and if there is an eigenvector $\ket{x}$ of $M^{\left(1\right)} - \alpha M^{\left(2\right)}$ associated to its smallest eigenvalue such that $\ket{x} = \frac{1}{\sqrt{2^n}} \kket{W}$ for some unitary $W$, then there exist unitaries $U, V$ such that the out-of-distribution  generalisation bound is reached with equality, $R_1\left(U, V\right) = \alpha R_2\left(U, V\right) + \beta$.
    
    \item If $\mathcal{S}^{\left(1\right)}$ and $\mathcal{S}^{\left(2\right)}$ are diagonalisable in the same basis, $\mu^{\left(1\right)} = \mu^{\left(2\right)} = 0$ and $\eta^{\left(2\right)} > 0$:
    \[\alpha \geq \frac{\bar{\gamma}^{\left(1\right)} - \lambda_j\left(\mathcal{S}^{\left(1\right)}\right) - \beta}{\bar{\gamma}^{\left(2\right)} - \lambda_j\left(\mathcal{S}^{\left(2\right)}\right)}, \forall j \implies R_1\left(U, V\right) \leq \alpha R_2\left(U, V\right) + \beta.\]

    If moreover the hypothesis is tight, $\alpha = \max_j \left(\bar{\gamma}^{\left(1\right)} - \lambda_j\left(\mathcal{S}^{\left(1\right)}\right) - \beta\right)/\left(\bar{\gamma}^{\left(2\right)} - \lambda_j\left(\mathcal{S}^{\left(2\right)}\right)\right)$, and if there is an eigenvector $\ket{x}$ of $\mathcal{S}^{\left(1\right)}$ and $\mathcal{S}^{\left(2\right)}$ associated to the eigenvalues that maximise $\left(\bar{\gamma}^{\left(1\right)} - \lambda_j\left(\mathcal{S}^{\left(1\right)}\right) - \beta\right)/\left(\bar{\gamma}^{\left(2\right)} - \lambda_j\left(\mathcal{S}^{\left(2\right)}\right)\right)$ such that $\ket{x} = \frac{1}{\sqrt{2^n}}\kket{W}$ for some unitary $W$, then there exist unitaries $U, V$ such that the out-of-distribution generalization bound is reached with equality, $R_1\left(U, V\right) = \alpha R_2\left(U, V\right) + \beta$.
    
    \item If $\eta^{\left(2\right)} > 0$: 
        \[\alpha \geq \alpha' \geq \frac{\bar{\gamma}^{\left(1\right)} - \lambda_{min}\left(\mathcal{S}^{\left(1\right)}\right)}{\eta^{\left(2\right)}} \implies R_1\left(U, V\right) \leq \alpha R_2\left(U, V\right) + \left\|\mu^{\left(1\right)} - \alpha' \mu^{\left(2\right)}\right\|_2.\]
    
    \item If $\eta^{\left(2\right)} > 0$: 
    \[\alpha \geq \frac{\bar{\gamma}^{\left(1\right)} - \lambda_{min}\left(\mathcal{S}^{\left(1\right)}\right)}{\eta^{\left(2\right)}} \implies R_1\left(U, V\right) \leq \alpha\left(R_2\left(U, V\right) + \left\|\mu^{\left(2\right)}\right\|_2\right) + \left\|\mu^{\left(1\right)}\right\|_2.\]
    \end{enumerate}
\end{theorem}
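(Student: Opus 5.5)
The plan is to reduce each of the four items of Theorem~\ref{thm:inequalities} to the corresponding item of Lemma~\ref{lma:x-inequalities}, by specialising the pseudo-risk to vectorised unitaries. Given unitaries $U,V$, we set $\ket{x} := \ket{w(U,V)} = \frac{1}{\sqrt{2^n}}\kket{V^\dagger U}$. This vector is normalised by Lemma~\ref{lma:w-normalised}, and by Lemma~\ref{lma:risk-rayleigh} we have $r_{D^{(i)}}(\ket{w(U,V)}) = R_{D^{(i)}}(U,V)$ for $i\in\{1,2\}$. The inequalities in items 1, 2 and 4 of Lemma~\ref{lma:x-inequalities} hold for every normalised $\ket{x}$, with hypotheses that do not depend on $\ket{x}$. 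Substituting $\ket{w(U,V)}$ therefore gives the risk inequalities directly.

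Item 3 has the additional hypothesis $r_2(\ket{x})\geq 0$. For $\ket{x}=\ket{w(U,V)}$ this is automatic, because $r_2(\ket{w}) = R_2(U,V) = \frac12\mathbb{E}\|U\rho U^\dagger - V\rho V^\dagger\|_2^2 \geq 0$ by Definition~\ref{def:risk}. Hence item 3 of Lemma~\ref{lma:x-inequalities} also transfers verbatim.

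For the tightness clauses in items 1 and 2, suppose there is an extremal eigenvector of the form $\ket{x}=\frac{1}{\sqrt{2^n}}\kket{W}$ with $W$ unitary. We choose $V=I$ and $U=W$, so that $\ket{w(U,V)}=\ket{x}$. The equality statements of Lemma~\ref{lma:x-inequalities} (items 1 and 2) give $r_1(\ket{x})=\alpha r_2(\ket{x})+\beta$. Translating back through Lemma~\ref{lma:risk-rayleigh}, this is $R_1(U,V)=\alpha R_2(U,V)+\beta$.

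In item 2 there is one point to check. The lemma's tightness holds for a common eigenvector of $\mathcal{S}^{(1)},\mathcal{S}^{(2)}$, and that eigenvector must lie in the non-identity subspace. When both means vanish, $M^{(i)}=\bar\gamma^{(i)}\oplus\mathcal{S}^{(i)}$, so such an eigenvector of the $\mathcal{S}$'s is also an eigenvector of both $M^{(i)}$ with the corresponding eigenvalues $\lambda_j$. The argument in the proof of item 2 of the lemma then applies unchanged.

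The only real subtlety is this bookkeeping: making sure the eigenvector being used is the one indexed by $j\geq1$, and not the identity direction. I expect no genuine obstacle; the substance was already done in Lemma~\ref{lma:x-inequalities}.
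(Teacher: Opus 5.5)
Your proposal is correct and follows the paper's proof: set $\ket{x}=\ket{w(U,V)}$ in Lemma~\ref{lma:x-inequalities} and use Lemma~\ref{lma:risk-rayleigh}; the extra hypothesis $r_2(\ket{x})\geq 0$ in item 3 holds automatically because $R_2(U,V)\geq 0$, and tightness is realised by taking $U=W$, $V=I$. Your remark that the item-2 eigenvector must lie in the traceless subspace (not the identity direction) is a sensible clarification of a point the paper leaves implicit.
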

\begin{remark*}
    The first inequality is very hard to use. The second one is harder to use than the third and the fourth, and may not even be tighter, but has the great advantage of giving a candidate for verifying the tightness of the resulting inequality.
\end{remark*}
\begin{remark*}
    The fourth inequality yields theorem~\ref{thm-main-text:general-unitary-comparison} in the main text, by letting $\beta = 0$ and by upper-bounding $\bar{\gamma}^{\left(1\right)} - \lambda_{min}\left(\mathcal{S}^{\left(1\right)}\right) \leq \bar{\gamma}^{\left(1\right)} \leq 1$. 
    Similarly, theorem~\ref{thm-main-text:tight-unitary-comparison} in the main text is a special case of the second inequality. Indeed, the main-text theorem requires that both $\mathcal{S}^{\left(1\right)}$ and $\mathcal{S}^{\left(2\right)}$ are diagonalisable in the Pauli basis. This implies that they admit a common eigenbasis of vectorised unitary matrices, which precisely fulfils the hypothesis of the appendix theorem that some specific eigenvector is the vectorisation of some unitary matrix.
\end{remark*}
\begin{proof}[Proof]
    This theorem is a direct consequence of lemma~\ref{lma:x-inequalities}. Indeed, recall that, by lemma~\ref{lma:risk-rayleigh}: 
    \[R_i\left(U, V\right) = \bar{\gamma}^{\left(i\right)} - \bra{w\left(U, V\right)} M^{\left(i\right)} \ket{w\left(U, V\right)} = r_i\left(\ket{w\left(U, V\right)}\right).\]
    
    But then, our theorem is just a special case of this lemma, with some specific $\ket{x} = \ket{w\left(U, V\right)}$. Note moreover that, in this case, we always have the hypothesis that $r_i\left(\ket{w\left(U, V\right)}\right) = R_i\left(U, V\right) \geq 0$ for the third inequality, by definition of risk (definition~\ref{def:risk}). 

    Finally, note that if there exists a $\ket{x}$ such that $\ket{x} = \frac{1}{\sqrt{2^n}} \kket{W}$ for some unitary $W$ and $r_1\left(\ket{x}\right) = \alpha r_2\left(\ket{x}\right) + \beta$, then we can pick any unitaries such that $\ket{w\left(U, V\right)} = \frac{1}{\sqrt{2^n}} \kket{W} \iff V^{\dagger} U = W$ to get $R_1\left(U, V\right) = \alpha R_2\left(U, V\right) + \beta$. An example is $U = W$ and $V = I$. However, if there does not exist such a unitary $W$ then, in general, we are not guaranteed tightness.
\end{proof}

A positive expressivity measure gives a comparison even when the target spectrum is unknown. We use only that its smallest eigenvalue is nonnegative. The resulting bound depends on the two biases and can be sharpened if the target's average purity is known.

\begin{corollary}\label{crl:zero-mean-generalisation}
    Let $D$ be some $n$-qubit distribution over mixed states, of bias $\left\|\mu\right\|_2$, expressivity measure $\eta$. If $\eta > 0$, then for all $n$-qubit distributions over mixed states $D'$ that have bias $\left\|\mu'\right\|_2$ and for all unitaries $U, V$: 
    \[R_{D'}\left(U, V\right) \leq \frac{1}{\eta}  R_D\left(U, V\right) + \left(\frac{1}{\eta} \left\|\mu\right\|_2 + \left\|\mu'\right\|_2\right).\]
    
    If moreover $D'$ is known to have average purity $\bar{\gamma}'$, then:
    \[R_{D'}\left(U, V\right) \leq \frac{\bar{\gamma}'}{\eta}  R_D\left(U, V\right) + \left(\frac{\bar{\gamma}'}{\eta} \left\|\mu\right\|_2 + \left\|\mu'\right\|_2\right).\]
\end{corollary}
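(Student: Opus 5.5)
This follows from part 4 of Theorem~\ref{thm:inequalities}, taking $D^{(1)} = D'$ (the testing distribution) and $D^{(2)} = D$ (the training distribution, with $\eta^{(2)} = \eta > 0$). That part says that any $\alpha \geq \left(\bar{\gamma}' - \lambda_{min}\left(\mathcal{S}'\right)\right)/\eta$ gives
\[R_{D'}\left(U, V\right) \leq \alpha\left(R_D\left(U, V\right) + \left\|\mu\right\|_2\right) + \left\|\mu'\right\|_2.\]
The remaining task is to choose an admissible $\alpha$ that uses no spectral information about $D'$.

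By Lemma~\ref{lma:s-properties}, $\mathcal{S}'$ is positive semi-definite, so $\lambda_{min}\left(\mathcal{S}'\right) \geq 0$. Hence
\[\frac{\bar{\gamma}' - \lambda_{min}\left(\mathcal{S}'\right)}{\eta} \leq \frac{\bar{\gamma}'}{\eta},\]
and $\alpha = \bar{\gamma}'/\eta$ is an admissible, nonnegative choice since $\eta > 0$. Substituting it gives the second (purity-aware) inequality. For the first inequality, I would use $\bar{\gamma}' = \mathbb{E}\left[\Tr\left(\rho^2\right)\right] \leq 1$, so $\alpha = 1/\eta$ is also admissible and gives the stated bound. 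Equivalently, the first bound follows from the second because its right-hand side is increasing in $\bar{\gamma}'$: the coefficients $R_D + \left\|\mu\right\|_2$ are nonnegative and $\eta > 0$.

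The one detail to check is that part 4 requires only $\alpha$ to be at least the threshold, not equal to it. That holds by its statement, or by the $\alpha \geq \alpha'$ slack in the proof of Lemma~\ref{lma:x-inequalities} part 3 together with $R_D \geq 0$. I expect no real obstacle beyond this bookkeeping.
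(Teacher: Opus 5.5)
Your proposal is correct and matches the paper's proof essentially step for step. The paper likewise applies part 4 of Theorem~\ref{thm:inequalities} with $D^{(1)}=D'$, $D^{(2)}=D$ and $\alpha=\bar{\gamma}'/\eta$, justified by $\lambda_{\min}(\mathcal{S}')\geq 0$, and then obtains the first inequality from $\bar{\gamma}'\leq 1$.
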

\begin{intuition*}
    This theorem has a big interest if $D$ is unbiased. The theorem then states that, for any $D'$ of bias $\left\|\mu'\right\|_2$: 
    \[R_{D'}\left(U, V\right) \leq \frac{1}{\eta} R_D\left(U, V\right) + \left\|\mu'\right\|_2.\]

     In the particular case where $D'$ is also unbiased, this states that, if $D$ yields a small error on some $U, V$, then any zero-mean distribution $D'$ will also yield a small error on $U, V$. Reciprocally, if there is some zero-mean distribution $D'$ that yields a big risk on some $U, V$, then $D$ will also yield a big risk on those $U, V$.
    
    As later shown in corollary~\ref{crl:haar-zero-mean-generalisation}, the Haar distribution is zero mean and has expressivity measure $\eta \geq \frac{2}{3}$. Thus, having a small error on the Haar distribution means having a small error on any zero-mean distribution.
    
    This is nuanced by section~\ref{sec:bound-support-size}, where we show that asking for an unbiased pure-state distribution is a big requirement.
\end{intuition*}
\begin{remark*}
    Recall that, for a classical distribution $D$, we necessarily have $\eta \leq 0$ (lemma~\ref{lma:eta-properties}). $D$ thus has to be a quantum distribution.
\end{remark*}
\begin{remark*}
    This is precisely theorem~\ref{thm-main-text:general-unitary-comparison} in the main text.
\end{remark*}
\begin{proof}[Proof]
    We relabel our distributions, $D^{\left(2\right)} = D$ and $D^{\left(1\right)} = D'$, for clarity. Let $\kket{\mu^{\left(i\right)}}$ be their respective first moments, $\mathcal{S}^{\left(i\right)}$ be their respective second moments, $\eta^{\left(i\right)}$ be their respective expressivity measures, and $\bar{\gamma}^{\left(i\right)}$ be their respective average purities.
    
    Let $\alpha = \bar{\gamma}^{\left(1\right)}/\eta^{\left(2\right)}$. Since $\eta^{(2)}>0$ and
$\lambda_{\min}(\mathcal{S}^{(1)})\geq 0$, we have
\[
\alpha
= \frac{\bar{\gamma}^{(1)}}{\eta^{(2)}}
= \frac{\bar{\gamma}^{(1)}}
       {\bar{\gamma}^{(2)}-\lambda_{\max}(\mathcal{S}^{(2)})}
\geq
\frac{\bar{\gamma}^{(1)}-\lambda_{\min}(\mathcal{S}^{(1)})}
     {\bar{\gamma}^{(2)}-\lambda_{\max}(\mathcal{S}^{(2)})}
=
\frac{\bar{\gamma}^{(1)}-\lambda_{\min}(\mathcal{S}^{(1)})}
     {\eta^{(2)}}.
\]
    
    We then immediately find thanks to the fourth inequality of theorem~\ref{thm:inequalities}:
    \[R_{D^{\left(1\right)}}\left(U, V\right) \leq \frac{\bar{\gamma}^{\left(1\right)}}{\eta^{\left(2\right)}}  R_{D^{\left(2\right)}}\left(U, V\right) + \left(\frac{\bar{\gamma}^{\left(1\right)}}{\eta^{\left(2\right)}} \left\|\mu^{\left(2\right)}\right\|_2 + \left\|\kket{\mu}^{\left(1\right)}\right\|_2\right).\]

    Our other inequality directly comes from the fact $\bar{\gamma}^{\left(1\right)} \leq 1$.
\end{proof}

The bound can remain purely multiplicative even when the means are nonzero, provided they cancel under the rescaling used to compare the risks. The following corollary shows this for pure-state distributions and gives the spectral condition that the rescaling must satisfy.

\begin{corollary}\label{crl:collinear-generalisation}
    Let $D^{\left(1\right)}$ and $D^{\left(2\right)}$ be $n$ qubit distributions over pure states of first moment $\mu^{\left(i\right)}$, second moment $\mathcal{S}^{\left(i\right)}$, average purity $\bar{\gamma}^{\left(i\right)}$ and expressivity measure $\eta^{\left(i\right)}$. Also, let $U, V$ be unitary operators.
    
    If $\eta^{\left(2\right)} > 0$ and there exists some $\xi > 0$ such that $\mu^{\left(1\right)} = \xi \mu^{\left(2\right)}$ (i.e.\ they are positively collinear) and such that:
    \[\xi \geq \frac{\bar{\gamma}^{\left(1\right)} - \lambda_{min}\left(\mathcal{S}^{\left(1\right)}\right)}{\eta^{\left(2\right)}}.\]
    
    Then:
    \[R_1\left(U, V\right) \leq \xi R_2\left(U, V\right).\]
\end{corollary}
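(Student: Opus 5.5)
The plan is to apply the third inequality of Theorem~\ref{thm:inequalities} with $\alpha = \alpha' = \xi$. That inequality requires $\eta^{(2)}>0$ and
\[
\alpha \geq \alpha' \geq \frac{\bar{\gamma}^{(1)} - \lambda_{min}\left(\mathcal{S}^{(1)}\right)}{\eta^{(2)}},
\]
both of which are exactly our hypotheses once we set $\alpha=\alpha'=\xi$. The theorem then yields
\[
R_1(U,V) \leq \xi R_2(U,V) + \left\|\mu^{(1)} - \xi \mu^{(2)}\right\|_2 .
\]

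The second step is to use positive collinearity. Since $\mu^{(1)} = \xi \mu^{(2)}$, we have $\mu^{(1)} - \xi\mu^{(2)} = 0$, so the additive term vanishes and $R_1 \leq \xi R_2$ follows.

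There is essentially no obstacle. The only point worth checking is that the third inequality of Theorem~\ref{thm:inequalities} does not need the auxiliary hypothesis $r_2(\ket{x})\geq 0$ of Lemma~\ref{lma:x-inequalities}. It does not, because for $\ket{x} = \ket{w(U,V)}$ we have $r_2 = R_2 \geq 0$ automatically; this is already noted in the proof of Theorem~\ref{thm:inequalities}.

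The pure-state assumption is not actually used beyond what the theorem already covers, since the theorem holds for mixed states with $\bar{\gamma}^{(i)}$ kept general. For pure states one may simply substitute $\bar{\gamma}^{(i)}=1$. As a sanity check, the remark after Lemma~\ref{lma:x-inequalities} shows that the hypothesis then forces $\xi \geq 1$, which is consistent.
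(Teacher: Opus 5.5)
Your proposal is correct and matches the paper's proof exactly: set $\alpha=\alpha'=\xi$ in the third inequality of Theorem~\ref{thm:inequalities}, and the additive term $\|\mu^{(1)}-\xi\mu^{(2)}\|_2$ then vanishes by collinearity. Your side remarks are also accurate. The first is that $R_2\geq 0$ makes the extra hypothesis $r_2(\ket{x})\geq 0$ of Lemma~\ref{lma:x-inequalities} automatic. The second is that the pure-state restriction is not actually needed.
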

\begin{intuition*}
    In other words, we suppose that $\kket{\mu^{\left(1\right)}}$ and $\kket{\mu^{\left(2\right)}}$ are positively collinear, i.e.\ that they lie on the same line. The hypothesis on $\xi$ requires that $\kket{\mu^{\left(1\right)}}$ is farther from the origin than $\kket{\mu^{\left(2\right)}}$ (since it notably requires $\xi \geq 1$, as mentioned in the remark of lemma~\ref{lma:x-inequalities}), by a factor sufficiently big. In that case, if $\xi$ is reasonably small, having a small error on $D^{\left(2\right)}$ yields a small error on $D^{\left(1\right)}$.
\end{intuition*}
\begin{proof}[Proof]
    Let $\alpha = \alpha' = \xi$. We have the hypotheses to use the third inequality of theorem~\ref{thm:inequalities}, yielding: 
    \[R_1\left(U, V\right) \leq \xi R_2\left(U, V\right) + \left\|\mu^{\left(1\right)} - \xi \mu^{\left(2\right)}\right\|_2.\]
    
    However, by hypothesis: 
    \[\mu^{\left(1\right)} - \xi \mu^{\left(2\right)} = \xi \mu^{\left(2\right)} - \xi \mu^{\left(2\right)} = 0.\]
    
    This yields exactly our result.
\end{proof}

\subsubsection{The bias does not contribute for a single qubit}

Here we show that if two distributions have the same first and second moment, they yield the same risk. Moreover, if $n = 1$, it suffices that their second moments are equal for them to yield the same risk. The first result is reasonably easy, the second one may however be surprising.

\begin{theorem}\label{thm:same-moment-matrix-same-risk}
    Let $D^{\left(1\right)}$ and $D^{\left(2\right)}$ be $n$-qubit distributions over pure states, of first moment $\mu^{\left(i\right)}$ and second moment $\mathcal{S}^{\left(i\right)}$, respectively.
    
    The two following propositions are true:
    \begin{enumerate}
    \item If $\mu^{\left(1\right)} = \mu^{\left(2\right)}$ and $\mathcal{S}^{\left(1\right)} = \mathcal{S}^{\left(2\right)}$, then for all unitaries $U, V$: 
    \[R_1\left(U, V\right) = R_2\left(U, V\right).\]
    \item If $n = 1$ and $\mathcal{S}^{\left(1\right)} = \mathcal{S}^{\left(2\right)}$, then for all unitaries $U, V$:  
    \[R_1\left(U, V\right) = R_2\left(U, V\right).\]
    \end{enumerate}
\end{theorem}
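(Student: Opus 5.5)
For part (1), the plan is to show that the moment matrix is completely determined by the first and second moments, and then to invoke Lemma~\ref{lma:risk-rayleigh}. For pure-state distributions we have $\bar{\gamma}^{(1)}=\bar{\gamma}^{(2)}=1$. By Lemma~\ref{lma:other-formulation-decomposition}, the decomposition reads $M^{(i)}=\bar{\gamma}^{(i)}\oplus\mathcal{S}^{(i)}+M_{mean}(\kket{\mu^{(i)}})$. The hypotheses $\mu^{(1)}=\mu^{(2)}$ and $\mathcal{S}^{(1)}=\mathcal{S}^{(2)}$ therefore give $M^{(1)}=M^{(2)}$. Lemma~\ref{lma:risk-rayleigh} then yields $R_i(U,V)=1-\bra{w(U,V)}M^{(i)}\ket{w(U,V)}$, and these two quantities coincide for every $U,V$.

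For part (2), the idea is to show that the mean contribution $\bra{w}M_{mean}\ket{w}$ vanishes identically when $n=1$ and $\ket{w}$ is a normalized vectorised unitary. Once that is done, Lemma~\ref{lma:risk-without-identity} (or, directly, the decomposition) gives $R_i=\|\pi(w)\|^2-\bra{\pi(w)}\mathcal{S}^{(i)}\ket{\pi(w)}$, which depends only on $\mathcal{S}^{(i)}$. To prove the vanishing, I would use Lemma~\ref{lma:m-mean-expansion}:
\[
\bra{w}M_{mean}\ket{w}=2\sum_{P\neq I}\bbraket{\widetilde{P}}{\mu}\,\mathrm{Re}\bigl(\brakket{w}{\widetilde{I}}\bbraket{\widetilde{P}}{w}\bigr).
\]
It then suffices to show that $\mathrm{Re}\bigl(\overline{w_I}\,w_P\bigr)=0$ for each $P\in\{X,Y,Z\}$, where $w_Q=\bbraket{\widetilde{Q}}{w}=\tfrac{1}{2}\mathrm{Tr}(Q\,V^\dagger U)$.

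To check this, write the single-qubit unitary as $W=V^\dagger U=e^{i\phi}\bigl(\cos\theta\, I-i\sin\theta\,\hat{\boldsymbol n}\cdot\boldsymbol\sigma\bigr)$, as in the main text. Then $w_I\propto e^{i\phi}\cos\theta$ and $w_P\propto -ie^{i\phi}\sin\theta\,n_P$. Consequently $\overline{w_I}w_P\propto -i\cos\theta\sin\theta\, n_P$ is purely imaginary: the global phase cancels, and all the remaining coefficients are real. Hence each real part vanishes, and so $\bra{w}M_{mean}\ket{w}=0$ independently of $\mu$.

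The main obstacle is justifying that parametrization. Every element of $U(2)$ factors as a phase times an element of $SU(2)$, and $SU(2)$ elements have the form $aI-i\,\boldsymbol b\cdot\boldsymbol\sigma$ with $a\in\mathbb{R}$, $\boldsymbol b\in\mathbb{R}^3$, and $a^2+|\boldsymbol b|^2=1$; this is what makes all the relative phases real. With that in place, the argument also explains the failure for $n\ge 2$: there, Pauli coefficients of a unitary need not share a common phase up to $\pm i$. Finally, I would remark that part (2) does not need equal biases, which matches the main-text discussion.
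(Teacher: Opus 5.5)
Your proposal is correct and follows essentially the same route as the paper: equal moment matrices plus Lemma~\ref{lma:risk-rayleigh} for part (1). For part (2), the single-qubit parametrization $V^\dagger U=e^{i\phi}\bigl(\cos\theta\, I\mp i\sin\theta\,\hat{\boldsymbol n}\cdot\boldsymbol\sigma\bigr)$ makes $\overline{w_I}\,w_P$ purely imaginary, so Lemma~\ref{lma:m-mean-expansion} kills the mean term. The only cosmetic differences are that you use $\bar\gamma=1$ directly (the paper instead deduces equal purities from $\operatorname{Tr}(\mathcal S)=2^n-\bar\gamma$), and that you factor $U(2)$ as a phase times $SU(2)$ where the paper writes $W=\exp(iH)$.
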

\begin{remark*}
    As explained in the main text, this result implies corollary~\ref{corol-main-text:single-qubit-guarantee} (and hence corollary~\ref{corol-main-text:single-qubit-haar-guarantee}) by a symmetrisation argument: given a distribution $D$, we can construct a distribution $D'$ that samples from $D$ and flips the Bloch vector with probability $1/2$. This $D'$ can be shown to have the same expressivity as $D$, but zero bias. Our theorem however states that they have the same risk. Hence, combining it with corollary~\ref{crl:zero-mean-generalisation} yields corollary~\ref{corol-main-text:single-qubit-guarantee} in the main text.
\end{remark*}
\begin{proof}[Proof 1]
    Let $U, V$ be arbitrary unitaries. Let $\bar{\gamma}^{\left(i\right)}$ be the average purity of distribution $D^{\left(i\right)}$. Observe that, by lemma~\ref{lma:s-properties}, since $\mathcal{S}^{\left(1\right)} = \mathcal{S}^{\left(2\right)}$ by hypothesis: 
    \[\bar{\gamma}^{\left(1\right)} = 2^n - \Tr\left(\mathcal{S}^{\left(1\right)}\right) = 2^n - \Tr\left(\mathcal{S}^{\left(2\right)}\right) = \bar{\gamma}^{\left(2\right)}.\]
    
        Since $\bar{\gamma}^{\left(i\right)}, \mathcal{S}^{\left(i\right)}, \mu^{\left(i\right)}$ are the same for $i = 1$ and $i = 2$, we know that $M^{\left(1\right)} = M^{\left(2\right)}$, where $M^{\left(i\right)}$ is the moment matrix of $D^{\left(i\right)}$. Therefore, for all $\ket{x} \in \mathbb{C}^{4^n}$: 
    \[\bar{\gamma}^{\left(1\right)} - \bra{x} M^{\left(1\right)} \ket{x} = \bar{\gamma}^{\left(2\right)} - \bra{x} M^{\left(2\right)} \ket{x}.\]
    
    This is in particular true for $\ket{x} = \ket{w\left(U, V\right)}$. But then, by lemma~\ref{lma:risk-rayleigh}:
    \[R_1\left(U, V\right) = \bar{\gamma}^{\left(1\right)} - \bra{w\left(U, V\right)} M^{\left(1\right)} \ket{w\left(U, V\right)} = R_2\left(U, V\right).\]
\end{proof}
\begin{proof}[Proof 2]
    Let $U, V$ be arbitrary unitaries, and let $M^{\left(i\right)} = M_{mean}^{\left(i\right)} + M_{core}^{\left(i\right)}$ be the moment decomposition of $M^{\left(i\right)}$. We aim to show that: 
    \[\bra{w\left(U, V\right)} M_{mean}^{\left(i\right)} \ket{w\left(U, V\right)} = 0.\]
    
    This will show that, by lemma~\ref{lma:risk-rayleigh}: 
    \autoeq{R_i\left(U, V\right) = \bar{\gamma}^{\left(i\right)} - \bra{w\left(U, V\right)} M^{\left(i\right)} \ket{w\left(U, V\right)} = \bar{\gamma}^{\left(i\right)} - \bra{w\left(U, V\right)} M_{core}^{\left(i\right)} \ket{w\left(U, V\right)},}
    which will immediately give our result, since it only depends on $\mathcal{S}^{\left(i\right)}$ but not on $\kket{\mu^{\left(i\right)}}$, and since $\bar{\gamma}^{\left(i\right)}$ can be computed from $\mathcal{S}^{\left(i\right)}$ as proven in the first point.
    
    Let $W = V^{\dagger} U$. Note that $W$ is unitary, $W^{\dagger} W = I$. By definition of the $w$ vector (definition~\ref{def:w-vector}), we know: 
    \[w\left(U, V\right) = \frac{1}{\sqrt{2^n}} \kket{V^{\dagger} U} = \frac{1}{\sqrt{2^n}} \kket{W}.\]
    
    Since $W$ is unitary, there exists a Hermitian matrix $H$ such that:
    \[W = \exp\left(i H\right).\]
    
    We can express this $H$ in the Pauli basis, using the hypothesis that $n = 1$: 
    \[H = a_I I + a_X X + a_Y Y + a_Z Z = \phi I + \theta\left(n_X X + n_Y Y + n_Z Z\right),\]
    where $a_I, a_X, a_Y, a_Z \in \mathbb{R}$ are arbitrary and thus we can choose parameters $\phi, \theta, n_X, n_Y, n_Z \in \mathbb{R}$ and $n_X^2 + n_Y^2 + n_Z^2 = 1$ such that the identity above holds. Therefore, using the generalised Euler identity for Pauli operators: 
    \autoeq{W = \exp\left(i H\right) = \exp\left(i \phi I\right) \exp\left(i\theta \left(n_X X + n_Y Y + n_Z Z\right)\right) = e^{i \phi} \left(I \cos\left(\theta\right) + i \left(n_X X + n_Y Y + n_Z Z\right) \sin\left(\theta\right)\right).}
    
    We know that $P = \sqrt{2}\cdot \widetilde{P}$ since $n = 1$. We can therefore directly read that, in the $\left\{\kket{\widetilde{P}} \suchthat P \in \mathbb{P}_1\right\}$ basis: 
    \[\kket{W} = \sqrt{2} e^{i\phi} \begin{pmatrix} \cos\left(\theta\right) \\ i n_X \sin\left(\theta\right) \\ i n_Y \sin\left(\theta\right) \\ i n_Z \sin\left(\theta\right) \end{pmatrix} \implies \ket{w\left(U, V\right)} = e^{i \phi} \begin{pmatrix} \cos\left(\theta\right) \\ i n_X \sin\left(\theta\right) \\ i n_Y \sin\left(\theta\right) \\ i n_Z \sin\left(\theta\right) \end{pmatrix} .\]
    
    This yields that, for any $P \in \mathbb{P}_1 \setminus \left\{I\right\}$: 
    \autoeq{\brakket{w\left(U, V\right)}{\widetilde{I}} \bbraket{\widetilde{P}}{w\left(U, V\right)} = \left(e^{-i\phi} \cos\left(\theta\right)\right) \left(e^{i \phi} i n_P \sin\left(\theta\right)\right) = i n_P \sin\left(\theta\right) \cos\left(\theta\right).}
    
    However, this shows that $\cre\left(\brakket{w\left(U, V\right)}{\widetilde{I}} \bbraket{\widetilde{P}}{w\left(U, V\right)}\right) = 0$, and thus, by lemma~\ref{lma:m-mean-expansion}: 
    \[\bra{w\left(U, V\right)} M_{mean}^{\left(i\right)} \ket{w\left(U, V\right)} = 0.\]
\end{proof}

Thus, a contribution that matters for general vectors can disappear when we restrict to unitaries. The result for single-qubit unitaries illustrates why bounds on pseudo-risks need not capture every property of the unitary-learning risk.

\suppsection{Product distributions}

Product structure lets us obtain the moment matrix of a large ensemble from those of its independently prepared subsystems. For zero-mean distributions, we can also determine the second-moment spectrum. We use these results in Section~\ref{sec:applications} to compare global and product ensembles.

\begin{theorem}\label{thm:m-product-distribution}
    Let $D^{\left(1\right)}$ and $D^{\left(2\right)}$ be independent mixed-state distributions, over $n_1$ and $n_2$ qubits respectively. Moreover, let $D = D^{\left(1\right)} \otimes D^{\left(2\right)}$ be a product distribution. Furthermore, let $M^{\left(1\right)}, M^{\left(2\right)}, M$ be the moment matrices of $D^{\left(1\right)}, D^{\left(2\right)}, D$, respectively.
    
    Finally, consider the permutation matrix
    $W \in \left\{0,1\right\}^{
    2^{2(n_1+n_2)} \times 2^{2(n_1+n_2)}}$ that fixes the first $n_1$ qubits, swaps the next $n_1$ qubits with the following $n_2$ qubits, and fixes the $n_2$ last qubits: 
    \[W \ket{ijk\ell} = \ket{ikj\ell}, \mathspace i, j \in \left\{0, 1\right\}^{n_1}, k, \ell \in \left\{0, 1\right\}^{n_2}.\]
    
    Then:
    \[M = W \left(M^{\left(1\right)} \otimes M^{\left(2\right)}\right) W^{-1}.\]
    
    In other words, $M$ and $M^{\left(1\right)} \otimes M^{\left(2\right)}$ are similar matrices. This notably implies that they have the same eigenvalues and that any eigenvector $\ket{v}$ of $M^{\left(1\right)} \otimes M^{\left(2\right)}$ yields an eigenvector $\ket{w} = W \ket{v}$ of $M$.
\end{theorem}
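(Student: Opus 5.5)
We prove the identity directly from the definition $M = 2^{n_1+n_2}\,\exval\left[\rho\otimes\rho^T\right]$, using the independence of the two factors. For a sample of $D$ we have $\rho = \rho_1\otimes\rho_2$, with $\rho_1\sim D^{(1)}$ and $\rho_2\sim D^{(2)}$ independent. The first step is to note that the transpose in the computational basis factorizes: $(\rho_1\otimes\rho_2)^T = \rho_1^T\otimes\rho_2^T$. Hence
\[
M = 2^{n_1}2^{n_2}\,\exval\left[\rho_1\otimes\rho_2\otimes\rho_1^T\otimes\rho_2^T\right],
\]
with the tensor factors ordered as (copy-1 row space of $n_1$ qubits, copy-1 row space of $n_2$ qubits, column space of $n_1$ qubits, column space of $n_2$ qubits).

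The second step is to reorder these four factors so that the two $n_1$-qubit blocks are adjacent. We check on product kets that $W(A\otimes B\otimes C\otimes E)W^{-1} = A\otimes C\otimes B\otimes E$ for operators $A,E$ on $n_1,n_2$ qubits and $B,C$ on $n_1,n_2$ qubits appropriately; more precisely, with $B$ acting on the second slot of dimension $2^{n_1}$ and $C$ on the third slot of dimension $2^{n_2}$. This follows from $W\ket{ijk\ell}=\ket{ikj\ell}$: conjugation by the swap of the middle two registers exchanges the middle two tensor factors. Applying this with $A=\rho_1$, $B=\rho_1^T$, $C=\rho_2$, $E=\rho_2^T$ gives
\[
\rho_1\otimes\rho_2\otimes\rho_1^T\otimes\rho_2^T = W\left(\rho_1\otimes\rho_1^T\otimes\rho_2\otimes\rho_2^T\right)W^{-1}.
\]

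The third step takes expectations. Conjugation by $W$ is linear, so it commutes with $\exval$, and independence gives $\exval\left[(\rho_1\otimes\rho_1^T)\otimes(\rho_2\otimes\rho_2^T)\right] = \exval\left[\rho_1\otimes\rho_1^T\right]\otimes\exval\left[\rho_2\otimes\rho_2^T\right]$. Multiplying by $2^{n_1}2^{n_2}$ yields $M = W\left(M^{(1)}\otimes M^{(2)}\right)W^{-1}$.

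Finally, $W$ is a permutation matrix, hence unitary with $W^{-1}=W^T$, so the two matrices are similar. Similarity gives equal spectra. If $(M^{(1)}\otimes M^{(2)})\ket{v}=\lambda\ket{v}$, then $M W\ket{v} = W(M^{(1)}\otimes M^{(2)})\ket{v} = \lambda W\ket{v}$.

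The only delicate point is bookkeeping: the register ordering must match the vectorisation convention $\kket{A}=\sum A_{ij}\ket{i}\otimes\ket{j}$ for the $(n_1+n_2)$-qubit system. In that convention the row index is $(i,k)$ and the column index is $(j,\ell)$, so a row index runs over both subsystems before any column index appears. Once this convention is checked, the rest is routine.
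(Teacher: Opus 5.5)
Your proof is correct and follows the same route as the paper. Like the paper, you factor the transpose, use independence to write $M^{(1)}\otimes M^{(2)}$ as a single expectation, and observe that conjugation by $W$ exchanges the two middle tensor factors. The one difference is that you state this exchange identity directly for arbitrary product operators $A\otimes B\otimes C\otimes E$, which covers mixed states with no extra step, whereas the paper checks it on pure-state kets $\ket{\psi_1}\otimes\ket{\psi_1}^*\otimes\ket{\psi_2}\otimes\ket{\psi_2}^*$ before passing to the operator identity.
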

\begin{intuition*}
    The $W$ operator is a lot easier to understand with its action on vectorised operators. Let $A \in \mathbb{C}^{2^{n_1}\times 2^{n_1}}$
and $B \in \mathbb{C}^{2^{n_2}\times 2^{n_2}}$. Then: 
    \[W \left(\kket{A} \otimes \kket{B}\right) = \kket{A \otimes B}.\]
    
    In particular, this means that, if we have $P \in \mathbb{P}_{n_1}$ and $Q \in \mathbb{P}_{n_2}$ so that $\kket{\widetilde{P}}$ is an eigenvector of $M^{\left(1\right)}$ with eigenvalue $a$ and $\kket{\widetilde{Q}}$ is an eigenvector of $M^{\left(2\right)}$ with eigenvalue $b$, then $\kket{\widetilde{P} \otimes \widetilde{Q}} = W\left(\kket{\widetilde{P}} \otimes \kket{\widetilde{Q}}\right)$ is an eigenvector of $M$ with eigenvalue $ab$ ($\ket{v}$ is an eigenvector of $M^{\left(1\right)} \otimes M^{\left(2\right)}$ implies that $W \ket{v}$ is an eigenvector $M$ associated to the same eigenvalue, as mentioned above), as one may expect.
    
    Let us prove this affirmation. We express our operators in the computation basis: 
    \[A = \sum_{i, j} A_{ij} \ket{i}\bra{j}, \mathspace B = \sum_{k, \ell} B_{k \ell} \ket{k} \bra{\ell}.\]
    
    This implies that: 
    \[A \otimes B = \sum_{i, j, k, \ell} A_{ij} B_{k \ell} \ket{i k}\bra{j \ell} \implies \kket{A \otimes B} = \sum_{i, j, k, \ell} A_{ij} B_{k \ell} \ket{ikj\ell}.\]
    
    Moreover: 
    \[\kket{A} \otimes \kket{B} = \left(\sum_{i, j} A_{ij} \ket{ij}\right) \otimes \left(\sum_{k, \ell} B_{k\ell} \ket{k \ell}\right) = \sum_{i, j, k, \ell} A_{ij} B_{k \ell} \ket{ijk \ell}.\]
    
    We do clearly have $W\left(\kket{A} \otimes \kket{B}\right) = \kket{A \otimes B}$.
\end{intuition*}
\begin{proof}[Proof]
    By definition, we have: 
    \autoeq{M = 2^{n_1 + n_2} \exval_{\rho_1 \otimes \rho_2 \followsdistr D_1 \otimes D_2}\left(\left(\rho_1 \otimes \rho_2\right) \otimes \left(\rho_1 \otimes \rho_2\right)^T\right) = 2^{n_1 + n_2} \exval_{\substack{\rho_1 \followsdistr D_1 \\ \rho_2 \followsdistr D_2}}\left(\rho_1 \otimes \rho_2 \otimes \rho_1^T \otimes \rho_2^T\right),}
    \autoeq{M_1 \otimes M_2 = 2^{n_1} \exval_{\rho_1 \followsdistr D_1}\left(\rho_1 \otimes \rho_1^T\right) \otimes 2^{n_2}\exval_{\rho_2 \followsdistr D_2}\left(\rho_2 \otimes \rho_2^T\right) = 2^{n_1 + n_2} \exval_{\substack{\rho_1 \followsdistr D_1 \\ \rho_2 \followsdistr D_2}}\left(\rho_1 \otimes \rho_1^T \otimes \rho_2 \otimes \rho_2^T\right).}
    
    Now, notice that, by definition: 
    \[W \left(\ket{\psi_1} \otimes \ket{\psi_1}^* \otimes \ket{\psi_2} \otimes \ket{\psi_2}^*\right) = \ket{\psi_1} \otimes \ket{\psi_2} \otimes \ket{\psi_1}^* \otimes \ket{\psi_2}^*.\]
    
    This yields that: 
    \[W \left(\rho_1 \otimes \rho_1^* \otimes \rho_2  \otimes \rho_2^*\right) W^{\dagger} = \rho_1 \otimes \rho_2 \otimes \rho_1^* \otimes \rho_2^*.\]
    
    Since $\rho^* = \rho^T$ for any density matrix and $W^{-1} = W^T = W^{\dagger}$ since it is a permutation matrix, this yields our result: 
    \[M = W \left(M^{\left(1\right)} \otimes M^{\left(2\right)}\right) W^{-1}.\]
\end{proof}

When both means vanish, each subsystem’s moment matrix splits into an identity component and a traceless block. The second-moment spectrum of the product then comes from three combinations: a traceless factor on the first subsystem, on the second, or on both. Each identity factor contributes the average purity of its subsystem.

\begin{theorem}\label{thm:s-product-distribution}

    Let $D^{\left(1\right)}$ and $D^{\left(2\right)}$ be independent mixed-state distributions, over $n_1$ and $n_2$ qubits respectively. Moreover, let $D = D^{\left(1\right)} \otimes D^{\left(2\right)}$ be a product distribution. Furthermore, let $\bar{\gamma}^{\left(1\right)}, \bar{\gamma}^{\left(2\right)}, \bar{\gamma}$ be the average purities, $\mu^{\left(1\right)}, \mu^{\left(2\right)}, \mu$ be the first moments and $\mathcal{S}^{\left(1\right)}, \mathcal{S}^{\left(2\right)}, \mathcal{S}$ be the second moments of $D^{\left(1\right)}, D^{\left(2\right)}, D$, respectively.
    
    Finally, let $\left(\lambda_i^{\left(1\right)}, \kket{A_i}\right)$ be the eigenvalue-eigenvector pairs of $\mathcal{S}^{\left(1\right)}$, and let $\left(\lambda_i^{\left(2\right)}, \kket{B_i}\right)$ be the ones of $\mathcal{S}^{\left(2\right)}$. Notice that it is always possible to express these vectors in the vectorised notation, since $\sum_{i} c_i \kket{C_i} = \kket{\sum_{i} c_i C_i}$ is linear (definition~\ref{def:vectorised-notation}).
    
    If $\mu^{\left(1\right)} = \mu^{\left(2\right)} = 0$, then:
    \begin{enumerate}
    \item $\mu = 0$.
    \item $\bar{\gamma} = \bar{\gamma}^{\left(1\right)} \bar{\gamma}^{\left(2\right)}$.
    \item The following are eigenvalue-eigenvector tuples of $\mathcal{S}$:
    \[\left(\lambda_i^{\left(1\right)} \bar{\gamma}^{\left(2\right)}, \kket{A_i \otimes \widetilde{I}_{n_2}}\right), \mathspace \left(\bar{\gamma}^{\left(1\right)} \lambda_j^{\left(2\right)}, \kket{\widetilde{I}_{n_1} \otimes B_j}\right), \mathspace \left(\lambda_i^{\left(1\right)} \lambda_j^{\left(2\right)}, \kket{A_i \otimes B_j}\right),\]
    where $i \in \left\{1, \ldots, 4^{n_1} - 1\right\}$ and $j \in \left\{1, \ldots, 4^{n_2} - 1\right\}$. The set
    \[\left\{\kket{A_i \otimes \widetilde{I}_{n_2}}, \kket{\widetilde{I}_{n_1} \otimes B_j}, \kket{A_i \otimes B_j} \suchthat i \in \left\{1, \ldots, 4^{n_1} - 1\right\}, j \in \left\{1, \ldots, 4^{n_2} - 1\right\}\right\}\]
    is moreover an eigenbasis of $\mathcal{S}$.
    \item The minimum and maximum eigenvalues of $\mathcal{S}$ are:
        \[\lambda_{min}\left(\mathcal{S}\right) = \lambda_{min}^{\left(1\right)} \lambda_{min}^{\left(2\right)}, \mathspace \lambda_{max}\left(\mathcal{S}\right) = \max\left\{\lambda_{max}^{\left(1\right)} \bar{\gamma}^{\left(2\right)}, \bar{\gamma}^{\left(1\right)} \lambda_{max}^{\left(2\right)}, \lambda_{max}^{\left(1\right)} \lambda_{max}^{\left(2\right)}\right\}.\]
    \end{enumerate}
\end{theorem}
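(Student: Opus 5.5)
The plan is to derive everything from Theorem~\ref{thm:m-product-distribution}, which gives $M = W(M^{(1)}\otimes M^{(2)})W^{-1}$, together with the intertwining property $W(\kket{A}\otimes\kket{B}) = \kket{A\otimes B}$. Since $\mu^{(1)}=\mu^{(2)}=0$, the decomposition of Lemma~\ref{lma:other-formulation-decomposition} gives $M^{(i)} = \bar{\gamma}^{(i)}\kket{\widetilde{I}_{n_i}}\bbra{\widetilde{I}_{n_i}} + \mathcal{S}^{(i)}$. Here $\mathcal{S}^{(i)}$ is supported on the traceless subspace and is Hermitian PSD by Lemma~\ref{lma:s-properties}. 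Hence $\{\kket{\widetilde{I}_{n_i}}\}\cup\{\kket{A_i}\}$ (respectively $\{\kket{B_j}\}$) is an orthonormal eigenbasis of $M^{(1)}$ (respectively $M^{(2)}$), with eigenvalues $\bar{\gamma}^{(1)}$ and $\lambda_i^{(1)}$ (respectively $\bar{\gamma}^{(2)}$ and $\lambda_j^{(2)}$). We choose the $\kket{A_i}$ orthonormal and traceless, i.e.\ orthogonal to $\kket{\widetilde{I}_{n_1}}$. Tensoring these bases gives an orthonormal eigenbasis of $M^{(1)}\otimes M^{(2)}$ with product eigenvalues. Applying the unitary $W$ maps each product vector $\kket{X}\otimes\kket{Y}$ to $\kket{X\otimes Y}$. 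We therefore obtain an orthonormal eigenbasis of $M$ consisting of $\kket{\widetilde{I}_{n_1}\otimes\widetilde{I}_{n_2}} = \kket{\widetilde{I}_{n_1+n_2}}$ with eigenvalue $\bar{\gamma}^{(1)}\bar{\gamma}^{(2)}$, together with the three families listed in point 3 and their stated eigenvalues.

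Points 1 and 2 follow immediately. The vector $\kket{\widetilde{I}_{n_1+n_2}}$ is an eigenvector of $M$, so $M\kket{\widetilde{I}}$ is proportional to $\kket{\widetilde{I}}$. It follows that $\kket{\mu} = \Pi M\kket{\widetilde{I}} = 0$, which is the off-diagonal block in the decomposition. Moreover, $\bar{\gamma} = \bbra{\widetilde{I}}M\kket{\widetilde{I}} = \bar{\gamma}^{(1)}\bar{\gamma}^{(2)}$. Alternatively, point 2 follows directly from $\Tr((\rho_1\otimes\rho_2)^2) = \Tr(\rho_1^2)\Tr(\rho_2^2)$ and independence.

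For point 3, we note that every remaining basis vector is traceless. The operators $A_i\otimes\widetilde{I}$, $\widetilde{I}\otimes B_j$ and $A_i\otimes B_j$ all have zero trace because $\Tr A_i = \Tr B_j = 0$. Equivalently, these vectors are orthogonal to $\kket{\widetilde{I}_{n_1+n_2}}$ by orthonormality, so each is fixed by $\Pi$. Then $\mathcal{S} = \Pi M\Pi$ (Lemma~\ref{lma:s-as-truncation}) acts on them exactly as $M$ does, with the eigenvalues already computed. A dimension count confirms completeness: $(4^{n_1}-1) + (4^{n_2}-1) + (4^{n_1}-1)(4^{n_2}-1) = 4^{n_1+n_2}-1$. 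So these vectors form an eigenbasis of $\mathcal{S}$ on the traceless subspace.

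Point 4 then reduces to taking the extrema over the three families. All quantities are nonnegative, since the $\lambda$'s are nonnegative by PSD-ness. The maximum is the max of the three family maxima, which gives the stated expression. For the minimum, we use $\lambda_{min}^{(2)} \le \bar{\gamma}^{(2)}$ from Lemma~\ref{lma:s-properties}(5). This gives $\lambda_{min}^{(1)}\lambda_{min}^{(2)} \le \lambda_{min}^{(1)}\bar{\gamma}^{(2)}$, and symmetrically $\lambda_{min}^{(1)}\lambda_{min}^{(2)} \le \bar{\gamma}^{(1)}\lambda_{min}^{(2)}$, so the cross family attains the minimum. The only delicate step in the whole argument is point 3: we must check that the eigenbasis of $M^{(i)}$ can be chosen so that the non-identity eigenvectors are exactly traceless. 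This holds because the zero-mean condition makes $M^{(i)}$ block diagonal with respect to the split between $\kket{\widetilde{I}}$ and the traceless subspace, even when $\bar{\gamma}^{(i)}$ coincides with some $\lambda^{(i)}$.
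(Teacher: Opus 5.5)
Your proposal is correct and follows essentially the same route as the paper. Both arguments rest on the similarity $M = W(M^{(1)}\otimes M^{(2)})W^{-1}$ together with the block-diagonal form $M^{(i)} = \bar{\gamma}^{(i)}\oplus\mathcal{S}^{(i)}$ forced by the zero means. Both then read off $\mu$ and $\bar{\gamma}$ from the identity direction, transport tensor-product eigenvectors via $W(\kket{A}\otimes\kket{B}) = \kket{A\otimes B}$, close with the dimension count, and use $0\le\lambda_{\min}^{(i)}\le\bar{\gamma}^{(i)}$ for the extrema.

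The one difference is cosmetic. You pass from $M$ to $\mathcal{S}$ by letting $\mathcal{S}=\Pi M\Pi$ act on vectors orthogonal to the identity, whereas the paper restricts the similarity to the traceless blocks. Your version also makes explicit that $W$ maps the identity direction to itself, which the paper's ``block diagonal'' step leaves implicit.
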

\begin{proof}[Proof]
    Let $M^{\left(1\right)}, M^{\left(2\right)}, M$ be the moment matrices of $D^{\left(1\right)}, D^{\left(2\right)}, D$, respectively. By theorem~\ref{thm:m-product-distribution}, we know that $M$ and $M^{\left(1\right)} \otimes M^{\left(2\right)}$ are similar (with similarity matrix $W$ defined in this theorem).
    
    Notice that, since $\mu^{\left(1\right)} = \mu^{\left(2\right)} = 0$ by hypothesis: 
    \autoeq{M^{\left(i\right)} = \begin{pmatrix} \bar{\gamma}^{\left(i\right)} & 0 \\ 0 & \mathcal{S}^{\left(i\right)} \end{pmatrix} \implies M^{\left(1\right)} \otimes M^{\left(2\right)} = \begin{pmatrix} \bar{\gamma}^{\left(1\right)} \bar{\gamma}^{\left(2\right)} & 0 & 0 & 0 \\ 0 & \bar{\gamma}^{\left(1\right)} \mathcal{S}^{\left(2\right)} & 0 & 0 \\ 0 & 0 & \bar{\gamma}^{\left(2\right)} \mathcal{S}^{\left(1\right)} & 0 \\ 0 & 0 & 0 & \mathcal{S}^{\left(1\right)} \otimes \mathcal{S}^{\left(2\right)} \end{pmatrix}.}
    
    We can prove our four affirmations from there.
    
    \begin{enumerate}[left=0pt]
    \item Since $M^{\left(1\right)} \otimes M^{\left(2\right)}$ is block diagonal, $M = W\left(M^{\left(1\right)} \otimes M^{\left(2\right)}\right) W^{-1}$ is also block diagonal. In particular, this means that its first row and first column are full of zeroes (except for the top-left corner). This yields that $\mu = 0$, by the structure of the moment matrix (definition~\ref{def:moment-matrix}).
    \item By the exact same reasoning, we find $\bar{\gamma} = \bar{\gamma}^{\left(1\right)} \bar{\gamma}^{\left(2\right)}$ is the value in the top-left corner.
    \item By a similar reasoning, this means that $\mathcal{S}$ is similar to: 
    \[\mathcal{T} = \begin{pmatrix} \bar{\gamma}^{\left(1\right)} \mathcal{S}^{\left(2\right)} & 0 & 0 \\ 0 & \bar{\gamma}^{\left(2\right)} \mathcal{S}^{\left(1\right)} & 0 \\ 0 & 0 & \mathcal{S}^{\left(1\right)} \otimes \mathcal{S}^{\left(2\right)} \end{pmatrix}.\]
    
    In other words, $\mathcal{S} = W_{\perp} \mathcal{T} W_{\perp}^{-1}$, where $W_{\perp}$ is the restriction of the similarity matrix $W$ defined in theorem~\ref{thm:m-product-distribution} to the subspaces orthogonal to the vectorised identity. Note that, by definition of tensor product, the first block is indexed by the sub-basis 
    \[\left\{\kket{\widetilde{I}_{n_1}} \otimes \kket{\widetilde{Q}} \suchthat Q \in \mathbb{P}_{n_2} \setminus \left\{I_{n_2}\right\}\right\},\]
    the second block is indexed by
    \[\left\{\kket{\widetilde{P}} \otimes \kket{\widetilde{I}_{n_2}} \suchthat P \in \mathbb{P}_{n_1} \setminus \left\{I_{n_1}\right\}\right\},\]
    and the third block is indexed by
    \[\left\{\kket{\widetilde{P}} \otimes \kket{\widetilde{Q}} \suchthat P \in \mathbb{P}_{n_1} \setminus \left\{I_{n_1}\right\}, Q \in \mathbb{P}_{n_2} \setminus \left\{I_{n_2}\right\}\right\}.\]
    
    Now, $\kket{A_i} \otimes \kket{B_j}$ is clearly an eigenvector of $\mathcal{S}^{\left(1\right)} \otimes \mathcal{S}^{\left(2\right)}$ associated to $\lambda_i^{\left(1\right)} \lambda_j^{\left(2\right)}$. Since $\mathcal{T}$ is a block diagonal matrix, and that both $\kket{A_i}$ and $\kket{B_j}$ are such that $\bbrakket{\widetilde{I}_{n_1}}{A_i} = \bbrakket{\widetilde{I}_{n_2}}{B_j} = 0$ (since the second moment lives in $\mathbb{R}^{\left(4^{n_i} - 1\right) \times \left(4^{n_i} - 1\right)}$, similarly to what was explained in the remark of definition~\ref{def:truncation-for-vectors}), it is fully contained in the third block and it is thus also an eigenvector of $\mathcal{T}$, associated to the same eigenvalue. 
    
    As explained in theorem~\ref{thm:m-product-distribution}, since $\mathcal{S} = W_{\perp} \mathcal{T} W_{\perp}^{-1}$, we know that this yields an eigenvector of $\mathcal{S}$,
    \[W \left(\kket{A_i} \otimes \kket{B_j}\right) = \kket{A_i \otimes B_j},\]
    associated to the same eigenvalue, $\lambda_i^{\left(1\right)} \lambda_j^{\left(2\right)}$. This is our first eigenvalue-eigenvector pair. We can use the exact same reasoning for the two other types of eigenvalue-eigenvector pairs.
    
    Choosing orthonormal eigenbases for $\mathcal{S}^{(1)}$ and
    $\mathcal{S}^{(2)}$, the three families above form an orthonormal
    set containing
    \autoeq[s]{
    \left(4^{n_1}-1\right)
    + \left(4^{n_2}-1\right)
    + \left(4^{n_1}-1\right)\left(4^{n_2}-1\right)
    = 4^{n_1+n_2}-1
    }
    eigenvectors. This equals the dimension of the traceless
    subspace on $n_1+n_2$ qubits, so these vectors form an
    eigenbasis of $\mathcal{S}$.
    
    \item The eigenvalues of $\mathcal{S}$ belong to the three
    families
    \[
    \bar{\gamma}^{(2)}\lambda_i^{(1)},\qquad
    \bar{\gamma}^{(1)}\lambda_j^{(2)},\qquad
    \lambda_i^{(1)}\lambda_j^{(2)}.
    \]

    By Lemma~\ref{lma:s-properties},
    \[
    0\leq \lambda_{\min}^{(i)}
    \leq \bar{\gamma}^{(i)}.
    \]
    Therefore, the smallest eigenvalue comes from the third
    family:
    \[
    \lambda_{\min}(\mathcal{S})
    = \lambda_{\min}^{(1)}\lambda_{\min}^{(2)}.
    \]

    For the largest eigenvalue, we take the maximum over the
    three families. Since all eigenvalues and average purities
    are nonnegative, this gives
    \[
    \lambda_{\max}(\mathcal{S})
    = \max\left\{
    \lambda_{\max}^{(1)}\bar{\gamma}^{(2)},
    \bar{\gamma}^{(1)}\lambda_{\max}^{(2)},
    \lambda_{\max}^{(1)}\lambda_{\max}^{(2)}
    \right\}.
    \]
    \end{enumerate}
\end{proof}

\begin{corollary}[Moments of a mixed distribution tensor product]\label{crl:s-for-n-product-mixed}
    Let $D_0$ be an $n_0$-qubit mixed-state distribution of first moment $\mu_0$, second moment $\mathcal{S}_0$, purity $\bar{\gamma}_0$ and expressivity measure $\eta_0$. Moreover, let $D = D_0^{\otimes k}$ be the distribution that takes a tensor product of $k$ copies of independent samples of $D_0$. Let $\mu, \mathcal{S}, \bar{\gamma}, \eta$ be its first moment, second moment, purity and expressivity measure.

    If $\mu_0 = 0$, then: 
    \[\mu = 0, \mathspace \bar{\gamma} = \bar{\gamma}_0^k, \mathspace \lambda_{min}\left(\mathcal{S}\right) = \lambda_{min}\left(\mathcal{S}_0\right)^k.\]
    
    If moreover $\eta_0 \geq 0$, then: 
    \[\lambda_{max}\left(\mathcal{S}\right) = \bar{\gamma}_0^{k-1} \lambda_{max}\left(\mathcal{S}_0\right), \mathspace \eta = \bar{\gamma}_0^{k-1} \eta_0.\]

    Finally, if $\kket{A_{i}}$ are the eigenvectors of $\mathcal{S}_0$, then $\lambda_{min}\left(\mathcal{S}\right)$ is associated to $\kket{A_{min}^{\otimes k}}$, and if $\eta_0 \geq 0$ then $\lambda_{max}\left(\mathcal{S}\right)$ is associated to many eigenvectors, including $\kket{A_{max} \otimes \widetilde{I}_{n_0}^{\otimes k - 1}}$ and $\kket{\widetilde{I}_{n_0}^{\otimes 3} \otimes A_{max} \otimes \widetilde{I}_{n_0}^{\otimes k - 4}}$ (the latter for $k \geq 4$).
\end{corollary}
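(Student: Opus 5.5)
The plan is induction on $k$, applying Theorem~\ref{thm:s-product-distribution} with $D^{(1)} = D_0^{\otimes (k-1)}$ and $D^{(2)} = D_0$. The base case $k=1$ is trivial. For the inductive step, assume $D_0^{\otimes(k-1)}$ has zero mean, purity $\bar{\gamma}_0^{k-1}$, and $\lambda_{min} = \lambda_{min}(\mathcal{S}_0)^{k-1}$. Since both factors are unbiased, parts (1) and (2) of that theorem immediately give $\mu = 0$ and $\bar{\gamma} = \bar{\gamma}_0^{k-1}\bar{\gamma}_0 = \bar{\gamma}_0^k$. Part (4) gives $\lambda_{min}(\mathcal{S}) = \lambda_{min}(\mathcal{S}_0)^{k-1}\lambda_{min}(\mathcal{S}_0)$. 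Part (3) identifies the eigenvector $\kket{A_{min}^{\otimes (k-1)} \otimes A_{min}} = \kket{A_{min}^{\otimes k}}$, where the inductive hypothesis provides $\kket{A_{min}^{\otimes(k-1)}}$ as the minimal eigenvector of the $(k-1)$-fold product.

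For the maximal eigenvalue, I would strengthen the inductive hypothesis to $\lambda_{max}(\mathcal{S}_{k-1}) = \bar{\gamma}_0^{k-2}\lambda_{max}(\mathcal{S}_0)$. Part (4) then gives
\[
\lambda_{max}(\mathcal{S}_k) = \max\left\{\bar{\gamma}_0^{k-2}\lambda_{max}(\mathcal{S}_0)\,\bar{\gamma}_0,\ \bar{\gamma}_0^{k-1}\lambda_{max}(\mathcal{S}_0),\ \bar{\gamma}_0^{k-2}\lambda_{max}(\mathcal{S}_0)^2\right\}.
\]
The first two entries coincide. The third is at most the first exactly when $\lambda_{max}(\mathcal{S}_0) \leq \bar{\gamma}_0$, which is precisely the hypothesis $\eta_0 \geq 0$. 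This is the one place the hypothesis enters, and it is the main point to check: without it, the $\mathcal{S}\otimes\mathcal{S}$ block dominates and the formula fails. The case $\lambda_{max}(\mathcal{S}_0)=0$ is degenerate but consistent, so nothing special is needed there. Hence $\lambda_{max}(\mathcal{S}) = \bar{\gamma}_0^{k-1}\lambda_{max}(\mathcal{S}_0)$, and $\eta = \bar{\gamma} - \lambda_{max}(\mathcal{S}) = \bar{\gamma}_0^{k-1}(\bar{\gamma}_0 - \lambda_{max}(\mathcal{S}_0)) = \bar{\gamma}_0^{k-1}\eta_0$.

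For the eigenvectors of $\lambda_{max}$, I would track them through the induction. Part (3) gives eigenvectors of the form $\kket{X \otimes \widetilde{I}_{n_0}}$ with eigenvalue $\lambda(X)\bar{\gamma}_0$, where $X$ is an eigenvector of $\mathcal{S}_{k-1}$, and $\kket{\widetilde{I}_{n_0}^{\otimes(k-1)}\otimes A_{max}}$ with eigenvalue $\bar{\gamma}_0^{k-1}\lambda_{max}(\mathcal{S}_0)$. Starting from $X = A_{max}\otimes \widetilde{I}^{\otimes(k-2)}$ yields $\kket{A_{max}\otimes \widetilde{I}_{n_0}^{\otimes(k-1)}}$. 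Placing $A_{max}$ in any slot $j$ with identities elsewhere similarly produces a maximal eigenvector; this follows from the same induction, or from the symmetric split $D_0^{\otimes j}\otimes D_0^{\otimes(k-j)}$. That covers $\kket{\widetilde{I}_{n_0}^{\otimes 3}\otimes A_{max}\otimes\widetilde{I}_{n_0}^{\otimes k-4}}$ for $k\ge 4$. One detail to verify is that the tensor product of normalized identities is again a normalized identity on the larger system, which is immediate since $\widetilde{I}_{n_1}\otimes\widetilde{I}_{n_2} = \widetilde{I}_{n_1+n_2}$.
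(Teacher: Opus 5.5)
Your proposal is correct and follows essentially the same route as the paper: induction on $k$ via Theorem~\ref{thm:s-product-distribution}, with the hypothesis $\eta_0 \geq 0$ entering only through $\lambda_{max}(\mathcal{S}_0) \leq \bar{\gamma}_0$. That inequality ensures the $\mathcal{S}^{(1)} \otimes \mathcal{S}^{(2)}$ block never supplies the largest eigenvalue. Your write-up spells out what the paper leaves implicit: the $\bar{\gamma}_0^{k-2}$ factors in the inductive step, and the tracking of the extremal eigenvectors.
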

\begin{proof}
    This result follows immediately by induction from theorem~\ref{thm:s-product-distribution}, and the following observation. When $\eta_0 \geq 0$, then: 
    \[\bar{\gamma}_0 - \lambda_{max}\left(\mathcal{S}_0\right) \geq 0 \implies \lambda_{max}\left(\mathcal{S}_0\right) \leq \bar{\gamma}_0.\]

    Therefore:
    \[\max\left\{\lambda_{max}\left(\mathcal{S}_0\right) \bar{\gamma}_0, \lambda_{max}\left(\mathcal{S}_0\right)^2\right\} = \bar{\gamma}_0 \lambda_{max}\left(\mathcal{S}_0\right).\]
\end{proof}

\begin{corollary}[Moments of a pure distribution tensor product] \label{crl:s-for-n-product-pure}
    Let $D_0$ be an $n_0$-qubit pure-state distribution of first moment $\mu_0$, second moment $\mathcal{S}_0$, purity $\bar{\gamma}_0 = 1$ and expressivity measure $\eta_0$. Moreover, let $D = D_0^{\otimes k}$ be the distribution that takes a tensor product of $k$ copies of independent samples of $D_0$. Let $\mu, \mathcal{S}, \bar{\gamma}, \eta$ be its first moment, second moment, purity and expressivity measure.

    If $\mu_0 = 0$, then $D$ is also pure (i.e.\ $\bar{\gamma} = 1$) and: 
    \[\mu = 0, \mathspace \lambda_{min}\left(\mathcal{S}\right) = \lambda_{min}\left(\mathcal{S}_0\right)^k.\]
    
    If moreover $\eta_0 \geq 0$, then: 
    \[\lambda_{max}\left(\mathcal{S}\right) = \lambda_{max}\left(\mathcal{S}_0\right), \mathspace \eta = \eta_0.\]

    Finally, if $\kket{A_{i}}$ are the eigenvectors of $\mathcal{S}_0$, then $\lambda_{min}\left(\mathcal{S}\right)$ is associated to $\kket{A_{min}^{\otimes k}}$, and if $\eta_0 \geq 0$ then $\lambda_{max}\left(\mathcal{S}\right)$ is associated to many eigenvectors, including $\kket{A_{max} \otimes \widetilde{I}_{n_0}^{\otimes k - 1}}$ and $\kket{\widetilde{I}_{n_0}^{\otimes 3} \otimes A_{max} \otimes \widetilde{I}_{n_0}^{\otimes k - 4}}$ (the latter for $k \geq 4$).
\end{corollary}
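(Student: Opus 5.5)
This is the pure-state specialization of Corollary~\ref{crl:s-for-n-product-mixed}, so the plan is simply to invoke that corollary with $\bar{\gamma}_0 = 1$ and simplify.

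First, since $D_0$ is a pure-state distribution, every sample of $D = D_0^{\otimes k}$ is a tensor product of $k$ pure states. Such a product is itself pure, so $\bar{\gamma} = 1$. Equivalently, Corollary~\ref{crl:s-for-n-product-mixed} gives $\bar{\gamma} = \bar{\gamma}_0^k = 1$. The same corollary, under $\mu_0 = 0$, gives $\mu = 0$ and $\lambda_{\min}(\mathcal{S}) = \lambda_{\min}(\mathcal{S}_0)^k$ directly.

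Second, assume additionally $\eta_0 \geq 0$. Corollary~\ref{crl:s-for-n-product-mixed} then gives
\[
\lambda_{\max}(\mathcal{S}) = \bar{\gamma}_0^{k-1}\lambda_{\max}(\mathcal{S}_0) = \lambda_{\max}(\mathcal{S}_0),
\qquad
\eta = \bar{\gamma}_0^{k-1}\eta_0 = \eta_0.
\]
As a consistency check, $\eta = \bar{\gamma} - \lambda_{\max}(\mathcal{S}) = 1 - \lambda_{\max}(\mathcal{S}_0) = \eta_0$.

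Third, the eigenvector statements transfer verbatim from Corollary~\ref{crl:s-for-n-product-mixed}. To confirm them I would unroll the induction on $k$ through Theorem~\ref{thm:s-product-distribution}, splitting $D_0^{\otimes k} = D_0 \otimes D_0^{\otimes (k-1)}$ at each step.
\begin{itemize}
\item The vector $\kket{A_{\min}^{\otimes k}}$ arises from the ``both traceless'' family at every step, so its eigenvalue is $\lambda_{\min}(\mathcal{S}_0)^k$.
\item The vector $\kket{A_{\max}\otimes \widetilde{I}^{\otimes k-1}}$ arises from the family with identity on the remaining factors. Its eigenvalue is $\lambda_{\max}(\mathcal{S}_0)\,\bar{\gamma}_0^{k-1} = \lambda_{\max}(\mathcal{S}_0)$.
\item The same family argument, with the traceless factor placed in a different slot, handles the placement with three leading identities.
\end{itemize}

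The only point needing care is the maximum-eigenvalue identification. It uses $\lambda_{\max}(\mathcal{S}_0) \le \bar{\gamma}_0 = 1$, which is exactly what $\eta_0 \ge 0$ guarantees. Under this condition $\lambda_{\max}(\mathcal{S}_0)^2 \le \lambda_{\max}(\mathcal{S}_0)$, so the mixed products in Theorem~\ref{thm:s-product-distribution} never exceed the identity-padded ones. That comparison is already carried out in the proof of Corollary~\ref{crl:s-for-n-product-mixed}, so no new obstacle arises here.
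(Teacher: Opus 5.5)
Your proposal is correct and matches the paper's proof, which is simply Corollary~\ref{crl:s-for-n-product-mixed} specialized to $\bar{\gamma}_0 = 1$. Your extra check of the eigenvector claims, unrolling Theorem~\ref{thm:s-product-distribution} with $\lambda_{\max}(\mathcal{S}_0) \le 1$ supplied by $\eta_0 \ge 0$, just reproduces the induction behind that corollary.
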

\begin{remark*}
    Observe that the second result of corollary~\ref{corol-main-text:product-bias-expressivity} from the main text comes from this result: taking tensor products of a zero-mean distribution preserves its expressivity, $\eta = \eta_0$.
\end{remark*}
\begin{proof}
    This is just corollary~\ref{crl:s-for-n-product-mixed} with $\bar{\gamma}_0 = 1$.
\end{proof}

Repeating a zero-mean pure-state distribution preserves its expressivity, provided $\eta_0$ is nonnegative. However, for mixed states the same expression includes a power of the local purity. As a consequence,  the bounds for noisy product ensembles will depend on purity.

\suppsection{Link between out-of-distribution generalization bounds and purity}\label{sec:purityproofs}

Even very different unitaries can have a small risk on highly mixed inputs, while behaving quite differently on purer states. This is because the risk is bounded above by the average input purity for any pair of unitaries. We use this observation to quantify the cost of generalizing from a distribution of lower-purity states when the allowed additive error is small.

\begin{theorem}\label{thm:universal-bound-risk-purity}
    Let $D$ be a distribution of average purity $\bar{\gamma}$. Then, for all unitaries $U, V$: 
    \[R_D\left(U, V\right) \leq \bar{\gamma}.\]
\end{theorem}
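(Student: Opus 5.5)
The plan is to start from Theorem~\ref{thm:risk-equivalent-formulation}, which gives
\[R_D\left(U, V\right) = \bar{\gamma} - \exval_{\rho \followsdistr D}\left[\Tr\left(U \rho U^{\dagger} V \rho V^{\dagger}\right)\right].\]
It then suffices to show that the subtracted term is nonnegative for every state $\rho$, since the expectation of a nonnegative quantity is nonnegative.

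To establish this pointwise nonnegativity, set $A = U\rho U^{\dagger}$ and $B = V\rho V^{\dagger}$. Both are positive semi-definite, being unitary conjugates of a density matrix. The trace of a product of two positive semi-definite matrices is nonnegative, because
\[\Tr(AB) = \Tr\left(A^{1/2} B A^{1/2}\right)\]
and $A^{1/2} B A^{1/2}$ is itself positive semi-definite. This is the same square-root trick used in the positivity proof of Lemma~\ref{lma:s-properties}. Combining, $\Tr\left(U \rho U^{\dagger} V \rho V^{\dagger}\right) \geq 0$ for every $\rho$, hence its expectation is nonnegative and $R_D(U,V) \leq \bar{\gamma}$.

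An alternative route goes through the definition of the risk: bound $\tfrac12\|A - B\|_2^2 \leq \tfrac12\left(\|A\|_2^2 + \|B\|_2^2\right)$ using $\Tr(AB) \geq 0$, and note that $\|A\|_2^2 = \|B\|_2^2 = \Tr(\rho^2)$ by unitary invariance. Averaging gives the same result. Either way, there is no real obstacle; the only step requiring a justification is the nonnegativity of $\Tr(AB)$ for positive semi-definite $A$ and $B$, which I will state with the square-root argument above.
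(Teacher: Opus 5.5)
Your proof is correct and is essentially the paper's argument: the paper writes the risk as $R_D(U,V) = \bar{\gamma} - \bra{w(U,V)} M \ket{w(U,V)}$ (Lemma~\ref{lma:risk-rayleigh}) and drops the second term because $M = 2^n \exval_{\rho \followsdistr D}\left[\rho \otimes \rho^T\right]$ is positive semi-definite. That is simply the vectorized form of your pointwise fact $\Tr\left(U\rho U^{\dagger} V \rho V^{\dagger}\right) \geq 0$ for positive semi-definite factors, which you justify cleanly with the square-root trick.
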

\begin{remark*}
    In particular, if the average purity of the distribution is exponentially small, then the risk between any unitaries $U, V$ is exponentially small.
\end{remark*}
\begin{proof}
    By lemma~\ref{lma:risk-rayleigh} and since the moment matrix $M$ of $D$ is positive semi-definite: 
    \[R_D\left(U, V\right) = \bar{\gamma} - \bra{w\left(U, V\right)} M \ket{w\left(U, V\right)} \leq \bar{\gamma}.\]
\end{proof}

\begin{corollary}\label{crl:trivial-odg-bound-purity}
    Let $D^{\left(1\right)}, D^{\left(2\right)}$ be mixed-state distributions, and $\bar{\gamma}^{\left(1\right)}$ be the average purity of $D^{\left(1\right)}$. Then for all unitaries $U, V$, writing $R_i := R_{D^{\left(i\right)}}$:
    \[R_1\left(U, V\right) \leq 0\cdot R_2\left(U, V\right) + \bar{\gamma}^{\left(1\right)}.\]
\end{corollary}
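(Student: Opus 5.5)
This is an immediate consequence of Theorem~\ref{thm:universal-bound-risk-purity}, so the plan is simply to instantiate that theorem for the first distribution and then rewrite the bound in the required comparison form. The steps are as follows.

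First, fix arbitrary unitaries $U, V$. Applying Theorem~\ref{thm:universal-bound-risk-purity} to $D^{(1)}$ gives $R_1(U,V) \leq \bar{\gamma}^{(1)}$. That theorem rests on Lemma~\ref{lma:risk-rayleigh} together with positive semi-definiteness of the moment matrix $M^{(1)} = 2^n\mathbb{E}[\rho\otimes\rho^{\mathsf T}]$. The latter holds because each $\rho\otimes\rho^{\mathsf T}$ is a tensor product of positive semi-definite matrices, and an expectation of such matrices remains positive semi-definite.

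Second, we need $R_2(U,V)$ to be finite, so that $0\cdot R_2(U,V) = 0$. This is clear, since $R_2$ is an average of $\tfrac12\|U\rho U^\dagger - V\rho V^\dagger\|_2^2 \leq 2$. We may also note $R_2(U,V)\geq 0$, although this is not even needed.

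Combining the two steps gives
\[
R_1(U,V) \leq \bar{\gamma}^{(1)} = 0\cdot R_2(U,V) + \bar{\gamma}^{(1)},
\]
which is the claim. Nothing here depends on $D^{(2)}$ beyond its risk being a real number, and no part of the argument is genuinely difficult.
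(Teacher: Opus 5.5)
Your proposal is correct and matches the paper's proof, which likewise just instantiates Theorem~\ref{thm:universal-bound-risk-purity} for $D^{(1)}$ and rewrites $R_1(U,V)\leq\bar{\gamma}^{(1)}$ in out-of-distribution form. Your extra remarks on positive semi-definiteness of $M^{(1)}$ and finiteness of $R_2$ are valid but not needed beyond what that theorem already supplies.
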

\begin{proof}
    This is just a rephrasing of theorem~\ref{thm:universal-bound-risk-purity} in an out-of-distribution generalization  form.
\end{proof}

We derive the purity penalty by applying the out-of-distribution generalization  inequality to Pauli unitaries and averaging over them. This turns the moment-matrix terms into traces, leaving a necessary condition that depends only on the two average purities and the allowed additive error.

\begin{theorem}[Purity penalty]\label{thm:purity-penalty}
    Let $D^{\left(1\right)}, D^{\left(2\right)}$ be mixed-state distributions, of average purities $\bar{\gamma}^{\left(1\right)}, \bar{\gamma}^{\left(2\right)}$, respectively. Moreover, let $\alpha, \beta \in \mathbb{R}_{\geq 0}$ such that $R_{D_1}\left(U, V\right) \leq \alpha R_{D_2}\left(U, V\right) + \beta$ for any unitaries $U, V$. Finally, assume that $\beta \leq \left(\bar{\gamma}_1 - 2^{-n}\right)/2$. Then: 
    \[\alpha \geq \frac{\bar{\gamma}^{\left(1\right)} - 2^{-n}}{2 \bar{\gamma}^{\left(2\right)}}.\]
\end{theorem}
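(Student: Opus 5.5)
Following the hint before the statement, we average the assumed inequality over Pauli unitaries. Fix $V=I$ and take $U=P$ for each $P\in\mathbb{P}_n$. By Lemma~\ref{lma:risk-rayleigh}, $R_{D^{(i)}}(P,I)=\bar{\gamma}^{(i)}-\bra{w}M^{(i)}\ket{w}$ with $\ket{w}=\kket{\widetilde{P}}$. Hence
\[
R_{D^{(i)}}(P,I)=\bar{\gamma}^{(i)}-\bbra{\widetilde{P}}M^{(i)}\kket{\widetilde{P}}.
\]
The assumed bound gives, for every $P$,
\[
\bar{\gamma}^{(1)}-\bbra{\widetilde{P}}M^{(1)}\kket{\widetilde{P}}\le \alpha\left(\bar{\gamma}^{(2)}-\bbra{\widetilde{P}}M^{(2)}\kket{\widetilde{P}}\right)+\beta .
\]

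Next we average this over all $4^n$ Paulis, including the identity. Since $\{\kket{\widetilde{P}}\}$ is an orthonormal basis,
\[
\frac{1}{4^n}\sum_P \bbra{\widetilde{P}}M^{(i)}\kket{\widetilde{P}}=\frac{\Tr M^{(i)}}{4^n}.
\]
The trace is $\Tr M^{(i)}=2^n\,\mathbb{E}[\Tr\rho\,\Tr\rho^T]=2^n$, which is also consistent with Lemma~\ref{lma:s-properties}: $\Tr\mathcal{S}=2^n-\bar{\gamma}$ plus the corner entry $\bar{\gamma}$. So the averaged risk equals $\bar{\gamma}^{(i)}-2^{-n}$, and averaging yields
\[
\bar{\gamma}^{(1)}-2^{-n}\le \alpha\left(\bar{\gamma}^{(2)}-2^{-n}\right)+\beta\le \alpha\bar{\gamma}^{(2)}+\beta .
\]
Using $\beta\le(\bar{\gamma}^{(1)}-2^{-n})/2$ gives $(\bar{\gamma}^{(1)}-2^{-n})/2\le\alpha\bar{\gamma}^{(2)}$, which is the claimed bound after dividing by $\bar{\gamma}^{(2)}$. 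Division is legitimate because $\bar{\gamma}^{(2)}\ge 2^{-n}>0$. (Here $\bar{\gamma}_1$ in the hypothesis is read as $\bar{\gamma}^{(1)}$.)

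There is no real obstacle. The only step needing care is verifying $\Tr M=2^n$ for mixed states, and this follows directly from the definition of $M$ since $\Tr(\rho\otimes\rho^T)=1$. One could also drop the identity Pauli and average over the $4^n-1$ nontrivial ones for a slightly sharper constant, but the full average is enough.
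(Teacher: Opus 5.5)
Your proposal is correct and follows the paper's proof essentially step for step. Like the paper, you take $U=P$, $V=I$, average the assumed inequality over all $4^n$ Paulis so that the moment-matrix terms collapse to $\Tr M^{(i)}/4^n = 2^{-n}$, and then absorb $\beta$ using the hypothesis $\beta \leq (\bar{\gamma}^{(1)}-2^{-n})/2$ before dividing by $\bar{\gamma}^{(2)}$.
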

\begin{intuition*}
    This result shows that, for ODG to be possible, then we need $\alpha \geq \Omega\left(\bar{\gamma}^{\left(1\right)} / \bar{\gamma}^{\left(2\right)}\right)$ (in the case $\bar{\gamma}^{\left(1\right)} \geq \left(1 + \Theta\left(1\right)\right) 2^{-n}$). In particular, if we want $\alpha$ to be polynomial, we know that $\bar{\gamma}^{\left(2\right)}$ must be at most polynomially smaller than the average purity of the distribution we are trying to generalise to $\bar{\gamma}^{\left(1\right)}$.
\end{intuition*}
\begin{remark*}
    The hypothesis $\beta \leq \left(\bar{\gamma}_1 - 2^{-n}\right)/2$ is pretty natural: if we are ready to take $\beta = \bar{\gamma}_1$, then (trivial/uninteresting) ``out-of-distribution generalization'' is always possible for any $\alpha \geq 0$ by corollary~\ref{crl:trivial-odg-bound-purity}.
\end{remark*}
\begin{remark*}
    This is precisely proposition~\ref{prop-main-text:purity-penalty} in the main text.
\end{remark*}
\begin{proof}
    We write $R_i := R_{D_i}$, and consider $M^{\left(i\right)}$ to be the moment matrix of $D^{\left(i\right)}$. Considering the unitaries $U = P \in \mathbb{P}_n$ and $V = I^{\otimes n}$, then $\ket{w\left(U, V\right)} = \frac{1}{\sqrt{2^n}} \kket{P} = \kket{\widetilde{P}}$. Therefore, by lemma~\ref{lma:risk-rayleigh}, we have:
    \[R_i\left(P, I\right) = \bar{\gamma}^{\left(i\right)} - \bbra{\widetilde{P}} M^{\left(i\right)} \kket{\widetilde{P}}.\]
    
    Summing on $P \in \mathbb{P}_n$, this naturally becomes a trace:
    \[\frac{1}{4^n} \sum_{P \in \mathbb{P}_n} R_i\left(P, I\right) = \bar{\gamma}^{\left(i\right)} - \frac{1}{4^n} \Tr\left(M^{\left(i\right)}\right) = \bar{\gamma}^{\left(i\right)} - \frac{1}{2^n},\]
    by lemma~\ref{lma:s-properties}, since $\Tr\left(M^{\left(i\right)}\right) = \Tr\left(\mathcal{S}^{\left(i\right)}\right) + \bar{\gamma}^{\left(i\right)} = 2^n$. But then, using our hypothesis that $R_1\left(U, V\right) \leq \alpha R_2\left(U, V\right) + \beta$ for all $U, V$, this requires: 
    \[\bar{\gamma}^{\left(1\right)} - \frac{1}{2^n} = \frac{1}{4^n} \sum_{P \in \mathbb{P}_n} R_1\left(P, I\right) \leq \alpha \frac{1}{4^n} \sum_{P \in \mathbb{P}_n} R_2\left(P, I\right) + \beta = \alpha \left(\bar{\gamma}^{\left(2\right)} - \frac{1}{2^n}\right) + \beta.\]
    
    Using the hypothesis that $\beta \leq \left(\bar{\gamma}_1 - 2^{-n}\right)/2$, this becomes: 
    \[\bar{\gamma}^{\left(1\right)} - \frac{1}{2^n} \leq \alpha \left(\bar{\gamma}^{\left(2\right)} - \frac{1}{2^n}\right) + \left(\frac{\bar{\gamma}^{\left(1\right)} - \frac{1}{2^n}}{2}\right) \implies \frac{\bar{\gamma}^{\left(1\right)} - \frac{1}{2^n}}{2} \leq \alpha \left(\bar{\gamma}^{\left(2\right)} - \frac{1}{2^n}\right) \leq \alpha \bar{\gamma}^{\left(2\right)} \implies \alpha \geq \frac{\bar{\gamma}^{\left(1\right)} - \frac{1}{2^n}}{2 \bar{\gamma}^{\left(2\right)}}.\]
\end{proof}

\suppsection{Applications}\label{sec:applications}

We will make the general bounds explicit for Haar ensembles. We begin by comparing global and product ensembles, then study how local depolarizing noise changes the comparison.
Finally, we derive generalisation bounds for ensembles of real states.

\suppsubsection{Haar measure on the unitary group}\label{sec:haar-meausre}

For the Haar ensemble, the second moment is proportional to the identity on the traceless subspace. This makes its spectrum particularly simple. The product results then give the spectra for independent Haar blocks, allowing us to compare different block sizes and to bound the risk on other distributions.

\begin{remark}
    By theorem~\ref{thm:same-moment-matrix-same-risk}, the following results are true for any 2-design, i.e.\ any distribution that has the same first and second moment as the Haar distribution on the unitary group.
\end{remark}

\begin{lemma}\label{lma:m-haar-distribution}
    Let $D = \Haar_n\left(U\right)$ be the $n$-qubit pure state Haar distribution on the unitary group, $M$ be its moment matrix, $\kket{\mu}$ be its first moment and $\mathcal{S}$ be its second moment. 
    
    Then, in the $\kket{\widetilde{P}}$ basis: 
    \[M = 1 \oplus \left(\frac{1}{2^n + 1} I_{4^n - 1}\right).\]
    
    In particular, this means that $\kket{\mu} = 0$ and $\mathcal{S} = \frac{1}{2^n + 1} I$.
\end{lemma}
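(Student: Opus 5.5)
We need the moment matrix of the Haar ensemble in the normalized Pauli basis, $M = 1 \oplus \frac{1}{2^n+1} I_{4^n-1}$. We will use Lemma~\ref{lma:moment-matrix-equivalence-pure-states}, which gives the entries as
\[\bbra{\widetilde{P}} M \kket{\widetilde{Q}} = \exval_{\rho\followsdistr \Haar_n}\left[\Tr(P\rho)\Tr(Q\rho)\right].\]
The computation then splits into three kinds of entries.

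\textbf{Identity entries.} For the top-left corner we have $\Tr(\rho)^2 = 1$, so this entry equals $\bar{\gamma}=1$.

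\textbf{Mixed entries (the first moment).} For $P = I$ and $Q \neq I$ the entry is $\Tr(Q\,\exval[\rho])$. Unitary invariance of the Haar measure forces $\exval[\rho] = I/2^n$, since it commutes with every unitary. Tracelessness of $Q$ then makes these entries vanish, so $\kket{\mu}=0$.

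\textbf{Non-identity block (the second moment).} Here we use the standard second-moment formula for Haar-random pure states,
\[\exval[\rho^{\otimes 2}] = \frac{I + \mathrm{SWAP}}{2^n(2^n+1)}.\]
This follows from Schur's lemma: $\exval[\rho^{\otimes 2}]$ commutes with all $U^{\otimes 2}$ and is supported on the symmetric subspace, so it is the normalized projector onto that subspace. We then evaluate
\[\Tr\big((P\otimes Q)\,\exval[\rho^{\otimes 2}]\big) = \frac{\Tr(P)\Tr(Q) + \Tr(PQ)}{2^n(2^n+1)},\]
using $\Tr\big((P\otimes Q)\,\mathrm{SWAP}\big) = \Tr(PQ)$. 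For $P,Q \neq I$ the first term vanishes, and $\Tr(PQ) = 2^n\delta_{PQ}$. This gives $\mathcal{S} = \frac{1}{2^n+1} I$.

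As a consistency check, Lemma~\ref{lma:s-properties} requires $\Tr(\mathcal{S}) = 2^n - 1$, and indeed $(4^n-1)/(2^n+1) = 2^n-1$.

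The main obstacle is justifying the Haar second-moment identity rigorously. If a self-contained argument is wanted, an alternative route avoids it. Unitary invariance, via Lemma~\ref{lma:eta-properties}(2) and its proof, gives $\mathcal{S} = V\mathcal{S}V^\dagger$ for all $V = U\otimes U^*$. The traceless subspace carries the adjoint representation, which is irreducible, so Schur's lemma makes $\mathcal{S}$ a scalar multiple of the identity. The trace condition $\Tr(\mathcal{S}) = 2^n - 1$ then fixes the scalar at $1/(2^n+1)$.
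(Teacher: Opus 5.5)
Your proposal is correct and follows essentially the same route as the paper: Lemma~\ref{lma:moment-matrix-equivalence-pure-states}, then the Haar second-moment formula $\exval[\rho^{\otimes 2}] = (I+\mathrm{SWAP})/(2^n(2^n+1))$, then $\Tr\big((P\otimes Q)\,\mathrm{SWAP}\big) = \Tr(PQ)$. The only difference is cosmetic: the paper reads off every entry, including the identity and mixed ones, from the single expression $\delta_{PQ}(2^n\delta_{PI}+1)/(2^n+1)$, so your separate first-moment invariance step is not needed (and your Schur-on-the-adjoint-representation alternative is also valid).
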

\begin{intuition*}
    In the $\kket{\widetilde{P}}$ basis, this reads:
    \[M = \begin{pmatrix} 1 &  &  &  &  \\  & \frac{1}{2^n + 1} &  &  &  \\  &  & \frac{1}{2^n + 1} &  &  \\  &  &  & \ddots &  \\  &  &  &  &  \frac{1}{2^n+1} \end{pmatrix}.\]
    
    So, a big advantage of the $\kket{\widetilde{P}}$ basis is that, for Haar distributions, $M$ is diagonal in it. For instance, in the computation basis $\ket{ij}$, it is possible to instead show that, when $n = 1$:
    \[M = \frac{1}{3} \begin{pmatrix} 2 & 0 & 0 & 1 \\ 0 & 1 & 0 & 0 \\ 0 & 0 & 1 & 0 \\ 1 & 0 & 0 & 2 \end{pmatrix},\]
    which is not even block diagonal (with a non-trivial block).
\end{intuition*}
\begin{remark*}
    Since $M$ encapsulates the first and the second moment of $D$, this lemma is, in fact, valid for any 2-design.
\end{remark*}
\begin{proof}[Proof]
    Our goal is to  compute: 
    \[M = \sum_{P, Q \in \mathbb{P}_n} \exval_{\rho \followsdistr D}\left(\Tr\left(P \rho\right) \Tr\left(Q \rho\right)\right) \kket{\widetilde{P}} \bbra{\widetilde{Q}}.\]
    
    Reference~\cite{mele2023introduction}'s theorem 22 tells us that: 
    \[\exval\left(\rho^{\otimes 2}\right) = \frac{I \otimes I + SWAP}{2^n \left(2^n + 1\right)}.\]
    
    We could use the remark of definition~\ref{def:moment-matrix}, using the $T_2$ operator. However, expressing $\exval\left(\rho^{\otimes 2}\right)$ in the $\kket{\widetilde{P}}$ basis is already very non-trivial. Let us instead simply compute the coefficients $\exval\left(\Tr\left(P \rho\right) \Tr\left(Q \rho\right)\right)$  by Lemma~\ref{lma:moment-matrix-equivalence-pure-states}.
    
    Since $\Tr\left(A \otimes B\right) = \Tr\left(A\right)\Tr\left(B\right)$, and the expected value and the trace are linear operators: 
    \autoeq{\exval\left(\Tr\left(P \rho\right) \Tr\left(Q \rho\right)\right) = \exval\left(\Tr\left(\left(P \otimes Q\right) \left(\rho \otimes \rho\right)\right)\right) = \Tr\left(\left(P \otimes Q\right) \exval\left(\rho \otimes \rho\right)\right) = \Tr\left(\left(P \otimes Q\right) \frac{I \otimes I + SWAP}{2^n \left(2^n + 1\right)}\right).}
    
    Expressing any two operators $A, B$ in the computation basis, using the fact that $SWAP$ is a symmetric permutation matrix and thus Hermitian, we have the following property: 
    \begin{equation}\begin{aligned}\label{eq:trace-swap-product}\Tr\left(\left(A \otimes B\right) \cdot SWAP\right) &= \Tr\left(\sum_{i,j,k,\ell} \left(a_{ij} \ket{i}\bra{j} \otimes b_{k\ell} \ket{k}\bra{\ell}\right) \cdot SWAP\right) \\ &= \sum_{i, j, k, \ell} a_{ij} b_{k\ell} \Tr\left(\ket{i} \bra{\ell} \otimes \ket{k} \bra{j}\right) \\ &= \sum_{i, j, k, \ell} a_{ij} b_{k\ell} \braket{\ell}{i} \braket{j}{k} \\ &= \sum_{i, j} a_{ij} b_{ji} \\ &= \sum_{i} \left(AB\right)_{i, i} \\ &= \Tr\left(AB\right).\end{aligned}\end{equation}
    
    Since $\Tr\left(P\right) =  2^n\delta_{P I}$ for any Pauli $P \in \mathbb{P}_n$, this yields: 
    \autoeq{\bbra{\widetilde{P}} M \kket{\widetilde{Q}} = \exval\left(\Tr\left(P \rho\right)\Tr\left(Q \rho\right)\right) = \frac{\Tr\left(P\right)\Tr\left(Q\right) + \Tr\left(PQ\right)}{2^n \left(2^n + 1\right)} = \frac{2^n \delta_{P I} 2^n \delta_{Q I} + 2^n \delta_{PQ}}{2^n \left(2^n + 1\right)} = \delta_{PQ} \frac{\delta_{P I} 2^n + 1}{2^n + 1}.}
    
    The $\delta_{PQ}$ at the front means that $M$ is diagonal. This yields exactly our result.
\end{proof}

For independent Haar blocks, a Pauli's eigenvalue depends only on how many blocks it acts on nontrivially. We call this number the Pauli block weight.

\begin{definition}[Pauli block weight]\label{def:pauli-block-weight}
    Let $n, k$ be integers, and $P \in \mathbb{P}_{nk}$.
    
    We say that $P$ has $n$-block weight $j \in \left\{0, \ldots, k\right\}$ if there exist $k$ smaller Paulis $Q_{\ell} \in \mathbb{P}_n$ and a set $L \subseteq \left\{1, \ldots, k\right\}$ so that:
    \begin{itemize}
    \item $\displaystyle P = \bigotimes_{\ell = 1}^k Q_{\ell}$.
    \item $\displaystyle \ell \in L \iff Q_{\ell} \neq I^{\otimes n}$, for all $\ell \in \left\{1, \ldots, k\right\}$.
    \item $\displaystyle \left|L\right| = j$.
    \end{itemize}
    
    We moreover note $w_n\left(P\right) = j$.
\end{definition}
\begin{intuition*}
    Intuitively, this means that, splitting $P$ as a tensor product of Paulis $Q_{\ell}$ of size $n$, exactly $j$ of them are non-identity. 
    
    Note that this is just a generalisation of regular Pauli weight: when $n = 1$, this represents the number of non-identity terms in $P$.
\end{intuition*}
\begin{example*}
    Let us for instance consider: 
    \[P = IXIYIZ \in \mathbb{P}_6, \mathspace n = 2.\]
    
    Its three blocks are $IX$, $IY$, $IZ$. None of them is $II$, so $P$ has $2$-block weight $w_2\left(P\right) = 3$.
\end{example*}
\begin{property*}
    We notice that, if $P_1 \in \mathbb{P}_{n k_1}$ and $P_2 \in \mathbb{P}_{n k_2}$: 
    \[w_n\left(P_1 \otimes P_2\right) = w_n\left(P_1\right) + w_n\left(P_2\right).\]
\end{property*}

\begin{lemma}\label{lma:s-haar-distribution}
    Let $D = \Haar_n\left(U\right)^{\otimes k}$, $\kket{\mu}$ be its first moment, $\mathcal{S}$ be its second moment and $\eta$ be its expressivity measure.
    
    Then: 
    \[\kket{\mu} = 0, \mathspace \lambda_{min}\left(\mathcal{S}\right) = \left(\frac{1}{2^n + 1}\right)^k, \mathspace \lambda_{max}\left(\mathcal{S}\right) = \frac{1}{2^n + 1}, \mathspace \eta = \frac{2^n}{2^n + 1}.\]
    
    Moreover, an eigenbasis of $\mathcal{S}$ is 
    \[\left\{\kket{\widetilde{P}} \suchthat P \in \mathbb{P}_{kn} \setminus \left\{I^{\otimes nk}\right\}\right\}.\]
    
    Finally, each $\kket{\widetilde{P}}$ is associated to the eigenvalue 
    \[\left(\frac{1}{2^n + 1}\right)^{w_n\left(P\right)},\]
    where $w_n\left(P\right) \in \left\{1, \ldots, k\right\}$ is the $n$-block weight of $P$ (definition~\ref{def:pauli-block-weight}).
\end{lemma}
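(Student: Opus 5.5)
The plan is to build everything from the single-block result, Lemma~\ref{lma:m-haar-distribution}, and the product machinery of Theorem~\ref{thm:s-product-distribution} and Corollary~\ref{crl:s-for-n-product-pure}. Let $D_0 = \Haar_n(U)$. By Lemma~\ref{lma:m-haar-distribution}, $D_0$ is a pure-state distribution with $\kket{\mu_0}=0$ and $\mathcal{S}_0 = \frac{1}{2^n+1}I_{4^n-1}$. This gives $\lambda_{min}(\mathcal{S}_0)=\lambda_{max}(\mathcal{S}_0)=\frac{1}{2^n+1}$ and $\eta_0 = 1-\frac{1}{2^n+1} = \frac{2^n}{2^n+1}\geq 0$. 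In addition, every $\kket{\widetilde{Q}}$ with $Q\in\mathbb{P}_n\setminus\{I^{\otimes n}\}$ is an eigenvector of $\mathcal{S}_0$.

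Corollary~\ref{crl:s-for-n-product-pure} applies directly to $D = D_0^{\otimes k}$ and yields the scalar claims:
\[
\kket{\mu}=0, \qquad \lambda_{min}(\mathcal{S})=\left(\tfrac{1}{2^n+1}\right)^k,
\]
\[
\lambda_{max}(\mathcal{S})=\lambda_{max}(\mathcal{S}_0)=\tfrac{1}{2^n+1}, \qquad \eta=\eta_0=\tfrac{2^n}{2^n+1}.
\]

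For the eigenbasis and the block-weight formula, I will argue by induction on $k$, writing $\Haar_n^{\otimes k} = \Haar_n^{\otimes (k-1)} \otimes \Haar_n$ and applying Theorem~\ref{thm:s-product-distribution} with $n_1 = n(k-1)$ and $n_2 = n$. Both factors are unbiased with purity $1$.

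The inductive hypothesis is that $\{\kket{\widetilde{P'}} : P' \in \mathbb{P}_{n(k-1)}\setminus\{I\}\}$ is an eigenbasis with eigenvalue $(2^n+1)^{-w_n(P')}$. Theorem~\ref{thm:s-product-distribution} then produces three families of eigenvectors:
\begin{itemize}
\item[] $\kket{\widetilde{P'}\otimes\widetilde{I}_n}$ with eigenvalue $(2^n+1)^{-w_n(P')}\cdot 1$;
\item[] $\kket{\widetilde{I}_{n(k-1)}\otimes\widetilde{Q}}$ with eigenvalue $1\cdot(2^n+1)^{-1}$;
\item[] $\kket{\widetilde{P'}\otimes\widetilde{Q}}$ with eigenvalue $(2^n+1)^{-w_n(P')-1}$.
\end{itemize}
Two identifications close the step. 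First, $\widetilde{A}\otimes\widetilde{B} = \widetilde{A\otimes B}$, since the normalisations multiply: $\sqrt{2^{n_1}}\sqrt{2^{n_2}}=\sqrt{2^{n_1+n_2}}$. Second, the additivity of block weight gives $w_n(P'\otimes Q)=w_n(P')+w_n(Q)$, with $w_n(I_n)=0$ and $w_n(Q)=1$ for $Q\neq I_n$. So in every family the eigenvalue equals $(2^n+1)^{-w_n(P)}$ for the combined Pauli $P$. These vectors range over exactly the non-identity Paulis of $\mathbb{P}_{nk}$, and the theorem already states they form an eigenbasis. The base case $k=1$ is Lemma~\ref{lma:m-haar-distribution}.

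I expect no real obstacle here. The only point needing care is bookkeeping: the eigenvectors given by Theorem~\ref{thm:s-product-distribution} must be recognised as normalised Pauli vectors, which follows from the identity $W(\kket{A}\otimes\kket{B})=\kket{A\otimes B}$ in Theorem~\ref{thm:m-product-distribution}. As a consistency check, the extremal eigenvalues read off from the block-weight formula, $w_n=1$ for the maximum and $w_n=k$ for the minimum, agree with those obtained from Corollary~\ref{crl:s-for-n-product-pure}.
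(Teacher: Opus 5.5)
Your proposal is correct and matches the paper's proof. The scalar claims come from Lemma~\ref{lma:m-haar-distribution} combined with Corollary~\ref{crl:s-for-n-product-pure}, and the eigenbasis comes from tensoring the single-block Pauli eigenvectors. The only cosmetic difference is that the paper applies Theorem~\ref{thm:m-product-distribution} repeatedly to the full moment matrix, where $\kket{\widetilde{I}}$ simply carries eigenvalue $1$, and removes the identity direction at the end; this avoids the three-family bookkeeping of Theorem~\ref{thm:s-product-distribution} that your induction uses.
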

\begin{intuition*}
    In other words, any eigenvalue of $\mathcal{S}$ has the form $\left(\frac{1}{2^n + 1}\right)^j$ for some $j \in \left\{1, \ldots, k\right\}$, and a corresponding eigenvector is $\kket{\widetilde{P}}$ where $P \in \mathbb{P}_{kn}$ is any Pauli of $n$-block weight $w_n\left(P\right) = j$.
\end{intuition*}
\begin{proof}[Proof] 
    By lemma~\ref{lma:m-haar-distribution}, $\Haar_n\left(U\right)$ is a pure-state distribution of zero mean, whose second moment is $\mathcal{S} = \frac{1}{2^n + 1} I$ and whose expressivity measure is $\eta = \frac{2^n}{2^n + 1} > 0$. Therefore, by corollary~\ref{crl:s-for-n-product-pure}:
    \[\kket{\mu} = 0, \mathspace \lambda_{min}\left(\mathcal{S}\right) = \left(\frac{1}{2^n + 1}\right)^k, \mathspace \lambda_{max}\left(\mathcal{S}\right) = \frac{1}{2^n + 1}, \mathspace \eta = \frac{2^n}{2^n + 1}.\]

    Let us now consider an arbitrary $P \in \mathbb{P}_{nk}$. We can decompose:
    \[P = \bigotimes_{\ell = 1}^k Q_{\ell}, \mathspace Q_{\ell} \in \mathbb{P}_n.\]

    By lemma~\ref{lma:m-haar-distribution}, each $\kket{\widetilde{Q}_{\ell}}$ is an eigenvector of the single-block moment matrix, associated to the eigenvalue $1$ if $Q_{\ell} = I^{\otimes n}$, and $\frac{1}{2^n + 1}$ otherwise. Applying theorem~\ref{thm:m-product-distribution} repeatedly, this means that $\kket{\widetilde{P}}$ is an eigenvector of the moment matrix of $D$, associated to the product of these eigenvalues. Since exactly $w_n\left(P\right)$ blocks are non-identity, this eigenvalue is:
    \[\left(\frac{1}{2^n + 1}\right)^{w_n\left(P\right)}.\]

    Finally, since $\kket{\mu} = 0$ and $D$ is pure, its moment matrix reads $M = 1 \oplus \mathcal{S}$ in the $\kket{\widetilde{P}}$ basis by lemma~\ref{lma:other-formulation-decomposition}. Therefore, removing the identity direction from this eigenbasis of $M$ gives precisely the claimed eigenbasis of $\mathcal{S}$, with the corresponding eigenvalues.
\end{proof}

\begin{theorem}\label{thm:haar-general-comparison}
    Let $n_1, k_1, n_2, k_2$ be integers so that $n_1 k_1 = n_2 k_2$. Moreover, let $D_1 = \Haar_{n_1}^{\otimes k_1}\left(U\right)$ and $D_2 = \Haar_{n_2}^{\otimes k_2}\left(U\right)$.
    
    Then, for any unitaries $U, V$:
    \autoeq[s]{\left(1 + \frac{1}{\left(2^{n_2} + 1\right)^{k_2} - 1}\right)\left(1 - \frac{1}{2^{n_1} + 1}\right)R_2\left(U, V\right) \leq R_1\left(U, V\right) \leq \left(1 - \frac{1}{\left(2^{n_1} + 1\right)^{k_1}}\right)\left(1 + \frac{1}{2^{n_2}}\right)R_2\left(U, V\right).}
\end{theorem}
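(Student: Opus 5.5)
Both inequalities will come from the fourth (equivalently third, with zero means) inequality of Theorem~\ref{thm:inequalities}, using the spectral data of Lemma~\ref{lma:s-haar-distribution}. Both $D_1$ and $D_2$ are pure, unbiased distributions, so every $\mu$-term vanishes. The bound then reduces to $R_1 \leq \alpha R_2$ whenever $\alpha \geq (1-\lambda_{\min}(\mathcal{S}^{(1)}))/\eta^{(2)}$.

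For the upper bound, take $D^{(1)}=D_1$ as the test ensemble and $D^{(2)}=D_2$ as the training ensemble. Lemma~\ref{lma:s-haar-distribution} gives
\[\lambda_{\min}(\mathcal{S}^{(1)}) = (2^{n_1}+1)^{-k_1}, \qquad \eta^{(2)} = \frac{2^{n_2}}{2^{n_2}+1}.\]
Hence
\[\alpha = \frac{1-(2^{n_1}+1)^{-k_1}}{2^{n_2}/(2^{n_2}+1)} = \left(1-\frac{1}{(2^{n_1}+1)^{k_1}}\right)\left(1+\frac{1}{2^{n_2}}\right),\]
which is exactly the claimed right-hand constant.

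For the lower bound, swap the roles: take $D^{(1)}=D_2$ and $D^{(2)}=D_1$. This gives
\[R_2 \leq \frac{1-(2^{n_2}+1)^{-k_2}}{2^{n_1}/(2^{n_1}+1)}\, R_1.\]
Inverting this constant yields
\[\frac{1}{1-(2^{n_2}+1)^{-k_2}}\cdot\frac{2^{n_1}}{2^{n_1}+1} = \left(1+\frac{1}{(2^{n_2}+1)^{k_2}-1}\right)\left(1-\frac{1}{2^{n_1}+1}\right),\]
which is the claimed left-hand constant. The algebraic rewriting of $1/(1-x^{-k})$ as $1+1/(x^k-1)$ is routine.

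The only point needing care is checking the hypotheses. We need $\eta>0$ for the training ensemble in each direction, and we need the required $\lambda_{\min}$ and $\eta$ values for products of Haar blocks. Lemma~\ref{lma:s-haar-distribution} provides both, since it was derived from Corollary~\ref{crl:s-for-n-product-pure} using $\eta_0 = 2^n/(2^n+1) \geq 0$. We also need the risks to be comparable on the same $n_1k_1=n_2k_2$ qubits, which holds by assumption.

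I expect no real obstacle here. The main subtlety is the bookkeeping of which distribution plays the role of $D^{(1)}$ versus $D^{(2)}$ in Theorem~\ref{thm:inequalities}, so that the minimum eigenvalue is taken for the test ensemble and the expressivity for the training ensemble. Tightness is not claimed in this statement, so no saturating unitaries need to be exhibited.
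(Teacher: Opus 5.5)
Your proposal is correct and follows essentially the same route as the paper: it applies the fourth inequality of Theorem~\ref{thm:inequalities} twice, with the roles of the two ensembles swapped, using $\lambda_{\min}(\mathcal{S})=(2^{n}+1)^{-k}$ and $\eta=2^{n}/(2^{n}+1)$ from Lemma~\ref{lma:s-haar-distribution}. It then inverts the constant in the second application to obtain the lower bound.
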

\begin{intuition*}
    The hypothesis that $n_1 k_1 = n_2 k_2$ just states that $D_1$ and $D_2$ are two distributions over the same number of qubits.
    
    Moreover, note that, for $n_1, k_1, n_2, k_2$ sufficiently large, this theorem means that $R_1\left(U, V\right) \approx R_2\left(U, V\right)$.
\end{intuition*}
\begin{remark*}
    This result is precisely corollary~\ref{corol-main-text:locally-scrambling-comparison} in the main text.
\end{remark*}
\begin{proof}[Proof]
    To simplify notations, we note $R_i\left(U, V\right) = R_i$.
    
    We will apply twice the fourth inequality of theorem~\ref{thm:inequalities}. First, using lemma~\ref{lma:s-haar-distribution}: 
    \autoeq{R_1 \leq \frac{1 - \lambda_{min}\left(\mathcal{S}^{\left(1\right)}\right)}{1 - \lambda_{max}\left(\mathcal{S}^{\left(2\right)}\right)} R_2 = \frac{1 - \left(\frac{1}{2^{n_1} + 1}\right)^{k_1}}{1 - \frac{1}{2^{n_2} + 1}} R_2 = \left(1 - \frac{1}{\left(2^{n_1} + 1\right)^{k_1}}\right) \frac{2^{n_2} + 1}{2^{n_2}} R_2 = \left(1 - \frac{1}{\left(2^{n_1} + 1\right)^{k_1}}\right) \left(1 + \frac{1}{2^{n_2}}\right) R_2.}
    
    This is our second inequality. Similarly, using again the fourth inequality of theorem~\ref{thm:inequalities} with lemma~\ref{lma:s-haar-distribution}:
    \autoeq{R_2 \leq \frac{1 - \lambda_{min}\left(\mathcal{S}^{\left(2\right)}\right)}{1 - \lambda_{max}\left(\mathcal{S}^{\left(1\right)}\right)} R_1 = \left(1 - \frac{1}{\left(2^{n_2} + 1\right)^{k_2}}\right) \left(1 + \frac{1}{2^{n_1}}\right) R_1 = \frac{\left(2^{n_2} + 1\right)^{k_2} - 1}{\left(2^{n_2} + 1\right)^{k_2}} \cdot \frac{2^{n_1} + 1}{2^{n_1}} R_1.}
    
    Reordering the terms, this yields:
    \autoeq{R_1 \geq \frac{\left(2^{n_2} + 1\right)^{k_2}}{\left(2^{n_2} + 1\right)^{k_2} - 1} \cdot \frac{2^{n_1}}{2^{n_1} + 1} R_2 = \left(1 + \frac{1}{\left(2^{n_2} + 1\right)^{k_2} - 1}\right)\left(1 - \frac{1}{2^{n_1} + 1}\right) R_2.}
    
    This is our first inequality.
\end{proof}
\begin{remark*}
    We used the fourth inequality of theorem~\ref{thm:inequalities}. It is possible---although harder---to use the second inequality instead, yielding a potentially tighter inequality. We will instead apply it in the following, less general, theorem.
\end{remark*}

The bounds are tight because the eigenvectors that attain them
are normalized vectorizations of Pauli operators, which are
unitary. We can therefore choose $V^\dagger U$ to realize these
cases: a Pauli of weight $1$ attains the upper bound, while a
Pauli of weight $n$ attains the lower bound.

\begin{theorem}\label{thm:haar-tight-comparison}
    Let $D_1 = \Haar_n\left(U\right)$ and $D_2 = \Haar_1^{\otimes n}\left(U\right)$. Then, for all unitaries $U, V$:
    \[\frac{3^n}{3^n - 1}\cdot \frac{2^n}{2^n + 1} R_2\left(U, V\right) \leq R_1\left(U, V\right) \leq \frac{3}{2}\cdot \frac{2^n}{2^n + 1} R_2\left(U, V\right).\]
    
    Let $W = V^{\dagger} U$. If $W \in \mathbb{P}_n$ and $w_1\left(W\right) = n$ has maximal Pauli weight (definition~\ref{def:pauli-block-weight}), then the first inequality is reached with equality. If $W \in \mathbb{P}_n$ and $w_1\left(W\right) = 1$, then the second inequality is reached with equality.
\end{theorem}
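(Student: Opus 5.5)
We will obtain this from the second inequality of Theorem~\ref{thm:inequalities} (with $\beta = 0$), applied once in each direction. By Lemma~\ref{lma:m-haar-distribution}, $D_1 = \Haar_n(U)$ is unbiased and pure. In the Pauli basis its second moment is $\mathcal{S}^{(1)} = \frac{1}{2^n+1} I$. By Lemma~\ref{lma:s-haar-distribution} with block size $1$ and $k = n$, $D_2 = \Haar_1^{\otimes n}(U)$ is also unbiased. Its second moment is diagonal in the same basis $\{\kket{\widetilde{P}} : P \neq I^{\otimes n}\}$, with eigenvalue $3^{-w_1(P)}$ on $\kket{\widetilde{P}}$. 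Both expressivities are positive: $\eta^{(1)} = \eta^{(2)} = 2^n/(2^n+1)$. So the hypotheses of the second inequality hold in both orderings, and both purities equal $1$.

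For the upper bound (test $D_1$, train $D_2$), the optimal constant is
\[
\max_{P \neq I}\frac{1 - \frac{1}{2^n+1}}{1 - 3^{-w_1(P)}} = \frac{2^n}{2^n+1}\cdot\max_{1\le j\le n}\frac{1}{1-3^{-j}}.
\]
The ratio $1/(1-3^{-j})$ is decreasing in $j$, so the maximum is at $j = 1$, which gives $\frac{3}{2}\cdot\frac{2^n}{2^n+1}$.

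For the lower bound (test $D_2$, train $D_1$), the constant is
\[
\max_{P\neq I}\frac{1-3^{-w_1(P)}}{2^n/(2^n+1)} = \frac{2^n+1}{2^n}\left(1-3^{-n}\right),
\]
attained at $w_1(P) = n$. Inverting gives $R_1 \geq \frac{3^n}{3^n-1}\cdot\frac{2^n}{2^n+1} R_2$.

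For tightness, we use the equality clause of the second inequality. The maximizing common eigenvector is $\kket{\widetilde{W}} = \frac{1}{\sqrt{2^n}}\kket{W}$ with $W$ a Pauli, which is unitary. Choosing $V^\dagger U = W$ saturates the bound. As a direct check, Lemma~\ref{lma:risk-rayleigh} gives $R_i = 1 - \lambda_W(\mathcal{S}^{(i)})$, since $\ket{w} = \kket{\widetilde{W}}$ is an eigenvector of $M^{(i)}$. For $w_1(W) = 1$ this yields $R_1/R_2 = \frac{2^n/(2^n+1)}{2/3}$. For $w_1(W) = n$ it yields $R_1/R_2 = \frac{2^n/(2^n+1)}{1 - 3^{-n}}$, exactly the two constants above.

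The main point needing care is the monotonicity argument that identifies which block weight maximizes each ratio. The other is confirming that the eigenvalue assignment from Lemma~\ref{lma:s-haar-distribution} is exactly $3^{-w_1(P)}$, with no degeneracy issues, so that the maximizing eigenvector can be taken to be a Pauli. Both follow directly from that lemma's explicit eigenbasis.
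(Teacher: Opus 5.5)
Your proposal is correct and follows the paper's proof essentially verbatim: you apply the second inequality of Theorem~\ref{thm:inequalities} with $\beta=0$ once in each direction, maximize the eigenvalue ratio at block weight $j=1$ and $j=n$ respectively, and saturate each bound by taking $V^\dagger U$ to be a Pauli of that weight. One harmless slip: Lemma~\ref{lma:s-haar-distribution} with block size $1$ gives $\eta^{(2)} = 2/3$ (since $\lambda_{\max}(\mathcal{S}^{(2)}) = 1/3$), not $2^n/(2^n+1)$, but the argument only uses that $\eta^{(2)}$ is positive.
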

\begin{remark*}
    Note that this inequality is just the one given by theorem~\ref{thm:haar-general-comparison} for $n_1 = n, k_1 = 1, n_2 = 1, k_2 = n$. However, the interest here is that it is proven to be tight. This theorem is verified numerically and discussed in the main text. See for instance figure~\ref{fig:risk-graph-non-zero-mean}.     
    
    Moreover, note that this is very similar to the inequality found in reference~\cite[Appendix C.1]{caro2022outofdistribution}, except that we have an extra $\frac{3^n}{3^n - 1}$ factor for the first inequality, and a $\frac{3}{2}$ instead of a $2$ for the second inequality. These new inequalities are therefore tighter.
\end{remark*}
\begin{remark*}
    This is precisely corollary~\ref{corol-main-text:haar-1-n-generalisation} in the main text.
\end{remark*}
\begin{proof}[Proof]
    We want to use the second inequality of theorem~\ref{thm:inequalities}. Let $\kket{\mu^{\left(i\right)}}$ be the first moment and $\mathcal{S}^{\left(i\right)}$ be the second moment of $D^{\left(i\right)}$. By lemma~\ref{lma:s-haar-distribution}, both $\mathcal{S}^{\left(i\right)}$ are diagonalisable in the same basis, 
    \[\left\{\kket{\widetilde{P}} \suchthat P \in \mathbb{P}_n \setminus \left\{I^{\otimes n}\right\}\right\}.\]
    
    Moreover, we know that $\kket{\mu^{\left(1\right)}} = \kket{\mu^{\left(2\right)}} = 0$, $\lambda_{max}\left(\mathcal{S}^{\left(1\right)}\right) = \frac{1}{2^n + 1} < 1$ and $\lambda_{max}\left(\mathcal{S}^{\left(2\right)}\right) = \frac{1}{3} < 1$. We therefore have the hypothesis to use the second inequality of theorem~\ref{thm:inequalities}.
    
    Still by lemma~\ref{lma:s-haar-distribution}, any eigenvalue of $\mathcal{S}^{\left(1\right)}$ is $\frac{1}{2^n + 1}$, and any eigenvalue of $\mathcal{S}^{\left(2\right)}$ is of the form $\frac{1}{3^j}$ where $j = w_1\left(P\right)$  is the Pauli weight associated to the corresponding eigenvector $\kket{\widetilde{P}}$.
    
    We now apply twice the second inequality of theorem~\ref{thm:inequalities}, one for each inequality we try to prove.
    \begin{itemize}[left=0pt]
    \item We wish that, for all $k$, and thus for all $j$: 
    \[\alpha_{12} \geq \frac{1 - \lambda_k\left(\mathcal{S}^{\left(1\right)}\right)}{1 - \lambda_k\left(\mathcal{S}^{\left(2\right)}\right)} = \frac{1 - \frac{1}{2^n + 1}}{1 - \frac{1}{3^j}} = \frac{2^n}{2^n + 1} \frac{3^j}{3^j - 1}\]
    
    Using usual calculus tools, one can find that this is maximised for the smallest $j$, $j = 1$. We can thus pick: 
    \[\alpha_{12} = \frac{2^n}{2^n + 1}\cdot \frac{3}{2}.\]
    
    This yields that, by theorem~\ref{thm:inequalities}: 
    \[R_1\left(U, V\right) \leq \frac{3}{2} \cdot \frac{2^n}{2^n + 1} R_2\left(U, V\right).\]

    Since we used the second inequality of theorem~\ref{thm:inequalities}, and since $\kket{\widetilde{P}}$ for any $P \in \mathbb{P}_n$ with $w_1\left(P\right) = 1$ is a common eigenvector whose eigenvalues maximise the above ratio, we know that any unitaries $U, V$ such that $V^{\dagger} U = P$  make the bound tight by theorem~\ref{thm:inequalities}.
    
    \item We now wish that, for all $k$ and thus for all $j$: 
    \[\alpha_{21} \geq \frac{1 - \lambda_k\left(\mathcal{S}^{\left(2\right)}\right)}{1 - \lambda_k\left(\mathcal{S}^{\left(1\right)}\right)} = \frac{1 - \frac{1}{3^j}}{1 - \frac{1}{2^n + 1}} = \frac{3^j - 1}{3^j} \cdot \frac{2^n + 1}{2^n}.\]
    
    Again, using calculus tools, one can find that this is maximised for the largest $j$, $j = n$. We can thus pick: 
    \[\alpha_{21} = \frac{3^n - 1}{3^n} \cdot \frac{2^n + 1}{2^n}.\]
    
    This yields that, by theorem~\ref{thm:inequalities}:  
    \autoeq{R_2\left(U, V\right) \leq \frac{3^n - 1}{3^n}\cdot \frac{2^n + 1}{2^n} R_1\left(U, V\right) \iff R_1\left(U, V\right) \geq \frac{3^n}{3^n - 1}\cdot \frac{2^n}{2^n + 1} R_2\left(U, V\right).}
    
    This is our second inequality. Since we again used the second inequality of theorem~\ref{thm:inequalities}, we know equality is reached whenever $\ket{w\left(U, V\right)} = \kket{\widetilde{P}}$, where $w_1\left(P\right) = j = n$ has maximal Pauli weight. 
    \end{itemize}
\end{proof}

\begin{corollary}\label{crl:haar-zero-mean-generalisation}
    For all mixed-state distributions $D$ over $nk$ qubits of mean $\kket{\mu}$, and for all unitaries $U, V$: 
    \[R_D\left(U, V\right) \leq \frac{2^n + 1}{2^n} R_{\Haar_n^{\otimes k}\left(U\right)}\left(U, V\right) + \left\|\kket{\mu}\right\|.\]
\end{corollary}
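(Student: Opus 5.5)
The plan is to apply the fourth inequality of theorem~\ref{thm:inequalities}, taking $D^{(1)} = D$ (the arbitrary testing distribution) and $D^{(2)} = \Haar_n^{\otimes k}(U)$ (the training distribution). Equivalently, this is corollary~\ref{crl:zero-mean-generalisation} with the Haar-product ensemble as the reference.

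First, I would collect the relevant quantities for $D^{(2)}$ from lemma~\ref{lma:s-haar-distribution}. That lemma gives $\kket{\mu^{(2)}} = 0$, $\bar{\gamma}^{(2)} = 1$ (pure states), $\lambda_{max}(\mathcal{S}^{(2)}) = \frac{1}{2^n+1}$, and hence
\[\eta^{(2)} = \frac{2^n}{2^n+1} > 0.\]
So the hypothesis $\eta^{(2)} > 0$ of the fourth inequality holds.

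Second, I would choose $\alpha$. The condition to satisfy is
\[\alpha \geq \frac{\bar{\gamma}^{(1)} - \lambda_{min}(\mathcal{S}^{(1)})}{\eta^{(2)}}.\]
By lemma~\ref{lma:s-properties}, $\lambda_{min}(\mathcal{S}^{(1)}) \geq 0$, and $\bar{\gamma}^{(1)} \leq 1$ because it is an average purity. The numerator is therefore at most $1$, so it suffices to take
\[\alpha = \frac{1}{\eta^{(2)}} = \frac{2^n+1}{2^n}.\]

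Third, I would plug in. The fourth inequality gives
\[R_D \leq \alpha\left(R_{\Haar_n^{\otimes k}} + \|\mu^{(2)}\|_2\right) + \|\mu^{(1)}\|_2.\]
Since $\mu^{(2)} = 0$, this reduces to
\[R_D \leq \frac{2^n+1}{2^n}\, R_{\Haar_n^{\otimes k}} + \|\kket{\mu}\|,\]
using that $\|\mu\|_2 = \|\kket{\mu}\|_2$ under vectorisation. This is the claimed bound.

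There is no real obstacle here. The only points to check are that the distributions live on the same number of qubits ($nk$), which the statement assumes, and that the bound $\bar{\gamma}^{(1)} \leq 1$ is legitimate for arbitrary mixed states. If one wanted a sharper constant, it could be replaced by $\bar{\gamma}^{(1)}$, but that is not needed for the statement as given.
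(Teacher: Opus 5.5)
Your proposal is correct and matches the paper's argument: the paper reads off $\kket{\mu}=0$ and $\eta = 2^n/(2^n+1)$ for $\Haar_n^{\otimes k}(U)$ from lemma~\ref{lma:s-haar-distribution} and then invokes corollary~\ref{crl:zero-mean-generalisation}, which is itself the fourth inequality of theorem~\ref{thm:inequalities} combined with $\lambda_{min}(\mathcal{S}^{(1)})\geq 0$ and $\bar{\gamma}^{(1)}\leq 1$. You inline that corollary's justification instead of citing it, which is the only difference.
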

\begin{intuition*}
    Intuitively, this means that having a small error on any 2-design means that we have a small error on any zero-mean distribution. Recall that theorem~\ref{thm:bound-support-size-mu-eta} states that asking for a pure-state distribution to be zero-mean is not a small requirement, and that learning well on average (like having a small error on the Haar distribution) does not necessarily mean learning well everywhere.
\end{intuition*}
\begin{remark*}
    Leaving $k = 1$, this is precisely the first inequality of corollary~\ref{corol-main-text:haar-general-comparison} in the main text. The second inequality of this corollary is also just a direct application of corollary~\ref{crl:zero-mean-generalisation}.
\end{remark*}
\begin{proof}[Proof]
    By lemma~\ref{lma:s-haar-distribution}, $\Haar_n^{\otimes k}\left(U\right)$ is zero mean and has an expressivity measure $\eta$ given by: 
    \[\kket{\mu} = 0, \mathspace \eta = \frac{2^n}{2^n + 1}.\]
    
    But then, by corollary~\ref{crl:zero-mean-generalisation}, given any mixed-state distribution $D$ over the same number of qubits as $\Haar_n^{\otimes k}\left(U\right)$, of first moment $\kket{\mu}$: 
    \autoeq{R_D\left(U, V\right) \leq \frac{1}{\eta} R_{\Haar_n^{\otimes k}\left(U\right)}\left(U, V\right) + \left\|\kket{\mu}\right\| = \frac{2^n + 1}{2^n} R_{\Haar_n^{\otimes k}\left(U\right)}\left(U, V\right) + \left\|\kket{\mu}\right\|.}
\end{proof}

\suppsubsection{Haar measure on the unitary group under depolarising noise}\label{sec:noiseproofs}

Local depolarizing noise changes both the purity and the second-moment spectrum of a Haar ensemble. We compute these changes to determine how a small noisy risk constrains the noiseless risk. The same moments also give bounds for generalizing from noisy Haar inputs to other target distributions.

\begin{definition}
    Given some $\delta \in \left[0, 1\right]$, the depolarising noise channel on one qubit is defined as:
    \[\mathcal{D}_{\delta}\left(\rho\right) = \left(1-\delta\right) \rho + \delta \frac{I}{2} \Tr\left(\rho\right).\]

    Given an $n$-qubit distribution $D$, we write $\mathcal{D}_{\delta}^{\otimes n}\left(D\right)$ for the distribution which outputs $\mathcal{D}_{\delta}^{\otimes n}\left(\rho\right)$ for $\rho \followsdistr D$.
\end{definition}

\begin{theorem}\label{thm:m-depolarised-haar-distribution}
    Let $\delta \in \left[0, 1\right]$ and $D = \mathcal{D}_{\delta}^{\otimes n}\left(\Haar_n\right)$. Moreover, let $M$ be its moment matrix, $\mu$ be its first moment, $\bar{\gamma}$ be its average purity, $\mathcal{S}$ be its second moment and $\eta$ be its expressivity measure.

    Then, for any Paulis $P, Q \in \mathbb{P}_n$:
    \[
        \bbra{\widetilde{P}} M \kket{\widetilde{Q}}
        =
        \delta_{PQ}
        \frac{
            1 +
            \left(\frac{1 + 3\left(1-\delta\right)^2}{2}\right)^{n-w_1\left(P\right)}
            \left(\frac{1-\left(1-\delta\right)^2}{2}\right)^{w_1\left(P\right)}
        }{2^n+1},
    \]
    where $w_1\left(P\right)$ is the Pauli weight of $P$ (definition~\ref{def:pauli-block-weight}), i.e.\ the number of elements that are not equal to the identity in its tensor product expansion.

    In particular:
    \[
        \mu = 0,
        \mathspace
        \bar{\gamma}
        =
        \frac{1+\left(\frac{1+3\left(1-\delta\right)^2}{2}\right)^n}{2^n+1}, \mathspace \eta
        =
        \frac{2\left(1-\delta\right)^2}{2^n+1}
        \left(\frac{1+3\left(1-\delta\right)^2}{2}\right)^{n-1}.
    \]

    Moreover, a basis of eigenvector-eigenvalue pairs of $\mathcal{S}$ is given by, for $P \in \mathbb{P}_n \setminus \left\{I^{\otimes n}\right\}$:
    \[\left(\kket{\widetilde{P}}, 
        \frac{
                    1 +
                    \left(\frac{1+3\left(1-\delta\right)^2}{2}\right)^{n-w_1\left(P\right)}
                    \left(\frac{1-\left(1-\delta\right)^2}{2}\right)^{w_1\left(P\right)}
                }{2^n+1}.
    \right)\]
\end{theorem}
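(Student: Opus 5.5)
The plan is to compute the matrix entries directly from Lemma~\ref{lma:moment-matrix-other-representation}, $\bbra{\widetilde{P}} M \kket{\widetilde{Q}} = \exval\left[\Tr\left(P \sigma Q \sigma\right)\right]$ with $\sigma = \mathcal{D}_{\delta}^{\otimes n}(\rho)$ and $\rho \followsdistr \Haar_n$. The key observation is that the depolarising channel is diagonal in the Pauli basis: $\mathcal{D}_{\delta}^{\otimes n}(R) = (1-\delta)^{w_1(R)} R$ for $R \in \mathbb{P}_n$.

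Expanding $\rho = 2^{-n}\sum_R \Tr(R\rho) R$ therefore gives
\[
\sigma = 2^{-n}\sum_{R} (1-\delta)^{w_1(R)} \Tr(R\rho) R.
\]
Substituting into $\Tr(P\sigma Q\sigma)$ produces
\[
4^{-n}\sum_{R,R'} (1-\delta)^{w_1(R)+w_1(R')}\, \exval\left[\Tr(R\rho)\Tr(R'\rho)\right] \Tr(PRQR').
\]
From the proof of Lemma~\ref{lma:m-haar-distribution} we have
\[
\exval\left[\Tr(R\rho)\Tr(R'\rho)\right] = \delta_{RR'}\,\frac{2^n\delta_{RI}+1}{2^n+1},
\]
so the double sum collapses to a single sum over $R$. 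The term $R = I$ contributes $\Tr(PQ)$. Each term $R \neq I$ contributes $(1-\delta)^{2w_1(R)}\Tr(PRQR)$.

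Next, $RQR = \pm Q$ according to commutation, so $\Tr(PRQR) = \pm 2^n \delta_{PQ}$. This yields diagonality at once, with diagonal entry
\[
\frac{1}{2^n(2^n+1)}\left[2^n + \sum_{R} (1-\delta)^{2w_1(R)}\, s(P,R)\right] - \frac{1}{2^n+1}\cdot\frac{2^n}{2^n}\cdot\text{(correction)},
\]
where $s(P,R) = \pm 1$ is the commutation sign. Cleanly, the entry equals
\[
\frac{1}{2^n+1}\left[1 + 2^{-n}\sum_{R\in\mathbb{P}_n} q^{w_1(R)} s(P,R)\right], \qquad q = (1-\delta)^2,
\]
after folding the $R = I$ term in with $s = 1$ and weight $q^0$. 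The constant $1$ comes from the extra $2^n\delta_{RI}$.

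The main step is evaluating $\sum_R q^{w_1(R)} s(P,R)$, and here I will factorise over qubits. Both $s$ and $q^{w_1}$ are products of single-qubit factors. On a site where $P_i = I$, the local sum is $1 + 3q$. On a site where $P_i \neq I$, two non-identity Paulis anticommute and one commutes, so the local sum is $1 + q - 2q = 1 - q$. Dividing each factor by $2$ gives the stated formula
\[
\left(\frac{1+3q}{2}\right)^{n-w_1(P)} \left(\frac{1-q}{2}\right)^{w_1(P)}.
\]
I expect the bookkeeping of normalisations here to be the only delicate point, and I will check it at $\delta = 0$ against Lemma~\ref{lma:m-haar-distribution}.

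The consequences then follow. Diagonality gives $\mu = 0$. The $P = I$ entry gives $\bar{\gamma}$. Since $\frac{1-q}{2} \le \frac{1+3q}{2}$ and $\frac{1-q}{2} \le 1$, the eigenvalues of $\mathcal{S}$ decrease in $w_1(P)$, so $\lambda_{\max}$ is attained at weight $1$. Subtracting this from $\bar{\gamma}$ gives
\[
\eta = \frac{1}{2^n+1}\left(\frac{1+3q}{2}\right)^{n-1}\left(\frac{1+3q}{2} - \frac{1-q}{2}\right) = \frac{2q}{2^n+1}\left(\frac{1+3q}{2}\right)^{n-1}.
\]
Finally, the Pauli eigenbasis of $\mathcal{S}$ is read off directly from the diagonal form of $M$, as in Lemma~\ref{lma:other-formulation-decomposition}.
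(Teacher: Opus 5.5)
Your proposal is correct and is essentially the paper's computation carried out in Pauli coordinates. The paper applies $\mathcal{D}_{\delta}^{\otimes n}\otimes\mathcal{D}_{\delta}^{\otimes n}$ to $(I+SWAP)/(2^n(2^n+1))$, expands $SWAP$ in Paulis and partially transposes, while you use $\bbra{\widetilde{P}}M\kket{\widetilde{Q}}=\mathbb{E}[\Tr(P\sigma Q\sigma)]$ with the Haar Pauli covariances of Lemma~\ref{lma:m-haar-distribution}, which encode the same tensor moment, and both reduce to the same single-qubit factorisation $(1+3q)^{n-w_1(P)}(1-q)^{w_1(P)}$ with $q=(1-\delta)^2$; the only blemish is the intermediate display with the ``correction'' term, which should be deleted, since $\frac{1}{2^n(2^n+1)}\bigl[2^n+\sum_{R}q^{w_1(R)}s(P,R)\bigr]$ already equals your clean expression.
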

\begin{proof}[Proof]

We compute the noisy moment matrix by applying the depolarizing channel to both copies of the Haar tensor moment and then taking a partial transpose. The resulting Pauli sum factors over qubits, so two single-qubit calculations determine its dependence on Pauli weight.

    We start from the second tensor moment of the pure-state Haar distribution. Reference~\cite{mele2023introduction}'s theorem 22 tells us that:
    \[
        \exval_{\rho \followsdistr \Haar_n}\left[\rho^{\otimes 2}\right]
        =
        \frac{I^{\otimes 2n}+SWAP}{2^n\left(2^n+1\right)},
    \]
    where $SWAP$ exchanges the two $n$-qubit registers. We first express $SWAP$ in the Pauli basis. Observe that $\Tr\left(\left(\widetilde{P} \otimes \widetilde{Q}\right) SWAP\right) = \delta_{PQ}$ as proven in equation~\ref{eq:trace-swap-product}. Since $\left\{\widetilde{P} \otimes \widetilde{Q} \suchthat P, Q \in \mathbb{P}_n\right\}$ is an orthonormal basis of $\mathbb{C}^{4^n \times 4^n}$, this tells us that:
    \[
        SWAP = \sum_{Q \in \mathbb{P}_n} \widetilde{Q} \otimes \widetilde{Q} = 2^{-n}\sum_{Q \in \mathbb{P}_n} Q \otimes Q.
    \]
    Now, by definition of the depolarising channel:
    \[
        \mathcal{D}_{\delta}\left(I\right) = I,
        \mathspace
        \mathcal{D}_{\delta}\left(Q\right) = \left(1-\delta\right)Q,
        \quad Q \in \left\{X, Y, Z\right\}.
    \]
    Therefore, for any $Q \in \mathbb{P}_n$:
    \[
        \mathcal{D}_{\delta}^{\otimes n}\left(Q\right)
        =
        \left(1-\delta\right)^{w_1\left(Q\right)}Q.
    \]

    Using the linearity of the expected value and of the channels, this yields:
    \[
    \begin{split}
        \exval_{\rho \followsdistr D}\left[\rho^{\otimes 2}\right]
        &= \exval_{\rho \followsdistr \Haar_n}\left[\mathcal{D}_{\delta}^{\otimes n}\left(\rho\right) \otimes \mathcal{D}_{\delta}^{\otimes n}\left(\rho\right)\right] \\
        &= \left(\mathcal{D}_{\delta}^{\otimes n} \otimes \mathcal{D}_{\delta}^{\otimes n}\right) \exval_{\rho \followsdistr \Haar_n}\left[\rho^{\otimes 2}\right]\\
        &=
        \left(\mathcal{D}_{\delta}^{\otimes n} \otimes \mathcal{D}_{\delta}^{\otimes n}\right)
        \left(
            \frac{I^{\otimes 2n}+SWAP}{2^n\left(2^n+1\right)}
        \right) \\
        &=
        \frac{1}{2^n\left(2^n+1\right)}
        \left[
            I^{\otimes 2n}
            +
            \frac{1}{2^n}
            \sum_{Q \in \mathbb{P}_n}
            \left(1-\delta\right)^{2w_1\left(Q\right)}
            Q \otimes Q
        \right].
    \end{split}
    \]
    We now apply the partial transpose $T_2$ on the second register, as in the remark of definition~\ref{def:moment-matrix}:
    \[
    \begin{split}
        M
        &=
        2^n T_2\left(\exval_{\rho \followsdistr D}\left[\rho^{\otimes 2}\right]\right) \\
        &=
        \frac{1}{2^n+1}
        \left[
            I^{\otimes 2n}
            +
            \frac{1}{2^n}
            \sum_{Q \in \mathbb{P}_n}
            \left(1-\delta\right)^{2w_1\left(Q\right)}
            Q \otimes Q^T
        \right].
    \end{split}
    \]

    It remains to express this matrix in the $\kket{\widetilde{P}}$ basis. Let $P = P_1 \otimes \ldots \otimes P_n \in \mathbb{P}_n$ be arbitrary. By properties of vectorisation (definition~\ref{def:vectorised-notation}):
    \[
        \frac{1}{2^n} \sum_{Q \in \mathbb{P}_n} \left(1 - \delta\right)^{2 w_1\left(Q\right)} \left(Q \otimes Q^T\right)\kket{P} = \kket{\frac{1}{2^n} \sum_{Q \in \mathbb{P}_n} \left(1 - \delta\right)^{2 w_1\left(Q\right)}QPQ}.
    \]
    We aim to simplify the content of the ket. Consider the single-qubit case first, we observe that for each $\ell \in \left\{1, \ldots, n\right\}$, we have:
    \[
        \sum_{Q_{\ell} \in \mathbb{P}_1}
        \left(1-\delta\right)^{2w_1\left(Q_{\ell}\right)}
        Q_{\ell} P_{\ell} Q_{\ell}
        =
        \begin{cases}
            \left(1+3\left(1-\delta\right)^2\right)P_{\ell},
            & \text{if } P_{\ell} = I, \\
            \left(1-\left(1-\delta\right)^2\right)P_{\ell},
            & \text{if } P_{\ell} \neq I.
        \end{cases}
    \]
    Indeed, when $P_{\ell} = I$, all three non-identity Paulis contribute positively. Otherwise, one non-identity Pauli commutes with $P_{\ell}$ (contributing positively), and the two others anti-commute with it (contributing negatively). The general sum on $n$ qubits simplifies nicely from this result:
\begin{equation}\begin{aligned}
\frac{1}{2^n} \sum_{Q \in \mathbb{P}_n}
\left(1-\delta\right)^{2w_1\left(Q\right)} QPQ
&=
\frac{1}{2^n} \bigotimes_{\ell=1}^{n}
\left(
\sum_{Q_{\ell}\in\mathbb{P}_1}
\left(1-\delta\right)^{2w_1\left(Q_{\ell}\right)}
Q_{\ell} P_{\ell} Q_{\ell}
\right)
\\ &=
\frac{1}{2^n}
\left(1+3\left(1-\delta\right)^2\right)^{n-w_1\left(P\right)}
\left(1-\left(1-\delta\right)^2\right)^{w_1\left(P\right)} P
\\ &=
\left(\frac{1+3\left(1-\delta\right)^2}{2}\right)^{n-w_1\left(P\right)}
\left(\frac{1-\left(1-\delta\right)^2}{2}\right)^{w_1\left(P\right)} P.
\end{aligned}\end{equation}

    We have found so far that: 
    \[M \kket{\widetilde{P}} = \frac{1}{2^n + 1} \kket{\widetilde{P}} + \frac{1}{2^n + 1} \left(\frac{1 + 3 \left(1 - \delta\right)^2}{2}\right)^{n - w_1\left(P\right)} \left(\frac{1 - \left(1 - \delta\right)^2}{2}\right)^{w_1\left(P\right)} \kket{\widetilde{P}}.\]
    
    Therefore, $\kket{\widetilde{P}}$ is an eigenvector of $M$ with the eigenvalue stated in the theorem. Since the normalised Paulis form an orthonormal basis, this gives the expression for all the entries of $M$.

    Diagonality implies $M_{mean} = 0$ and hence by lemma~\ref{lma:other-formulation-decomposition} that $\mu = 0$. By the same theorem, we know the eigenvalue associated to $\kket{\widetilde{I}^{\otimes n}}$ is $\bar{\gamma}$. Removing this eigenvector gives the claimed eigenbasis of $\mathcal{S}$.

    Finally, notice that:
    \[
        0
        \leq
        \frac{1-\left(1-\delta\right)^2}{2}
        \leq
        \frac{1+3\left(1-\delta\right)^2}{2}.
    \]
    Hence, the largest eigenvalue of $\mathcal{S}$ is attained for Pauli weight $1$:
    \[
        \lambda_{max}\left(\mathcal{S}\right)
        =
        \frac{
            1+
            \left(\frac{1+3\left(1-\delta\right)^2}{2}\right)^{n-1}
            \frac{1-\left(1-\delta\right)^2}{2}
        }{2^n+1}.
    \]
    Subtracting this expression from $\bar{\gamma}$ gives:
    \[
        \eta
        =
        \bar{\gamma}-\lambda_{max}\left(\mathcal{S}\right)
        =
        \left(\frac{1 + 3 \left(1 - \delta\right)^2}{2} - \frac{1 - \left(1- \delta\right)^2}{2}\right) \cdot \frac{1}{2^n + 1}\left(\frac{1 + 3\left(1- \delta\right)^2}{2}\right)^{n-1}
        =
        \frac{2\left(1-\delta\right)^2}{2^n+1}
        \left(\frac{1+3\left(1-\delta\right)^2}{2}\right)^{n-1},
    \]
    as required.
\end{proof}

Local noise leaves the single-qubit factors of a product Haar ensemble independent. We can therefore use the one-qubit result from Theorem~\ref{thm:m-depolarised-haar-distribution} and take tensor products. The Pauli weight counts how often each of the two one-qubit eigenvalues contributes.

\begin{theorem}\label{thm:m-depolarised-product-haar-distribution}
    Let $\delta \in \left[0, 1\right]$ and $D = \mathcal{D}_{\delta}^{\otimes n}\left(\Haar_1^{\otimes n}\right)$. Moreover, let $M$ be its moment matrix, $\mu$ be its first moment, $\bar{\gamma}$ be its average purity, $\mathcal{S}$ be its second moment and $\eta$ be its expressivity measure.

    Then, for any Paulis $P, Q \in \mathbb{P}_n$:
    \[
        \bbra{\widetilde{P}} M \kket{\widetilde{Q}}
        =
        \delta_{PQ}
        \left(\frac{1+\left(1-\delta\right)^2}{2}\right)^{n-w_1\left(P\right)}
        \left(\frac{3-\left(1-\delta\right)^2}{6}\right)^{w_1\left(P\right)},
    \]
    where $w_1\left(P\right)$ is the Pauli weight of $P$ (definition~\ref{def:pauli-block-weight}).

    In particular:
    \[\mu = 0, \mathspace \bar{\gamma} = \left(\frac{1+\left(1-\delta\right)^2}{2}\right)^n, \mathspace \eta = \frac{2\left(1-\delta\right)^2}{3} \left(\frac{1+\left(1-\delta\right)^2}{2}\right)^{n-1}.\]

    Moreover, a basis of eigenvector-eigenvalue pairs of $\mathcal{S}$ is given by, for $P \in \mathbb{P}_n \setminus \left\{I^{\otimes n}\right\}$:
    \[ 
        \left(\kket{\widetilde{P}}, \left(\frac{1+\left(1-\delta\right)^2}{2}\right)^{n-w_1\left(P\right)}
        \left(\frac{3-\left(1-\delta\right)^2}{6}\right)^{w_1\left(P\right)}\right).
    \]
\end{theorem}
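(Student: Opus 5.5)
The plan is to reduce to the single-qubit case, already covered by Theorem~\ref{thm:m-depolarised-haar-distribution}, and then take tensor products using Theorem~\ref{thm:m-product-distribution}. Since the depolarizing channel acts independently on each qubit and the $n$ qubits of $\Haar_1^{\otimes n}$ are sampled independently, $D=\mathcal{D}_{\delta}^{\otimes n}(\Haar_1^{\otimes n}) = \left(\mathcal{D}_{\delta}(\Haar_1)\right)^{\otimes n}$ is itself a product distribution. Its moment matrix is therefore obtained, up to the permutation $W$, as $M_0^{\otimes n}$, where $M_0$ is the moment matrix of $\mathcal{D}_\delta(\Haar_1)$.

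First I specialize Theorem~\ref{thm:m-depolarised-haar-distribution} to $n=1$. With $2^n+1=3$, the identity direction has eigenvalue
\[
\frac{1+\frac{1+3(1-\delta)^2}{2}}{3}=\frac{1+(1-\delta)^2}{2}.
\]
Each of $X,Y,Z$ has eigenvalue
\[
\frac{1+\frac{1-(1-\delta)^2}{2}}{3}=\frac{3-(1-\delta)^2}{6}.
\]
Moreover, $M_0$ is diagonal in the normalized Pauli basis.

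Next I use the intuition stated after Theorem~\ref{thm:m-product-distribution}: $W(\kket{A}\otimes\kket{B})=\kket{A\otimes B}$. Applying this inductively over qubits, each $\kket{\widetilde{P}}$ with $P=P_1\otimes\cdots\otimes P_n$ is an eigenvector of $M$ whose eigenvalue is the product of the single-qubit eigenvalues. This product equals $\left(\frac{1+(1-\delta)^2}{2}\right)^{n-w_1(P)}\left(\frac{3-(1-\delta)^2}{6}\right)^{w_1(P)}$. Because these vectors form an orthonormal basis, $M$ is diagonal in it, which gives the stated matrix entries. Diagonality implies $M_{mean}=0$, hence $\mu=0$ by Lemma~\ref{lma:other-formulation-decomposition}. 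The $P=I^{\otimes n}$ entry gives $\bar{\gamma}$. Removing the identity direction leaves the stated eigenbasis of $\mathcal{S}$.

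Finally, for $\eta$ I need the largest eigenvalue of $\mathcal{S}$. The ratio $\frac{3-(1-\delta)^2}{6}\big/\frac{1+(1-\delta)^2}{2}\le 1$ holds because it is equivalent to $3-q\le 3+3q$ with $q=(1-\delta)^2\ge0$. The eigenvalue is therefore decreasing in the weight, so $\lambda_{max}$ is attained at $w_1(P)=1$. Then $\eta=\left(\frac{1+q}{2}\right)^{n-1}\left(\frac{1+q}{2}-\frac{3-q}{6}\right)$, and $\frac{1+q}{2}-\frac{3-q}{6}=\frac{2q}{3}$, which gives the claimed formula.

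The main point needing care is the bookkeeping in the tensor-product step. One must check that the one-qubit moment matrix is exactly diagonal, including the identity entry, so the product structure applies cleanly. Corollary~\ref{crl:s-for-n-product-mixed} could serve as a cross-check, since it gives $\eta=\bar{\gamma}_0^{n-1}\eta_0$ whenever $\eta_0\ge 0$. Here $\eta_0=\frac{1+q}{2}-\frac{3-q}{6}=\frac{2q}{3}\ge0$, in agreement with the direct computation.
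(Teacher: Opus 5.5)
Your proposal is correct and follows the paper's proof essentially step for step: you specialize Theorem~\ref{thm:m-depolarised-haar-distribution} to $n=1$, use $\mathcal{D}_{\delta}^{\otimes n}(\Haar_1^{\otimes n})=(\mathcal{D}_{\delta}(\Haar_1))^{\otimes n}$ with Theorem~\ref{thm:m-product-distribution} to get a diagonal $M$ whose entries are products of one-qubit eigenvalues, and locate $\lambda_{max}(\mathcal{S})$ at Pauli weight $1$ via $\frac{3-(1-\delta)^2}{6}\leq\frac{1+(1-\delta)^2}{2}$. The only nit is that the eigenvalue is non-increasing rather than strictly decreasing in the weight, since all eigenvalues coincide at $\delta=1$; this does not affect the conclusion.
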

\begin{proof}[Proof]
    We first apply theorem~\ref{thm:m-depolarised-haar-distribution} with $n = 1$. In the $\kket{\widetilde{P}}$ basis, the moment matrix of $\mathcal{D}_{\delta}\left(\Haar_1\right)$ is:
    \[
        \frac{1+\left(1-\delta\right)^2}{2}
        \oplus
        \left(
            \frac{3-\left(1-\delta\right)^2}{6}I_3
        \right).
    \]

    Now, observe that:
    \[
        \mathcal{D}_{\delta}^{\otimes n}\left(\Haar_1^{\otimes n}\right)
        =
        \left(\mathcal{D}_{\delta}\left(\Haar_1\right)\right)^{\otimes n}.
    \]
    By theorem~\ref{thm:m-product-distribution}, the normalised vectorised Paulis are thus eigenvectors of $M$, and their eigenvalues are products of the corresponding one-qubit eigenvalues. For any $P \in \mathbb{P}_n$, exactly $w_1\left(P\right)$ factors of $P$ are non-identity, and exactly $n - w_1\left(P\right)$ factors of $P$ are the identity. This gives:
    \[
        M\kket{\widetilde{P}}
        =
        \left(\frac{1+\left(1-\delta\right)^2}{2}\right)^{n-w_1\left(P\right)}
        \left(\frac{3-\left(1-\delta\right)^2}{6}\right)^{w_1\left(P\right)}
        \kket{\widetilde{P}}.
    \]

    As in the previous theorem, diagonality gives $\mu = 0$, the identity eigenvector gives $\bar{\gamma}$, and the other eigenvectors give the spectrum of $\mathcal{S}$.

    Finally, since:
    \[
        0
        \leq
        \frac{3-\left(1-\delta\right)^2}{6}
        \leq
        \frac{1+\left(1-\delta\right)^2}{2},
    \]
    the largest eigenvalue of $\mathcal{S}$ is attained for Pauli weight $1$. Therefore:
    \[
    \begin{split}
        \eta
        &= \bar{\gamma} - \lambda_{max}\left(\mathcal{S}\right)\\
        &=
        \left(\frac{1+\left(1-\delta\right)^2}{2}\right)^n
        -
        \left(\frac{1+\left(1-\delta\right)^2}{2}\right)^{n-1}
        \frac{3-\left(1-\delta\right)^2}{6} \\
        &=
        \frac{2\left(1-\delta\right)^2}{3}
        \left(\frac{1+\left(1-\delta\right)^2}{2}\right)^{n-1},
    \end{split}
    \]
    as required.
\end{proof}

The noisy and noiseless risks can be compared in the same Pauli basis. In each direction, the relevant coefficient is the average purity minus the second-moment eigenvalue. The largest ratio of these coefficients determines the comparison factor, and a Pauli attaining that ratio makes the bound tight.

\begin{theorem}\label{thm:depolarised-haar-tight-comparison}
    Let $\delta \in \left[0, 1\right)$. Then, for all unitaries $U, V$:
    \[
        R_{\Haar_n}\left(U, V\right)
        \leq
        \frac{1}{\left(1-\delta\right)^2}
        \left(\frac{4}{1+3\left(1-\delta\right)^2}\right)^{n-1}
        R_{\mathcal{D}_{\delta}^{\otimes n}\left(\Haar_n\right)}\left(U, V\right).
    \]

    If moreover $V^{\dagger} U \in \mathbb{P}_n$ and it has a single non-identity element in its tensor product expansion, then this inequality is reached with equality. That is, the bound is tight. 
\end{theorem}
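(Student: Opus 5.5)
The plan is to apply the second inequality of Theorem~\ref{thm:inequalities}, taking $D^{(1)}=\Haar_n$ (the distribution whose risk we bound) and $D^{(2)}=\mathcal{D}_{\delta}^{\otimes n}(\Haar_n)$, with $\beta=0$.

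\textbf{Hypotheses.} By Lemma~\ref{lma:m-haar-distribution}, $\Haar_n$ has $\mu^{(1)}=0$, $\bar{\gamma}^{(1)}=1$, and $\mathcal{S}^{(1)}=\frac{1}{2^n+1}I$. This is diagonal in the normalised Pauli basis. By Theorem~\ref{thm:m-depolarised-haar-distribution}, the noisy ensemble has $\mu^{(2)}=0$ and $\mathcal{S}^{(2)}$ is diagonal in the same basis. Its eigenvalue on $\kket{\widetilde{P}}$ is
\[\lambda_P^{(2)}=\frac{1+a^{n-w}b^{w}}{2^n+1},\qquad a=\frac{1+3q}{2},\quad b=\frac{1-q}{2},\quad q=(1-\delta)^2,\quad w=w_1(P).\]
Its expressivity is $\eta^{(2)}=\frac{2q}{2^n+1}a^{n-1}$, which is positive because $\delta<1$. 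So all hypotheses of the second inequality hold.

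\textbf{The ratio.} The purity of the noisy ensemble is $\bar{\gamma}^{(2)}=\frac{1+a^n}{2^n+1}$. For each non-identity Pauli $P$ of weight $w\geq 1$, the ratio to bound is
\[\frac{1-\frac{1}{2^n+1}}{\bar{\gamma}^{(2)}-\lambda_P^{(2)}}=\frac{2^n}{a^n-a^{n-w}b^w}=\frac{2^n}{a^{n-w}(a^w-b^w)}.\]

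\textbf{Main step: maximising over $w$.} This amounts to showing that $f(w)=a^{n-w}(a^w-b^w)=a^n-a^{n-w}b^w$ is minimised at $w=1$. Since $0\le b\le a$, the ratio $(b/a)^w$ is nonincreasing in $w$. Hence $f(w)=a^n\bigl(1-(b/a)^w\bigr)$ is nondecreasing, and its minimum is at $w=1$. There $f(1)=a^{n-1}(a-b)=a^{n-1}\cdot 2q$, so the maximal ratio is
\[\frac{2^n}{2q\,a^{n-1}}=\frac{1}{q}\left(\frac{2}{a}\right)^{n-1}=\frac{1}{q}\left(\frac{4}{1+3q}\right)^{n-1},\]
which is exactly the claimed constant.

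\textbf{Tightness.} The maximising eigenvector is $\kket{\widetilde{P}}$ for any weight-one Pauli $P$, and $P$ is unitary. The tightness clause of Theorem~\ref{thm:inequalities}(2) therefore applies: equality holds whenever $V^\dagger U=P$.

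The only delicate point is keeping track of which distribution plays which role, so that the ratio is oriented correctly. The monotonicity step is elementary.
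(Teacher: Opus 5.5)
Your proposal is correct and matches the paper's proof. You apply the second inequality of Theorem~\ref{thm:inequalities} with $\beta=0$. Your monotonicity argument in $w$ is the paper's observation that the numerator is Pauli-independent, so the maximum sits at the largest eigenvalue of $\mathcal{S}^{(2)}$, namely Pauli weight one. Tightness then follows because weight-one Paulis are unitary.
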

\begin{remark*}
    Let $\alpha$ be the multiplicative factor in this theorem. Note that $\alpha$ is polynomially bounded in $n$ if and only if:
    \[
        \delta = O\left(\frac{\log n}{n}\right).
    \]

    In particular, $\delta = O\left(1/n\right)$ gives a bounded multiplicative factor, whereas any fixed $\delta \in \left(0, 1\right)$ gives an exponentially growing factor. This makes sense by theorem~\ref{thm:purity-penalty} since, whenever $\delta = o\left(\frac{\log n}{n}\right)$, then $\bar{\gamma} = \frac{1}{\poly\left(n\right)}$.
\end{remark*}
\begin{remark*}
    The restriction $\delta < 1$ is necessary. Indeed, when $\delta = 1$, the distribution outputs $I^{\otimes n}/2^n$ deterministically. Its risk is therefore zero for all unitaries $U, V$. However, the corresponding noiseless risks can be strictly positive, for instance when $V^{\dagger}U$ is a Pauli of weight $1$. Hence, no finite multiplicative factor can give a generalisation bound with $\beta = 0$.
\end{remark*}
\begin{remark*}
    This is the first result of proposition~\ref{thm-main-text:locally-depolarized-inputs} in the main text.
\end{remark*}
\begin{proof}[Proof]
    Let $D_1 = \Haar_n$ and $D_2 = \mathcal{D}_{\delta}^{\otimes n}\left(\Haar_n\right)$. We write $R_i = R_{D_i}$. We use the second inequality of theorem~\ref{thm:inequalities}, with $\beta = 0$. Let $\mu^{\left(i\right)}$, $\bar{\gamma}^{\left(i\right)}$, $\mathcal{S}^{\left(i\right)}$ and $\eta^{\left(i\right)}$ be the respective moments, average purity and expressivity measure of $D_i$.

    By lemma~\ref{lma:m-haar-distribution} and theorem~\ref{thm:m-depolarised-haar-distribution}, both distributions have zero mean and their second moments are diagonal in the same Pauli basis. Moreover, $\eta^{\left(2\right)} > 0$ since $\delta < 1$. Therefore, we can take:
    \[
        \alpha
        =
        \max_j
        \frac{
            \bar{\gamma}^{\left(1\right)}
            -
            \lambda_j\left(\mathcal{S}^{\left(1\right)}\right)
        }{
            \bar{\gamma}^{\left(2\right)}
            -
            \lambda_j\left(\mathcal{S}^{\left(2\right)}\right)
        }.
    \]

    Since $\bar{\gamma}^{\left(1\right)} = 1$ and $\mathcal{S}^{\left(1\right)} = \frac{1}{2^n+1}I$, the numerator does not depend on $j$. Hence, the maximum is attained at the largest eigenvalue of $\mathcal{S}^{\left(2\right)}$, which corresponds to Pauli weight $1$. Using theorem~\ref{thm:m-depolarised-haar-distribution}, this yields:
    \[
    \begin{split}
        \alpha
        &=
        \frac{1-\frac{1}{2^n+1}}{\eta^{\left(2\right)}} \\
        &=
        \frac{2^n}{2^n+1}
        \frac{2^n+1}{
            2\left(1-\delta\right)^2
            \left(\frac{1+3\left(1-\delta\right)^2}{2}\right)^{n-1}
        } \\
        &=
        \frac{1}{\left(1-\delta\right)^2}
        \left(\frac{4}{1+3\left(1-\delta\right)^2}\right)^{n-1}.
    \end{split}
    \]
    This proves the inequality.

    Finally, any $\kket{\widetilde{P}}$ with $w_1\left(P\right) = 1$ is a common eigenvector attaining the maximum above. Since $P$ is unitary, we can take $U = P$ and $V = I^{\otimes n}$, giving:
    \[
        \ket{w\left(U, V\right)} = \kket{\widetilde{P}}.
    \]
    More generally, any $U, V$ such that $V^{\dagger}U = P$ work. By the equality condition of theorem~\ref{thm:inequalities}, these choices make the inequality tight.
\end{proof}

\begin{theorem}\label{thm:depolarised-product-haar-tight-comparison}
    Let $\delta \in \left[0, 1\right)$. Then, for all unitaries $U, V$:
    \[
        R_{\Haar_1^{\otimes n}}\left(U, V\right)
        \leq
        \frac{1}{\left(1-\delta\right)^2}
        \left(\frac{2}{1+\left(1-\delta\right)^2}\right)^{n-1}
        R_{\mathcal{D}_{\delta}^{\otimes n}\left(\Haar_1^{\otimes n}\right)}\left(U, V\right).
    \]

    Let $W = V^{\dagger}U$. If $W \in \mathbb{P}_n$ and $w_1\left(W\right) = 1$, then this inequality is reached with equality. In particular, its multiplicative factor cannot be improved.
\end{theorem}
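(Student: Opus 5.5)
We follow the proof of theorem~\ref{thm:depolarised-haar-tight-comparison} and use the second inequality of theorem~\ref{thm:inequalities} with $\beta = 0$. Set $D_1 = \Haar_1^{\otimes n}$ and $D_2 = \mathcal{D}_{\delta}^{\otimes n}(\Haar_1^{\otimes n})$.

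\textbf{Spectral data.} By lemma~\ref{lma:s-haar-distribution} with block size $1$, $D_1$ has zero mean and purity $\bar{\gamma}^{(1)} = 1$. Its second moment is diagonal in the normalised Pauli basis, with eigenvalue $3^{-w_1(P)}$ on $\kket{\widetilde{P}}$. By theorem~\ref{thm:m-depolarised-product-haar-distribution}, $D_2$ also has zero mean and is diagonal in the same basis. Writing $q = (1-\delta)^2$, $a = (1+q)/2$ and $b = (3-q)/6$, its purity is $\bar{\gamma}^{(2)} = a^n$ and its eigenvalues are $a^{n-w}b^{w}$ with $w = w_1(P)$. Its expressivity is $\eta^{(2)} = \tfrac{2q}{3}a^{n-1}$, which is positive because $\delta<1$. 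All hypotheses of the second inequality therefore hold, and it suffices to take
\[
\alpha = \max_{1\le w\le n} f(w), \qquad f(w) := \frac{1-3^{-w}}{a^n - a^{n-w}b^{w}}.
\]

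\textbf{Locating the maximum.} This is the main obstacle. Unlike the global case, the numerator also depends on $w$, so we must show the maximum sits at $w=1$. At $w=1$ the denominator is $a^{n-1}(a-b) = a^{n-1}\tfrac{2q}{3}$, so
\[
f(1) = \frac{2/3}{\tfrac{2q}{3}a^{n-1}} = \frac{1}{q}\left(\frac{2}{1+q}\right)^{n-1},
\]
which is the claimed constant.

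To compare the other weights, write $f(w) = a^{-n}(1-3^{-w})/(1-t^{w})$ with $t = b/a \in [1/3, 1]$. It then suffices to show that
\[
g(w) = \frac{1-3^{-w}}{1-t^{w}}
\]
is nonincreasing in $w\ge 1$. Using the geometric sums $1-x^{w} = (1-x)\sum_{k<w}x^{k}$, we get $g(w) = \frac{1-1/3}{1-t}\cdot\frac{\sum_{k<w}3^{-k}}{\sum_{k<w}t^{k}}$. Because $3^{-k}/t^{k} = (3t)^{-k}$ is nonincreasing in $k$ (since $3t\ge1$), the ratio of partial sums is a running weighted average of a nonincreasing sequence. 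It is therefore nonincreasing in $w$, which gives $g(w)\le g(1)$. The one check needed is $t\ge 1/3$, i.e.\ $3b \ge a$, which holds since $(3-q)/2 \ge (1+q)/2$ for $q\le 1$.

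\textbf{Tightness.} For $W = V^\dagger U = P$ with $w_1(P)=1$, $\kket{\widetilde{P}}$ is a common eigenvector that attains the maximal ratio. It is also the normalised vectorisation of a unitary. The equality clause of theorem~\ref{thm:inequalities} therefore applies, for example with $U=P$ and $V=I$.
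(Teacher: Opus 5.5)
Your proposal is correct and follows the paper's proof essentially step for step: the second inequality of theorem~\ref{thm:inequalities} with $\beta=0$, the factorisation $f(w)=a^{-n}(1-3^{-w})/(1-t^{w})$, the geometric-series rewriting, the check $t\ge 1/3$, and tightness via a weight-one Pauli. The only difference is in the last step. The paper uses the termwise comparison $3^{-\ell}\le t^{\ell}$ to bound the partial-sum ratio by $1$, its value at $w=1$, whereas your weighted-average argument proves the slightly stronger fact that $g$ is nonincreasing. Note also that $\delta<1$ gives $t<1$ strictly, not just $t\le 1$, which is what the factor $1/(1-t)$ needs.
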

\begin{remark*}
    The same three remarks as theorem~\ref{thm:depolarised-haar-tight-comparison} apply. In particular, this is the second part of proposition~\ref{thm-main-text:locally-depolarized-inputs} in the main text.
\end{remark*}
\begin{proof}[Proof]
    Let $D_1 = \Haar_1^{\otimes n}$ and $D_2 = \mathcal{D}_{\delta}^{\otimes n}\left(\Haar_1^{\otimes n}\right)$. We write $R_i = R_{D_i}$. We again use the second inequality of theorem~\ref{thm:inequalities}, with $\beta = 0$. Let $\mu^{\left(i\right)}$, $\bar{\gamma}^{\left(i\right)}$, $\mathcal{S}^{\left(i\right)}$ and $\eta^{\left(i\right)}$ be the respective moments, average purity and expressivity measure of $D_i$.

    By lemma~\ref{lma:s-haar-distribution} and theorem~\ref{thm:m-depolarised-product-haar-distribution}, both distributions have zero mean and their second moments are diagonal in the same Pauli basis. Moreover, $\eta^{\left(2\right)} > 0$ since $\delta < 1$.

     For a common eigenvector $\kket{\widetilde{P}}$ of Pauli weight $j = w_1\left(P\right) \in \left\{1, \ldots, n\right\}$, the ratio appearing in theorem~\ref{thm:inequalities} is:
    \[\frac{1-3^{-j}}{\left(\frac{1+\left(1-\delta\right)^2}{2}\right)^n - \left(\frac{1+\left(1-\delta\right)^2}{2}\right)^{n-j}\left(\frac{3-\left(1-\delta\right)^2}{6}\right)^j}.\]
    $\alpha$ is the maximum of this expression over $j$. Our goal is to show that this ratio is maximal when $j = 1$. Factoring the denominator, we obtain:
    \autoeq{\frac{1-3^{-j}}{\left(\frac{1+\left(1-\delta\right)^2}{2}\right)^n - \left(\frac{1+\left(1-\delta\right)^2}{2}\right)^{n-j}\left(\frac{3-\left(1-\delta\right)^2}{6}\right)^j} = \left(\frac{2}{1+\left(1-\delta\right)^2}\right)^n \frac{1-3^{-j}}{1-\left(\frac{3-\left(1-\delta\right)^2}{3\left(1+\left(1-\delta\right)^2\right)}\right)^j}.}

    The first factor does not depend on $j$. For the second factor, using the geometric series identity $1-x^j = \left(1-x\right)\sum_{\ell=0}^{j-1}x^{\ell}$ gives:
    \autoeq{\frac{1-3^{-j}}{1-\left(\frac{3-\left(1-\delta\right)^2}{3\left(1+\left(1-\delta\right)^2\right)}\right)^j} = \frac{1-\frac{1}{3}}{1-\frac{3-\left(1-\delta\right)^2}{3\left(1+\left(1-\delta\right)^2\right)}} \cdot \frac{\sum_{\ell=0}^{j-1}3^{-\ell}}{\sum_{\ell=0}^{j-1}\left(\frac{3-\left(1-\delta\right)^2}{3\left(1+\left(1-\delta\right)^2\right)}\right)^{\ell}}.}

    Now, since $0 < \left(1-\delta\right)^2 \leq 1$:
    \[\frac{1}{3} \leq \frac{3-\left(1-\delta\right)^2}{3\left(1+\left(1-\delta\right)^2\right)} < 1.\]
    Therefore, every term in the numerator sum is at most the corresponding term in the denominator sum. Their ratio is thus at most $1$, with equality when $j = 1$. Hence, the maximum is attained when $j = 1$, giving:
    \autoeq{\alpha = \left(\frac{2}{1+\left(1-\delta\right)^2}\right)^n \frac{1-\frac{1}{3}}{1-\frac{3-\left(1-\delta\right)^2}{3\left(1+\left(1-\delta\right)^2\right)}} = \left(\frac{2}{1+\left(1-\delta\right)^2}\right)^n \frac{1+\left(1-\delta\right)^2}{2\left(1-\delta\right)^2} = \frac{1}{\left(1-\delta\right)^2}\left(\frac{2}{1+\left(1-\delta\right)^2}\right)^{n-1}.}

    Finally, the maximum is attained by any common eigenvector $\kket{\widetilde{P}}$ with $w_1\left(P\right) = 1$. As in the previous theorem, taking $V^{\dagger}U = P$ gives $\ket{w\left(U, V\right)} = \kket{\widetilde{P}}$. The equality condition of theorem~\ref{thm:inequalities} therefore proves tightness.
\end{proof}

These noisy training ensembles also allow comparisons with arbitrary target distributions. Substituting their expressivity measures into Corollary~\ref{crl:zero-mean-generalisation} gives the noise-dependent multiplicative factor. The target mean contributes the remaining additive term.

\begin{theorem}\label{thm:depolarised-haar-arbitrary-comparison}
    Let $\delta \in \left[0, 1\right)$. Then, for all distributions $D$ over mixed states of mean $\mu$, and for all unitaries $U, V$:
    \[
        R_D\left(U, V\right)
        \leq
        \frac{1+2^{-n}}{\left(1-\delta\right)^2}
        \left(\frac{4}{1+3\left(1-\delta\right)^2}\right)^{n-1}
        R_{\mathcal{D}_{\delta}^{\otimes n}\left(\Haar_n\right)}\left(U, V\right)
        +
        \left\|\mu\right\|_2.
    \]
\end{theorem}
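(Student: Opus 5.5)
The plan is to apply Corollary~\ref{crl:zero-mean-generalisation} with training distribution $\mathcal{D}_{\delta}^{\otimes n}(\Haar_n)$ and testing distribution $D$. All the needed moment data is already available from Theorem~\ref{thm:m-depolarised-haar-distribution}: the noisy ensemble has zero mean, and its expressivity is
\[
\eta = \frac{2(1-\delta)^2}{2^n+1}\left(\frac{1+3(1-\delta)^2}{2}\right)^{n-1},
\]
which is strictly positive since $\delta<1$. The hypotheses of the corollary therefore hold. Because the training mean vanishes, the additive term $\frac{1}{\eta}\left\|\mu_{\rm train}\right\|_2$ drops out. This leaves
\[
R_D(U,V) \leq \frac{1}{\eta}\, R_{\mathcal{D}_{\delta}^{\otimes n}(\Haar_n)}(U,V) + \left\|\mu\right\|_2 .
\]
Here I use the first, purity-free version of the corollary, which is valid since $\bar{\gamma}_D\leq 1$.

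The remaining step is to check that $1/\eta$ equals the stated prefactor. Writing $\frac{2^n+1}{2}=2^{n-1}(1+2^{-n})$, we get
\[
\frac{1}{\eta} = \frac{2^n+1}{2(1-\delta)^2}\left(\frac{2}{1+3(1-\delta)^2}\right)^{n-1} = \frac{1+2^{-n}}{(1-\delta)^2}\, 2^{n-1}\left(\frac{2}{1+3(1-\delta)^2}\right)^{n-1} = \frac{1+2^{-n}}{(1-\delta)^2}\left(\frac{4}{1+3(1-\delta)^2}\right)^{n-1}.
\]
This is exactly the claimed factor.

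There is no real obstacle here. The only points needing care are confirming $\eta>0$ for $\delta\in[0,1)$ and doing the $2^{n-1}$ bookkeeping. As a consistency check, setting $\delta=0$ recovers the factor $1+2^{-n}$ of Corollary~\ref{corol-main-text:haar-general-comparison}.

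Alternatively, one could chain Corollary~\ref{crl:haar-zero-mean-generalisation} with Theorem~\ref{thm:depolarised-haar-tight-comparison}: multiplying $1+2^{-n}$ by the tight noisy-to-clean factor gives the same bound. Either route works, but the direct application of Corollary~\ref{crl:zero-mean-generalisation} is the shortest.
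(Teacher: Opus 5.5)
Your proposal is correct and is essentially the paper's own proof: it reads off $\mu=0$ and $\eta$ for $\mathcal{D}_{\delta}^{\otimes n}(\Haar_n)$ from Theorem~\ref{thm:m-depolarised-haar-distribution}, applies Corollary~\ref{crl:zero-mean-generalisation}, and rewrites $1/\eta$ as the stated prefactor, with the $2^{n-1}$ bookkeeping you did. Your alternative route, chaining Corollary~\ref{crl:haar-zero-mean-generalisation} with Theorem~\ref{thm:depolarised-haar-tight-comparison}, also gives exactly the same constant.
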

\begin{remark*}
    In particular, generalisation to any zero-mean distribution is multiplicative. Moreover, the multiplicative factor can again be proven to be polynomially bounded in $n$ if and only if $\delta = O\left(\frac{\log n}{n}\right)$, and constant in $n$ if and only if $\delta = O\left(\frac{1}{n}\right)$. This again makes sense by theorem~\ref{thm:purity-penalty}, for the same reason as theorem~\ref{thm:depolarised-haar-tight-comparison}.
\end{remark*}
\begin{remark*}
    If the average purity $\bar{\gamma}$ of the target distribution $D$ is known, the bound can be improved by multiplying the multiplicative factor by $\bar{\gamma}$, per corollary~\ref{crl:zero-mean-generalisation}.
\end{remark*}
\begin{proof}[Proof]
    Let $D_2 = \mathcal{D}_{\delta}^{\otimes n}\left(\Haar_n\right)$. We write $R_2 = R_{D_2}$. Let $\mu^{\left(2\right)}$ and $\eta^{\left(2\right)}$ be the mean and expressivity measure of $D_2$.

   Theorem~\ref{thm:m-depolarised-haar-distribution} gives $\mu^{\left(2\right)} = 0$ and:
    \[
    \begin{split}
        \frac{1}{\eta^{\left(2\right)}}
        &=
        \frac{2^n+1}{2\left(1-\delta\right)^2}
        \left(\frac{2}{1+3\left(1-\delta\right)^2}\right)^{n-1} \\
        &=
        \frac{1+2^{-n}}{\left(1-\delta\right)^2}
        \left(\frac{4}{1+3\left(1-\delta\right)^2}\right)^{n-1},
    \end{split}
    \]
    Since $\eta^{\left(2\right)} > 0$, corollary~\ref{crl:zero-mean-generalisation} thus tells us that:
    \[
        R_D\left(U, V\right)
        \leq
        \frac{1}{\eta^{\left(2\right)}}R_2\left(U, V\right)
        +
        \left\|\mu\right\|_2,
    \]
    giving our result.
\end{proof}

\begin{theorem}\label{thm:depolarised-product-haar-arbitrary-comparison}
    Let $\delta \in \left[0, 1\right)$. For any distribution over mixed states $D$ of mean $\mu$, and unitaries $U, V$: 
    \[
        R_D\left(U, V\right)
        \leq
        \frac{3}{2\left(1-\delta\right)^2}
        \left(\frac{2}{1+\left(1-\delta\right)^2}\right)^{n-1}
        R_{\mathcal{D}_{\delta}^{\otimes n}\left(\Haar_1^{\otimes n}\right)}\left(U, V\right)
        +
        \left\|\mu\right\|_2.
    \]
\end{theorem}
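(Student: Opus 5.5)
The plan mirrors the proof of Theorem~\ref{thm:depolarised-haar-arbitrary-comparison}, with the global noisy ensemble replaced by the noisy product ensemble. Let $D_2 = \mathcal{D}_{\delta}^{\otimes n}\left(\Haar_1^{\otimes n}\right)$, and write $\mu^{\left(2\right)}$, $\bar{\gamma}^{\left(2\right)}$ and $\eta^{\left(2\right)}$ for its mean, average purity and expressivity measure. Theorem~\ref{thm:m-depolarised-product-haar-distribution} supplies everything needed:
\[
\mu^{\left(2\right)} = 0, \qquad \eta^{\left(2\right)} = \frac{2\left(1-\delta\right)^2}{3}\left(\frac{1+\left(1-\delta\right)^2}{2}\right)^{n-1}.
\]
For $\delta \in \left[0,1\right)$ this expressivity is strictly positive, because $\left(1-\delta\right)^2 > 0$ and the base $\left(1+\left(1-\delta\right)^2\right)/2$ is positive.

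Since $\eta^{\left(2\right)} > 0$, I will apply Corollary~\ref{crl:zero-mean-generalisation} with $D_2$ as the training distribution and $D$ as the arbitrary testing distribution. Because $\mu^{\left(2\right)} = 0$, the term $\frac{1}{\eta}\left\|\mu^{\left(2\right)}\right\|_2$ vanishes, and the corollary gives
\[
R_D\left(U,V\right) \leq \frac{1}{\eta^{\left(2\right)}} R_{D_2}\left(U,V\right) + \left\|\mu\right\|_2 .
\]
It then remains to invert the expressivity:
\[
\frac{1}{\eta^{\left(2\right)}} = \frac{3}{2\left(1-\delta\right)^2}\left(\frac{2}{1+\left(1-\delta\right)^2}\right)^{n-1},
\]
which is exactly the multiplicative factor in the statement.

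There is no real obstacle in this argument. The only care points are confirming that $\eta^{\left(2\right)} > 0$, so that the corollary's hypothesis holds (this is why $\delta = 1$ is excluded), and inverting $\eta^{\left(2\right)}$ correctly. As a remark, if the average purity $\bar{\gamma}$ of $D$ is known, the factor can be multiplied by $\bar{\gamma}$ using the second form of Corollary~\ref{crl:zero-mean-generalisation}.
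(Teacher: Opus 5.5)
Your proposal is correct and follows essentially the same route as the paper: you take $\mu^{(2)}=0$ and $\eta^{(2)}$ from Theorem~\ref{thm:m-depolarised-product-haar-distribution}, note that $\eta^{(2)}>0$ for $\delta<1$, apply Corollary~\ref{crl:zero-mean-generalisation}, and invert $\eta^{(2)}$ to obtain the stated multiplicative factor. Your closing remark about multiplying the factor by the target's average purity also matches the paper's remark.
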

\begin{remark*}
    The remarks of theorem~\ref{thm:depolarised-haar-arbitrary-comparison} also apply to this theorem.
\end{remark*}
\begin{proof}[Proof]
    Let $D_2 = \mathcal{D}_{\delta}^{\otimes n}\left(\Haar_1^{\otimes n}\right)$. We write $R_2 = R_{D_2}$, let $\mu^{\left(2\right)}$ and $\eta^{\left(2\right)}$ be the mean and expressivity measure of $D_2$. Theorem~\ref{thm:m-depolarised-product-haar-distribution} gives $\mu^{\left(2\right)} = 0$ and:
    \[\frac{1}{\eta^{\left(2\right)}} = \frac{3}{2\left(1-\delta\right)^2}\left(\frac{2}{1+\left(1-\delta\right)^2}\right)^{n-1}.\]

    Since $\eta^{\left(2\right)} > 0$, corollary~\ref{crl:zero-mean-generalisation} thus tells us that:
    \[R_D\left(U, V\right) \leq \frac{1}{\eta^{\left(2\right)}}R_2\left(U, V\right) + \left\|\mu\right\|_2,\]
    giving our result.
\end{proof}

\suppsubsection{Haar measure on the orthogonal group}

Real-state ensembles have vanishing second-moment eigenvalues in some Pauli directions. However, this does not prevent an out-of-distribution generalization bound. We compute the spectrum and compare global and product real-state ensembles to see how this restriction changes the risk comparison.

\begin{definition}[Haar measure on the orthogonal group]\label{def:orthgonal-haar}
    By $\Haar_n\left(O\right)$, we mean the distribution of Haar-random real states. In other words, given a Haar random orthogonal matrix $O$, $\Haar_n\left(O\right)$ outputs $\ket{\psi} = O \ket{0}^{\otimes n}$.
\end{definition}

\begin{definition}[Real matrix indicator function]\label{def:real-matrix-indicator-function}
    We define the real matrix indicator function $1_{\mathbb{R}}: \mathbb{C}^{2^n \times 2^n} \mapsto \left\{0, 1\right\}$, so that: 
    \[1_\mathbb{R}\left(M\right) = \begin{cases} 1, & \text{if $M \in \mathbb{R}^{2^n \times 2^n}$,} \\ 0, & \text{otherwise.} \end{cases}\]
\end{definition}
\begin{example*}
    For instance, for a Pauli $P \in \mathbb{P}_n$, this is $1$ if and only if it contains an even number of $Y$ in its tensor product expansion:
    \[1_{\mathbb{R}}\left(X\right) = 1_{\mathbb{R}}\left(XX\right) = 1_{\mathbb{R}}\left(YY\right) = 1,\]
    \[1_{\mathbb{R}}\left(Y\right) = 1_{\mathbb{R}}\left(XY\right) = 1_{\mathbb{R}}\left(YYY\right) = 0.\]
\end{example*}

\begin{lemma}\label{lma:m-orthogonal-haar-distribution}
    Let $D = \Haar_n\left(O\right)$ be the distribution of Haar random real pure states, $M$ be its moment matrix, $\kket{\mu}$ be its first moment and $\mathcal{S}$ be its second moment.
    
    Then, in the $\kket{\widetilde{P}}$ basis, for all $P, Q \in \mathbb{P}_n$: 
    \[\bbra{\widetilde{P}} M \kket{\widetilde{Q}} = \delta_{PQ} \frac{2^{n - 1} \delta_{PI} + 1_{\mathbb{R}}\left(P\right)}{2^{n - 1} + 1},\]
    where $1_{\mathbb{R}}$ is the real matrix indicator function (definition~\ref{def:real-matrix-indicator-function}).
    
    In particular, this means that $\kket{\mu} = 0$, and, for $P, Q \neq I^{\otimes n}$: 
    \[\bbra{\widetilde{P}} \mathcal{S} \kket{\widetilde{Q}} = \frac{1}{2^{n-1} + 1} \delta_{PQ} 1_{\mathbb{R}}\left(P\right).\]
\end{lemma}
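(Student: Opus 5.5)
We mirror the proof of Lemma~\ref{lma:m-haar-distribution}: by Lemma~\ref{lma:moment-matrix-equivalence-pure-states} it suffices to compute the entries $\exval\left[\Tr(P\rho)\Tr(Q\rho)\right] = \Tr\left((P\otimes Q)\exval[\rho^{\otimes 2}]\right)$ for $\rho = O\ket{0}\bra{0}O^T$, with $O$ Haar-random in $O(2^n)$. The only new ingredient is the second tensor moment of random real states. Write $d = 2^n$ and $\ket{\Phi} = \sum_i \ket{ii}$ for the unnormalised maximally entangled vector. We will show
\[
\exval\left[\rho^{\otimes 2}\right] = \frac{I\otimes I + SWAP + \ket{\Phi}\bra{\Phi}}{d(d+2)}.
\]
We derive this without citing orthogonal Weingarten calculus, as follows. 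The average commutes with $O\otimes O$ for every orthogonal $O$, so it lies in the Brauer algebra spanned by $I$, $SWAP$ and $\ket{\Phi}\bra{\Phi}$. It is supported on the symmetric subspace, because $\ket{\psi}^{\otimes 2}$ is symmetric; hence it is invariant under left and right multiplication by $SWAP$. Since $SWAP\ket{\Phi}\bra{\Phi} = \ket{\Phi}\bra{\Phi}$, this forces the form $a(I+SWAP) + b\ket{\Phi}\bra{\Phi}$. Two conditions fix $a$ and $b$:
\begin{itemize}
\item the trace equals $1$, which gives $a(d^2+d) + bd = 1$;
\item for real $\psi$ we have $\bra{\Phi}\psi\otimes\psi\rangle = \langle\psi|\psi\rangle = 1$, so $\bra{\Phi}\exval[\rho^{\otimes2}]\ket{\Phi} = 1$, which gives $2ad + bd^2 = 1$.
\end{itemize}
Solving yields $a = b = 1/(d(d+2))$.

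Next we take traces against $P\otimes Q$. We have $\Tr(P\otimes Q) = d^2\delta_{PI}\delta_{QI}$, and $\Tr((P\otimes Q)SWAP) = \Tr(PQ) = d\delta_{PQ}$ by equation~\ref{eq:trace-swap-product}. The remaining term is $\bra{\Phi}P\otimes Q\ket{\Phi} = \Tr(P^TQ)$, by the vectorisation identities of Definition~\ref{def:vectorised-notation}. Every Pauli satisfies $P^T = \pm P$, with the sign $+$ exactly when $P$ contains an even number of $Y$'s, i.e.\ when $1_{\mathbb{R}}(P) = 1$. Hence $\Tr(P^TQ) = d\,\delta_{PQ}(2\cdot 1_{\mathbb{R}}(P) - 1)$. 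Summing the three contributions gives
\[
\frac{d^2\delta_{PI}\delta_{QI} + d\delta_{PQ} + d\delta_{PQ}(2\cdot 1_{\mathbb{R}}(P) - 1)}{d(d+2)} = \delta_{PQ}\,\frac{d\,\delta_{PI} + 2\cdot 1_{\mathbb{R}}(P)}{d+2}.
\]
Dividing numerator and denominator by $2$ gives the claimed $\delta_{PQ}\frac{2^{n-1}\delta_{PI} + 1_{\mathbb{R}}(P)}{2^{n-1}+1}$. As a check, the case $P = I$ gives $1$.

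Finally, $M$ is diagonal in the $\kket{\widetilde{P}}$ basis. Its off-diagonal identity entries therefore vanish, so $\kket{\mu} = 0$ by Lemma~\ref{lma:other-formulation-decomposition}. Restricting $M$ to $P, Q \neq I$ gives $\mathcal{S}$ with entries $\frac{1}{2^{n-1}+1}\delta_{PQ}1_{\mathbb{R}}(P)$.

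The main obstacle is rigorously justifying the invariant-theory step, namely that the commutant of $\{O\otimes O\}$ is spanned by the three Brauer elements. If we prefer to avoid it, a fallback is to note that a Haar-random real unit vector is a normalised Gaussian vector. Gaussian moments then give $\exval[\psi_i\psi_j\psi_k\psi_l] = (\delta_{ij}\delta_{kl} + \delta_{ik}\delta_{jl} + \delta_{il}\delta_{jk})/(d(d+2))$ directly, which is exactly the displayed formula written in coordinates.
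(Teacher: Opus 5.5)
Your proof is correct. It shares the paper's skeleton: write $\mathbb{E}[\rho^{\otimes 2}]=\bigl(I\otimes I+\mathrm{SWAP}+\ket{\Phi}\bra{\Phi}\bigr)/(d(d+2))$, then read off the Pauli-basis entries. It differs in how the moment is obtained and how it is evaluated.

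\emph{Obtaining the moment.} The paper imports it from Proposition~4 of Ref.~\cite{schatzki2024random}, enumerating the three pairings with normaliser $Z(2^n,2)=2^n(2^n+2)$. You derive it from three ingredients: $O\otimes O$-invariance puts it in the Brauer span, support on the symmetric subspace equalises the weights on $I\otimes I$ and $\mathrm{SWAP}$, and two scalar constraints fix the rest. The constraint system has determinant $d^2(d-1)(d+2)\neq 0$ for $d\geq 2$, so $a=b=1/(d(d+2))$ is indeed forced. This avoids Weingarten calculus. Your Gaussian fallback also avoids invariant theory: Isserlis' theorem plus independence of $\|g\|$ and $g/\|g\|$, with $\mathbb{E}\|g\|^4=d(d+2)$.

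\emph{Evaluating it.} The paper uses $\rho^{T}=\rho$ to identify $M$ with $2^n\mathbb{E}[\rho^{\otimes 2}]$ as an operator on the vectorised space, and sandwiches it between normalised vectorised Paulis. There the three Brauer elements contribute as follows: $I\otimes I$ gives the uniform diagonal, $\mathrm{SWAP}$ gives the sign through $P^{*}=\pm P$, and $\ket{\Phi}\bra{\Phi}$ gives the identity entry. You instead trace against $P\otimes Q$ via Lemma~\ref{lma:moment-matrix-equivalence-pure-states}, exactly as in Lemma~\ref{lma:m-haar-distribution}, so the roles are permuted. Now $I\otimes I$ gives the identity entry, $\mathrm{SWAP}$ the uniform diagonal, and $\ket{\Phi}\bra{\Phi}$ the sign through $P^{T}=\pm P$.

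Your bookkeeping makes the comparison with the complex Haar case immediate. The only changes are the extra $\ket{\Phi}\bra{\Phi}$ term, which contributes $\pm d\,\delta_{PQ}$, and the normaliser changing from $d(d+1)$ to $d(d+2)$.
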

\begin{remark*}
    Note that both $M$ and $\mathcal{S}$ are diagonal in the $\kket{\widetilde{P}}$ basis.
\end{remark*}
\begin{intuition*}
    Let us consider the case where $n = 1$ to build intuition. This lemma states that, in this case:
    \[M = \begin{pmatrix} 1 &  &  &  \\  & \frac{1}{2} &  &  \\  &  & 0 &  \\  &  &  & \frac{1}{2} \end{pmatrix}.\]
    
    Its shape is very similar to the one of Haar measure on the unitary group (lemma~\ref{lma:m-haar-distribution}), but with its eigenvalues associated to real Paulis ($X$ and $Z$) increased, and its eigenvalues associated to imaginary Paulis ($Y$) set to $0$. 
    
    Considering the representation of the vectors output by the distribution on the Bloch sphere is a good way to justify this intuitively: for $\ket{\psi} = \cos\left(\frac{\theta}{2}\right)\ket{0} + e^{i\phi} \sin\left(\frac{\theta}{2}\right) \ket{1}$ to be a real state, we may take $\phi \in \{0,\pi\}$. Therefore, this distribution simply outputs a state uniformly at random on the great circle where the Bloch $y$-component is zero. The $y$-component of any of those vectors is $0$, and the squared $x$ and $z$ components are both $\frac{1}{2}$ on average, giving exactly this second moment $\mathcal{S}$.
\end{intuition*}
\begin{proof}[Proof]

We evaluate the real Haar tensor moment by resolving its three pairings into the identity, the swap, and an outer product of the vectorized identity. The swap is the term that distinguishes real from imaginary traceless Pauli directions.

    Note that, by definition, states in our distribution are real. Therefore, by lemma~\ref{lma:moment-matrix-equivalence-pure-states}: 
    \[M = 2^n \exval\left(\rho \otimes \rho^*\right) = 2^n \exval\left(\rho^{\otimes 2}\right).\]
    
    This value is computed by reference~\cite{schatzki2024random}'s proposition 4. We will use their notations to build $M$ in our specific case. Here, we have $t = 2$ and $d = 2^n$, and we thus wish to compute $Z\left(2^n, 2\right)$ and $\mathcal{B}_2\left(2^n\right)$.
    
    To begin with, their lemma 3 directly gives: 
    \[Z\left(2^n, 2\right) = \prod_{j=0}^{2-1} \left(2^n + 2j\right) = 2^n \left(2^n + 2\right).\]
    
    Combining their figure 1 with their equation 7, we have three elements $\mathfrak{m}: \left\{1, \ldots, 2t\right\} \mapsto \left\{1, \ldots, 2t\right\}$ in $\mathcal{M}_2$: 
    \[\mathfrak{m}_1\left(1\right) = 1, \mathspace \mathfrak{m}_1\left(2\right) = 3, \mathspace \mathfrak{m}_1\left(3\right) = 2, \mathspace \mathfrak{m}_1\left(4\right) = 4;\]
    \[\mathfrak{m}_2\left(1\right) = 1, \mathspace \mathfrak{m}_2\left(2\right) = 4, \mathspace \mathfrak{m}_2\left(3\right) = 2, \mathspace \mathfrak{m}_2\left(4\right) = 3;\]
    \[\mathfrak{m}_3\left(1\right) = 1, \mathspace \mathfrak{m}_3\left(2\right) = 2, \mathspace \mathfrak{m}_3\left(3\right) = 3, \mathspace \mathfrak{m}_3\left(4\right) = 4.\]
    
    We can then use equation 8 to convert those into a representation of $\mathcal{B}_2\left(2^n\right)$: 
    \begin{equation}\begin{aligned}&\mathfrak{m}_1 \mapsto \sum_{i_1, i_2, i_3, i_4 = 1}^{2^n} \ket{i_1 i_2}\bra{i_3 i_4} \delta_{i_{\mathfrak{m}_1\left(1\right)} i_{\mathfrak{m}_1\left(2\right)}} \delta_{i_{\mathfrak{m}_1\left(3\right)} i_{\mathfrak{m}_1\left(4\right)}} \\ &\implies \mathfrak{m}_1 \mapsto \sum_{i_1, i_2, i_3, i_4 = 1}^{2^n} \ket{i_1 i_2}\bra{i_3 i_4} \delta_{i_1 i_3} \delta_{i_2 i_4} \\ &= \sum_{i, j} \ket{ij} \bra{ij} \\ &= I^{\otimes n} \otimes I^{\otimes n},\end{aligned}\end{equation}
    \[\mathfrak{m}_2 \mapsto \sum_{i_1, i_2, i_3, i_4=1}^{2^n} \ket{i_1 i_2} \bra{i_3 i_4} \delta_{i_1 i_4} \delta_{i_2 i_3} = \sum_{i, j} \ket{ij}\bra{ji} = SWAP,\]
    \[\mathfrak{m}_3 \mapsto \sum_{i_1, i_2, i_3, i_4 = 1}^{2^n} \ket{i_1 i_2} \bra{i_3 i_4} \delta_{i_1 i_2} \delta_{i_3 i_4} = \sum_{i, j} \ket{ii} \bra{jj} = \kket{I^{\otimes n}} \bbra{I^{\otimes n}}.\]
    
    We can then finally apply their proposition 4 to get that: 
    \autoeq{M = 2^n \exval\left(\rho^{\otimes 2}\right) = 2^n \frac{1}{Z\left(2^n, 2\right)} \sum_{\mathfrak{m} \in \mathcal{B}_2\left(2^n\right)} \mathfrak{m} = \frac{I^{\otimes n} \otimes I^{\otimes n} + SWAP + \kket{I^{\otimes n}} \bbra{I^{\otimes n}}}{2^n + 2}.}
    
    We want to express this in the $\kket{\widetilde{P}}$ basis. Since $I^{\otimes n} \otimes I^{\otimes n} = I^{\otimes 2n}$ and $\kket{I^{\otimes n}} \bbra{I^{\otimes n}} = 2^n \kket{\widetilde{I}^{\otimes n}} \bbra{\widetilde{I}^{\otimes n}}$, we only need to know how to represent $SWAP$ in this basis. So, let $P, Q \in \mathbb{P}_n$ be arbitrary. Expressing them in the computation basis, we notice that: 
    \[P = \sum_{i, j} P_{ij} \ket{i}\bra{j} \implies \kket{\widetilde{P}} = \frac{1}{\sqrt{2^n}} \sum_{i, j} P_{ij} \ket{ij}.\]
    
    Therefore: 
    \begin{equation}\begin{aligned}\bbra{\widetilde{P}} SWAP \kket{\widetilde{Q}} &= \frac{1}{2^n} \sum_{i, j, k, \ell} P_{ij}^* Q_{k\ell} \bra{ij} SWAP \ket{k\ell} \\ &= \frac{1}{2^n} \sum_{i, j, k, \ell} P_{ij}^* Q_{k\ell} \braket{ij}{\ell k} \\ &= \frac{1}{2^n} \sum_{i, j} P_{ij}^* Q_{ji} \\ &= \frac{1}{2^n} \sum_{i} \left(P^* Q\right)_{ii} \\ &= \frac{1}{2^n} \Tr\left(P^* Q\right).\end{aligned}\end{equation}
    
    Notice that there are two types of Paulis: the real ones, and the imaginary ones, $\mathbb{P}_n \subseteq \left\{0, 1, -1\right\}^{2^n \times 2^n} \cup \left\{0, i, -i\right\}^{2^n \times 2^n}$. The first ones are Paulis that have an even number of $Y$ in their tensor product expansion (in which case, $P^* = P$), and the second ones are Paulis that have an odd number of $Y$ in their tensor product expansion (in which case, $P^* = -P$). So, for all $P \in \mathbb{P}_n$, let $s_P \in \left\{-1, 1\right\}$ be so that $P^* = s_P P$. This yields: 
    \[\bbra{\widetilde{P}} SWAP \kket{\widetilde{Q}} = \frac{1}{2^n} s_P \Tr\left(PQ\right) = s_P \delta_{PQ}.\]
    
    Combining everything, this gives us:
    \[\bbra{\widetilde{P}} M \kket{\widetilde{Q}} = \frac{\bbrakket{\widetilde{P}}{\widetilde{Q}} + s_P \delta_{PQ} + 2^n \delta_{P I} \delta_{Q I}}{2^n + 2} = \delta_{PQ} \frac{\left(1 + s_P\right) + 2^n\delta_{PI}}{2^n + 2}.\]
    
    Finally, we notice that, if $P$ is real, then $s_P = 1$ and so $1 + s_P = 2$; but, if $P$ is purely imaginary, $s_P = -1$, and so $1 + s_P = 0$. In other words, $1 + s_P = 2\cdot 1_{\mathbb{R}}\left(P\right)$. This gives exactly our result: 
    \[\bbra{\widetilde{P}} M \kket{\widetilde{Q}} = \delta_{PQ} \frac{\delta_{PI} 2^n + 2\cdot 1_{\mathbb{R}}\left(P\right)}{2^n + 2} = \delta_{PQ} \frac{\delta_{PI} 2^{n-1} + 1_{\mathbb{R}}\left(P\right)}{2^{n-1} + 1}.\]
\end{proof}

\begin{lemma}\label{lma:s-orthogonal-haar-distribution}
    Let $D = \Haar_n\left(O\right)^{\otimes k}$, $\kket{\mu}$ be its first moment, $\mathcal{S}$ be its second moment and $\eta$ be its expressivity measure.
    
    Then: 
    \[\kket{\mu} = 0, \mathspace \lambda_{min}\left(\mathcal{S}\right) = 0, \mathspace \lambda_{max}\left(\mathcal{S}\right) = \frac{1}{2^{n-1} + 1}, \mathspace \eta = \frac{2^{n- 1}}{2^{n-1} + 1}.\]
    
    Moreover, an eigenbasis of $\mathcal{S}$ is 
    \[\left\{\kket{\widetilde{P}} \suchthat P \in \mathbb{P}_{kn} \setminus \left\{I^{\otimes nk}\right\}\right\}.\]
    
    Finally, splitting $P = \bigotimes_{i=1}^{k} Q_i$ for $Q_i \in \mathbb{P}_n$, each $\kket{\widetilde{P}}$ is associated to the eigenvalue 
    \[\left(\frac{1}{2^{n - 1} + 1}\right)^{w_n\left(P\right)} \prod_{i=1}^{k} 1_{\mathbb{R}}\left(Q_i\right),\]
    where $w_n\left(P\right) \in \left\{1, \ldots, k\right\}$ is the $n$-block weight of $P$ (definition~\ref{def:pauli-block-weight}), and $1_{\mathbb{R}}$ is the real matrix indicator function (definition~\ref{def:real-matrix-indicator-function}).
\end{lemma}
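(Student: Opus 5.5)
The proof follows the template of Lemma~\ref{lma:s-haar-distribution}, with Lemma~\ref{lma:m-orthogonal-haar-distribution} taking the role of Lemma~\ref{lma:m-haar-distribution}.

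\emph{Single block.} Lemma~\ref{lma:m-orthogonal-haar-distribution} shows that the moment matrix $M_0$ of $\Haar_n(O)$ is diagonal in the $\kket{\widetilde{Q}}$ basis. Its eigenvalues are $1$ for $Q = I^{\otimes n}$, $\frac{1_{\mathbb{R}}(Q)}{2^{n-1}+1}$ for $Q \neq I^{\otimes n}$, and its mean vanishes. Hence the base distribution has $\mu_0 = 0$ and $\lambda_{max}(\mathcal{S}_0) = \frac{1}{2^{n-1}+1}$. For $n \geq 1$ this maximum is attained, for instance at $Q = Z \otimes I^{\otimes n-1}$, which is real. It follows that $\eta_0 = \frac{2^{n-1}}{2^{n-1}+1} \geq 0$. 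Also $\lambda_{min}(\mathcal{S}_0) = 0$, attained at a Pauli containing exactly one $Y$.

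\emph{Tensor power.} Corollary~\ref{crl:s-for-n-product-pure} applies because $\mu_0 = 0$ and $\eta_0 \geq 0$. It gives $\kket{\mu} = 0$, $\lambda_{max}(\mathcal{S}) = \lambda_{max}(\mathcal{S}_0) = \frac{1}{2^{n-1}+1}$, $\eta = \eta_0$, and $\lambda_{min}(\mathcal{S}) = \lambda_{min}(\mathcal{S}_0)^k = 0$.

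\emph{Eigenbasis.} Take $P = \bigotimes_{i} Q_i \in \mathbb{P}_{kn}$. Each $\kket{\widetilde{Q}_i}$ is an eigenvector of $M_0$. Applying Theorem~\ref{thm:m-product-distribution} $k-1$ times, with the intertwiner satisfying $W(\kket{A}\otimes\kket{B}) = \kket{A\otimes B}$, shows that $\kket{\widetilde{P}} = \kket{\bigotimes_i \widetilde{Q}_i}$ is an eigenvector of $M$. Its eigenvalue is the product of the block eigenvalues. Identity blocks contribute $1$, and since $1_{\mathbb{R}}(I^{\otimes n}) = 1$ we may insert their indicators for free. Non-identity blocks contribute $\frac{1_{\mathbb{R}}(Q_i)}{2^{n-1}+1}$, and there are exactly $w_n(P)$ of them. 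The product is therefore $\left(\frac{1}{2^{n-1}+1}\right)^{w_n(P)}\prod_{i} 1_{\mathbb{R}}(Q_i)$.

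Since $D$ is a pure-state distribution with zero mean, Lemma~\ref{lma:other-formulation-decomposition} gives $M = 1 \oplus \mathcal{S}$ in the Pauli basis. The normalized Paulis are orthonormal, so deleting $\kket{\widetilde{I}^{\otimes nk}}$ leaves an eigenbasis of $\mathcal{S}$ with the stated eigenvalues. This also re-derives the extremal eigenvalues directly: they are at most $\frac{1}{2^{n-1}+1}$, with equality when $w_n(P) = 1$ and the block is real, and they equal $0$ whenever some block is imaginary.

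The only delicate step is checking $\eta_0 \geq 0$, i.e.\ that a real non-identity Pauli exists. This holds for every $n \geq 1$, so I expect no real obstacle; the rest is bookkeeping.
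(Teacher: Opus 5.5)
Your proposal is correct and takes the same route as the paper. The paper also derives the lemma from Lemma~\ref{lma:m-orthogonal-haar-distribution} and Corollary~\ref{crl:s-for-n-product-pure}, mirroring Lemma~\ref{lma:s-haar-distribution}; its only new observation is that any block with $1_{\mathbb{R}}(Q_i)=0$ makes the product eigenvalue vanish. Your write-up is more explicit than the paper's, for instance in checking $\eta_0 \geq 0$ via a real non-identity Pauli and in building the Pauli eigenbasis through Theorem~\ref{thm:m-product-distribution}, but the argument is the same.
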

\begin{intuition*}
    Note that the block decomposition of $P$, $P = \bigotimes_{i=1}^{k} Q_i$, already appeared in the case for the Haar measure on the unitary group (lemma~\ref{lma:s-haar-distribution}), in the form of the $n$-block weight function $w_n\left(P\right)$, which counts the number of non-identity blocks. Now, we moreover require that all those blocks are real matrices, with $\prod_{i=1}^{k} 1_{\mathbb{R}}\left(Q_i\right)$.
\end{intuition*}
\begin{proof}[Proof]
    Just like the proof of the similar lemma for the Haar measure on the unitary group (lemma~\ref{lma:s-haar-distribution}), this result directly comes from corollary~\ref{crl:s-for-n-product-pure} and lemma~\ref{lma:m-orthogonal-haar-distribution}. The only difference in the proof is that when we split $P = Q_1 \otimes Q_2$, if $1_{\mathbb{R}}\left(Q_1\right) = 0$ or $1_{\mathbb{R}}\left(Q_2\right) = 0$, then $\kket{\widetilde{Q}_1 \otimes \widetilde{Q}_2}$ is associated to the eigenvalue $0$.  
\end{proof}

The distinction between global and product real-state ensembles depends on the pattern of $Y$ factors in a Pauli string. We separate strings with an odd number of $Y$ factors, a nonzero even number, and none. The largest and smallest risk ratios among these cases give the tight comparison bounds.

\begin{theorem}\label{thm:orthogonal-haar-tight-comparison}
    Let $D_1 = \Haar_n\left(O\right)$ and $D_2 = \Haar_1^{\otimes n}\left(O\right)$. 
    
    If $n \geq 2$, then, for all unitaries $U, V$:
    \[\frac{2^{n-1}}{2^{n-1}+ 1} R_2\left(U, V\right) \leq R_1\left(U, V\right) \leq 2\cdot \frac{2^{n-1}}{2^{n-1} + 1} R_2\left(U, V\right).\]
    
    Let $W = V^{\dagger} U$. If $W \in \mathbb{P}_n$ and it contains a non-zero even number of $Y$ in its tensor product decomposition, then the first inequality is reached with equality. If $W \in \mathbb{P}_n$, none of the Paulis in its tensor product decomposition is $Y$, and $w_1\left(W\right) = 1$, then the second inequality is reached with equality.
\end{theorem}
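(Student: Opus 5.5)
The plan is to apply the second inequality of Theorem~\ref{thm:inequalities} twice, once in each direction, with $\beta=0$, exactly as in the proof of Theorem~\ref{thm:haar-tight-comparison}.

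\textbf{Checking the hypotheses.} Both $D_1$ and $D_2$ are pure-state distributions, so $\bar{\gamma}^{(1)}=\bar{\gamma}^{(2)}=1$. Lemma~\ref{lma:s-orthogonal-haar-distribution} gives the rest, taking $(n,k)=(n,1)$ for $D_1$ and $(n,k)=(1,n)$ for $D_2$:
\begin{itemize}
\item both means vanish;
\item both second moments are diagonal in the common basis $\{\kket{\widetilde{P}} : P\neq I^{\otimes n}\}$;
\item $\eta^{(1)}=c:=\frac{2^{n-1}}{2^{n-1}+1}>0$ and $\eta^{(2)}=\frac12>0$.
\end{itemize}
So the second inequality can be used with either distribution playing the training role.

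\textbf{Direction-wise coefficients.} For each non-identity Pauli $P$, I compute $1-\lambda_P(\mathcal{S}^{(i)})$. By Lemma~\ref{lma:s-orthogonal-haar-distribution}:
\begin{itemize}
\item $\lambda_P(\mathcal{S}^{(1)})=1_{\mathbb{R}}(P)/(2^{n-1}+1)$;
\item $\lambda_P(\mathcal{S}^{(2)})=2^{-w_1(P)}$ if no tensor factor of $P$ is $Y$, and $0$ otherwise.
\end{itemize}
This splits the Paulis into three classes.
\begin{enumerate}
\item If $P$ contains no $Y$, with weight $w=w_1(P)$, the ratio $(1-\lambda^{(1)})/(1-\lambda^{(2)})$ is $c\cdot 2^w/(2^w-1)$. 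This lies in $[\,c\,2^n/(2^n-1),\,2c\,]$ and is maximal at $w=1$.
\item If $P$ contains a nonzero even number of $Y$'s, then $P$ is real but $\lambda^{(2)}=0$, so the ratio is $c$.
\item If $P$ contains an odd number of $Y$'s, both eigenvalues vanish and the ratio is $1$.
\end{enumerate}

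\textbf{Upper bound.} The maximal ratio is $\max\{2c,\,c,\,1\}=2c$, using $2c=2^n/(2^{n-1}+1)\ge 1$. This gives $R_1\le 2c\,R_2$.

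\textbf{Lower bound.} I apply the second inequality with the roles swapped, taking the maximum of the reciprocal ratios. These are:
\begin{itemize}
\item $(2^w-1)/(c\,2^w)\le 1/c$ for class 1;
\item $1/c$ for class 2;
\item $1\le 1/c$ for class 3.
\end{itemize}
The maximum is $1/c$, attained by the even-$Y$ class. This gives $R_2\le R_1/c$, i.e.\ $R_1\ge c\,R_2$. The hypothesis $n\ge 2$ is what makes the even-$Y$ class nonempty (e.g.\ $YY\otimes I^{\otimes(n-2)}$). For $n=1$ the maximum would instead come from class 1, with a different constant.

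\textbf{Tightness, and the main obstacle.} Every maximizing eigenvector is $\kket{\widetilde{P}}$ for a Pauli $P$, i.e.\ the normalized vectorization of a unitary. The equality clause of Theorem~\ref{thm:inequalities} therefore applies with $V^\dagger U=P$:
\begin{itemize}
\item a nonzero even number of $Y$'s saturates the lower bound;
\item weight one with no $Y$ saturates the upper bound.
\end{itemize}
The only delicate step is the case bookkeeping. One must confirm that the odd-$Y$ directions, where both second moments are blind, give ratio $1$ and never dominate either maximum. That is exactly the check $2c\ge1$ and $1/c\ge1$, which holds for all $n\ge 1$.
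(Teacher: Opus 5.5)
Your proposal is correct and follows essentially the same route as the paper's proof: two applications of the second inequality of Theorem~\ref{thm:inequalities} with $\beta=0$, the same three-way split of non-identity Paulis (no $Y$; a nonzero even number of $Y$; an odd number of $Y$) read off from Lemma~\ref{lma:s-orthogonal-haar-distribution}, the same maxima $2\cdot\frac{2^{n-1}}{2^{n-1}+1}$ and $\frac{2^{n-1}+1}{2^{n-1}}$, and tightness from the fact that the maximizing common eigenvectors are normalized vectorized Paulis. Your explicit check that the odd-$Y$ ratio $1$ never dominates either maximum is a bookkeeping step the paper folds silently into its computation of the two maxima.
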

\begin{remark*}
    Note that the hypothesis $n \geq 2$ is not very restrictive: when $n = 1$, we trivially have $D_1 = D_2$, and thus $R_1\left(U, V\right) = R_2\left(U, V\right)$ for all unitaries $U, V$.
\end{remark*}
\begin{proof}[Proof]
    This proof is very similar to the proof of the similar theorem for the unitary group (theorem~\ref{thm:haar-tight-comparison}). 
    
    Note that, thanks to lemma~\ref{lma:s-orthogonal-haar-distribution}, we have the hypotheses to use the second inequality of theorem~\ref{thm:inequalities}. Moreover, given any $P \in \mathbb{P}_n \setminus \left\{I^{\otimes n}\right\}$, we know that $\kket{\widetilde{P}}$ is an eigenvector of both $\mathcal{S}^{\left(1\right)}$ and $\mathcal{S}^{\left(2\right)}$. There are three possibilities.
    \begin{enumerate}[left=0pt]
    \item If $P$ contains an odd number of $Y$ in its tensor product decomposition (i.e.\ $1_{\mathbb{R}}\left(P\right) = 0$), then $\kket{\widetilde{P}}$ has eigenvalue $0$ for both $\mathcal{S}^{\left(1\right)}$ and $\mathcal{S}^{\left(2\right)}$.
    \item If $P$ contains a non-zero even number of $Y$ in its tensor product decomposition, then $\kket{\widetilde{P}}$ has eigenvalue $\frac{1}{2^{n-1} + 1}$ for $\mathcal{S}^{\left(1\right)}$ and $0$ for $\mathcal{S}^{\left(2\right)}$. Note that this is where the hypothesis that $n \geq 2$ comes in, since this case is impossible when $n = 1$.
    \item If $P$ contains no $Y$ in its tensor product decomposition, then $\kket{\widetilde{P}}$ has eigenvalue $\frac{1}{2^{n-1} + 1}$ for $\mathcal{S}^{\left(1\right)}$ and $\frac{1}{2^j}$ for $\mathcal{S}^{\left(2\right)}$, where $j = w_1\left(P\right) \in \left\{1, \ldots, n\right\}$.
    \end{enumerate}
    
    We now apply twice the second inequality of theorem~\ref{thm:inequalities}, one for each inequality we try to prove.
    \begin{itemize}[left=0pt]
    \item Using usual calculus tools, we can find that: 
    \begin{equation}\begin{aligned}\alpha_{12} &= \max_k \left\{\frac{1 - \lambda_k\left(\mathcal{S}^{\left(1\right)}\right)}{1 - \lambda_k\left(\mathcal{S}^{\left(2\right)}\right)}\right\} \\ &= \max \left\{\frac{1 - 0}{1 - 0}, \frac{1 - \frac{1}{2^{n - 1} + 1}}{1 - 0}, \frac{1 - \frac{1}{2^{n - 1} + 1}}{1 - \frac{1}{2^j}} \suchthat j \in \left\{1, \ldots, n\right\}\right\} \\ &= \max \left\{1, \frac{2^{n-1}}{2^{n-1}+1}, \frac{2^{n - 1}}{2^{n-1} + 1} \frac{2^j}{2^j - 1} \suchthat j \in \left\{1, \ldots, n\right\}\right\} \\ &= \frac{2^{n-1}}{2^{n-1} + 1}\cdot 2.\end{aligned}\end{equation}

    This yields that, by theorem~\ref{thm:inequalities}: 
    \[R_1\left(U, V\right) \leq 2 \cdot \frac{2^{n-1}}{2^{n-1} + 1} R_2\left(U, V\right).\]
    
    Using the second point of theorem~\ref{thm:inequalities}, we wish for $\ket{w\left(U, V\right)} = \kket{\widetilde{P}}$, where $\kket{\widetilde{P}}$ is associated to the eigenvalue $\frac{1}{2^{n-1} + 1}$ in $\mathcal{S}^{\left(1\right)}$ and associated to the eigenvalue $\frac{1}{2}$ in $\mathcal{S}^{\left(2\right)}$ (since it is the eigenvector that maximises $\left(1 - \lambda_{k}\left(\mathcal{S}^{\left(1\right)}\right)\right)/\left(1 - \lambda_{k}\left(\mathcal{S}^{\left(2\right)}\right)\right)$). In other words, we wish for $\ket{w\left(U, V\right)} = \kket{\widetilde{P}}$, where $w_1\left(P\right) = j = 1$ is minimal and $P$ does not contain any $Y$ in its tensor product expansion. We are lucky, there exist solutions, for instance by letting $V = I$ and $U = P$.
    
    \item It is again possible to find that:
    \begin{equation}\begin{aligned}\alpha_{21} &= \max_k\left\{\frac{1 - \lambda_k\left(\mathcal{S}^{\left(2\right)}\right)}{1 - \lambda_k\left(\mathcal{S}^{\left(1\right)}\right)}\right\} \\ &= \max \left\{\frac{1 - 0}{1- 0}, \frac{1 - 0}{1 - \frac{1}{2^{n-1} + 1}}, \frac{1 - \frac{1}{2^j}}{1 - \frac{1}{2^{n-1} + 1}} \suchthat j \in \left\{1, \ldots, n\right\}\right\} \\ &= \max\left\{1, \frac{2^{n-1} + 1}{2^{n-1}}, \frac{2^j - 1}{2^j} \cdot \frac{2^{n-1} + 1}{2^{n-1}} \suchthat j \in \left\{1, \ldots, n\right\}\right\} \\ &= \frac{2^{n-1} + 1}{2^{n-1}}.\end{aligned}\end{equation}
    
    By theorem~\ref{thm:inequalities}, we thus get:  
    \autoeq{R_2\left(U, V\right) \leq \frac{2^{n-1} + 1}{2^{n-1}} R_1\left(U, V\right) \iff R_1\left(U, V\right) \geq \frac{2^{n-1}}{2^{n-1} + 1} R_2\left(U, V\right).}
    
    By the second point of theorem~\ref{thm:inequalities}, we again wish for $\ket{w\left(U, V\right)} = \kket{\widetilde{P}}$, where $\kket{\widetilde{P}}$ is an eigenvector of eigenvalue $\frac{1}{2^{n-1} + 1}$ for $\mathcal{S}^{\left(1\right)}$ but eigenvalue $0$ for $\mathcal{S}^{\left(2\right)}$. This is achieved by any $P$ that contains a non-zero even number of $Y$ in its tensor product decomposition. There moreover again exist solutions, for instance by letting $V = I$ and $U = P$.
    \end{itemize}
\end{proof}

\begin{corollary}\label{crl:orthogonal-haar-zero-mean-generalisation}
    For all mixed-state distributions $D$ over $nk$ qubits of mean $\kket{\mu}$, and for all unitaries $U, V$: 
    \[R_D\left(U, V\right) \leq \frac{2^{n-1} + 1}{2^{n-1}} R_{\Haar_n^{\otimes k}\left(O\right)}\left(U, V\right) + \left\|\kket{\mu}\right\|.\]
\end{corollary}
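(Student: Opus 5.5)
This follows the template of Corollary~\ref{crl:haar-zero-mean-generalisation}, with the unitary Haar ensemble replaced by its orthogonal counterpart. The plan is to invoke Corollary~\ref{crl:zero-mean-generalisation} with training distribution $\Haar_n^{\otimes k}(O)$ and testing distribution $D$. That corollary needs only two inputs from the training ensemble: its bias and a strictly positive expressivity measure.

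Both are provided by Lemma~\ref{lma:s-orthogonal-haar-distribution}. It gives $\kket{\mu}=0$ for $\Haar_n^{\otimes k}(O)$, and it gives
\[\eta=\frac{2^{n-1}}{2^{n-1}+1}>0.\]
Behind this value of $\eta$ lies Corollary~\ref{crl:s-for-n-product-pure}, applied to the single-block moment matrix of Lemma~\ref{lma:m-orthogonal-haar-distribution}. That application is legitimate because the base block is unbiased and $\eta_0\geq 0$. The key point is that $\lambda_{\max}$ stays equal to the single-block value $1/(2^{n-1}+1)$ and is not reduced by the tensor product.

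With these inputs, the first inequality of Corollary~\ref{crl:zero-mean-generalisation} gives
\[R_D(U,V)\leq \frac{1}{\eta}R_{\Haar_n^{\otimes k}(O)}(U,V)+\frac{1}{\eta}\cdot 0+\left\|\kket{\mu}\right\|_2.\]
Substituting $1/\eta=(2^{n-1}+1)/2^{n-1}$ yields exactly the stated bound. The inequality holds for arbitrary mixed-state $D$ because Corollary~\ref{crl:zero-mean-generalisation} already used $\bar{\gamma}_D\leq 1$.

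I do not expect a real obstacle, since every step is a direct substitution. The one point to check is that the vanishing eigenvalues of $\mathcal{S}$ (the Pauli directions with an odd number of $Y$'s in some block) cause no difficulty. They do not, because Corollary~\ref{crl:zero-mean-generalisation} uses only $\lambda_{\min}(\mathcal{S}_D)\geq 0$ for the test distribution and $\lambda_{\max}$ for the training distribution. The zero eigenvalues of the training $\mathcal{S}$ therefore never enter the bound.
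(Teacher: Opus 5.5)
Your proposal is correct and matches the paper's proof. The paper also reads off $\kket{\mu}=0$ and $\eta = 2^{n-1}/(2^{n-1}+1)$ for $\Haar_n^{\otimes k}(O)$ from Lemma~\ref{lma:s-orthogonal-haar-distribution} and substitutes them into Corollary~\ref{crl:zero-mean-generalisation}; your remarks on why tensoring preserves $\lambda_{\max}$ and why the zero eigenvalues cause no problem are accurate.
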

\begin{remark*}
    Comparing this with the unitary case (corollary~\ref{crl:haar-zero-mean-generalisation}), we notice we get very similar guarantees when training using the distribution $\Haar_n^{\otimes k}\left(U\right)$ or $\Haar_n^{\otimes k}\left(O\right)$.
\end{remark*}
\begin{remark*}
    Leaving $k = 1$, this is precisely the result displayed in corollary~\ref{corol-main-text:random-real} in the main text.
\end{remark*}
\begin{proof}[Proof]
    By lemma~\ref{lma:s-orthogonal-haar-distribution}, $\Haar_n^{\otimes k}\left(O\right)$ is zero mean, and has an expressivity measure $\eta$ given by: 
    \[\mathspace \eta = \frac{2^{n-1}}{2^{n-1} + 1}.\]
    
    But then, by corollary~\ref{crl:zero-mean-generalisation}, given any distribution $D$ over the same number of qubits as $\Haar_n^{\otimes k}\left(O\right)$, of first moment $\kket{\mu}$: 
    \autoeq{R_D\left(U, V\right) \leq \frac{1}{\eta} R_{\Haar_n^{\otimes k}\left(O\right)}\left(U, V\right) + \left\|\kket{\mu}\right\| = \frac{2^{n-1} + 1}{2^{n-1}} R_{\Haar_n^{\otimes k}\left(O\right)}\left(U, V\right) + \left\|\kket{\mu}\right\|.}
\end{proof}

The comparison factor depends on the Haar block size, but not on the number of blocks. In particular, real training states still give a purely multiplicative bound for every zero-mean pure-state target covered by the corollary. A nonzero target mean contributes an additive term.

\suppsection{Analysis of the bias and the expressivity measure}

Bias and expressivity enter the generalization bounds in different ways, but they are not independent. Here we derive constraints relating them and examine how each behaves under tensor products. We then connect these properties to classicality and to the number of states needed in the ensemble.

\begin{remark}
    In this section, we will only work with pure state distributions for simplicity.
\end{remark}

\suppsubsection{Bound linking $\left\|\mu\right\|$ and $\eta$}

The purity of the average state connects the bias to the second moment. We use it to bound the largest second-moment eigenvalue in one direction, and evaluate the second moment along the mean direction to obtain the other. These two estimates give the bounds in Theorem~\ref{thm:mu-eta-bound}.

\begin{lemma}\label{lma:mu-purity}
    Let $D$ be a distribution over $n$-qubit pure states, of mean $\kket{\mu}$. Also, let $\gamma_{\bar{\rho}} = \Tr\left(\exval\left(\rho\right)^2\right)$ be the purity of the average state $\bar{\rho} = \exval\left(\rho\right)$. Then: 
    \[\left\|\mu\right\|^2 = 2^n \gamma_{\bar{\rho}} - 1.\]
\end{lemma}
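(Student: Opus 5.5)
For a pure-state distribution we have $\rho^2=\rho$, so the mean is $\mu=\sqrt{2^n}\,\pi(\bar\rho)$ with $\bar\rho=\exval_{\rho\followsdistr D}[\rho]$, by the remark after Definition~\ref{def:mean}. The plan is therefore to compute the Hilbert--Schmidt norm of $\pi(\bar\rho)$ directly.

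There are two equivalent ways to do this, and I would follow the second.

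\emph{First route.} Lemma~\ref{lma:mu-properties} already expands
\[
\|\mu\|_2^2=2^n\left(\left\|\exval[\rho^2]\right\|_2^2-\frac{\bar\gamma^2}{2^n}\right).
\]
Substituting $\exval[\rho^2]=\bar\rho$ and $\bar\gamma=1$ gives $\|\mu\|_2^2=2^n\Tr(\bar\rho^2)-1$ immediately.

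\emph{Second route.} To keep the argument self-contained, expand by hand. We have
\[
\|\mu\|_2^2 = 2^n\,\Tr\!\left[\left(\bar\rho-\frac{I}{2^n}\right)^2\right].
\]
Use $\Tr(\bar\rho)=1$, since the average of density matrices is a density matrix, together with $\Tr(I)=2^n$. The cross term contributes $-2\cdot 2^{-n}$ and the identity term contributes $+2^{-n}$. This leaves $2^n\left(\gamma_{\bar\rho}-2^{-n}\right)=2^n\gamma_{\bar\rho}-1$.

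Alternatively, one can go through the Pauli basis. Orthonormality of the $\kket{\widetilde P}$ gives
\[
2^n\,\Tr(\bar\rho^2)=\sum_{P}\Tr(P\bar\rho)^2.
\]
The $P=I$ term equals $1$, and the remaining sum is exactly $\|\mu\|^2$ by the Bloch-vector interpretation of $\kket{\mu}$.

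There is no real obstacle here. The only care needed is to justify $\Tr(\bar\rho)=1$ (linearity of expectation) and to use the pure-state identity $\rho^2=\rho$ to replace $\exval[\rho^2]$ by $\bar\rho$.
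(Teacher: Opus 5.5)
Your proposal is correct, and the route you say you would follow is exactly the paper's proof: expand $2^n\Tr\bigl[(\bar\rho - I/2^n)^2\bigr]$ (the paper writes this with vectorized bras and kets) and use $\Tr(\bar\rho)=1$ and $\Tr(I)=2^n$ to obtain $2^n\gamma_{\bar\rho}-1$. Your first route is an equally valid shortcut: it specializes the expansion already carried out in the proof of Lemma~\ref{lma:mu-properties}, using $\mathbb{E}[\rho^2]=\bar\rho$ and $\bar\gamma=1$.
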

\begin{remark*}
    Equivalently: 
    \[\gamma_{\bar{\rho}} = \frac{\left\|\mu\right\|^2 + 1}{2^n}.\]
\end{remark*}
\begin{remark*}[Sanity check]
    We know that $\gamma_{\bar{\rho}} \in \left[\frac{1}{2^n}, 1\right]$, by properties of purity. This implies that $\left\|\mu\right\|^2 \in \left[0, 2^n - 1\right]$, which is exactly what we expect.
\end{remark*}
\begin{remark*}
    The purity of the average state $\gamma_{\bar{\rho}} = \Tr\left(\exval\left(\rho\right)^2\right)$ must not be mistaken with the average purity of a mixed state distribution $\bar{\gamma} = \exval\left(\Tr\left(\rho^2\right)\right)$ (definition~\ref{def:average-purity}). Since we work with pure-state distributions here, we have $\bar{\gamma} = 1$.
\end{remark*}
\begin{proof}[Proof]
    We know that $\kket{\mu} = \sqrt{2^n}\kket{\pi\left(\bar{\rho}\right)}$ for pure states. This is then a direct proof, using the fact $\bar{\rho}$ is Hermitian and $\Tr\left(\bar{\rho}\right) = 1$: 
    \begin{equation}\begin{aligned}\left\|\mu\right\|^2 &= \braket{\mu}{\mu} \\ &= 2^n \bbrakket{\pi\left(\bar{\rho}\right)}{\pi\left(\bar{\rho}\right)} \\ &= 2^n \left(\bbra{\bar{\rho}} - \frac{1}{2^n} \bbra{I^{\otimes n}}\right)\left(\kket{\bar{\rho}} - \frac{1}{2^n} \kket{I^{\otimes n}}\right) \\ &= 2^n \bbrakket{\bar{\rho}}{\bar{\rho}} - 2^n\cdot 2\cdot \frac{\bbrakket{I^{\otimes n}}{\bar{\rho}}}{2^n} + 2^n \frac{1}{4^n} \bbrakket{I^{ \otimes n}}{I^{\otimes n}} \\ &= 2^n \Tr\left(\bar{\rho}^2\right) - 1 \\ &= 2^n \gamma_{\bar{\rho}} - 1.\end{aligned}\end{equation}
\end{proof}

\begin{lemma}\label{lma:s-rayleigh-alternate-formulation}
    Let $D$ be a distribution over $n$-qubit pure states, of second moment matrix $\mathcal{S}$. Moreover, let $A \in \mathbb{C}^{2^n \times 2^n}$ be an arbitrary operator. Then: 
    \[\bbra{A} \mathcal{S} \kket{A} = 2^n \exval\left(\left|\Tr\left(A \rho\right) - \frac{1}{2^n} \Tr\left(A\right)\right|^2\right).\]
\end{lemma}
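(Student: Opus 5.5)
We prove the identity by writing $\mathcal{S}$ in its pure-state form and then using the $I$-truncation to remove the identity component of $\kket{A}$. By Lemma~\ref{lma:s-as-truncation}, $\mathcal{S} = \Pi M \Pi$. By Lemma~\ref{lma:moment-matrix-equivalence-pure-states}, for pure-state distributions $M = 2^n \exval\left[\kket{\rho}\bbra{\rho}\right]$. Combining the two gives
\[
\bbra{A}\mathcal{S}\kket{A} = 2^n \exval\left[\bbra{A}\Pi\kket{\rho}\,\bbra{\rho}\Pi\kket{A}\right] = 2^n \exval\left[\left|\bbra{\rho}\Pi\kket{A}\right|^2\right],
\]
where we used that $\Pi$ is a Hermitian projector. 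This reduces everything to evaluating the scalar $\bbra{\rho}\Pi\kket{A}$.

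By the remark after Definition~\ref{def:truncation-for-operators}, $\Pi\kket{A} = \kket{\pi(A)} = \kket{A - \frac{I^{\otimes n}}{2^n}\Tr(A)}$. Since $\rho$ is Hermitian, property (2) of the vectorisation (Definition~\ref{def:vectorised-notation}) gives
\[
\bbrakket{\rho}{\pi(A)} = \Tr\left(\rho\,\pi(A)\right) = \Tr(A\rho) - \frac{1}{2^n}\Tr(A)\Tr(\rho).
\]
Using $\Tr(\rho)=1$, this equals $\Tr(A\rho) - \frac{1}{2^n}\Tr(A)$. Taking the modulus squared makes the order of the inner product irrelevant, and substituting back gives the claimed formula.

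The only step needing care is the bookkeeping of complex conjugates. We use the pure-state form $M = 2^n\exval[\kket{\rho}\bbra{\rho}]$, not $\rho\otimes\rho^T$ directly, precisely because it yields a clean outer product whose quadratic form is a squared modulus. Nothing else is needed, since $A$ is arbitrary and not required to be Hermitian: only the Hermiticity and unit trace of $\rho$ enter.
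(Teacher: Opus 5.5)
Your proof is correct and is essentially the paper's argument: both use $\mathcal{S} = \Pi M \Pi = 2^n\,\mathbb{E}[\Pi\kket{\rho}\bbra{\rho}\Pi]$ for pure states to reduce the quadratic form to a squared modulus. The only cosmetic difference is that you let the self-adjoint projector $\Pi$ act on $\kket{A}$ and compute $\Tr(\rho\,\pi(A))$, whereas the paper lets it act on $\kket{\rho}$ and computes $\bbrakket{\pi(\rho)}{A}$; this is the same calculation.
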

\begin{remark*}
    This lemma is useful to find an upper bound on $\lambda_{max}\left(\mathcal{S}\right)$. Indeed, it suffices to prove that $\bbra{A} \mathcal{S} \kket{A} \leq t$ for all operators $A$ such that $\bbrakket{A}{A} = 1$ to show that $\lambda_{max}\left(\mathcal{S}\right) \leq t$.
\end{remark*}
\begin{proof}[Proof]
    We know that $\mathcal{S} = 2^n \exval\left(\kket{\pi\left(\rho\right)}\bbra{\pi\left(\rho\right)}\right)$. This is thus direct: 
    \autoeq{\bbra{A} \mathcal{S} \kket{A} = 2^n \exval\left(\left|\bbrakket{\pi\left(\rho\right)}{A}\right|^2\right) = 2^n \exval\left(\left|\bbrakket{\rho}{A} - \frac{1}{2^n} \bbrakket{I^{\otimes n}}{A}\right|^2\right) = 2^n \exval\left(\left|\Tr\left(A \rho\right) - \frac{1}{2^n} \Tr\left(A\right)\right|^2\right).}
\end{proof}
\begin{lemma}\label{lma:min-sum-of-squares}
    Let $a \in \mathbb{R}$ be some target value. We moreover define the following function:
    \[\begin{split}
    f: \mathbb{R}^{m} &\longmapsto \mathbb{R} \\
    \left(x_1, \ldots, x_m\right) &\longmapsto \sum_{i=1}^{m} x_i^2
    \end{split}\]
    
    Then, the minimum of $f$ subject to the constraint $\sum_{i=1}^{m} x_i = a$ is $\frac{a^2}{m}$, achieved for $x_1 = \ldots = x_m = \frac{a}{m}$.
\end{lemma}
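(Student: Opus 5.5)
The plan is to show the lower bound $f(x)\geq a^2/m$ on the constraint set and then observe that the proposed point attains it. The cleanest route avoids Lagrange multipliers altogether and uses a single application of the Cauchy--Schwarz inequality, which the paper already uses in the proof of Lemma~\ref{lma:m-mean-bound}.

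\textbf{Lower bound.} For any $(x_1,\ldots,x_m)$ with $\sum_i x_i = a$, apply Cauchy--Schwarz to the vectors $(x_1,\ldots,x_m)$ and $(1,\ldots,1)$:
\[
a^2 = \Bigl(\sum_{i=1}^m x_i\cdot 1\Bigr)^2 \leq \Bigl(\sum_{i=1}^m x_i^2\Bigr)\Bigl(\sum_{i=1}^m 1\Bigr) = m\, f(x_1,\ldots,x_m).
\]
This gives $f \geq a^2/m$ everywhere on the constraint set.

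An equivalent check, useful as a cross-verification, is the decomposition
\[
f(x) = \sum_{i=1}^m \bigl(x_i - a/m\bigr)^2 + \frac{a^2}{m}.
\]
To obtain it, expand and note that the cross term $2(a/m)\sum_i (x_i - a/m)$ vanishes by the constraint. This version shows directly that $f \geq a^2/m$, with equality if and only if every $x_i = a/m$.

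\textbf{Attainment.} At $x_1=\cdots=x_m=a/m$ the constraint holds, since the entries sum to $m\cdot a/m = a$. The value there is $f = m\cdot a^2/m^2 = a^2/m$, so the lower bound is reached and the minimum equals $a^2/m$.

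There is no real obstacle here. The only point needing care is that the minimum is actually attained rather than being a mere infimum, and the explicit point handles that. The squared-deviation identity additionally gives uniqueness of the minimizer, which may be convenient for later equality cases.
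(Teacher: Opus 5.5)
Your proof is correct, but it takes a different route from the paper. The paper uses Lagrange multipliers. It observes that any constrained minimizer must satisfy $2x_i = \lambda$ for all $i$, which forces $x_i = a/m$. It then appeals to convexity of $f$ on the affine constraint set to conclude that this stationary point is a global minimum. Without that final step, the multiplier condition is only necessary, and it presupposes that a minimizer exists.

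Your Cauchy--Schwarz bound $a^2 \leq m f(x)$ proves the inequality directly on the whole constraint set, so you need no existence or sufficiency argument. Your identity $f(x) = \sum_i (x_i - a/m)^2 + a^2/m$ also gives uniqueness of the minimizer explicitly. The paper obtains uniqueness only implicitly, through uniqueness of the stationary point.

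Your version also matches how the lemma is used in Lemma~\ref{lma:largest-eigenvalue-mixed-state-bound}. There only the inequality $\sum_{i=1}^{N-1}\lambda_i^2 \geq (1-\lambda_N)^2/(N-1)$ is needed. It is also the same Cauchy--Schwarz step the paper itself uses in the proof of Theorem~\ref{thm:bound-support-size-mu-eta}. The Lagrange approach is the more mechanical template, and it would carry over to objectives or constraints that lack such a clean algebraic identity.
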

\begin{proof}[Proof]
    We consider the following function, representing our constraint: 
    \[g\left(x_1, \ldots, x_m\right) = \sum_{i=1}^{m} x_i - a.\]
    
    We prove our affirmation using Lagrange's multipliers. We know that, for any minimum of $f$ subject to the constraint $g\left(x\right) = 0$, there must exist some $\lambda \in \mathbb{R}$ such that: 
    \[\nabla f\left(x\right) = \lambda \nabla g\left(x\right) \iff \forall i,\ 2 x_i = \lambda.\]
    
    Injecting this to the constraint $g\left(x\right) = 0$ yields: 
    \[g\left(x\right) = 0 \iff \sum_{i=1}^{m} x_i = a \iff m\cdot \frac{\lambda}{2} = a \iff \lambda = \frac{2a}{m}.\]
    
    Therefore, $x_i = \frac{\lambda}{2} = \frac{a}{m}$ for all $i$. At this vector, $f$ evaluates to: 
    \[
f(x)=\sum_{i=1}^{m}x_i^2
=m\cdot\frac{a^2}{m^2}
=\frac{a^2}{m}.
\]
    
    Since $f$ is convex and the constraint is affine,
this stationary point is a global minimum, proving the claim.
\end{proof}

\begin{lemma}\label{lma:largest-eigenvalue-mixed-state-bound}
    Let $\rho$ be a mixed state, of purity $\gamma_{\bar{\rho}} = \Tr\left(\rho^2\right)$. Then: 
    \[\lambda_{max}\left(\rho\right) \leq \frac{1 + \sqrt{\left(2^n - 1\right)\left(2^n \gamma_{\bar{\rho}} - 1\right)}}{2^n}.\]
\end{lemma}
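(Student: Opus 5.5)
The plan is to bound the top eigenvalue using only the two constraints on the spectrum, $\sum_i \lambda_i = 1$ and $\sum_i \lambda_i^2 = \gamma_{\bar{\rho}}$. This is a classic variance argument, and Lemma~\ref{lma:min-sum-of-squares} was set up for exactly this purpose.

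Let $\lambda_1 = \lambda_{max}(\rho)$ and let $\lambda_2, \ldots, \lambda_{2^n}$ be the remaining eigenvalues. Since $\rho$ is a state, these are real and sum to one. Hence $\sum_{i \geq 2} \lambda_i = 1 - \lambda_1$. Applying Lemma~\ref{lma:min-sum-of-squares} with $m = 2^n - 1$ and $a = 1 - \lambda_1$ gives
\[
\sum_{i \geq 2} \lambda_i^2 \geq \frac{(1 - \lambda_1)^2}{2^n - 1}.
\]
Combining this with the purity, $\gamma_{\bar{\rho}} = \sum_i \lambda_i^2 \geq \lambda_1^2 + (1-\lambda_1)^2/(2^n - 1)$. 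We do not need nonnegativity of the other eigenvalues here, since the lemma holds over all of $\mathbb{R}^m$.

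Multiplying by $2^n - 1$ turns this into a quadratic inequality in $\lambda_1$:
\[
2^n \lambda_1^2 - 2\lambda_1 + 1 - (2^n - 1)\gamma_{\bar{\rho}} \leq 0.
\]
The discriminant is $4 - 4\cdot 2^n\left(1 - (2^n-1)\gamma_{\bar{\rho}}\right) = 4(2^n - 1)(2^n\gamma_{\bar{\rho}} - 1)$. This is nonnegative because $\gamma_{\bar{\rho}} \geq 2^{-n}$. Therefore $\lambda_1$ lies between the two roots, and in particular below the larger root
\[
\frac{1 + \sqrt{(2^n - 1)(2^n\gamma_{\bar{\rho}} - 1)}}{2^n},
\]
which is the claimed bound.

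The only delicate step is the discriminant simplification. Expanding gives $4 - 4\cdot 2^n + 4\cdot 2^n(2^n - 1)\gamma_{\bar{\rho}}$, and this factors as $4(2^n-1)(2^n\gamma_{\bar{\rho}} - 1)$; I will check that factorization explicitly. The degenerate case $n = 0$ (where $m = 0$) is excluded since $n \geq 1$.
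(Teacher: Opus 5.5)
Your proposal is correct and follows the paper's proof essentially step for step: isolate the largest eigenvalue, apply Lemma~\ref{lma:min-sum-of-squares} to the remaining $2^n-1$ eigenvalues, and take the larger root of the resulting quadratic $2^n\lambda^2 - 2\lambda + 1 - (2^n-1)\gamma_{\bar{\rho}} \leq 0$. Your discriminant factorization $4(2^n-1)(2^n\gamma_{\bar{\rho}}-1)$ is correct.
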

\begin{proof}[Proof]

For a fixed largest eigenvalue, the remaining eigenvalues contribute least to the purity when they are equal. This gives a quadratic constraint on the largest eigenvalue, whose upper root yields the bound.

    Let $\lambda_1 \leq \ldots \leq \lambda_N$ be the eigenvalues of $\rho$, with $N = 2^n$. Since $\Tr\left(\rho\right) = 1$, we know that: 
    \[\sum_{i=1}^{N} \lambda_i = 1 \implies \sum_{i=1}^{N-1} \lambda_i = 1 - \lambda_N.\]
    
    By lemma~\ref{lma:min-sum-of-squares}, we know that, subject to the constraint $\sum_{i=1}^{N-1} \lambda_i = 1 - \lambda_N$: 
    \[\sum_{i=1}^{N-1} \lambda_i^2 \geq \frac{\left(1 - \lambda_N\right)^2}{N - 1}.\]
    
    This therefore tells us that: 
    \begin{equation}\begin{aligned}&\gamma_{\bar{\rho}} = \sum_{i=1}^{N} \lambda_i^2 = \lambda_N^2 + \sum_{i=1}^{N-1} \lambda_i^2 \geq \lambda_N^2 + \frac{\left(1 - \lambda_N\right)^2}{N - 1} \\ &\iff \gamma_{\bar{\rho}} \geq \lambda_N^2 + \frac{1 - 2\lambda_N + \lambda_N^2}{N-1} = \frac{N \lambda_N^2 - 2 \lambda_N + 1}{N-1} \\ &\iff N\lambda_N^2 - 2 \lambda_N + \left(1 - \left(N-1\right)\gamma_{\bar{\rho}}\right) \leq 0.\end{aligned}\end{equation}
    
    This is a second degree polynomial in the variable $\lambda_N$, of positive leading coefficient, $N > 0$. For it to be negative, $\lambda_N$ must be between the two roots, $x_- \leq \lambda_N \leq x_+$. We thus compute its roots: 
    \autoeq{x_{\pm} = \frac{2 \pm \sqrt{4 - 4N\left(1 - \left(N-1\right)\gamma_{\bar{\rho}}\right)}}{2N} = \frac{1 \pm \sqrt{1 - N + N\left(N-1\right)\gamma_{\bar{\rho}}}}{N} = \frac{1 \pm \sqrt{\left(N-1\right)\left(N\gamma_{\bar{\rho}} - 1\right)}}{N}.}
    
    Notice that we necessarily have $\lambda_N \geq \frac{1}{N}$ since, otherwise, we would have $\Tr\left(\rho\right) < 1$. This means that $\lambda_N \geq x_-$ is always true. We however get a constraint on the other root: 
    \[\lambda_N \leq x_+ = \frac{1 + \sqrt{\left(N-1\right)\left(N\gamma_{\bar{\rho}} - 1\right)}}{N}.\]
    
    Since we defined $N = 2^n$, this gives our result.
\end{proof}

\begin{lemma}\label{lma:expected-value-purity-bound}
    Let $A \in \mathbb{C}^{2^n \times 2^n}$ be an arbitrary operator, and $D$ be a distribution over $n$-qubit mixed states $\rho$. Moreover, let $\gamma_{\bar{\rho}} = \Tr\left(\exval\left(\rho\right)^2\right)$ be the purity of the average state. Then: 
    \[\exval\left(\left|\Tr\left(A \rho\right)\right|^2\right) \leq \Tr\left(A^{\dagger} A\right) \frac{1 + \sqrt{\left(2^n - 1\right)\left(2^n \gamma_{\bar{\rho}}  -1\right)}}{2^n}.\]
\end{lemma}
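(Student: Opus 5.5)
We reduce the quadratic quantity $\exval\left(\left|\Tr\left(A \rho\right)\right|^2\right)$ to a linear functional of the average state $\bar{\rho} = \exval\left(\rho\right)$, and then apply Lemma~\ref{lma:largest-eigenvalue-mixed-state-bound}. First we bound each term pointwise by Cauchy--Schwarz in the Hilbert--Schmidt inner product, weighted by $\rho$. Writing $\Tr\left(A\rho\right) = \Tr\left(\rho^{1/2} \cdot \rho^{1/2} A\right)$, we get
\[\left|\Tr\left(A\rho\right)\right|^2 \leq \Tr\left(\rho\right) \Tr\left(A^{\dagger} \rho A\right) = \Tr\left(A A^{\dagger} \rho\right),\]
using $\Tr\left(\rho\right) = 1$. (One can equally use $\Tr(\rho A)$ with the roles swapped to obtain $\Tr(A^\dagger A\rho)$; either ordering suffices.)

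Next we take expectations. By linearity of trace and expectation,
\[\exval\left(\left|\Tr\left(A\rho\right)\right|^2\right) \leq \Tr\left(A A^{\dagger} \bar{\rho}\right).\]
Since $AA^\dagger$ is positive semi-definite and $\bar{\rho}$ is a density matrix, we bound $\Tr\left(AA^\dagger \bar{\rho}\right) \leq \lambda_{max}\left(\bar{\rho}\right) \Tr\left(AA^\dagger\right)$. Concretely, we diagonalize $\bar{\rho} = \sum_i p_i \ket{v_i}\bra{v_i}$ and use $\bra{v_i} AA^\dagger \ket{v_i} \geq 0$. We then rewrite $\Tr\left(AA^\dagger\right) = \Tr\left(A^\dagger A\right)$.

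Finally, $\bar{\rho}$ is itself a mixed state with purity $\gamma_{\bar{\rho}}$, so Lemma~\ref{lma:largest-eigenvalue-mixed-state-bound} gives
\[\lambda_{max}\left(\bar{\rho}\right) \leq \frac{1 + \sqrt{\left(2^n - 1\right)\left(2^n \gamma_{\bar{\rho}} - 1\right)}}{2^n}.\]
Substituting this bound yields the claim.

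The only delicate step is the first one: the Cauchy--Schwarz inequality has to be applied with the correct ordering and the correct square roots, so that the bound is linear in $\rho$ and not quadratic, as it would be with $\Tr\left(\rho^2\right)$. Choosing the factorization $\rho^{1/2}\cdot\rho^{1/2}A$ handles this, because it exploits $\Tr\rho = 1$ even for mixed states. The remaining steps are routine.
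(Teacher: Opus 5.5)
Your proof is correct. Like the paper, you reach $\mathbb{E}\,|\operatorname{Tr}(A\rho)|^2 \le \lambda_{\max}(\bar\rho)\operatorname{Tr}(A^\dagger A)$ and then finish with Lemma~\ref{lma:largest-eigenvalue-mixed-state-bound}. The difference is how you obtain that intermediate inequality.

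The paper takes the Hermitian case $\mathbb{E}[\operatorname{Tr}(O\rho)^2]\le\|O\|_2^2\|\bar\rho\|_\infty$ as a black box from Ref.~\cite{schuster2024polynomial}. It then extends this to arbitrary $A$ by writing $A=H+iK$ with $H,K$ Hermitian. It checks that $|\operatorname{Tr}(A\rho)|^2=\operatorname{Tr}(H\rho)^2+\operatorname{Tr}(K\rho)^2$ and $\operatorname{Tr}(A^\dagger A)=\operatorname{Tr}(H^2)+\operatorname{Tr}(K^2)$.

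You handle a non-Hermitian $A$ directly. The Cauchy--Schwarz step with the factorization $\rho^{1/2}\cdot\rho^{1/2}A$ gives $|\operatorname{Tr}(A\rho)|^2\le\operatorname{Tr}(AA^\dagger\rho)$. The spectral bound $\operatorname{Tr}(AA^\dagger\bar\rho)\le\lambda_{\max}(\bar\rho)\operatorname{Tr}(AA^\dagger)$ then completes the argument.

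The remark following Theorem~\ref{thm:mu-eta-bound} indicates that the cited Hermitian bound is itself proved via $\operatorname{Tr}(O^2\bar\rho)$ and H\"older's inequality. So your two steps are essentially the cited argument, applied to general $A$. Your route is self-contained and avoids the Hermitian/anti-Hermitian split. The paper's route is shorter only because it leans on the external lemma. Both end with the same constant, so neither is sharper.
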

\begin{remark*}[Implication]
    Suppose that $A = O$ is an observable. If $\bar{\rho}$ is pure, and hence $\gamma_{\bar{\rho}} = 1$, this yields:
    \[\exval\left(\Tr\left(O \rho\right)^2\right) \leq \Tr\left(O^2\right).\]
    
    If $\bar{\rho}$ is the maximally mixed state, i.e. if $\gamma_{\bar{\rho}} = \frac{1}{2^n}$, then: 
    \[\exval\left(\Tr\left(O \rho\right)^2\right) \leq \frac{1}{2^n} \Tr\left(O^2\right).\]
\end{remark*}
\begin{proof}[Proof]

The available expectation bound applies to Hermitian operators, whereas the lemma allows an arbitrary operator. We extend it by separating the Hermitian and anti-Hermitian parts, then use the purity of the average state to bound its operator norm.

    We use the second-moment bound established in the proof of
Lemma~B1 of
Ref.~\cite{schuster2024polynomial}.
For any Hermitian operator $O$:
    \begin{equation}\label{eq:untight-inequality} 
    \exval\left(\Tr\left(O \rho\right)^2\right) \leq \left\|O\right\|_2^2 \left\|\bar{\rho}\right\|_{\infty},
    \end{equation}
    where $\left\|.\right\|_p$ is the Schatten $p$-norm. In our case, $\left\|O\right\|_2^2 = \Tr\left(O^2\right)$ since $O$ is Hermitian and $\left\|\bar{\rho}\right\|_{\infty} = \lambda_{max}\left(\bar{\rho}\right)$ since $\bar{\rho}$ is Hermitian positive semi-definite.
    
    We split $A = H + i K$ where $H$ and $K$ are Hermitian operators, defined by: 
    \[H = \frac{A + A^{\dagger}}{2}, \mathspace K = \frac{A - A^{\dagger}}{2i}.\]
    
    We notice that: 
    \begin{equation}\begin{aligned}\left|\Tr\left(A \rho\right)\right|^2 &= \Tr\left(A^{\dagger} \rho\right) \Tr\left(A \rho\right) \\ &= \Tr\left(\left(H - iK\right) \rho\right) \Tr\left(\left(H + iK\right) \rho\right) \\ &= \Tr\left(H \rho\right)^2 - i \Tr\left(K \rho\right)\Tr\left(H\rho\right) + i \Tr\left(H\rho\right)\Tr\left(K\rho\right) + \Tr\left(K\rho\right)^2 \\ &= \Tr\left(H\rho\right)^2 + \Tr\left(K \rho\right)^2.\end{aligned}\end{equation}
    
    Applying Eq.~\eqref{eq:untight-inequality} to $H$ and $K$, we obtain: 
    \autoeq{\exval\left(\left|\Tr\left(A \rho\right)\right|^2\right) = \exval\left(\Tr\left(H \rho\right)^2\right) + \exval\left(\Tr\left(K \rho\right)^2\right) \leq \Tr\left(H^2\right) \left\|\bar{\rho}\right\|_{\infty} + \Tr\left(K^2\right) \left\|\bar{\rho}\right\|_{\infty}.}
    
    However, we also notice that:
    \autoeq{\Tr\left(A^{\dagger} A\right) = \Tr\left(\left(H - iK\right)\left(H + iK\right)\right) = \Tr\left(H^2\right) - i\Tr\left(KH\right) + i\Tr\left(HK\right) + \Tr\left(K^2\right) = \Tr\left(H^2\right) + \Tr\left(K^2\right).}
    
    This allows us to simplify our result to:
    \[\exval\left(\left|\Tr\left(A \rho\right)\right|^2\right) \leq \Tr\left(A^{\dagger} A\right) \left\|\bar{\rho}\right\|_{\infty}.\]
    
    Lemma~\ref{lma:largest-eigenvalue-mixed-state-bound} finally allows us to bound $\left\|\bar{\rho}\right\|_{\infty} = \lambda_{max}\left(\bar{\rho}\right)$: 
    \[\lambda_{max}\left(\bar{\rho}\right) \leq \frac{1 + \sqrt{\left(2^n - 1\right)\left(2^n \gamma_{\bar{\rho}} - 1\right)}}{2^n}.\]
    
    This directly gives our result.
\end{proof}

\begin{theorem}\label{thm:mu-eta-bound}
    Let $D$ be a distribution over $n$-qubit pure states, of mean $\kket{\mu}$ and expressivity measure $\eta$. Then: 
    \[- \sqrt{2^n - 1} \left\|\mu\right\| \leq \eta \leq 1 - \left\|\mu\right\|^2.\]
\end{theorem}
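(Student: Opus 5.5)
Both bounds are statements about $\lambda_{max}(\mathcal{S})$, since for pure states $\eta = 1 - \lambda_{max}(\mathcal{S})$. We need a lower bound $\lambda_{max}(\mathcal{S}) \geq \|\mu\|^2$ for the right inequality, and an upper bound $\lambda_{max}(\mathcal{S}) \leq 1 + \sqrt{2^n-1}\,\|\mu\|$ for the left one. The preceding lemmas (Lemmas~\ref{lma:mu-purity}, \ref{lma:s-rayleigh-alternate-formulation}, \ref{lma:largest-eigenvalue-mixed-state-bound} and \ref{lma:expected-value-purity-bound}) are set up for exactly these two estimates.

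\textbf{Upper bound $\eta \leq 1 - \|\mu\|^2$.} If $\mu = 0$ this is just $\eta \leq 1$, which follows from Lemma~\ref{lma:eta-properties}. Otherwise we evaluate the Rayleigh quotient of $\mathcal{S}$ along the mean direction $\kket{\mu}/\|\mu\|$. For pure states $\mathcal{S} = \mathbb{E}[\kket{r(\rho)}\bbra{r(\rho)}]$ and $\kket{\mu} = \mathbb{E}[\kket{r(\rho)}]$, so
\[\bbra{\mu}\mathcal{S}\kket{\mu} = \mathbb{E}\left[\left|\bbrakket{\mu}{r(\rho)}\right|^2\right] \geq \left|\mathbb{E}\,\bbrakket{\mu}{r(\rho)}\right|^2 = \|\mu\|^4\]
by Jensen (variance is nonnegative). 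Dividing by $\|\mu\|^2$ gives $\lambda_{max}(\mathcal{S}) \geq \|\mu\|^2$, hence $\eta \leq 1 - \|\mu\|^2$.

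\textbf{Lower bound $\eta \geq -\sqrt{2^n-1}\,\|\mu\|$.} We need $\bbra{A}\mathcal{S}\kket{A} \leq 1 + \sqrt{2^n-1}\,\|\mu\|$ for every normalised $\kket{A}$. Since $\mathcal{S}$ lives on the traceless subspace, we may take $\Tr(A)=0$ and $\Tr(A^\dagger A)=1$. Lemma~\ref{lma:s-rayleigh-alternate-formulation} then gives
\[\bbra{A}\mathcal{S}\kket{A} = 2^n\,\mathbb{E}\left|\Tr(A\rho)\right|^2.\]
Lemma~\ref{lma:expected-value-purity-bound} bounds this by $1 + \sqrt{(2^n-1)(2^n\gamma_{\bar\rho}-1)}$. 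Lemma~\ref{lma:mu-purity} gives $2^n\gamma_{\bar\rho}-1 = \|\mu\|^2$. Combining, $\lambda_{max}(\mathcal{S}) \leq 1 + \sqrt{2^n-1}\,\|\mu\|$, and therefore $\eta \geq -\sqrt{2^n-1}\,\|\mu\|$.

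The only delicate point is restricting to traceless $A$ with the right normalisation, so that the $\Tr(A)/2^n$ term in Lemma~\ref{lma:s-rayleigh-alternate-formulation} drops out. This is justified because the maximiser of the Rayleigh quotient lies in the image of $\Pi$, as established in Lemma~\ref{lma:s-as-truncation}. The rest is direct substitution.
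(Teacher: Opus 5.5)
Your proof is correct and follows the paper's argument essentially step for step. The upper bound comes from the Rayleigh quotient of $\mathcal{S}$ along the normalised mean direction together with Jensen, and the lower bound from Lemmas~\ref{lma:s-rayleigh-alternate-formulation}, \ref{lma:expected-value-purity-bound} and \ref{lma:mu-purity}; the only cosmetic difference is that you restrict to traceless $A$ from the start, whereas the paper takes a general normalised eigenvector $A$, passes to its traceless part $B=\pi(A)$, and uses $\operatorname{Tr}(B^\dagger B)\le 1$.
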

\begin{remark*}
    Observe again that this result was only proven for pure-state distributions.
\end{remark*}
\begin{remark*}[Implication]
    This yields multiple results.
    \begin{itemize}
    \item If $\left\|\mu\right\| = 0$, then $\eta \geq 0$.
    \item If $\left\|\mu\right\| \geq 1$, then $\eta \leq 0$.
    \item If $\left\|\mu\right\| = \sqrt{2^n - 1}$ is maximal, then $\eta = 2 - 2^n$ is minimal.
    \end{itemize}
\end{remark*}
\begin{proof}[Proof 1]
    We start by proving that $- \sqrt{2^n - 1} \left\|\mu\right\| \leq \eta$.
    
    Let $\mathcal{S}$ be the second moment of $D$, and let $\ket{x}$ be the normalised eigenvector associated to the largest eigenvalue of $\mathcal{S}$. We know we can find some operator $A$ such that $\kket{A} = \ket{x}$. Since $\ket{x}$ is normalised, we know $\Tr\left(A^{\dagger} A\right) = \bbrakket{A}{A} = \braket{x}{x} = 1$. By lemma~\ref{lma:s-rayleigh-alternate-formulation}, we know that: 
    \[\lambda_{max}\left(\mathcal{S}\right) = \bra{x} \mathcal{S} \ket{x} = \bbra{A} \mathcal{S} \kket{A} = 2^n \exval\left(\left|\Tr\left(A \rho\right) - \frac{1}{2^n} \Tr\left(A\right)\right|^2\right).\]
    
    Let $B = \pi\left(A\right) = A - \Tr\left(A\right) \frac{I^{\otimes n}}{2^n}$ be the traceless part of $A$. We notice that $A = B + \Tr\left(A\right) \frac{I^{\otimes n}}{2^n}$, and hence: 
    \autoeq{\lambda_{max}\left(\mathcal{S}\right) = 2^n \exval\left(\left|\Tr\left(B \rho\right) + \Tr\left(A\right) \frac{\Tr\left(\rho\right)}{2^n} - \frac{1}{2^n} \Tr\left(A\right)\right|^2\right) = 2^n \exval\left(\left|\Tr\left(B \rho\right)\right|^2\right).}
    
    Moreover, the constraint that $\Tr\left(A^{\dagger} A\right) = 1$ gives us that:
    \begin{equation}\begin{aligned}\Tr\left(B^{\dagger} B\right) &= \Tr\left[\left(A^{\dagger} - \Tr\left(A\right)^* \frac{I^{\otimes n}}{2^n}\right)\left(A - \Tr\left(A\right) \frac{I^{\otimes n}}{2^n}\right)\right] \\ &= \Tr\left(A^{\dagger} A\right) - \frac{\Tr\left(A\right)^*}{2^n} \Tr\left(A\right) \fakeequal - \Tr\left(A^{\dagger}\right)\frac{\Tr\left(A\right)}{2^n}  + \Tr\left(A\right)^* \Tr\left(A\right) \frac{\Tr\left(I^{\otimes n}\right)}{4^n} \\ &= 1 - \frac{1}{2^n} \left|\Tr\left(A\right)\right|^2 \\ &\leq 1.\end{aligned}\end{equation}
    
    But then, lemma~\ref{lma:expected-value-purity-bound} tells us: 
    \autoeq{\lambda_{max}\left(\mathcal{S}\right) = 2^n \exval\left(\left|\Tr\left(B \rho\right)\right|^2\right) \leq 2^n \Tr\left(B^{\dagger} B\right) \frac{1 + \sqrt{\left(2^n - 1\right)\left(2^n \gamma_{\bar{\rho}} - 1\right)}}{2^n} \leq 1 + \sqrt{\left(2^n - 1\right)\left(2^n \gamma_{\bar{\rho}} - 1\right)}.}
    
    We recognise thanks to lemma~\ref{lma:mu-purity} that $\left\|\mu\right\|^2 = 2^{n} \gamma_{\bar{\rho}} - 1$, and hence: 
    \[\lambda_{max}\left(\mathcal{S}\right) \leq 1 + \sqrt{\left(2^n - 1\right) \left\|\mu\right\|^2} = 1 + \left\|\mu\right\|\sqrt{2^n - 1}.\]
    
    We can finally use the definition of $\eta$: 
    \[\eta = 1 - \lambda_{max}\left(\mathcal{S}\right) \geq -\left\|\mu\right\| \sqrt{2^n - 1}.\]
\end{proof}
\begin{proof}[Proof 2]
    We now prove that $\eta \leq 1 - \left\|\mu\right\|^2$. If $\kket{\mu} = 0$, this is a trivial bound, we know $\eta \leq 1$ is always true. We thus suppose $\left\|\mu\right\| > 0$.
    
    Consider $\ket{x} = \frac{\kket{\mu}}{\left\|\mu\right\|}$, where $\kket{\mu} = \sqrt{2^n} \kket{\pi\left(\bar{\rho}\right)}$ as usual. Then: 
    \[\lambda_{max}\left(\mathcal{S}\right) \geq \bra{x} \mathcal{S} \ket{x} = 2^n \bra{x} \exval\left(\kket{\pi\left(\rho\right)} \bbra{\pi\left(\rho\right)}\right) \ket{x} = 2^n \exval\left(\left|\brakket{x}{\pi\left(\rho\right)}\right|^2\right).\]
    
    It is well known that the function $f: \mathbb{C} \mapsto \mathbb{R}_+$ defined by $f\left(z\right) = \left|z\right|^2$ is convex. Thus, by Jensen's inequality: 
    \[\lambda_{max}\left(\mathcal{S}\right) \geq 2^n \exval\left(\left|\brakket{x}{\pi\left(\rho\right)}\right|^2\right) \geq 2^n \left|\exval\left(\brakket{x}{\pi\left(\rho\right)}\right)\right|^2.\]
    
    However: 
    \autoeq{\exval\left(\brakket{x}{\pi\left(\rho\right)}\right) = {\frac{\bra{\mu}}{\left\|\mu\right\|} \exval\left(\kket{\pi\left(\rho\right)}\right) = \frac{\bra{\mu}}{\left\|\mu\right\|} \kket{\pi\left(\bar{\rho}\right)}} = {\frac{\bra{\mu}}{\left\|\mu\right\|}\frac{1}{\sqrt{2^n}} \kket{\mu} = \frac{\left\|\mu\right\|^2}{\sqrt{2^n} \left\|\mu\right\|} = \frac{\left\|\mu\right\|}{\sqrt{2^n}}}.}
    
    Injecting it back:
    \[\lambda_{max}\left(\mathcal{S}\right) \geq  2^n \left|\exval\left(\brakket{x}{\pi\left(\rho\right)}\right)\right|^2 = 2^n \frac{\left\|\mu\right\|^2}{2^n} = \left\|\mu\right\|^2.\]
    
    This finally gives our result: 
    \[\eta = 1 - \lambda_{max}\left(\mathcal{S}\right) \leq 1 - \left\|\mu\right\|^2.\]
\end{proof}
\begin{remark*}
    The lower bound is not tight: we know $\eta \geq 2 - 2^n$, but for $\left\|\mu\right\| = \sqrt{2^n - 1}$, the lower bound only tells us $\eta \geq 1 - 2^n$. It is interesting to consider a deterministic distribution, which outputs $\ket{\psi} = \ket{0}$ with probability 1 to find the issue. Almost all inequalities are tight for this distribution, except for Hölder's trace inequality $\exval\left(\Tr\left(O^2 \rho\right)\right) \leq \left\|O^2\right\|_1 \left\|\bar{\rho}\right\|_{\infty}$ used by reference~\cite{schuster2024polynomial} to prove equation~\ref{eq:untight-inequality}. 

\end{remark*}

\suppsubsection{Scaling of $\left\|\mu\right\|_2$}

We compute the bias of a product ensemble from the biases of its components. 

\begin{remark}
    We continue to only work with pure state distributions for simplicity.
\end{remark}

\begin{lemma}\label{lma:mu-2-tensor-product-scaling}
    Let $D_1, D_2$ be independent pure-state distributions over $n_1, n_2$ qubits, of first moments $\mu_1, \mu_2$, respectively. Moreover, let $D = D_1 \otimes D_2$, and let $\mu$ be its first moment.

    Then:
    \[\left\|\mu\right\|_2 = \sqrt{\left(1+\left\|\mu_1\right\|_2^2\right)\left(1+\left\|\mu_2\right\|_2^2\right)-1}.\]
\end{lemma}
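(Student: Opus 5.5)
The plan is to go through Lemma~\ref{lma:mu-purity}, which turns the bias into the purity of the average state: for a pure-state distribution with average state $\bar{\rho}$ on $n$ qubits, $\left\|\mu\right\|_2^2 = 2^n \gamma_{\bar{\rho}} - 1$, i.e.\ $1 + \left\|\mu\right\|_2^2 = 2^n \Tr\left(\bar{\rho}^2\right)$. The claimed identity is then equivalent to the multiplicativity
\[
1 + \left\|\mu\right\|_2^2 = \left(1+\left\|\mu_1\right\|_2^2\right)\left(1+\left\|\mu_2\right\|_2^2\right),
\]
so it suffices to show that the rescaled purity $2^n \Tr(\bar{\rho}^2)$ is multiplicative under independent tensor products.

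First, I check that $D = D_1 \otimes D_2$ is again a pure-state distribution, since $\rho_1 \otimes \rho_2$ is pure whenever both factors are. This is needed so that Lemma~\ref{lma:mu-purity} applies to $D$ on $n = n_1 + n_2$ qubits. Next, independence and linearity of expectation give
\[
\bar{\rho} = \exval\left[\rho_1 \otimes \rho_2\right] = \exval\left[\rho_1\right] \otimes \exval\left[\rho_2\right] = \bar{\rho}_1 \otimes \bar{\rho}_2 .
\]
Then $\Tr(\bar{\rho}^2) = \Tr\left((\bar{\rho}_1 \otimes \bar{\rho}_2)^2\right) = \Tr(\bar{\rho}_1^2)\,\Tr(\bar{\rho}_2^2)$, using $(A\otimes B)^2 = A^2 \otimes B^2$ and $\Tr(A \otimes B) = \Tr(A)\Tr(B)$. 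Multiplying by $2^{n_1+n_2} = 2^{n_1}2^{n_2}$ yields $1+\left\|\mu\right\|_2^2 = \left(1+\left\|\mu_1\right\|_2^2\right)\left(1+\left\|\mu_2\right\|_2^2\right)$, and taking the nonnegative square root of this identity minus one gives the statement.

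As an alternative check, one could read the result off Theorem~\ref{thm:m-product-distribution}. The identity-row entries of $M$ in the $\kket{\widetilde{P}\otimes\widetilde{Q}}$ basis are products of the identity-row entries of $M^{(1)}$ and $M^{(2)}$, so $\left\|(1,\mu)\right\|^2 = \left\|(1,\mu_1)\right\|^2\left\|(1,\mu_2)\right\|^2$, which is the same identity. There is no real obstacle here. The only point requiring care is that Lemma~\ref{lma:mu-purity} is stated for pure-state distributions, so I need the purity of the product ensemble, and I need $\bar{\rho}$ to factor, which uses independence of the two samples.
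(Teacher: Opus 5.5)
Your proof is correct, and it takes a slightly different route from the paper's. Both arguments start from $\bar\rho=\bar\rho_1\otimes\bar\rho_2$.

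\textbf{The paper's route.} It then writes $\sqrt{2^{n_i}}\,\bar\rho_i=\widetilde{I}^{\otimes n_i}+\mu_i$ and expands the tensor product. This gives the explicit formula $\mu=\mu_1\otimes\widetilde{I}^{\otimes n_2}+\widetilde{I}^{\otimes n_1}\otimes\mu_2+\mu_1\otimes\mu_2$. It then computes $\|\mu\|_2^2$ by checking that these three terms are Hilbert--Schmidt orthogonal, using tracelessness of $\mu_1,\mu_2$.

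\textbf{Your route.} You invoke Lemma~\ref{lma:mu-purity} to rewrite $1+\|\mu\|_2^2=2^n\Tr(\bar\rho^2)$ on each side. This reduces the claim to multiplicativity of purity under tensor products, which is shorter and makes the meaning transparent: $1+\nu^2$ is the rescaled purity of the average state, so it multiplies.

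\textbf{Comparison.} At bottom both rest on the same Pythagorean identity $\|\widetilde{I}+\mu\|_2^2=1+\|\mu\|_2^2$. You apply it factor by factor (inside the lemma) and then use $\|A\otimes B\|_2=\|A\|_2\|B\|_2$. The paper applies orthogonality directly to the expanded product. The paper's version buys the explicit decomposition of the product mean into single-factor and cross components, which shows which Pauli directions carry the bias. Your version yields only the norm, but with less bookkeeping.

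Your alternative check via Theorem~\ref{thm:m-product-distribution} is also valid. The identity column of $M$ in the product Pauli basis is the tensor product of the factors' identity columns $(1,\mu_i)$. You correctly flagged that purity of the product ensemble is needed for the lemma to apply.
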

\begin{intuition*}
    Reordering the terms, we notice that:
    \[1+\left\|\mu\right\|_2^2 = \left(1+\left\|\mu_1\right\|_2^2\right)\left(1+\left\|\mu_2\right\|_2^2\right).\]

    In other words, the value $1+\left\|\mu\right\|_2^2$ is multiplicative.
\end{intuition*}
\begin{proof}[Proof]
    Let $\bar{\rho}_i = \exval_{\rho_i \followsdistr D_i}\left[\rho_i\right]$, for $i \in \left\{1,2\right\}$, and let $\bar{\rho} = \exval_{\rho \followsdistr D}\left[\rho\right]$. Since the distributions are over pure states, we know $\bar{\rho}_i = \exval_{\rho_i \followsdistr D_i}\left[\rho_i^2\right]$ as well, and hence definition~\ref{def:mean} gives:
    \[\mu_i = \sqrt{2^{n_i}}\left(\bar{\rho}_i - \frac{I^{\otimes n_i}}{2^{n_i}}\right) \implies \sqrt{2^{n_i}}\bar{\rho}_i = \widetilde{I}^{\otimes n_i}+\mu_i.\]

    Moreover, independence tells us that $\bar{\rho} = \bar{\rho}_1 \otimes \bar{\rho}_2$. Therefore:
    \autoeq{\widetilde{I}^{\otimes \left(n_1+n_2\right)}+\mu = \sqrt{2^{n_1+n_2}}\bar{\rho} = \left(\sqrt{2^{n_1}} \bar{\rho}_1\right) \otimes \left(\sqrt{2^{n_2}} \bar{\rho}_2\right) = \left(\widetilde{I}^{\otimes n_1}+\mu_1\right)\otimes\left(\widetilde{I}^{\otimes n_2}+\mu_2\right).}

    Expanding this tensor product and cancelling out the identity term on both sides yields:
    \[\mu = \mu_1 \otimes \widetilde{I}^{\otimes n_2}+\widetilde{I}^{\otimes n_1}\otimes\mu_2+\mu_1\otimes\mu_2.\]

    By definition, $\mu_1$ and $\mu_2$ are traceless. Hence, the three terms on the right-hand side are pairwise orthogonal for the Hilbert-Schmidt inner product, i.e.\ for instance $\Tr\left(\left(\mu_1 \otimes \widetilde{I}^{\otimes n_2}\right)\left(\mu_1 \otimes \mu_2\right)\right) = \frac{1}{\sqrt{2^{n_2}}} \Tr\left(\mu_1^2\right) \Tr\left(\mu_2\right) = \frac{1}{\sqrt{2^{n_2}}} \Tr\left(\mu_1^2\right)\cdot 0 = 0$. Since moreover the normalised identities have norm $1$, this gives:
    \autoeq{\left\|\mu\right\|_2^2 = \Tr\left(\mu^2\right) = \left\|\mu_1\right\|_2^2 \underbrace{\left\|\widetilde{I}^{\otimes n_2}\right\|_2^2}_{=1} + \underbrace{\left\|\widetilde{I}^{\otimes n_1}\right\|_2^2}_{=1} \left\|\mu_2\right\|_2^2 + \left\|\mu_1\right\|_2^2\left\|\mu_2\right\|_2^2 = \left(1+\left\|\mu_1\right\|_2^2\right)\left(1+\left\|\mu_2\right\|_2^2\right)-1.}

    Taking the square root gives our result.
\end{proof}

\begin{lemma}\label{lma:mu-k-tensor-product-scaling}
    Let $C$ be a pure-state distribution over $n$ qubits, of first moment $\mu$. Moreover, let $k \geq 1$ be an integer, $D_k = C^{\otimes k}$, and $\mu_k$ be the first moment of $D_k$.

    Then:
    \[\left\|\mu_k\right\|_2 = \sqrt{\left(1+\left\|\mu\right\|_2^2\right)^k-1}.\]
\end{lemma}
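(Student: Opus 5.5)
The plan is induction on $k$, with Lemma~\ref{lma:mu-2-tensor-product-scaling} doing all of the work in the inductive step. It is convenient to track the quantity $1+\left\|\mu_k\right\|_2^2$, since the intuition following that lemma says exactly this is multiplicative under independent tensor products. The claim is then equivalent to
\[1+\left\|\mu_k\right\|_2^2 = \left(1+\left\|\mu\right\|_2^2\right)^k .\]
Once this is shown, the stated formula follows by taking square roots, since norms are nonnegative.

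The base case $k=1$ is immediate, because $D_1 = C$ and so $\mu_1 = \mu$. For the inductive step, I write $D_{k+1} = D_k \otimes C$. Here $D_k$ is a pure-state distribution over $kn$ qubits and $C$ is a pure-state distribution over $n$ qubits. The two factors are independent, since $D_{k+1}$ samples $k+1$ independent copies of $C$ and groups the first $k$ of them. Tensor products of pure states are pure, so $D_k$ is indeed a pure-state distribution and the hypotheses of Lemma~\ref{lma:mu-2-tensor-product-scaling} hold. Applying it with $D_1 \leftarrow D_k$ and $D_2 \leftarrow C$ gives
\[1+\left\|\mu_{k+1}\right\|_2^2 = \left(1+\left\|\mu_k\right\|_2^2\right)\left(1+\left\|\mu\right\|_2^2\right).\]
Substituting the induction hypothesis then yields $\left(1+\left\|\mu\right\|_2^2\right)^{k+1}$, which completes the induction.

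I do not expect any real obstacle here. The only points to check are bookkeeping ones. First, the grouping of $C^{\otimes (k+1)}$ as $C^{\otimes k}\otimes C$ must give the same distribution as sampling $k+1$ independent copies; this holds because independence is preserved under regrouping. Second, the first moment of the grouped product must coincide with the one defined directly on $kn+n$ qubits; this holds because the mean depends only on the output state distribution.
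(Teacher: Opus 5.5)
Your proposal is correct and follows essentially the same argument as the paper: repeatedly apply Lemma~\ref{lma:mu-2-tensor-product-scaling} to peel off one independent factor at a time, so that $1+\left\|\mu_k\right\|_2^2$ picks up a factor $1+\left\|\mu\right\|_2^2$ at each step. Your explicit induction and the checks that $D_k$ is pure and independent of the added copy are exactly the bookkeeping the paper leaves implicit.
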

\begin{intuition*}
    Using the intuition of the previous lemma, this is expected:
    \[1+\left\|\mu_k\right\|_2^2 = \left(1+\left\|\mu\right\|_2^2\right)^k.\]
\end{intuition*}
\begin{remark*}
    This is precisely the first result of corollary~\ref{corol-main-text:product-bias-expressivity} in the main text.
\end{remark*}
\begin{proof}[Proof]
    We apply lemma~\ref{lma:mu-2-tensor-product-scaling} successively to the $k$ independent factors. The quantity $1+\left\|\mu\right\|_2^2$ is multiplied by the same factor at every step, giving:
    \[1+\left\|\mu_k\right\|_2^2 = \prod_{\ell=1}^k\left(1+\left\|\mu\right\|_2^2\right) = \left(1+\left\|\mu\right\|_2^2\right)^k.\]

    Rearranging and taking the square root gives our result.
\end{proof}

\begin{theorem}\label{thm:mu-k-tensor-product-below-one}
    Let $C$ be a pure-state distribution of mean $\mu$, and let $\mu_k$ be the mean of $D_k = C^{\otimes k}$, for some integer $k \geq 1$.

    Then:
    \[\left\|\mu_k\right\|_2 < 1 \iff \left\|\mu\right\|_2 < \sqrt{2^{1/k}-1}.\]

    If moreover $\left\|\mu\right\|_2 > 0$, this is equivalent to:
    \[k < \frac{\ln\left(2\right)}{\ln\left(1+\left\|\mu\right\|_2^2\right)}.\]

    If $\mu = 0$, then $\mu_k = 0$ for every $k$.
\end{theorem}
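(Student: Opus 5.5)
The plan is to reduce everything to Lemma~\ref{lma:mu-k-tensor-product-scaling}. That lemma gives the exact identity
\[
1+\left\|\mu_k\right\|_2^2 = \left(1+\left\|\mu\right\|_2^2\right)^k,
\]
and the whole theorem then follows from monotonicity of elementary functions.

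For the first equivalence, note that $\left\|\mu_k\right\|_2 \geq 0$. Hence $\left\|\mu_k\right\|_2 < 1$ holds if and only if $\left\|\mu_k\right\|_2^2 < 1$, which by the identity above is equivalent to $\left(1+\left\|\mu\right\|_2^2\right)^k < 2$. Since both sides are positive and $t \mapsto t^{1/k}$ is strictly increasing on $\mathbb{R}_{>0}$, this becomes $1+\left\|\mu\right\|_2^2 < 2^{1/k}$, that is, $\left\|\mu\right\|_2^2 < 2^{1/k}-1$. The right-hand side is nonnegative because $k \geq 1$, so we may take square roots and obtain $\left\|\mu\right\|_2 < \sqrt{2^{1/k}-1}$.

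For the second equivalence, assume $\left\|\mu\right\|_2 > 0$. Take logarithms in $\left(1+\left\|\mu\right\|_2^2\right)^k < 2$, using that $\ln$ is strictly increasing. This gives $k \ln\left(1+\left\|\mu\right\|_2^2\right) < \ln 2$. Because $\left\|\mu\right\|_2 > 0$, the quantity $\ln\left(1+\left\|\mu\right\|_2^2\right)$ is strictly positive, so we can divide by it without flipping the inequality. The result is $k < \ln(2)/\ln\left(1+\left\|\mu\right\|_2^2\right)$.

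For the final claim, setting $\mu = 0$ in the identity gives $1+\left\|\mu_k\right\|_2^2 = 1$. Hence $\left\|\mu_k\right\|_2 = 0$, and so $\mu_k = 0$ for every $k$. This case can also be read off directly from Corollary~\ref{crl:s-for-n-product-pure}.

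The only step needing care is the division in the logarithmic form, which is why the hypothesis $\left\|\mu\right\|_2 > 0$ is required there. With $\mu = 0$ the denominator vanishes and that form degenerates, whereas the first equivalence still holds trivially. Otherwise there is no real obstacle, since every step is a strictly monotone transformation of the identity from Lemma~\ref{lma:mu-k-tensor-product-scaling}.
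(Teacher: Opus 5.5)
Your proposal is correct and follows essentially the same route as the paper. Both arguments reduce everything to the identity $1+\|\mu_k\|_2^2=(1+\|\mu\|_2^2)^k$ from Lemma~\ref{lma:mu-k-tensor-product-scaling}, then apply strictly monotone transformations: a $k$-th root for the first equivalence, a logarithm for the second, and reading off $\mu=0$ directly from the same identity, with your remarks on the nonnegativity of $\|\mu_k\|_2$ and of $2^{1/k}-1$ making explicit steps the paper leaves implicit.
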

\begin{remark*}
    Observe again that this result was only proven for pure-state distributions.
\end{remark*}
\begin{remark*}[Asymptotics]
    For nonzero $\mu$, notice that $\ln\left(2\right)/\ln\left(1+\left\|\mu\right\|_2^2\right) = \Theta\left(1/ \left\|\mu\right\|_2^2\right)$ whenever $\left\|\mu\right\|_2^2 = O\left(1\right)$, and we can always take $k = 1$ otherwise. Therefore, this theorem states that for $k = \Theta\left(1 + 1/\left\|\mu\right\|_2^2\right)$ sufficiently large, then $\left\|\mu_k\right\|_2^2 \geq 1$. Recall however that for most of our bounds not to be vacuous, we need $\left\|\mu_k\right\|_2 < 1$. Hence, even small non-zero bias for $C$ compounds to make bounds on $D_k$ vacuous for $k$ sufficiently large.
\end{remark*}
\begin{proof}[Proof]
    By lemma~\ref{lma:mu-k-tensor-product-scaling}:
    \autoeq{\left\|\mu_k\right\|_2 < 1 \iff \sqrt{\left(\left\|\mu\right\|_2^2 + 1\right)^k - 1} < 1 \iff \left(1+\left\|\mu\right\|_2^2\right)^k < 2 \iff \left\|\mu\right\|_2^2 < 2^{1/k}-1.}

    This gives the first equivalence. If $\left\|\mu\right\|_2 > 0$, taking logarithms also gives:
    \autoeq{\left(1+\left\|\mu\right\|_2^2\right)^k < 2 \iff k\ln\left(1+\left\|\mu\right\|_2^2\right) < \ln\left(2\right) \iff k < \frac{\ln\left(2\right)}{\ln\left(1+\left\|\mu\right\|_2^2\right)}.}

    Finally, the case $\mu = 0$ follows directly from lemma~\ref{lma:mu-k-tensor-product-scaling}.
\end{proof}

\suppsubsection{Bound on the support size}\label{sec:bound-support-size}

Zero mean is a strong requirement for a pure-state ensemble because its average state must be maximally mixed. A rank argument therefore gives a lower bound on the number of states in its support. We generalise this reasoning to see how bias and expressivity constrain the support size more generally.

\begin{theorem}\label{thm:bound-support-size-mu-eta}
    Let $D$ be a pure-state distribution over $n \geq 1$ qubits, of mean $\mu$ and expressivity measure $\eta$.

    Then:
    \[\left|\supp D\right| \geq \max\left\{\left\lceil\frac{2^n}{\left\|\mu\right\|_2^2+1}\right\rceil, \left\lceil\frac{2^n-1}{1-\eta}\right\rceil\right\}.\]
\end{theorem}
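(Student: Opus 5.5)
The plan is to prove the two lower bounds separately and then take the maximum. Since $|\supp D|$ is an integer, each real lower bound can be replaced by its ceiling for free. Throughout I assume the support is finite; otherwise there is nothing to prove. Write $N=|\supp D|$ and $\supp D=\{\ket{\psi_1},\ldots,\ket{\psi_N}\}$ with probabilities $p_i$.

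\emph{First bound (bias).} The average state $\bar\rho=\sum_i p_i\ket{\psi_i}\bra{\psi_i}$ has rank at most $N$. A density matrix of rank $r$ has purity at least $1/r$, which follows from Lemma~\ref{lma:min-sum-of-squares} applied to its $r$ nonzero eigenvalues summing to one. Hence $\gamma_{\bar\rho}\ge 1/N$. Lemma~\ref{lma:mu-purity} gives $\gamma_{\bar\rho}=(\|\mu\|_2^2+1)/2^n$, so
\[
N\ge \frac{2^n}{\|\mu\|_2^2+1}.
\]

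\emph{Second bound (expressivity).} For pure states, $\mathcal{S}=2^n\sum_i p_i\kket{\pi(\rho_i)}\bbra{\pi(\rho_i)}$ (Eq.~\eqref{eq:pure-state-moment-blocks-main-text} and Lemma~\ref{lma:s-rayleigh-alternate-formulation}). This is a sum of $N$ rank-one terms, so $\operatorname{rank}\mathcal{S}\le N$. Lemma~\ref{lma:s-properties} gives $\Tr\mathcal{S}=2^n-1$ when $\bar\gamma=1$. Since $\mathcal{S}$ is positive semi-definite, its trace is at most $\operatorname{rank}(\mathcal{S})\,\lambda_{\max}(\mathcal{S})$, so
\[
2^n-1\le N\,\lambda_{\max}(\mathcal{S})=N(1-\eta).
\]
Note that $1-\eta=\lambda_{\max}(\mathcal{S})>0$, because the trace is positive for $n\ge1$. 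Dividing gives $N\ge (2^n-1)/(1-\eta)$.

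The only delicate points are the rank-counting facts and the edge cases. The rank of $\mathcal{S}$ is bounded by the number of support points, independent of the weights. The division by $1-\eta$ is legitimate since it is strictly positive. Neither step is a real obstacle.
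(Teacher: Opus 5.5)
Your proof is correct and follows the paper's argument. The second bound is exactly the paper's: $\operatorname{rank}\mathcal{S}\le N$, $\Tr\mathcal{S}=2^n-1\le\operatorname{rank}(\mathcal{S})\,\lambda_{\max}(\mathcal{S})$, and $\lambda_{\max}(\mathcal{S})>0$. For the first bound, you get $\Tr(\bar\rho^2)\ge 1/N$ from $\operatorname{rank}\bar\rho\le N$ and Lemma~\ref{lma:min-sum-of-squares}, whereas the paper drops the nonnegative cross terms $\Tr(\rho_i\rho_j)\ge 0$ and applies Cauchy--Schwarz to obtain $\Tr(\bar\rho^2)\ge\sum_i p_i^2\ge 1/N$; both routes are equally short and valid.
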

\begin{remark*}
    Observe again that this result was only proven for pure-state distributions.
\end{remark*}
\begin{proof}[Proof]
    The case where the support is infinite is immediate, so suppose that $\supp D = \left\{\rho_1,\ldots,\rho_m\right\}$ is finite. Let $p_i > 0$ be the probability of sampling $\rho_i$, and let $\bar{\rho} = \sum_{i=1}^m p_i\rho_i$ be the average state.

    We first prove the bound involving $\mu$. Since the states are pure, definition~\ref{def:mean} gives:
    \autoeq{\left\|\mu\right\|_2^2 = 2^n\Tr\left(\left(\bar{\rho}-\frac{I^{\otimes n}}{2^n}\right)^2\right) = 2^n\Tr\left(\bar{\rho}^2\right)-1.}

    Moreover, $\Tr\left(\rho_i\rho_j\right) \geq 0$ for every $i,j$, and $\Tr\left(\rho_i^2\right) = 1$. Hence:
    \[\Tr\left(\bar{\rho}^2\right) = \sum_{i,j=1}^m p_i p_j\Tr\left(\rho_i\rho_j\right) \geq \sum_{i=1}^m p_i^2 = \frac{1}{m} \sum_{i=1}^{m} p_i^2 \sum_{j=1}^{m} 1\geq \frac{1}{m} \left(\sum_{i=1}^{m} p_i\cdot 1\right)^2 = \frac{1}{m},\]
    where the last inequality is Cauchy-Schwarz, since $\sum_i p_i = 1$. Combining these expressions yields:
    \[\frac{1+\left\|\mu\right\|_2^2}{2^n} = \Tr\left(\bar{\rho}^2\right)\geq \frac{1}{m} \implies m \geq \frac{2^n}{1+\left\|\mu\right\|_2^2}.\]

    For the second bound, let $\mathcal{S}$ be the second moment of $D$. By lemma~\ref{lma:moment-matrix-equivalence-pure-states}, after applying the $I$-truncation:
    \[\mathcal{S} = 2^n\sum_{i=1}^m p_i\kket{\pi\left(\rho_i\right)}\bbra{\pi\left(\rho_i\right)}.\]
    This is a sum of $m$ matrices of rank at most $1$, so $\operatorname{rank}\left(\mathcal{S}\right) \leq m$. Since $\mathcal{S}$ is positive semi-definite, lemma~\ref{lma:s-properties} gives:
    \autoeq{2^n-1 = \Tr\left(\mathcal{S}\right) \leq \operatorname{rank}\left(\mathcal{S}\right)\lambda_{max}\left(\mathcal{S}\right) \leq m\left(1-\eta\right).}

    Using again positive semi-definiteness of $\mathcal{S}$ and the fact $\mathcal{S} \neq 0$ since $\Tr\left(\mathcal{S}\right) = 2^n - 1 > 0$, we also know $1-\eta = \lambda_{max}\left(\mathcal{S}\right) > 0$. Therefore:
    \[m \geq \frac{2^n-1}{1-\eta}.\]

    Finally, $m$ is an integer, so taking the ceilings and combining both bounds gives our result.
\end{proof}

\suppsection{Generalisation to channels}\label{sec:channels}

Here we extend the comparison from unitaries to channels. 
To do so, we first need to generalise the definition of risk to channels.

\begin{definition}[Risk for channels]\label{def:risk-for-channels}
Let $D$ be an arbitrary quantum state distribution, and $\mathcal{E}_A, \mathcal{E}_B$ be channels. We define the risk to be: 
    \[R_D\left(\mathcal{E}_A, \mathcal{E}_B\right) = \frac{1}{2}\exval_{\rho \followsdistr D}\left(\left\|\mathcal{E}_A\left(\rho\right) - \mathcal{E}_B\left(\rho\right)\right\|_2^2\right).\]
\end{definition}
\begin{remark*}
    This generalises the unitary case (definition~\ref{def:risk}) where $\mathcal{E}_A\left(\rho\right) = U \rho U^{\dagger}$ and $\mathcal{E}_B\left(\rho\right) = V \rho V^{\dagger}$.
\end{remark*}

\suppsubsection{Positive results for Pauli channels}

For Pauli channels, we can separate the role of the channel coefficients from that of the input distribution. We can then derive a bound on how the coefficients control the risk for any ensemble with the moment matrix giving distribution dependence independently (in a manner a bit similar to the unitary case). We then use these two descriptions to compare an arbitrary target risk with the Haar risk.

\begin{definition}[Pauli channel]
    Let $p \in \mathbb{C}^{4^n}$ indexed by Pauli operators be a probability distribution. A quantum channel $\mathcal{E}$ is said to be a Pauli channel if it can be written in the form: 
    \[\mathcal{E}\left(\rho\right) = \sum_{P \in \mathbb{P}_n} p_P P \rho P.\]
\end{definition}
\begin{remark*}
    In our context, we are interested in comparing two channels $\mathcal{E}_A, \mathcal{E}_B$. We will therefore write them as: 
    \[\mathcal{E}_A\left(\rho\right) = \sum_{P \in \mathbb{P}_n} a_P P \rho P, \mathspace \mathcal{E}_B\left(\rho\right) = \sum_{P \in \mathbb{P}_n} b_P P \rho P.\]

    Note that $a, b$ are probability distributions here.
\end{remark*}

\begin{lemma}[General upper-bound for Pauli channels]\label{lma:general-upper-bound-for-pauli-channels}
    Let $D$ be an arbitrary distribution, and let $\mathcal{E}_A, \mathcal{E}_B$ be Pauli channels. Then:
    \[R_D\left(\mathcal{E}_A, \mathcal{E}_B\right) \leq \frac{1}{2} \left\|a - b\right\|_1^2.\]
\end{lemma}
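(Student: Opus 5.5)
The plan is to bound the difference operator pointwise in Schatten 2-norm and then average over $D$. For any state $\rho$, linearity gives
\[
\mathcal{E}_A(\rho)-\mathcal{E}_B(\rho)=\sum_{P\in\mathbb{P}_n}(a_P-b_P)\,P\rho P .
\]
By the triangle inequality for the Schatten 2-norm,
\[
\left\|\mathcal{E}_A(\rho)-\mathcal{E}_B(\rho)\right\|_2\leq\sum_{P}|a_P-b_P|\,\left\|P\rho P\right\|_2 .
\]
Since each $P$ is unitary, unitary invariance of the Hilbert--Schmidt norm gives $\|P\rho P\|_2=\|\rho\|_2=\sqrt{\Tr(\rho^2)}\leq 1$, because $\rho$ is a density matrix with purity at most one. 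Hence the norm of the difference is at most $\|a-b\|_1$ for every $\rho$.

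Next we square this bound, multiply by $\frac{1}{2}$, and take the expectation over $\rho\sim D$ using Definition~\ref{def:risk-for-channels}. Because the pointwise bound $\frac{1}{2}\|a-b\|_1^2$ is independent of $\rho$, this immediately yields the claim. A slightly sharper intermediate statement also comes for free:
\[
R_D\left(\mathcal{E}_A,\mathcal{E}_B\right)\leq\frac{\bar{\gamma}}{2}\left\|a-b\right\|_1^2 ,
\]
and it reduces to the stated bound using $\bar{\gamma}\leq 1$.

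There is no real obstacle. The only point to check is that the argument uses the 2-norm triangle inequality together with unitary invariance, rather than expanding the square. Expanding the square would produce cross terms $\Tr(P\rho P Q\rho Q)$, and these are also bounded by $1$ in absolute value via Cauchy--Schwarz. Either route works, and the triangle-inequality route is cleaner.

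This lemma only controls the risk by the coefficient difference. The later sparse-Pauli bounds will then need a matching lower bound of the Haar risk in terms of $\|a-b\|_2^2$. Combined with $\|a-b\|_1^2\leq s\|a-b\|_2^2$, this produces the factor $s$.
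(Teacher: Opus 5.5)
Your proof is correct and matches the paper's argument essentially step for step: triangle inequality in the Schatten $2$-norm, $\|P\rho P\|_2^2=\Tr(\rho^2)\leq 1$, then squaring and averaging over $D$. Your sharper intermediate bound $R_D(\mathcal{E}_A,\mathcal{E}_B)\leq\frac{\bar{\gamma}}{2}\|a-b\|_1^2$ is also valid, because before averaging the pointwise bound is $\sqrt{\Tr(\rho^2)}\,\|a-b\|_1$.
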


\begin{proof}
    Using the triangle inequality: 
    \[\left\|\mathcal{E}_A\left(\rho\right) - \mathcal{E}_B\left(\rho\right)\right\|_2 = \left\|\sum_{P \in \mathbb{P}_n} \left(a_P - b_P\right) P \rho P\right\|_2 \leq \sum_{P} \left|a_P - b_P\right| \cdot \left\|P \rho P\right\|_2.\]

    Now, observe that for any state $\rho$: 
    \[\left\|P \rho P\right\|_2^2 = \Tr\left(\left(P \rho P\right)^2\right) = \Tr\left(P \rho^2 P\right) = \Tr\left(\rho^2\right) \leq 1,\]
    by properties of purities. We have therefore found that: 
    \[\left\|\mathcal{E}_A\left(\rho\right) - \mathcal{E}_B\left(\rho\right)\right\|_2 \leq \sum_{P} \left|a_P - b_P\right| = \left\|a - b\right\|_1.\]
    
    Therefore:
    \[R_D\left(\mathcal{E}_A, \mathcal{E}_B\right) = \frac{1}{2} \exval_{\rho \followsdistr D}\left[\left\|\mathcal{E}_A\left(\rho\right) - \mathcal{E}_B\left(\rho\right)\right\|_2^2\right] \leq \frac{1}{2} \exval_{\rho \followsdistr D} \left[\left\|a - b\right\|_1^2\right] = \frac{1}{2} \left\|a - b\right\|_1^2.\]
\end{proof}

\begin{remark*}
    If we are given some other distribution $D'$ and manage to bound $R_{D'}\left(\mathcal{E}_A, \mathcal{E}_B\right) \geq \frac{1}{\alpha} \left\|a - b\right\|_1^2$, then this directly gives a multiplicative bound $R_D\left(\mathcal{E}_A, \mathcal{E}_B\right) \leq \frac{1}{2} \alpha R_{D'}\left(\mathcal{E}_A, \mathcal{E}_B\right) $. This is precisely the strategy we will use to produce multiplicative bounds for $D' = \Haar_n$ and $D' = \Haar_1^{\otimes n}$ below.
\end{remark*}

\begin{lemma}\label{lma:risk-pauli-channel-mmd-like}
    Let $\mathcal{E}_A\left(\rho\right) = \sum_{P \in \mathbb{P}_n} a_P P \rho P$ and $\mathcal{E}_B\left(\rho\right) = \sum_{P \in \mathbb{P}_n} b_P P \rho P$ be Pauli channels. Then: 
    \[R_D\left(\mathcal{E}_A, \mathcal{E}_B\right) = \frac{1}{2} \sum_{P, Q} \left(a_P - b_P\right)\left(a_Q - b_Q\right) \bra{w\left(P, Q\right)} M_D \ket{w\left(P, Q\right)},\]
    where $\ket{w\left(P, Q\right)} = \frac{1}{\sqrt{2^n}} \kket{Q^{\dagger} P}$ as defined in definition~\ref{def:w-vector}.
\end{lemma}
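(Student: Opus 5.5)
Expand the squared Hilbert--Schmidt norm directly. Since $\mathcal{E}_A(\rho)-\mathcal{E}_B(\rho)=\sum_P c_P P\rho P$ with $c_P:=a_P-b_P$ real, and the difference is Hermitian, we get
\[
\left\|\mathcal{E}_A(\rho)-\mathcal{E}_B(\rho)\right\|_2^2=\sum_{P,Q} c_P c_Q \Tr\left(P\rho P\, Q\rho Q\right).
\]
The plan is to rewrite each trace as a quadratic form in the moment matrix, mirroring Lemma~\ref{lma:risk-rayleigh}, and then average over $\rho\sim D$ using linearity of the expectation.

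By cyclicity, $\Tr(P\rho P Q\rho Q)=\Tr\left((QP)\rho(PQ)\rho\right)$. Paulis are Hermitian and unitary, so $PQ=(QP)^\dagger$. Setting $W:=QP=Q^\dagger P$, the trace becomes $\Tr(W\rho W^\dagger\rho)$. Using property (2) of Definition~\ref{def:vectorised-notation} together with the ricochet identity (5), this equals $\bbrakket{W^\dagger}{\rho W^\dagger\rho}=\bbra{W^\dagger}(\rho\otimes\rho^T)\kket{W^\dagger}$.

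This produces $W^\dagger=P^\dagger Q$ rather than $Q^\dagger P$, so the cyclic rewriting should instead be chosen as $\Tr(W^\dagger\rho W\rho)$ with $W=Q^\dagger P$. This is legitimate because $\Tr(W\rho W^\dagger\rho)=\Tr(W^\dagger\rho W\rho)$ by cyclicity: move the leading $W$ to the end. The same computation as in Lemma~\ref{lma:risk-rayleigh} then gives
\[
\Tr(P\rho P Q\rho Q)=\bbra{Q^\dagger P}(\rho\otimes\rho^T)\kket{Q^\dagger P}.
\]

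Multiplying and dividing by $2^n$ gives $\bra{w(P,Q)}\,2^n(\rho\otimes\rho^T)\,\ket{w(P,Q)}$ with $\ket{w(P,Q)}=2^{-n/2}\kket{Q^\dagger P}$. Taking the expectation over $D$, exchanging it with the finite sum, and inserting the factor $\tfrac12$ from Definition~\ref{def:risk-for-channels} yields exactly the claimed identity with $M_D=2^n\exval[\rho\otimes\rho^T]$.

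There is no real obstacle here. The only points needing care are keeping the order $Q^\dagger P$ consistent with Definition~\ref{def:w-vector} in the cyclic and ricochet manipulations, and noting that the coefficients $c_P$ are real, so no complex conjugates appear in the double sum.
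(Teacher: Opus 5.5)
Your proposal is correct and matches the paper's proof essentially step for step. Both expand the squared norm into $\sum_{P,Q}c_Pc_Q\operatorname{Tr}(P\rho PQ\rho Q)$, use cyclicity to write each term as $\operatorname{Tr}\bigl((Q^\dagger P)^\dagger\rho\,(Q^\dagger P)\rho\bigr)$, vectorize via the ricochet identity, and average over $D$; the paper goes straight to $\operatorname{Tr}(Q\rho QP\rho P)$ and skips your $W^\dagger$ detour. One harmless nitpick: moving the leading $W$ to the end gives $\operatorname{Tr}(\rho W^\dagger\rho W)$, which needs one more cyclic shift to become $\operatorname{Tr}(W^\dagger\rho W\rho)$.
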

\begin{proof}
    We first directly observe that: 
    \[\mathcal{E}_A\left(\rho\right)-  \mathcal{E}_B\left(\rho\right) = \sum_{P} \left(a_P - b_P\right) P \rho P.\]
    
    Therefore: 
    \[R_D\left(\mathcal{E}_A, \mathcal{E}_B\right) = \frac{1}{2} \exval_{\rho \followsdistr D}\left[\Tr\left(\left(\mathcal{E}_A\left(\rho\right) - \mathcal{E}_B\left(\rho\right)\right)^2\right)\right] = \frac{1}{2} \sum_{P, Q} \left(a_P - b_P\right) \left(a_Q - b_Q\right) \exval_{\rho \followsdistr D}\left[\Tr\left(P \rho P Q \rho Q\right)\right].\]

    Note that we can simplify the expected value as follows: 
    \begin{equation}\begin{aligned}\exval_{\rho \followsdistr D}\left[\Tr\left(P \rho P Q \rho Q\right)\right] &= \exval_{\rho \followsdistr D}\left[\Tr\left(Q \rho Q P \rho P\right)\right] \\ &= \exval_{\rho \followsdistr D}\left[\bbrakket{QP}{\rho QP \rho}\right] \\ &= \bbra{QP} \exval_{\rho \followsdistr D} \left[\rho \otimes \rho^T\right] \kket{QP} \\ &= \frac{\bbra{QP}}{\sqrt{2^n}} 2^n \exval_{\rho \followsdistr D}\left[\rho \otimes \rho^T\right] \frac{\kket{QP}}{\sqrt{2^n}} \\ &= \bra{w\left(P, Q\right)} M_D \ket{w\left(P, Q\right)},\end{aligned}\end{equation}
    where $\ket{w\left(P, Q\right)} = \frac{1}{\sqrt{2^n}} \kket{Q^{\dagger} P}$ as defined in definition~\ref{def:w-vector}. The result immediately follows.
\end{proof}

\begin{lemma}[Risk for Pauli channels]\label{lma:risk-pauli-channel}
    Let $D$ be a distribution of moment matrix $M_D = \bar{\gamma} \oplus \mathcal{S} + M_{mean}$. Let $\mathcal{E}_A\left(\rho\right) = \sum_{P \in \mathbb{P}_n} a_P P \rho P$ and $\mathcal{E}_B\left(\rho\right) = \sum_{P \in \mathbb{P}_n} b_P P \rho P$ be Pauli channels. Then: 
    \[R_D\left(\mathcal{E}_A, \mathcal{E}_B\right) = \frac{\bar{\gamma}}{2} \left\|a - b\right\|_2^2 + \frac{1}{2} \sum_{P \neq Q} \left(a_P - b_P\right)\left(a_Q - b_Q\right) \bra{w\left(P, Q\right)} \mathcal{S} \ket{w\left(P, Q\right)},\]
    where $\ket{w\left(P, Q\right)} = \frac{1}{\sqrt{2^n}} \kket{Q^{\dagger} P}$ as defined in definition~\ref{def:w-vector}. In particular, this is independent of the bias $\kket{\mu}$ of $D$.
\end{lemma}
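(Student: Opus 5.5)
We start from Lemma~\ref{lma:risk-pauli-channel-mmd-like}, which already gives
\[R_D\left(\mathcal{E}_A, \mathcal{E}_B\right) = \frac{1}{2} \sum_{P, Q} \left(a_P - b_P\right)\left(a_Q - b_Q\right) \bra{w\left(P, Q\right)} M_D \ket{w\left(P, Q\right)}.\]
We then insert the decomposition $M_D = \bar{\gamma} \oplus \mathcal{S} + M_{mean}$ of Lemma~\ref{lma:other-formulation-decomposition}. The sum splits accordingly into three contributions, coming from the identity block, from $\mathcal{S}$, and from $M_{mean}$. Each contribution is analysed separately, using the fact that $\ket{w(P,Q)} = \kket{\widetilde{Q P}}$ up to a phase. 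Indeed, $Q^\dagger P = QP$ is, up to a phase in $\{\pm 1, \pm i\}$, a single Pauli string, and the sign/phase cancels between the bra and the ket.

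For the identity block, $\bra{w}(\bar{\gamma}\oplus 0)\ket{w} = \bar{\gamma}\,|\bbrakket{\widetilde{I}}{w(P,Q)}|^2$. This is nonzero only when $QP \propto I$, that is when $P = Q$, in which case it equals $\bar{\gamma}$. Summing over $P = Q$ gives $\frac{\bar{\gamma}}{2}\sum_P (a_P-b_P)^2 = \frac{\bar{\gamma}}{2}\|a-b\|_2^2$.

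For the $\mathcal{S}$ block, the diagonal terms $P = Q$ have $\ket{w} = \kket{\widetilde{I}}$, which is annihilated by $\Pi$ and hence by $\mathcal{S} = \Pi M \Pi$ (Lemma~\ref{lma:s-as-truncation}). Only the terms with $P \neq Q$ survive, which is exactly the second term in the statement.

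The step requiring care is showing that the mean contribution vanishes. We use Lemma~\ref{lma:m-mean-expansion}, which says that $\bra{x}M_{mean}\ket{x}$ is a sum of terms proportional to $\bbrakket{x}{\widetilde{I}}\bbraket{\widetilde{R}}{x}$ with $R \neq I$. For $\ket{x} = \ket{w(P,Q)}$, which is proportional to a single normalised Pauli vector, there are two cases. If $P = Q$, then $\ket{x} \propto \kket{\widetilde{I}}$, so every $\bbraket{\widetilde{R}}{x}$ with $R \neq I$ vanishes. If $P \neq Q$, then $\ket{x}$ is orthogonal to $\kket{\widetilde{I}}$, since $\Tr(QP) = 0$, so the factor $\bbrakket{x}{\widetilde{I}}$ vanishes. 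Either way each product is zero, hence $\bra{w(P,Q)}M_{mean}\ket{w(P,Q)} = 0$ for every pair. The risk is therefore independent of $\mu$. The only point to double-check is that the phase of $QP$ is harmless: every term involved is a modulus squared or a product of a vector component with the conjugate of another, so a global phase on $\ket{w}$ cancels throughout.
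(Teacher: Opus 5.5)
Your proof is correct and follows the paper's argument: both start from Lemma~\ref{lma:risk-pauli-channel-mmd-like}, use that $\ket{w(P,Q)}$ is a phase times a single normalised Pauli vector, so only diagonal Pauli-basis entries of $M_D$ contribute, and split into $P=Q$ (giving $\bar{\gamma}$) and $P\neq Q$ (giving the $\mathcal{S}$ terms). The only cosmetic difference is in handling $M_{mean}$: the paper simply notes that its Pauli-basis diagonal is zero, while you check this explicitly through the case analysis with Lemma~\ref{lma:m-mean-expansion}.
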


\begin{proof}
    Using lemma~\ref{lma:risk-pauli-channel-mmd-like}, we know $R_D\left(\mathcal{E}_A, \mathcal{E}_B\right) = \frac{1}{2} \sum_{P, Q} \left(a_P - b_P\right) \left(a_Q - b_Q\right) \bra{w\left(P, Q\right)} M_D \ket{w\left(P, Q\right)}$. Note that $\ket{w\left(P, Q\right)} = \frac{1}{\sqrt{2^n}} \kket{QP} \propto \frac{1}{\sqrt{2^n}}\kket{P'}$ for some Pauli $P' \in \mathbb{P}_n$. We are therefore only interested in the diagonal of $M_D$, which corresponds to the diagonal of $M_{core} = \bar{\gamma} \oplus \mathcal{S}$. Since finally $PQ \propto I \iff P = Q$ for Paulis $P, Q$, we can consider the case where $P = Q$ and the case $P \neq Q$ independently: 
    \begin{equation}\begin{aligned}\label{eq:risk-pauli-channels-two-terms} R_D\left(\mathcal{E}_A, \mathcal{E}_B\right) &= \frac{1}{2} \sum_{P} \left(a_P - b_P\right)\left(a_P - b_P\right) \bar{\gamma} + \frac{1}{2} \sum_{P \neq Q} \left(a_P - b_P\right)\left(a_Q - b_Q\right) \bra{w\left(P, Q\right)} \mathcal{S} \ket{w\left(P, Q\right)} \\ &= \frac{\bar{\gamma}}{2} \left\|a - b\right\|_2^2 + \frac{1}{2} \sum_{P \neq Q} \left(a_P - b_P\right)\left(a_Q - b_Q\right) \bra{w\left(P, Q\right)} \mathcal{S} \ket{w\left(P, Q\right)}.\end{aligned}\end{equation}
\end{proof}

\begin{lemma}[Risk of Pauli channels under global Haar]\label{lma:risk-pauli-channels-global-haar}
    For any Pauli channels $\mathcal{E}_A, \mathcal{E}_B$, their risk under the distribution $D = \Haar_n$ is given by: 
    \[R_{\Haar_n}\left(\mathcal{E}_A, \mathcal{E}_B\right) = \frac{2^n}{2^n + 1} \cdot \frac{1}{2}\left\|a - b\right\|_2^2.\]
\end{lemma}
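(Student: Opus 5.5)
The plan is to specialise Lemma~\ref{lma:risk-pauli-channel} to $D=\Haar_n$, using the moment matrix computed in Lemma~\ref{lma:m-haar-distribution}. That lemma says $\Haar_n$ is a pure-state distribution with $\bar{\gamma}=1$, zero mean, and $\mathcal{S}=\frac{1}{2^n+1}I$ on the traceless subspace. Substituting these into the formula of Lemma~\ref{lma:risk-pauli-channel} gives
\[
R_{\Haar_n}\left(\mathcal{E}_A,\mathcal{E}_B\right)=\frac{1}{2}\left\|a-b\right\|_2^2+\frac{1}{2(2^n+1)}\sum_{P\neq Q}(a_P-b_P)(a_Q-b_Q)\,\braket{w(P,Q)}{w(P,Q)} .
\]
In doing this I will check that for $P\neq Q$ the vector $\ket{w(P,Q)}=\frac{1}{\sqrt{2^n}}\kket{QP}$ is a unit vector proportional to some $\kket{\widetilde{P'}}$ with $P'\neq I$. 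Lemma~\ref{lma:w-normalised} gives the normalisation, and $QP\not\propto I$ gives $P'\neq I$. Hence $\bra{w(P,Q)}\mathcal{S}\ket{w(P,Q)}=\frac{1}{2^n+1}$ exactly, with the phase of $QP$ cancelling in the quadratic form.

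The off-diagonal sum is where the algebra happens. Writing $c=a-b$, I use $\sum_{P\neq Q}c_Pc_Q=\left(\sum_P c_P\right)^2-\sum_P c_P^2$. Because $a$ and $b$ are both probability vectors, $\sum_P c_P=0$, so the off-diagonal sum equals $-\left\|c\right\|_2^2$.

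Combining the two terms gives
\[
R_{\Haar_n}=\frac{1}{2}\left\|c\right\|_2^2\left(1-\frac{1}{2^n+1}\right)=\frac{2^n}{2^n+1}\cdot\frac{1}{2}\left\|a-b\right\|_2^2 ,
\]
which is the claim. The only genuinely load-bearing observation is the use of $\sum_P(a_P-b_P)=0$ to collapse the cross terms; everything else is direct substitution. As a sanity check, I can rederive the same result from $\mathbb{E}[\rho^{\otimes 2}]$ with the SWAP formula: it gives $\mathbb{E}\,\mathrm{Tr}(P\rho P Q\rho Q)=\frac{2^n\delta_{PQ}+1}{2^n+1}$, and summing against $c_Pc_Q$ yields the same expression.
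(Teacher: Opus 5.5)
Your proposal is correct and follows the paper's proof essentially verbatim: substitute $\bar{\gamma}=1$ and $\mathcal{S}=\frac{1}{2^n+1}I$ into Lemma~\ref{lma:risk-pauli-channel}, then use $\sum_P(a_P-b_P)=0$ to turn the off-diagonal sum into $-\left\|a-b\right\|_2^2$. Your extra checks, that $\ket{w(P,Q)}$ is a unit vector in the traceless subspace for $P\neq Q$ and the SWAP-formula sanity check, are both accurate.
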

\begin{proof}
    This is direct from lemma~\ref{lma:risk-pauli-channel} and lemma~\ref{lma:s-haar-distribution}:
    \begin{equation}\begin{aligned}R_D\left(\mathcal{E}_A, \mathcal{E}_B\right) &= \frac{1}{2} \left\|a - b\right\|_2^2 + \frac{1}{2} \sum_{P \neq Q} \left(a_P - b_P\right)\left(a_Q - b_Q\right) \frac{1}{2^n + 1} \\ &= \frac{1}{2} \left\|a - b\right\|_2^2 + \frac{1}{2} \cdot  \frac{1}{2^n + 1} \left(\sum_{P} \left(a_P - b_P\right) \sum_{Q} \left(a_Q - b_Q\right) - \sum_{P} \left(a_P - b_P\right)^2\right).\end{aligned}\end{equation}

    Observe that $\sum_{P} \left(a_P - b_P\right) = \sum_{P} a_P - \sum_{P} b_P = 1 -1  = 0$ since they are both probability distributions. Therefore, this simplifies to: 
    \[R_D\left(\mathcal{E}_A, \mathcal{E}_B\right) = \frac{1}{2} \left\|a - b\right\|_2^2 - \frac{1}{2} \cdot  \frac{1}{2^n + 1} \left\|a - b\right\|_2^2 = \frac{1}{2} \left\|a - b\right\|_2^2 \left(1 - \frac{1}{2^n + 1}\right) = \frac{1}{2} \left\|a-b\right\|_2^2 \cdot \frac{2^n}{2^n + 1}.\]
\end{proof}

For Pauli channels, sparsity controls the cost of generalizing from global Haar inputs. The general risk bound depends on the squared $1$-norm of the coefficient difference, whereas the Haar risk measures its squared $2$-norm. Bounding the first by the second gives a comparison factor determined by the number of Paulis in the channels.

\begin{theorem}[Generalisation from global Haar for sparse Pauli channels]\label{thm:sparse-pauli-global-haar}
    Let $D$ be an arbitrary distribution, let $\mathcal{E}_A, \mathcal{E}_B$ be Pauli channels with coefficients $a, b$ and assume that at most $s \geq 1$ coefficients of $a - b$ are non-zero. Then:
    \[R_D\left(\mathcal{E}_A, \mathcal{E}_B\right) \leq s \left(1 + 2^{-n}\right) R_{\Haar_n}\left(\mathcal{E}_A, \mathcal{E}_B\right).\]
\end{theorem}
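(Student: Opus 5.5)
The plan is to chain the two ingredients already established, exactly as foreshadowed in the remark after Lemma~\ref{lma:general-upper-bound-for-pauli-channels}. That lemma gives, for an arbitrary distribution $D$ (possibly over mixed states),
\[R_D\left(\mathcal{E}_A, \mathcal{E}_B\right) \leq \frac{1}{2}\left\|a - b\right\|_1^2 .\]
Lemma~\ref{lma:risk-pauli-channels-global-haar} computes the Haar risk exactly as
\[R_{\Haar_n}\left(\mathcal{E}_A, \mathcal{E}_B\right) = \frac{2^n}{2^n+1}\cdot\frac{1}{2}\left\|a-b\right\|_2^2 .\]
It therefore suffices to bound the $1$-norm of $a-b$ by its $2$-norm using sparsity.

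For the sparsity step, let $T \subseteq \mathbb{P}_n$ be the support of $a-b$, so that $|T| \leq s$. Cauchy--Schwarz applied to the vectors $(|a_P-b_P|)_{P\in T}$ and $(1)_{P\in T}$ gives
\[\left\|a-b\right\|_1^2 = \Big(\sum_{P\in T}|a_P-b_P|\Big)^2 \leq |T|\sum_{P\in T}|a_P-b_P|^2 \leq s\left\|a-b\right\|_2^2 .\]
When $a=b$ both sides vanish and the bound holds trivially, which is consistent with the assumption $s\geq 1$.

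Combining the three displays, we obtain
\[R_D \leq \frac{s}{2}\left\|a-b\right\|_2^2 = s\cdot\frac{2^n+1}{2^n}\,R_{\Haar_n} = s\left(1+2^{-n}\right)R_{\Haar_n},\]
which is the claim.

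No step is a real obstacle, since both lemmas are already proven. The only point to check is that Lemma~\ref{lma:risk-pauli-channels-global-haar} uses the zero-sum property $\sum_P (a_P-b_P)=0$, which holds because $a$ and $b$ are probability vectors. For the worst-case remark, note that Lemma~\ref{lma:general-upper-bound-for-pauli-channels} actually bounds $\frac{1}{2}\|\mathcal{E}_A(\rho)-\mathcal{E}_B(\rho)\|_2^2$ pointwise for every state $\rho$. Hence the same chain applies with $R_D$ replaced by the maximum over $\rho$.
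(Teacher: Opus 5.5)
Your proposal is correct and matches the paper's proof: both chain the pointwise bound $R_D \leq \frac{1}{2}\|a-b\|_1^2$ of Lemma~\ref{lma:general-upper-bound-for-pauli-channels}, the Cauchy--Schwarz sparsity estimate $\|a-b\|_1^2 \leq s\|a-b\|_2^2$, and the exact Haar risk from Lemma~\ref{lma:risk-pauli-channels-global-haar}. Your remark that the same chain gives the worst-case version, because the first lemma holds for every individual state $\rho$, is also right.
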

\begin{remark*}
If each channel is supported on at most $s/2$ Paulis,
then the hypothesis of the theorem holds:
\[
\left|\supp(a-b)\right|
\leq \left|\supp a\right|+\left|\supp b\right|
\leq \frac{s}{2}+\frac{s}{2}
=s.
\]
    In other words, we can just as well take as hypothesis that both channels are each supported on $s'$ many Paulis, turning the bound to:
    \[R_D\left(\mathcal{E}_A, \mathcal{E}_B\right) \leq 2s' \left(1 + 2^{-n}\right) R_{\Haar_n}\left(\mathcal{E}_A, \mathcal{E}_B\right).\]
\end{remark*}
\begin{remark*}
    This Theorem does not require $D$ to be unbiased and can be formulated as follows. For any distribution $D$, and Pauli channels $\mathcal{E}_A, \mathcal{E}_B$ supported on polynomially many Paulis, then there is some $\alpha = \poly\left(n\right)$ such that: 
    \[R_D\left(\mathcal{E}_A, \mathcal{E}_B\right) \leq \alpha R_{\Haar_n}\left(\mathcal{E}_A, \mathcal{E}_B\right).\]
\end{remark*}
\begin{remark*}
    This is the first part of proposition~\ref{thm-main-text:sparse-pauli-channel-generalization} in the main text.
\end{remark*}
\begin{proof}
    By a usual Cauchy-Schwarz argument, it is well-known that a vector of support $\supp x$ is such that $\left\|x\right\|_1^2 \leq \left|\supp x\right| \cdot \left\|x\right\|_2^2$. Therefore, using lemma~\ref{lma:general-upper-bound-for-pauli-channels} and recognising $\frac{2^n+1}{2^n} R_{\Haar_n}\left(\mathcal{E}_A, \mathcal{E}_B\right) = \frac{1}{2}\left\|a - b\right\|_2^2$ by lemma~\ref{lma:risk-pauli-channels-global-haar}, we find:
    \[R_D\left(\mathcal{E}_A, \mathcal{E}_B\right) \leq \frac{1}{2}\left\|a-b\right\|_1^2 \leq \left|\supp\left(a-b\right)\right| \cdot \frac{1}{2} \left\|a-b\right\|_2^2 = \left|\supp \left(a-b\right)\right| \frac{2^n + 1}{2^n} R_{\Haar_n}\left(\mathcal{E}_A, \mathcal{E}_B\right).\]

    Since $\left|\supp(a-b)\right|\leq s$ by assumption, this proves the claimed bound.
\end{proof}

For sparse Pauli channels, the risk on product Haar inputs
controls the risk on any target ensemble. The comparison factor
is quadratic in the number of nonzero coefficients of $a-b$.
To prove this, we bound the target risk from above and the product
Haar risk from below in terms of the same coefficient difference.

\begin{theorem}[Generalisation from product Haar for sparse Pauli channels]
\label{thm:pauli-channel-bound-any-distrib-from-product-haar}
Let $D$ be an arbitrary distribution over $n$-qubit states,
possibly mixed. Let $\mathcal E_A,\mathcal E_B$ be Pauli
channels with coefficients $a,b$, and suppose that at most
$s\ge1$ coefficients of $a-b$ are nonzero. Then
\[
    R_D(\mathcal E_A,\mathcal E_B)
    \le
    s^2R_{\mathrm{Haar}_1^{\otimes n}}
    (\mathcal E_A,\mathcal E_B).
\]
\end{theorem}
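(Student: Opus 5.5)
The plan is to squeeze both risks between multiples of the same quantity $\|a-b\|_2^2$, as in the proof of Theorem~\ref{thm:sparse-pauli-global-haar}. Write $c=a-b$ and $S=\supp c$, with $|S|\le s$.

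\textbf{Step 1 (upper bound).} By Lemma~\ref{lma:general-upper-bound-for-pauli-channels} and Cauchy--Schwarz on the support,
\[
R_D(\mathcal E_A,\mathcal E_B)\le\tfrac12\|c\|_1^2\le \tfrac{s}{2}\|c\|_2^2 .
\]
It therefore suffices to prove
\[
R_{\Haar_1^{\otimes n}}(\mathcal E_A,\mathcal E_B)\ge \frac{1}{2s}\|c\|_2^2 .
\]

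\textbf{Step 2 (exact form of the product-Haar risk).} Apply Lemma~\ref{lma:risk-pauli-channel-mmd-like} with the moment matrix of $\Haar_1^{\otimes n}$. By Lemma~\ref{lma:s-haar-distribution} it is diagonal in the Pauli basis, with eigenvalue $3^{-w_1(P)}$ on $\kket{\widetilde P}$. Since $\ket{w(P,Q)}$ is a phase times $\kket{\widetilde{QP}}$, this gives
\[
R_{\Haar_1^{\otimes n}}=\tfrac12\, c^{\mathsf T}Kc,\qquad K_{PQ}=3^{-w_1(PQ)} .
\]
The kernel $K$ factorises as $K=K_1^{\otimes n}$ with $K_1=\tfrac23 I_4+\tfrac13 J_4$, where $J_4$ is the all-ones matrix. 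Equivalently, $c^{\mathsf T}Kc=\mathbb E_R\big[(\sum_P c_P\chi(P,R))^2\big]$, where $\chi(P,R)=\pm1$ records whether $P$ and $R$ commute. Here $R$ is a random product Pauli with, on each qubit independently, $\Pr[I]=\tfrac12$ and $\Pr[X]=\Pr[Y]=\Pr[Z]=\tfrac16$. The goal is now purely combinatorial: the restriction $K_S$ should satisfy $\lambda_{\min}(K_S)\ge 1/|S|$. The global bound $\lambda_{\min}(K)=(2/3)^n$ is useless, so sparsity must be used.

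\textbf{Step 3 (induction on $n$ via the tensor structure).} Split $c$ according to the first-qubit letter $\alpha\in\{I,X,Y,Z\}$ into blocks $c_\alpha$ on $n-1$ qubits, with supports $s_\alpha$ satisfying $\sum_\alpha s_\alpha=|S|$. Then
\[
c^{\mathsf T}Kc=\tfrac23\sum_\alpha c_\alpha^{\mathsf T}K'c_\alpha+\tfrac13\Big(\sum_\alpha c_\alpha\Big)^{\mathsf T}K'\Big(\sum_\alpha c_\alpha\Big),
\]
with $K'=K_1^{\otimes(n-1)}$. If only one block is nonzero, the two terms merge with coefficient $1$ and the induction passes with no loss. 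If several blocks are nonzero, drop the PSD second term and apply the hypothesis to each block, giving $\ge\tfrac23\sum_\alpha\|c_\alpha\|_2^2/s_\alpha$. This suffices whenever every $s_\alpha\le \tfrac23|S|$.

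\textbf{Main obstacle.} The hard case is an unbalanced split, for instance $s_\alpha=|S|-1$ for one $\alpha$ and $1$ for another. There the factor $\tfrac23$ is lost at every level and the naive induction only yields $(2/3)^{\text{depth}}$. I would first try to strengthen the induction hypothesis, for example to $\lambda_{\min}(K_S)\ge 1/|S|$ proved together with a bound that retains part of the $\tfrac13 J$ term. Concretely, for a dominant block $c_{\alpha_0}$ I would keep the cross term $\tfrac23 c_{\alpha_0}^{\mathsf T}K'c_{\alpha_0}+\tfrac13(c_{\alpha_0}+r)^{\mathsf T}K'(c_{\alpha_0}+r)$ and lower-bound it by completing the square in $r$. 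If this fails, the fallback is a Cauchy--Schwarz test-function argument in the random-$R$ picture: choose $P_0$ with $|c_{P_0}|\ge\|c\|_1/s$ and bound $\mathbb E_R[\lambda_R^2]\ge c_{P_0}^2$ directly, using a random restriction that isolates $P_0$ from the other $s-1$ support elements. Since $\|c\|_1^2\ge\|c\|_2^2$, this gives $\mathbb E_R[\lambda_R^2]\ge\|c\|_2^2/s^2$. Combined with Step 1, which costs another factor of $s$, this would only give an $s^3$ total, so the fallback alone is insufficient for the stated $s^2$.
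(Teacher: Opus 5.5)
\paragraph{Gap in the inductive step.} Your Steps 1 and 2 and the block recursion in Step 3 match the paper exactly: the paper's $F_n(c)$ is your $c^{\mathsf T}Kc$, and its recursion is your decomposition. The proof is incomplete, however. The sparse lower bound $\|x\|_2^2\le s\,x^{\mathsf T}Kx$ is the whole content of the theorem, and you establish its inductive step only for balanced splits. For the unbalanced case your remedies are either unspecified or, by your own count, lose a factor of $s$. The loss comes from two choices:
\begin{itemize}
\item you discard the summed term $\tfrac13(\sum_\alpha c_\alpha)^{\mathsf T}K'(\sum_\alpha c_\alpha)$;
\item you then control $\sum_\alpha\|c_\alpha\|_2^2/s_\alpha$ through $\max_\alpha s_\alpha$.
\end{itemize}

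\paragraph{Missing idea.} Keep both terms and apply the induction hypothesis to each; this is how the paper closes the induction. Write $r_\alpha=\|c_\alpha\|_2$.
\begin{itemize}
\item For the separate blocks, weighted Cauchy--Schwarz gives $\sum_\alpha r_\alpha^2/s_\alpha\ge(\sum_\alpha r_\alpha)^2/\sum_\alpha s_\alpha\ge(\sum_\alpha r_\alpha)^2/s$.
\item The summed family $\sum_\alpha c_\alpha$ is supported in the union of the block supports, so it has at most $s$ nonzero entries. The hypothesis then gives $(\sum_\alpha c_\alpha)^{\mathsf T}K'(\sum_\alpha c_\alpha)\ge\|\sum_\alpha c_\alpha\|_2^2/s$. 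This requires the hypothesis to hold for all sparsity levels and arbitrary real vectors, which your $\lambda_{\min}(K_S)\ge 1/|S|$ formulation already does.
\end{itemize}
Combining,
\[
s\,c^{\mathsf T}Kc\;\ge\;\tfrac23\Big(\sum_\alpha r_\alpha\Big)^2+\tfrac13\Big\|\sum_\alpha c_\alpha\Big\|_2^2\;=\;\sum_\alpha r_\alpha^2+\tfrac23\sum_{\alpha<\beta}\big(2r_\alpha r_\beta+\langle c_\alpha,c_\beta\rangle\big)\;\ge\;\|c\|_2^2,
\]
because $\langle c_\alpha,c_\beta\rangle\ge -r_\alpha r_\beta$.

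\paragraph{Why both ingredients are needed.} Neither suffices alone. Without the summed term you get only $\tfrac23(\sum_\alpha r_\alpha)^2/s$, which falls short of $\|c\|_2^2/s$ precisely when one block dominates, as in your $s_{\alpha_0}=|S|-1$, $s_\beta=1$ example. The summed term restores the missing $\tfrac13\sum_\alpha r_\alpha^2$. The cross terms $\tfrac43 r_\alpha r_\beta$ from the Cauchy--Schwarz form then absorb the possibly negative overlaps $\tfrac23\langle c_\alpha,c_\beta\rangle$. So no strengthened hypothesis or random-restriction argument is needed, and the $s^2$ bound follows from your Step 1.
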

\begin{remark*}
    This is the second part of proposition~\ref{thm-main-text:sparse-pauli-channel-generalization} in the main text.
\end{remark*}
\begin{proof}
Set $c:=a-b$, so that $|\operatorname{supp}c|\le s$.
For any real coefficient family
$x=(x_P)_{P\in\mathbb P_n}$, define
\[
    F_n(x)
    :=
    \sum_{P,Q\in\mathbb P_n}
    x_Px_Q\,3^{-|\{i:P_i\ne Q_i\}|}.
\]
For $n=0$, we interpret $\mathbb P_0$ as containing only
the empty Pauli string, so $F_0(x)=x^2$.

We first state three properties of $F_n$, use them to
deduce the theorem, and then prove them. The first
identifies the product-Haar risk, the second gives a
recursive expression, and the third uses that recursion
to bound $F_n$ from below on sparse coefficient families.

\medskip
\noindent
\emph{Claim 1 (Product-Haar risk).}
\begin{equation}
\label{eq:sparse-pauli-product-risk}
    2R_{\mathrm{Haar}_1^{\otimes n}}
    (\mathcal E_A,\mathcal E_B)
    =
    F_n(c).
\end{equation}

\medskip
\noindent
\emph{Claim 2 (Recursive expression).}
For $n\ge1$ and any real coefficient family $x$, partition
its entries according to their first Pauli factor:
\[
    x^{(A)}
    :=
    (x_{A\otimes Q})_{Q\in\mathbb P_{n-1}},
    \qquad A\in\mathbb P_1.
\]
Then
\begin{equation}
\label{eq:sparse-pauli-norm-decomposition}
    F_n(x)
    =
    \frac23\sum_{A\in\mathbb P_1}F_{n-1}(x^{(A)})
    +
    \frac13F_{n-1}\left(
        \sum_{A\in\mathbb P_1}x^{(A)}
    \right).
\end{equation}

\medskip
\noindent
\emph{Claim 3 (Sparse lower bound).}
For every $n\ge0$, every $s\ge1$, and every real coefficient
family $x$ with at most $s$ nonzero entries,
\begin{equation}
\label{eq:sparse-pauli-induction-claim}
    \|x\|_2^2\le sF_n(x).
\end{equation}

\medskip
Assuming these claims, the theorem follows immediately.
Indeed, Lemma~\ref{lma:general-upper-bound-for-pauli-channels},
Cauchy--Schwarz, Claim~3 applied to $x=c$, and Claim~1
give, respectively,
\begin{equation}
\label{eq:sparse-pauli-target-bound}
\begin{aligned}
    R_D(\mathcal E_A,\mathcal E_B)
    &\le \frac12\|c\|_1^2\\
    &\le \frac{s}{2}\|c\|_2^2\\
    &\le \frac{s^2}{2}F_n(c)\\
    &=s^2R_{\mathrm{Haar}_1^{\otimes n}}
      (\mathcal E_A,\mathcal E_B).
\end{aligned}
\end{equation}
It remains to prove the three claims.

\medskip
\noindent
\emph{Proof of Claim 1.}
By Lemma~\ref{lma:risk-pauli-channel-mmd-like},
\[
    2R_{\mathrm{Haar}_1^{\otimes n}}
    (\mathcal E_A,\mathcal E_B)
    =
    \sum_{P,Q\in\mathbb P_n}
    c_Pc_Q\,
    \bra{w(P,Q)}
    M_{\mathrm{Haar}_1^{\otimes n}}
    \ket{w(P,Q)},
\]
where
\[
    \ket{w(P,Q)}=2^{-n/2}\kket{QP}.
\]
We therefore need to evaluate the matrix element associated
with each pair $P,Q$.

Lemma~\ref{lma:s-haar-distribution}, applied to $n$
single-qubit blocks, gives eigenvalue $3^{-w_1(T)}$
in each normalized nonidentity Pauli direction
\[
    \kket{\widetilde T},
    \qquad
    \widetilde T=\frac{T}{\sqrt{2^n}}.
\]
Here $w_1(T)$ counts the nonidentity factors of $T$.
Since the product-Haar ensemble is pure and unbiased,
its moment matrix has the form $1\oplus\mathcal S$.
Thus the identity direction also satisfies this formula,
with eigenvalue $1=3^0$.

For each pair $P,Q$, write $QP=\omega T$, where
$T\in\mathbb P_n$ and $|\omega|=1$.
At each position, $T_i=I$ exactly when $P_i=Q_i$, so
\[
    w_1(T)=|\{i:P_i\ne Q_i\}|,
    \qquad
    \ket{w(P,Q)}=\omega\kket{\widetilde T}.
\]
The phase cancels in the quadratic form, and
$\kket{\widetilde T}$ has unit norm. Hence
\[
\begin{aligned}
    \bra{w(P,Q)}
    M_{\mathrm{Haar}_1^{\otimes n}}
    \ket{w(P,Q)}
    &=
    \bbra{\widetilde T}
    M_{\mathrm{Haar}_1^{\otimes n}}
    \kket{\widetilde T}\\
    &=3^{-w_1(T)}\\
    &=3^{-|\{i:P_i\ne Q_i\}|}.
\end{aligned}
\]
Substituting this expression into the risk formula gives
\[
    2R_{\mathrm{Haar}_1^{\otimes n}}
    (\mathcal E_A,\mathcal E_B)=F_n(c),
\]
proving Claim~1.

\medskip
\noindent
\emph{Proof of Claim 2.}
Write the two Pauli strings as $A\otimes P$ and $B\otimes Q$,
where $A,B\in\mathbb P_1$ and $P,Q\in\mathbb P_{n-1}$.
Their number of differing positions is
\[
    |\{i:(A\otimes P)_i\ne(B\otimes Q)_i\}|
    =
    \mathbf{1}_{A\ne B}+|\{j:P_j\ne Q_j\}|.
\]
Moreover,
\[
    3^{-\mathbf{1}_{A\ne B}}
    =
    \frac23\delta_{AB}+\frac13.
\]

In the following calculation, sums over $A,B$ range over
$\mathbb P_1$, and sums over $P,Q$ range over
$\mathbb P_{n-1}$. Reindexing the defining sum for $F_n$,
we obtain
\begin{align*}
    F_n(x)
    &=
    \sum_{A,B}\sum_{P,Q}
    x_{A\otimes P}x_{B\otimes Q}\,
    3^{-\mathbf{1}_{A\ne B}}\,
    3^{-|\{j:P_j\ne Q_j\}|}\\
    &=
    \sum_{A,B}\sum_{P,Q}
    x_{A\otimes P}x_{B\otimes Q}
    \left(\frac23\delta_{AB}+\frac13\right)
    3^{-|\{j:P_j\ne Q_j\}|}\\
    &=
    \frac23\sum_A\sum_{P,Q}
    x_{A\otimes P}x_{A\otimes Q}\,
    3^{-|\{j:P_j\ne Q_j\}|}\\
    &\quad+
    \frac13\sum_{P,Q}
    \left(\sum_Ax_{A\otimes P}\right)
    \left(\sum_Bx_{B\otimes Q}\right)
    3^{-|\{j:P_j\ne Q_j\}|}\\
    &=
    \frac23\sum_A F_{n-1}(x^{(A)})
    +
    \frac13F_{n-1}\left(\sum_Ax^{(A)}\right).
\end{align*}
The last equality uses $x^{(A)}_P=x_{A\otimes P}$.
The first term contains the quadratic form of each
individual family, while the second contains the
quadratic form of their sum. This proves Claim~2.

\medskip
\noindent
\emph{Proof of Claim 3.}
We proceed by induction on $n$, proving the statement for every sparsity
bound $s\ge1$ at each value of $n$. We allow arbitrary real
coefficients because the subfamilies in Claim~2 need not
be differences of probability distributions.

The base case is $n=0$. Since $F_0(x)=x^2$, the claim
reduces to
\[
    x^2\le sx^2,
\]
which holds for $s\ge1$.

Now assume the claim holds for $n-1$ qubits and every
sparsity bound. Use the families $x^{(A)}$ defined in
Claim~2, and write
\[
    s_A:=|\operatorname{supp}x^{(A)}|,
    \qquad
    r_A:=\|x^{(A)}\|_2.
\]
Because these families partition the original coefficients,
\[
    \sum_A s_A=|\operatorname{supp}x|\le s,
    \qquad
    \sum_A r_A^2=\|x\|_2^2.
\]

We apply the induction hypothesis to the two terms in
the recursive expression from Claim~2.

For each nonzero family $x^{(A)}$, the induction hypothesis
with sparsity bound $s_A$ gives
\[
    F_{n-1}(x^{(A)})\ge\frac{r_A^2}{s_A}.
\]
Weighted Cauchy--Schwarz therefore yields
\begin{equation}
\label{eq:sparse-pauli-separate-families}
\begin{aligned}
    \left(\sum_A r_A\right)^2
    &=
    \left(
        \sum_{A:s_A>0}
        \sqrt{s_A}\,\frac{r_A}{\sqrt{s_A}}
    \right)^2\\
    &\le
    \left(\sum_A s_A\right)
    \left(\sum_{A:s_A>0}\frac{r_A^2}{s_A}\right)\\
    &\le
    s\sum_A F_{n-1}(x^{(A)}).
\end{aligned}
\end{equation}
Zero families contribute nothing and are omitted only
from expressions containing a denominator.

For the summed family, we have
\[
    \operatorname{supp}\left(\sum_Ax^{(A)}\right)
    \subseteq
    \bigcup_A\operatorname{supp}x^{(A)}.
\]
Its support therefore has size at most
$\sum_A s_A\le s$. Applying the induction hypothesis
with sparsity bound $s$ gives
\begin{equation}
\label{eq:sparse-pauli-summed-family}
    \left\|\sum_Ax^{(A)}\right\|_2^2
    \le
    sF_{n-1}\left(\sum_Ax^{(A)}\right).
\end{equation}

Combining \eqref{eq:sparse-pauli-separate-families} and
\eqref{eq:sparse-pauli-summed-family} with the recursion
in Claim~2, we obtain
\begin{equation}
\label{eq:sparse-pauli-combined-bound}
\begin{aligned}
    sF_n(x)
    &=
    \frac{2s}{3}\sum_A F_{n-1}(x^{(A)})
    +
    \frac{s}{3}F_{n-1}\left(\sum_Ax^{(A)}\right)\\
    &\ge
    \frac23\left(\sum_A r_A\right)^2
    +
    \frac13\left\|\sum_Ax^{(A)}\right\|_2^2.
\end{aligned}
\end{equation}

It remains to check that the last expression is at least
$\sum_A r_A^2=\|x\|_2^2$. Expanding both squares gives
\[
\begin{aligned}
    &\frac23\left(\sum_A r_A\right)^2
    +\frac13\left\|\sum_Ax^{(A)}\right\|_2^2
    -\sum_A r_A^2\\
    &\qquad=
    \frac23\sum_{A<B}
    \left(
        2r_A r_B
        +\left\langle x^{(A)},x^{(B)}\right\rangle
    \right).
\end{aligned}
\]
Here $A<B$ refers to any fixed ordering of $\mathbb P_1$.
By Cauchy--Schwarz,
\[
    \left\langle x^{(A)},x^{(B)}\right\rangle
    \ge
    -\|x^{(A)}\|_2\|x^{(B)}\|_2
    =
    -r_A r_B.
\]
Thus every summand is nonnegative:
\[
    2r_A r_B
    +\left\langle x^{(A)},x^{(B)}\right\rangle
    \ge r_A r_B\ge0.
\]
Together with \eqref{eq:sparse-pauli-combined-bound},
this proves
\[
    sF_n(x)\ge\sum_A r_A^2=\|x\|_2^2.
\]
The induction establishes Claim~3, completing the proof
of the theorem.
\end{proof}

\suppsubsection{Negative result for Pauli channels}

Real product inputs can fail to distinguish two Pauli channels even when each has constant support. Our construction uses complete dephasing in the $Y$ basis, which preserves the $Y$ component while erasing the $X$ and $Z$ components. The surviving component vanishes for real inputs but contributes to the risk on complex Haar-random inputs. Applying the construction to one qubit and leaving the others unchanged gives the same separation for product inputs for any number of qubits.

\begin{theorem}[Impossibility of real to complex Haar generalization for Pauli channels]
\label{thm:pauli-channel-real-complex-no-go}
    Let $\alpha, \beta > 0$ be such that, for any Pauli channels $\mathcal{E}_A, \mathcal{E}_B$ of support at most $s$: 
    \[R_{\Haar_1^{\otimes n}}\left(\mathcal{E}_A, \mathcal{E}_B\right) \leq \alpha R_{\Haar_1^{\otimes n}\left(O\right)}\left(\mathcal{E}_A, \mathcal{E}_B\right) + \beta.\]

    If $s \geq 2$, then there exist Pauli channels of constant support that can be efficiently implemented using quantum hardware that require: 
    \[\beta \geq \frac{1}{3}.\]

    In other words, it is impossible to do multiplicative out-of-distribution generalisation from the product Haar distribution on real states to the product Haar distribution on complex states.
\end{theorem}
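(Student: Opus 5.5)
We exhibit an explicit pair of Pauli channels that agree on every real input but differ by a constant on complex product-Haar inputs. Plugging this pair into the assumed inequality forces $\beta \geq 1/3$, whatever the value of $\alpha$.

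The construction follows the description before the statement. On the first qubit we take complete dephasing in the $Y$ basis,
\[
\mathcal{E}_A(\rho)=\tfrac12\left(\rho + (Y\otimes I^{\otimes n-1})\rho(Y\otimes I^{\otimes n-1})\right).
\]
For the second channel we need one that produces the same outputs on real inputs but a different $Y$ component. A natural choice is
\[
\mathcal{E}_B(\rho)=\tfrac12\left((X\otimes I)\rho(X\otimes I)+(Z\otimes I)\rho(Z\otimes I)\right).
\]
On a single qubit, conjugation by $X$ or $Z$ keeps the component along that Pauli and flips the other two. Averaging the two therefore kills the $X$ and $Z$ components and maps the $Y$ component to $-r_Y$. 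By contrast, $\mathcal{E}_A$ keeps $r_Y$ and kills $r_X$ and $r_Z$.

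Both channels have support of size $2 \leq s$, and both act trivially on the remaining qubits. They are efficiently implementable: apply a random Pauli from a two-element set to the first qubit.

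The difference of the two channels acts only on the first qubit. For a product input $\rho=\rho_1\otimes\rho'$ it equals
\[
\mathcal{E}_A(\rho)-\mathcal{E}_B(\rho)=\left(r_Y Y\right)\otimes\rho',
\]
where $r_Y$ is the $Y$ Bloch component of $\rho_1$. Here we used that the single-qubit difference is $\tfrac12\left(r_Y Y-(-r_Y)Y\right)=r_Y Y$.

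Since the inputs are pure, $\|\rho'\|_2=1$, and $\|Y\|_2^2=2$. The loss is therefore
\[
\tfrac12\cdot 2\, r_Y^2 = r_Y^2 .
\]
For real product states $r_Y=0$ identically, so $R_{\Haar_1^{\otimes n}(O)}(\mathcal{E}_A,\mathcal{E}_B)=0$. For complex product-Haar states, $\mathbb{E}[r_Y^2]=1/3$ by the isotropy $\mathcal{S}=\tfrac13 I_3$ noted in Lemma~\ref{lma:m-haar-distribution}, so $R_{\Haar_1^{\otimes n}}=1/3$. Substituting into the assumed inequality gives $1/3\le \alpha\cdot 0+\beta$, which is the claim for every $\alpha$.

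The step that needs care is choosing $\mathcal{E}_B$ so that the two channels genuinely agree on real inputs and yet stay separated by a constant on complex ones. A cross-check is available through Lemma~\ref{lma:risk-pauli-channel-mmd-like}, using $c=a-b$ supported on $\{I,Y,X,Z\}$ on the first qubit with entries $(\tfrac12,\tfrac12,-\tfrac12,-\tfrac12)$. Evaluating the quadratic form with the diagonal moment matrices $\mathrm{diag}(1,\tfrac12,0,\tfrac12)$ (real case, Lemma~\ref{lma:m-orthogonal-haar-distribution}) and $\mathrm{diag}(1,\tfrac13,\tfrac13,\tfrac13)$ (complex case) should give $0$ and $1/3$ respectively. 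The remaining qubits contribute eigenvalue $1$ because $P$ and $Q$ agree there.
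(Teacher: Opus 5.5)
Your proposal is correct and follows essentially the same argument as the paper. You use the same pair $\mathcal{E}_A(\rho)=\tfrac12(\rho+Y_1\rho Y_1)$ and $\mathcal{E}_B(\rho)=\tfrac12(X_1\rho X_1+Z_1\rho Z_1)$, reduce the loss to $r_Y^2$ in the same way, and obtain the same risks of $0$ on real product states and $1/3$ on complex product-Haar states; the only cosmetic difference is that the paper reads off $\mathbb{E}[r_Y^2]$ as the eigenvalue of $\kket{Y_1}/\sqrt{2^n}$ under the two moment matrices, whereas you invoke single-qubit isotropy directly, which is the same fact.
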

\begin{remark*}
    This is precisely Proposition~\ref{thm-main-text:pauli-channel-no-go} in the main text.
\end{remark*}
\begin{proof}
    Consider the following two Pauli channels, each supported on $2$ Paulis: 
    \[\mathcal{E}_A\left(\rho\right) = \frac{1}{2} \rho + \frac{1}{2} Y_1 \rho Y_1, \mathspace \mathcal{E}_B\left(\rho\right) = \frac{1}{2} X_1 \rho X_1 + \frac{1}{2} Z_1 \rho Z_1.\]
    
    Note that both distributions are product, and hence we can write $\rho = \rho_1 \otimes \ldots \otimes \rho_n$. We moreover write $\rho_1 = \frac{1}{2}\left(I + x X + y Y + z Z\right)$ in its Bloch form. We observe that: 
    \begin{equation}\begin{aligned}\mathcal{E}_A\left(\rho\right) &= \mathcal{E}_A\left(\rho_1 \otimes \rho_{2, \ldots, n}\right) \\ &= \left[\frac{1}{4}\left(I + x X + yY + zZ\right) + \frac{1}{4} Y \left(I + x X + y Y + z Z\right) Y\right] \otimes \rho_{2, \ldots, n} \\ &= \left[\frac{1}{4}\left(I + x X + yY + zZ\right) + \frac{1}{4} \left(I - x X + y Y - z Z\right)\right] \otimes \rho_{2, \ldots, n} \\ &= \frac{1}{2} \left(I + y Y\right) \otimes \rho_{2, \ldots, n},\end{aligned}\end{equation}
    since $Y Z Y = Y \left(-YZ\right) = -Y^2 Z = -Z$ by commutation properties of Paulis, and similarly for the other terms. By the exact same reasoning:
    \begin{equation}\begin{aligned}\mathcal{E}_B\left(\rho\right) &= \left[\frac{1}{4} X\left(I + x X + yY + zZ\right) X + \frac{1}{4} Z \left(I + x X + y Y + z Z\right) Z\right] \otimes \rho_{2, \ldots, n} \\ &= \left[\frac{1}{4}\left(I + x X - yY - zZ\right) + \frac{1}{4} \left(I - x X - y Y + z Z\right)\right] \otimes \rho_{2, \ldots, n} \\ &= \frac{1}{2} \left(I - y Y\right) \otimes \rho_{2, \ldots, n}.\end{aligned}\end{equation}

    Therefore, for any pure product distribution $D^{\otimes n}$, their risk is given by: 
    \[R_{D^{\otimes n}}\left(\mathcal{E}_A, \mathcal{E}_B\right) = \frac{1}{2} \exval_{\rho \followsdistr D^{\otimes n}}\left[\left\|\mathcal{E}_A\left(\rho\right)-  \mathcal{E}_B\left(\rho\right)\right\|_2^2\right] = \frac{1}{2} \exval_{\rho \followsdistr D^{\otimes n}}\left[\left\|y Y \otimes \rho_{2, \ldots, n}\right\|_2^2\right] = \frac{1}{2} \exval_{\rho \followsdistr D^{\otimes n}}\left[\Tr\left(\left(y Y\right)^2 \otimes \rho_{2, \ldots, n}^2\right)\right].\]
    
    Now, we know that $\Tr\left(A \otimes B\right) = \Tr\left(A\right) \Tr\left(B\right)$, and that $\Tr\left(\rho_{2, \ldots, n}^2\right) = 1$ since both our distributions are pure. Observing moreover that $y = \Tr\left(Y_1 \rho\right)$, this allows us to write:
    \begin{equation}\begin{aligned}R_{D^{\otimes n}}\left(\mathcal{E}_A, \mathcal{E}_B\right) &= \frac{1}{2} \exval_{\rho \followsdistr D^{\otimes n}}\left[y^2 \Tr\left(Y^2\right)\right] \\ &= \exval_{\rho \followsdistr D^{\otimes n}}\left[y^2\right] \\ &= \exval_{\rho \followsdistr D^{\otimes n}}\left[\Tr\left(Y_1 \rho\right)^2\right] \\ &= \bbra{Y_1} \exval_{\rho \followsdistr D^{\otimes n}}\left[\kket{\rho}\bbra{\rho}\right] \kket{Y_1} \\ &= \frac{\bbra{Y_1}}{\sqrt{2^n}} M_{D^{\otimes n}} \frac{\kket{Y_1}}{\sqrt{2^n}},\end{aligned}\end{equation}
    since $M_{D^{\otimes n}} = 2^n\exval_{\rho \followsdistr D^{\otimes n}}\left[\kket{\rho}\bbra{\rho}\right]$ for pure-state distributions (by Lemma~\ref{lma:moment-matrix-equivalence-pure-states}). Since $\frac{\kket{Y_1}}{\sqrt{2^n}}$ is an eigenstate of both $M_{\Haar_1^{\otimes n}}$ and $M_{\Haar_1^{\otimes n}\left(O\right)}$, we can use lemmas~\ref{lma:s-haar-distribution} and~\ref{lma:s-orthogonal-haar-distribution} to find that: 
    \[R_{\Haar_1^{\otimes n}}\left(\mathcal{E}_A, \mathcal{E}_B\right) = \frac{1}{3}, \mathspace R_{\Haar_1^{\otimes n}\left(O\right)}\left(\mathcal{E}_A, \mathcal{E}_B\right) = 0.\]
     
    Therefore, out-of-distribution generalisation on these channels requires:
    \[R_{\Haar_1^{\otimes n}}\left(\mathcal{E}_A, \mathcal{E}_B\right) \leq \alpha R_{\Haar_1^{\otimes n}\left(O\right)}\left(\mathcal{E}_A, \mathcal{E}_B\right) + \beta \implies \frac{1}{3} \leq \alpha\cdot 0 + \beta \implies \beta \geq \frac{1}{3}.\]
\end{proof}

\suppsubsection{Negative result for general channels}\label{sec:negativechannels}

The favorable comparison for Pauli channels for the global Haar measure does not extend to arbitrary channels. We show this via constructing a counter example using a pair of measure-and-prepare channels. The corresponding risk is constant on product Haar inputs but exponentially small on global Haar inputs. This forces the stated trade-off between multiplicative and additive error.

\begin{theorem}[Impossibility of global to product Haar generalization for general channels]\label{thm:general-channel-no-go}
    Let $\alpha, \beta > 0$ be such that, for any channels $\mathcal{E}_A, \mathcal{E}_B$: 
    \[R_{\Haar_1^{\otimes n}}\left(\mathcal{E}_A, \mathcal{E}_B\right) \leq \alpha R_{\Haar_n}\left(\mathcal{E}_A, \mathcal{E}_B\right) + \beta.\]

    Then, there exist channels that can be efficiently implemented using quantum hardware that require: 
    \[\alpha \geq \left(\frac{1}{3} - \beta\right)\left(2^n + 1\right).\]

    In particular, whenever $\beta < \frac{1}{3} - \Theta\left(1\right)$, then we need $\alpha \geq \Omega\left(2^n\right)$.
\end{theorem}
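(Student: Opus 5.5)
The plan is to use the measure-and-prepare pair described in the main text. Define $\mathcal{E}_A(\rho) = \sum_{b\in\{0,1\}} \bra{b}\rho_1\ket{b}\, \ket{b}\bra{b}^{\otimes n}$, which measures the first qubit in the computational basis and prepares $\ket{b}^{\otimes n}$. Let $\mathcal{E}_B$ be the same channel with the prepared outputs swapped, $\ket{b}\mapsto\ket{1-b}$. Here $\rho_1$ is the reduced state on qubit one. Both channels are clearly efficiently implementable: measure one qubit, discard the rest, and prepare a computational basis state. Their difference is
\[
\mathcal{E}_A(\rho)-\mathcal{E}_B(\rho) = \left(p_0 - p_1\right)\left(\ket{0}\bra{0}^{\otimes n} - \ket{1}\bra{1}^{\otimes n}\right),
\]
where $p_b = \bra{b}\rho_1\ket{b}$ and $p_0-p_1 = \Tr(Z_1\rho)$. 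Since $\ket{0}^{\otimes n}$ and $\ket{1}^{\otimes n}$ are orthogonal for $n\ge1$, the squared Hilbert--Schmidt norm is $2\Tr(Z_1\rho)^2$. The risk is therefore exactly $R_D(\mathcal{E}_A,\mathcal{E}_B) = \mathbb{E}_{\rho\sim D}[\Tr(Z_1\rho)^2]$.

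Next, rewrite this quantity through the moment matrix exactly as in the proof of Theorem~\ref{thm:pauli-channel-real-complex-no-go}. For pure-state ensembles, $\mathbb{E}[\Tr(Z_1\rho)^2] = \bbra{\widetilde{Z_1}} M_D \kket{\widetilde{Z_1}}$ by Lemma~\ref{lma:moment-matrix-equivalence-pure-states}. Lemma~\ref{lma:m-haar-distribution} then gives $R_{\Haar_n} = \frac{1}{2^n+1}$. Lemma~\ref{lma:s-haar-distribution} with $n=1$ blocks gives weight one and hence $R_{\Haar_1^{\otimes n}} = \frac13$. Alternatively, the product value is simply the single-qubit Haar average $\mathbb{E}[r_z^2]=1/3$.

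Plugging these into the assumed bound gives $\frac13 \le \alpha\frac{1}{2^n+1}+\beta$, that is, $\alpha \ge (\frac13-\beta)(2^n+1)$. The asymptotic statement follows directly. Taking $\beta=1/4$ gives the main-text constant $(2^n+1)/12$.

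The only step I expect to need care is the risk computation. I need to check that the two prepared outputs really are orthogonal, which holds because $n\ge1$, so the cross terms vanish and the norm equals $2(p_0-p_1)^2$. I also need the factor $1/2$ in the risk definition to cancel correctly. Everything else is a direct lookup of eigenvalues from earlier lemmas.
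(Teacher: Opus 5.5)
Your proposal is correct and matches the paper's proof. It uses the same pair of measure-and-prepare channels and the same reduction of the risk to $\mathbb{E}_{\rho\sim D}[\operatorname{Tr}(Z_1\rho)^2]$ via the factor $\operatorname{Tr}\bigl((\ket{0^n}\bra{0^n}-\ket{1^n}\bra{1^n})^2\bigr)=2$, and the same moment-matrix eigenvalue lookups, $1/(2^n+1)$ for global Haar and $1/3$ for the weight-one string $Z_1$ under product Haar, before rearranging.
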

\begin{remark*}
    Setting $\beta = 1/4$, this is precisely proposition~\ref{thm-main-text:general-channel-no-go} in the main text.
\end{remark*}
\begin{proof}
    We consider $\mathcal{E}_A, \mathcal{E}_B$ to be the following measure-and-prepare channels: 
    \[\mathcal{E_A}\left(\rho\right) = \Tr\left(\left(\ket{0}\bra{0} \otimes I^{\otimes n - 1}\right) \rho\right) \ket{0^n}\bra{0^n} + \Tr\left(\left(\ket{1}\bra{1} \otimes I^{\otimes n - 1}\right) \rho\right) \ket{1^n}\bra{1^n},\]
    \[\mathcal{E_B}\left(\rho\right) = \Tr\left(\left(\ket{0}\bra{0} \otimes I^{\otimes n - 1}\right) \rho\right) \ket{1^n}\bra{1^n} + \Tr\left(\left(\ket{1}\bra{1} \otimes I^{\otimes n - 1}\right) \rho\right) \ket{0^n}\bra{0^n}.\]

    In other words, $\mathcal{E}_A\left(\rho\right)$ measures the first qubit of $\rho$ and outputs the measurement $n$ times. $\mathcal{E}_B$ is similar but flips the measurement before outputting it $n$ times. Therefore,
\begin{equation}\begin{aligned}
\mathcal{E}_A\left(\rho\right)-\mathcal{E}_B\left(\rho\right)
&=
\Tr\left(\left(\ket{0}\bra{0}\otimes I^{\otimes(n-1)}\right)\rho\right)
\left(\ket{0^n}\bra{0^n}-\ket{1^n}\bra{1^n}\right)
+
\Tr\left(\left(\ket{1}\bra{1}\otimes I^{\otimes(n-1)}\right)\rho\right)
\left(\ket{1^n}\bra{1^n}-\ket{0^n}\bra{0^n}\right)
\\ &=
\left(\ket{0^n}\bra{0^n}-\ket{1^n}\bra{1^n}\right)
\left[
\Tr\left(\left(\ket{0}\bra{0}\otimes I^{\otimes(n-1)}\right)\rho\right)
-
\Tr\left(\left(\ket{1}\bra{1}\otimes I^{\otimes(n-1)}\right)\rho\right)
\right]
\\ &=
\left(\ket{0^n}\bra{0^n}-\ket{1^n}\bra{1^n}\right)
\Tr\left(Z_1\rho\right).
\end{aligned}\end{equation}
    
    We can compute their risk exactly for any distribution $D$: 
    \begin{equation}\begin{aligned}R_D\left(\mathcal{E}_A, \mathcal{E}_B\right) &= \frac{1}{2} \exval_{\rho \followsdistr D}\left[\Tr\left(\left(\mathcal{E}_A\left(\rho\right) - \mathcal{E}_B\left(\rho\right)\right)^2\right)\right] \\ &= \frac{1}{2} \exval_{\rho \followsdistr D}\left[\Tr\left(Z_1 \rho\right)^2 \underbrace{\Tr\left(\left(\ket{0^n}\bra{0^n} - \ket{1^n}\bra{1^n}\right)^2\right)}_{= 2}\right] \\ &= \frac{\bbra{Z_1}}{\sqrt{2^n}}\cdot 2^n\exval_{\rho \followsdistr D}\left[\kket{\rho}\bbra{\rho}\right] \frac{\kket{Z_1}}{\sqrt{2^n}}.\end{aligned}\end{equation}

    Now, we know that $M_D = 2^n \exval_{\rho \followsdistr D}\left[\kket{\rho}\bbra{\rho}\right]$ for pure-state distributions $D$ by lemma~\ref{lma:moment-matrix-equivalence-pure-states}. Now, note that $\frac{1}{\sqrt{2^n}} \kket{Z_1}$ is a normalised eigenvector of $M_D$ both when $D = \Haar_1^{\otimes n}$ and when $D = \Haar_n$, by lemma~\ref{lma:s-haar-distribution}. Therefore, using this lemma, we find: 
    \[R_{\Haar_1^{\otimes n}}\left(\mathcal{E}_A, \mathcal{E}_B\right)  = \frac{1}{3},\]
    \[R_{\Haar_n}\left(\mathcal{E}_A, \mathcal{E}_B\right) = \frac{1}{2^n + 1}.\]

    We then immediately get our result using our hypothesis: 
    \[R_{\Haar_1^{\otimes n}}\left(\mathcal{E}_A, \mathcal{E}_B\right) \leq \alpha R_{\Haar_n}\left(\mathcal{E}_A, \mathcal{E}_B\right) + \beta \implies \frac{1}{3} \leq \frac{\alpha}{2^n + 1} + \beta \implies \alpha \geq \left(2^n + 1\right)\left(\frac{1}{3} - \beta\right).\]
\end{proof}

\end{document}